\newcommand{\NDdrule}[4][]{{\displaystyle\frac{\begin{array}{l}#2\end{array}}{#3}\quad\NDdrulename{#4}}}
\newcommand{\NDpremise}[1]{ #1 \\}
\newenvironment{NDdefnblock}[3][]{ \framebox{\mbox{#2}} \quad #3 \\[0pt]}{}
\newcommand{\NDnt}[1]{\mathit{#1}}
\newcommand{\NDmv}[1]{\mathit{#1}}
\newcommand{\NDsym}[1]{#1}
\newcommand{\NDdruleTXXidName}[0]{\NDdrulename{T\_id}}
\newcommand{\NDdruleTXXid}[1]{\NDdrule[#1]{%
}{
\NDmv{x}  \NDsym{:}  \NDnt{X}  \vdash_\mathcal{C}  \NDmv{x}  \NDsym{:}  \NDnt{X}}{%
{\NDdruleTXXidName}{}%
}}
\newcommand{\NDdruleTXXunitIName}[0]{\NDdrulename{T\_unitI}}
\newcommand{\NDdruleTXXunitI}[1]{\NDdrule[#1]{%
}{
 \cdot   \vdash_\mathcal{C}   \mathsf{triv}   \NDsym{:}   \mathsf{Unit} }{%
{\NDdruleTXXunitIName}{}%
}}
\newcommand{\NDdruleTXXunitEName}[0]{\NDdrulename{T\_unitE}}
\newcommand{\NDdruleTXXunitE}[1]{\NDdrule[#1]{%
\NDpremise{ \Phi  \vdash_\mathcal{C}  \NDnt{t_{{\mathrm{1}}}}  \NDsym{:}   \mathsf{Unit}   \quad  \Psi  \vdash_\mathcal{C}  \NDnt{t_{{\mathrm{2}}}}  \NDsym{:}  \NDnt{Y} }%
}{
\Phi  \NDsym{,}  \Psi  \vdash_\mathcal{C}   \mathsf{let}\, \NDnt{t_{{\mathrm{1}}}}  :   \mathsf{Unit}  \,\mathsf{be}\,  \mathsf{triv}  \,\mathsf{in}\, \NDnt{t_{{\mathrm{2}}}}   \NDsym{:}  \NDnt{Y}}{%
{\NDdruleTXXunitEName}{}%
}}
\newcommand{\NDdruleTXXtenIName}[0]{\NDdrulename{T\_tenI}}
\newcommand{\NDdruleTXXtenI}[1]{\NDdrule[#1]{%
\NDpremise{ \Phi  \vdash_\mathcal{C}  \NDnt{t_{{\mathrm{1}}}}  \NDsym{:}  \NDnt{X}  \quad  \Psi  \vdash_\mathcal{C}  \NDnt{t_{{\mathrm{2}}}}  \NDsym{:}  \NDnt{Y} }%
}{
\Phi  \NDsym{,}  \Psi  \vdash_\mathcal{C}  \NDnt{t_{{\mathrm{1}}}}  \otimes  \NDnt{t_{{\mathrm{2}}}}  \NDsym{:}  \NDnt{X}  \otimes  \NDnt{Y}}{%
{\NDdruleTXXtenIName}{}%
}}
\newcommand{\NDdruleTXXtenEName}[0]{\NDdrulename{T\_tenE}}
\newcommand{\NDdruleTXXtenE}[1]{\NDdrule[#1]{%
\NDpremise{ \Phi  \vdash_\mathcal{C}  \NDnt{t_{{\mathrm{1}}}}  \NDsym{:}  \NDnt{X}  \otimes  \NDnt{Y}  \quad  \Psi_{{\mathrm{1}}}  \NDsym{,}  \NDmv{x}  \NDsym{:}  \NDnt{X}  \NDsym{,}  \NDmv{y}  \NDsym{:}  \NDnt{Y}  \NDsym{,}  \Psi_{{\mathrm{2}}}  \vdash_\mathcal{C}  \NDnt{t_{{\mathrm{2}}}}  \NDsym{:}  \NDnt{Z} }%
}{
\Psi_{{\mathrm{1}}}  \NDsym{,}  \Phi  \NDsym{,}  \Psi_{{\mathrm{2}}}  \vdash_\mathcal{C}   \mathsf{let}\, \NDnt{t_{{\mathrm{1}}}}  :  \NDnt{X}  \otimes  \NDnt{Y} \,\mathsf{be}\, \NDmv{x}  \otimes  \NDmv{y} \,\mathsf{in}\, \NDnt{t_{{\mathrm{2}}}}   \NDsym{:}  \NDnt{Z}}{%
{\NDdruleTXXtenEName}{}%
}}
\newcommand{\NDdruleTXXimpIName}[0]{\NDdrulename{T\_impI}}
\newcommand{\NDdruleTXXimpI}[1]{\NDdrule[#1]{%
\NDpremise{\Phi  \NDsym{,}  \NDmv{x}  \NDsym{:}  \NDnt{X}  \vdash_\mathcal{C}  \NDnt{t}  \NDsym{:}  \NDnt{Y}}%
}{
\Phi  \vdash_\mathcal{C}   \lambda  \NDmv{x}  :  \NDnt{X} . \NDnt{t}   \NDsym{:}  \NDnt{X}  \multimap  \NDnt{Y}}{%
{\NDdruleTXXimpIName}{}%
}}
\newcommand{\NDdruleTXXimpEName}[0]{\NDdrulename{T\_impE}}
\newcommand{\NDdruleTXXimpE}[1]{\NDdrule[#1]{%
\NDpremise{ \Phi  \vdash_\mathcal{C}  \NDnt{t_{{\mathrm{1}}}}  \NDsym{:}  \NDnt{X}  \multimap  \NDnt{Y}  \quad  \Psi  \vdash_\mathcal{C}  \NDnt{t_{{\mathrm{2}}}}  \NDsym{:}  \NDnt{X} }%
}{
\Phi  \NDsym{,}  \Psi  \vdash_\mathcal{C}   \NDnt{t_{{\mathrm{1}}}}   \NDnt{t_{{\mathrm{2}}}}   \NDsym{:}  \NDnt{Y}}{%
{\NDdruleTXXimpEName}{}%
}}
\newcommand{\NDdruleTXXGIName}[0]{\NDdrulename{T\_GI}}
\newcommand{\NDdruleTXXGI}[1]{\NDdrule[#1]{%
\NDpremise{\Phi  \vdash_\mathcal{L}  \NDnt{s}  \NDsym{:}  \NDnt{A}}%
}{
\Phi  \vdash_\mathcal{C}   \mathsf{G}\, \NDnt{s}   \NDsym{:}   \mathsf{G} \NDnt{A} }{%
{\NDdruleTXXGIName}{}%
}}
\newcommand{\NDdruleTXXbetaName}[0]{\NDdrulename{T\_beta}}
\newcommand{\NDdruleTXXbeta}[1]{\NDdrule[#1]{%
\NDpremise{\Phi  \NDsym{,}  \NDmv{x}  \NDsym{:}  \NDnt{X}  \NDsym{,}  \NDmv{y}  \NDsym{:}  \NDnt{Y}  \NDsym{,}  \Psi  \vdash_\mathcal{C}  \NDnt{t}  \NDsym{:}  \NDnt{Z}}%
}{
\Phi  \NDsym{,}  \NDmv{z}  \NDsym{:}  \NDnt{Y}  \NDsym{,}  \NDmv{w}  \NDsym{:}  \NDnt{X}  \NDsym{,}  \Psi  \vdash_\mathcal{C}   \mathsf{ex}\, \NDmv{w} , \NDmv{z} \,\mathsf{with}\, \NDmv{x} , \NDmv{y} \,\mathsf{in}\, \NDnt{t}   \NDsym{:}  \NDnt{Z}}{%
{\NDdruleTXXbetaName}{}%
}}
\newcommand{\NDdruleTXXcutName}[0]{\NDdrulename{T\_cut}}
\newcommand{\NDdruleTXXcut}[1]{\NDdrule[#1]{%
\NDpremise{ \Phi  \vdash_\mathcal{C}  \NDnt{t_{{\mathrm{1}}}}  \NDsym{:}  \NDnt{X}  \quad  \Psi_{{\mathrm{1}}}  \NDsym{,}  \NDmv{x}  \NDsym{:}  \NDnt{X}  \NDsym{,}  \Psi_{{\mathrm{2}}}  \vdash_\mathcal{C}  \NDnt{t_{{\mathrm{2}}}}  \NDsym{:}  \NDnt{Y} }%
}{
\Psi_{{\mathrm{1}}}  \NDsym{,}  \Phi  \NDsym{,}  \Psi_{{\mathrm{2}}}  \vdash_\mathcal{C}  \NDsym{[}  \NDnt{t_{{\mathrm{1}}}}  \NDsym{/}  \NDmv{x}  \NDsym{]}  \NDnt{t_{{\mathrm{2}}}}  \NDsym{:}  \NDnt{Y}}{%
{\NDdruleTXXcutName}{}%
}}
\newcommand{\NDdruleSXXidName}[0]{\NDdrulename{S\_id}}
\newcommand{\NDdruleSXXid}[1]{\NDdrule[#1]{%
}{
\NDmv{x}  \NDsym{:}  \NDnt{A}  \vdash_\mathcal{L}  \NDmv{x}  \NDsym{:}  \NDnt{A}}{%
{\NDdruleSXXidName}{}%
}}
\newcommand{\NDdruleSXXunitIName}[0]{\NDdrulename{S\_unitI}}
\newcommand{\NDdruleSXXunitI}[1]{\NDdrule[#1]{%
}{
 \cdot   \vdash_\mathcal{L}   \mathsf{triv}   \NDsym{:}   \mathsf{Unit} }{%
{\NDdruleSXXunitIName}{}%
}}
\newcommand{\NDdruleSXXunitEOneName}[0]{\NDdrulename{S\_unitE1}}
\newcommand{\NDdruleSXXunitEOne}[1]{\NDdrule[#1]{%
\NDpremise{ \Phi  \vdash_\mathcal{C}  \NDnt{t}  \NDsym{:}   \mathsf{Unit}   \quad  \Gamma  \vdash_\mathcal{L}  \NDnt{s}  \NDsym{:}  \NDnt{A} }%
}{
\Phi  \NDsym{;}  \Gamma  \vdash_\mathcal{L}   \mathsf{let}\, \NDnt{t}  :   \mathsf{Unit}  \,\mathsf{be}\,  \mathsf{triv}  \,\mathsf{in}\, \NDnt{s}   \NDsym{:}  \NDnt{A}}{%
{\NDdruleSXXunitEOneName}{}%
}}
\newcommand{\NDdruleSXXunitETwoName}[0]{\NDdrulename{S\_unitE2}}
\newcommand{\NDdruleSXXunitETwo}[1]{\NDdrule[#1]{%
\NDpremise{ \Gamma  \vdash_\mathcal{L}  \NDnt{s_{{\mathrm{1}}}}  \NDsym{:}   \mathsf{Unit}   \quad  \Delta  \vdash_\mathcal{L}  \NDnt{s_{{\mathrm{2}}}}  \NDsym{:}  \NDnt{A} }%
}{
\Gamma  \NDsym{;}  \Delta  \vdash_\mathcal{L}   \mathsf{let}\, \NDnt{s_{{\mathrm{1}}}}  :   \mathsf{Unit}  \,\mathsf{be}\,  \mathsf{triv}  \,\mathsf{in}\, \NDnt{s_{{\mathrm{2}}}}   \NDsym{:}  \NDnt{A}}{%
{\NDdruleSXXunitETwoName}{}%
}}
\newcommand{\NDdruleSXXtenIName}[0]{\NDdrulename{S\_tenI}}
\newcommand{\NDdruleSXXtenI}[1]{\NDdrule[#1]{%
\NDpremise{ \Gamma  \vdash_\mathcal{L}  \NDnt{s_{{\mathrm{1}}}}  \NDsym{:}  \NDnt{A}  \quad  \Delta  \vdash_\mathcal{L}  \NDnt{s_{{\mathrm{2}}}}  \NDsym{:}  \NDnt{B} }%
}{
\Gamma  \NDsym{;}  \Delta  \vdash_\mathcal{L}  \NDnt{s_{{\mathrm{1}}}}  \triangleright  \NDnt{s_{{\mathrm{2}}}}  \NDsym{:}  \NDnt{A}  \triangleright  \NDnt{B}}{%
{\NDdruleSXXtenIName}{}%
}}
\newcommand{\NDdruleSXXtenEOneName}[0]{\NDdrulename{S\_tenE1}}
\newcommand{\NDdruleSXXtenEOne}[1]{\NDdrule[#1]{%
\NDpremise{ \Phi  \vdash_\mathcal{C}  \NDnt{t}  \NDsym{:}  \NDnt{X}  \otimes  \NDnt{Y}  \quad  \Gamma_{{\mathrm{1}}}  \NDsym{;}  \NDmv{x}  \NDsym{:}  \NDnt{X}  \NDsym{;}  \NDmv{y}  \NDsym{:}  \NDnt{Y}  \NDsym{;}  \Gamma_{{\mathrm{2}}}  \vdash_\mathcal{L}  \NDnt{s}  \NDsym{:}  \NDnt{A} }%
}{
\Gamma_{{\mathrm{1}}}  \NDsym{;}  \Phi  \NDsym{;}  \Gamma_{{\mathrm{2}}}  \vdash_\mathcal{L}   \mathsf{let}\, \NDnt{t}  :  \NDnt{X}  \otimes  \NDnt{Y} \,\mathsf{be}\, \NDmv{x}  \otimes  \NDmv{y} \,\mathsf{in}\, \NDnt{s}   \NDsym{:}  \NDnt{A}}{%
{\NDdruleSXXtenEOneName}{}%
}}
\newcommand{\NDdruleSXXtenETwoName}[0]{\NDdrulename{S\_tenE2}}
\newcommand{\NDdruleSXXtenETwo}[1]{\NDdrule[#1]{%
\NDpremise{ \Gamma  \vdash_\mathcal{L}  \NDnt{s_{{\mathrm{1}}}}  \NDsym{:}  \NDnt{A}  \triangleright  \NDnt{B}  \quad  \Delta_{{\mathrm{1}}}  \NDsym{;}  \NDmv{x}  \NDsym{:}  \NDnt{A}  \NDsym{;}  \NDmv{y}  \NDsym{:}  \NDnt{B}  \NDsym{;}  \Delta_{{\mathrm{2}}}  \vdash_\mathcal{L}  \NDnt{s_{{\mathrm{2}}}}  \NDsym{:}  \NDnt{C} }%
}{
\Delta_{{\mathrm{1}}}  \NDsym{;}  \Gamma  \NDsym{;}  \Delta_{{\mathrm{2}}}  \vdash_\mathcal{L}   \mathsf{let}\, \NDnt{s_{{\mathrm{1}}}}  :  \NDnt{A}  \triangleright  \NDnt{B} \,\mathsf{be}\, \NDmv{x}  \triangleright  \NDmv{y} \,\mathsf{in}\, \NDnt{s_{{\mathrm{2}}}}   \NDsym{:}  \NDnt{C}}{%
{\NDdruleSXXtenETwoName}{}%
}}
\newcommand{\NDdruleSXXimprIName}[0]{\NDdrulename{S\_imprI}}
\newcommand{\NDdruleSXXimprI}[1]{\NDdrule[#1]{%
\NDpremise{\Gamma  \NDsym{;}  \NDmv{x}  \NDsym{:}  \NDnt{A}  \vdash_\mathcal{L}  \NDnt{s}  \NDsym{:}  \NDnt{B}}%
}{
\Gamma  \vdash_\mathcal{L}   \lambda_r  \NDmv{x}  :  \NDnt{A} . \NDnt{s}   \NDsym{:}  \NDnt{A}  \rightharpoonup  \NDnt{B}}{%
{\NDdruleSXXimprIName}{}%
}}
\newcommand{\NDdruleSXXimprEName}[0]{\NDdrulename{S\_imprE}}
\newcommand{\NDdruleSXXimprE}[1]{\NDdrule[#1]{%
\NDpremise{ \Gamma  \vdash_\mathcal{L}  \NDnt{s_{{\mathrm{1}}}}  \NDsym{:}  \NDnt{A}  \rightharpoonup  \NDnt{B}  \quad  \Delta  \vdash_\mathcal{L}  \NDnt{s_{{\mathrm{2}}}}  \NDsym{:}  \NDnt{A} }%
}{
\Gamma  \NDsym{;}  \Delta  \vdash_\mathcal{L}   \mathsf{app}_r\, \NDnt{s_{{\mathrm{1}}}} \, \NDnt{s_{{\mathrm{2}}}}   \NDsym{:}  \NDnt{B}}{%
{\NDdruleSXXimprEName}{}%
}}
\newcommand{\NDdruleSXXimplIName}[0]{\NDdrulename{S\_implI}}
\newcommand{\NDdruleSXXimplI}[1]{\NDdrule[#1]{%
\NDpremise{\NDmv{x}  \NDsym{:}  \NDnt{A}  \NDsym{;}  \Gamma  \vdash_\mathcal{L}  \NDnt{s}  \NDsym{:}  \NDnt{B}}%
}{
\Gamma  \vdash_\mathcal{L}   \lambda_l  \NDmv{x}  :  \NDnt{A} . \NDnt{s}   \NDsym{:}  \NDnt{B}  \leftharpoonup  \NDnt{A}}{%
{\NDdruleSXXimplIName}{}%
}}
\newcommand{\NDdruleSXXimplEName}[0]{\NDdrulename{S\_implE}}
\newcommand{\NDdruleSXXimplE}[1]{\NDdrule[#1]{%
\NDpremise{ \Gamma  \vdash_\mathcal{L}  \NDnt{s_{{\mathrm{1}}}}  \NDsym{:}  \NDnt{B}  \leftharpoonup  \NDnt{A}  \quad  \Delta  \vdash_\mathcal{L}  \NDnt{s_{{\mathrm{2}}}}  \NDsym{:}  \NDnt{A} }%
}{
\Delta  \NDsym{;}  \Gamma  \vdash_\mathcal{L}   \mathsf{app}_l\, \NDnt{s_{{\mathrm{1}}}} \, \NDnt{s_{{\mathrm{2}}}}   \NDsym{:}  \NDnt{B}}{%
{\NDdruleSXXimplEName}{}%
}}
\newcommand{\NDdruleSXXFIName}[0]{\NDdrulename{S\_FI}}
\newcommand{\NDdruleSXXFI}[1]{\NDdrule[#1]{%
\NDpremise{\Phi  \vdash_\mathcal{C}  \NDnt{t}  \NDsym{:}  \NDnt{X}}%
}{
\Phi  \vdash_\mathcal{L}   \mathsf{F} \NDnt{t}   \NDsym{:}   \mathsf{F} \NDnt{X} }{%
{\NDdruleSXXFIName}{}%
}}
\newcommand{\NDdruleSXXFEName}[0]{\NDdrulename{S\_FE}}
\newcommand{\NDdruleSXXFE}[1]{\NDdrule[#1]{%
\NDpremise{ \Gamma  \vdash_\mathcal{L}  \NDnt{s_{{\mathrm{1}}}}  \NDsym{:}   \mathsf{F} \NDnt{X}   \quad  \Delta_{{\mathrm{1}}}  \NDsym{;}  \NDmv{x}  \NDsym{:}  \NDnt{X}  \NDsym{;}  \Delta_{{\mathrm{2}}}  \vdash_\mathcal{L}  \NDnt{s_{{\mathrm{2}}}}  \NDsym{:}  \NDnt{A} }%
}{
\Delta_{{\mathrm{1}}}  \NDsym{;}  \Gamma  \NDsym{;}  \Delta_{{\mathrm{2}}}  \vdash_\mathcal{L}   \mathsf{let}\, \NDnt{s_{{\mathrm{1}}}}  :   \mathsf{F} \NDnt{X}  \,\mathsf{be}\,  \mathsf{F}\, \NDmv{x}  \,\mathsf{in}\, \NDnt{s_{{\mathrm{2}}}}   \NDsym{:}  \NDnt{A}}{%
{\NDdruleSXXFEName}{}%
}}
\newcommand{\NDdruleSXXGEName}[0]{\NDdrulename{S\_GE}}
\newcommand{\NDdruleSXXGE}[1]{\NDdrule[#1]{%
\NDpremise{\Phi  \vdash_\mathcal{C}  \NDnt{t}  \NDsym{:}   \mathsf{G} \NDnt{A} }%
}{
\Phi  \vdash_\mathcal{L}   \mathsf{derelict}\, \NDnt{t}   \NDsym{:}  \NDnt{A}}{%
{\NDdruleSXXGEName}{}%
}}
\newcommand{\NDdruleSXXbetaName}[0]{\NDdrulename{S\_beta}}
\newcommand{\NDdruleSXXbeta}[1]{\NDdrule[#1]{%
\NDpremise{\Gamma  \NDsym{;}  \NDmv{x}  \NDsym{:}  \NDnt{X}  \NDsym{;}  \NDmv{y}  \NDsym{:}  \NDnt{Y}  \NDsym{;}  \Delta  \vdash_\mathcal{L}  \NDnt{s}  \NDsym{:}  \NDnt{A}}%
}{
\Gamma  \NDsym{;}  \NDmv{z}  \NDsym{:}  \NDnt{Y}  \NDsym{;}  \NDmv{w}  \NDsym{:}  \NDnt{X}  \NDsym{;}  \Delta  \vdash_\mathcal{L}   \mathsf{ex}\, \NDmv{w} , \NDmv{z} \,\mathsf{with}\, \NDmv{x} , \NDmv{y} \,\mathsf{in}\, \NDnt{s}   \NDsym{:}  \NDnt{A}}{%
{\NDdruleSXXbetaName}{}%
}}
\newcommand{\NDdruleSXXcutOneName}[0]{\NDdrulename{S\_cut1}}
\newcommand{\NDdruleSXXcutOne}[1]{\NDdrule[#1]{%
\NDpremise{ \Phi  \vdash_\mathcal{C}  \NDnt{t}  \NDsym{:}  \NDnt{X}  \quad  \Gamma_{{\mathrm{1}}}  \NDsym{;}  \NDmv{x}  \NDsym{:}  \NDnt{X}  \NDsym{;}  \Gamma_{{\mathrm{2}}}  \vdash_\mathcal{L}  \NDnt{s}  \NDsym{:}  \NDnt{A} }%
}{
\Gamma_{{\mathrm{1}}}  \NDsym{;}  \Phi  \NDsym{;}  \Gamma_{{\mathrm{1}}}  \vdash_\mathcal{L}  \NDsym{[}  \NDnt{t}  \NDsym{/}  \NDmv{x}  \NDsym{]}  \NDnt{s}  \NDsym{:}  \NDnt{A}}{%
{\NDdruleSXXcutOneName}{}%
}}
\newcommand{\NDdruleSXXcutTwoName}[0]{\NDdrulename{S\_cut2}}
\newcommand{\NDdruleSXXcutTwo}[1]{\NDdrule[#1]{%
\NDpremise{ \Gamma  \vdash_\mathcal{L}  \NDnt{s_{{\mathrm{1}}}}  \NDsym{:}  \NDnt{A}  \quad  \Delta_{{\mathrm{1}}}  \NDsym{;}  \NDmv{x}  \NDsym{:}  \NDnt{A}  \NDsym{;}  \Delta_{{\mathrm{2}}}  \vdash_\mathcal{L}  \NDnt{s_{{\mathrm{2}}}}  \NDsym{:}  \NDnt{B} }%
}{
\Delta_{{\mathrm{1}}}  \NDsym{;}  \Gamma  \NDsym{;}  \Delta_{{\mathrm{2}}}  \vdash_\mathcal{L}  \NDsym{[}  \NDnt{s_{{\mathrm{1}}}}  \NDsym{/}  \NDmv{x}  \NDsym{]}  \NDnt{s_{{\mathrm{2}}}}  \NDsym{:}  \NDnt{B}}{%
{\NDdruleSXXcutTwoName}{}%
}}
\newcommand{\NDdruleTbetaXXletUName}[0]{\NDdrulename{Tbeta\_letU}}
\newcommand{\NDdruleTbetaXXletU}[1]{\NDdrule[#1]{%
}{
 \mathsf{let}\,  \mathsf{triv}   :   \mathsf{Unit}  \,\mathsf{be}\,  \mathsf{triv}  \,\mathsf{in}\, \NDnt{t}   \leadsto_\beta  \NDnt{t}}{%
{\NDdruleTbetaXXletUName}{}%
}}
\newcommand{\NDdruleTbetaXXletTName}[0]{\NDdrulename{Tbeta\_letT}}
\newcommand{\NDdruleTbetaXXletT}[1]{\NDdrule[#1]{%
}{
 \mathsf{let}\, \NDnt{t_{{\mathrm{1}}}}  \otimes  \NDnt{t_{{\mathrm{2}}}}  :  \NDnt{X}  \otimes  \NDnt{Y} \,\mathsf{be}\, \NDmv{x}  \otimes  \NDmv{y} \,\mathsf{in}\, \NDnt{t_{{\mathrm{3}}}}   \leadsto_\beta  \NDsym{[}  \NDnt{t_{{\mathrm{1}}}}  \NDsym{/}  \NDmv{x}  \NDsym{]}  \NDsym{[}  \NDnt{t_{{\mathrm{2}}}}  \NDsym{/}  \NDmv{y}  \NDsym{]}  \NDnt{t_{{\mathrm{3}}}}}{%
{\NDdruleTbetaXXletTName}{}%
}}
\newcommand{\NDdruleTbetaXXlamName}[0]{\NDdrulename{Tbeta\_lam}}
\newcommand{\NDdruleTbetaXXlam}[1]{\NDdrule[#1]{%
}{
 \NDsym{(}   \lambda  \NDmv{x}  :  \NDnt{X} . \NDnt{t_{{\mathrm{1}}}}   \NDsym{)}   \NDnt{t_{{\mathrm{2}}}}   \leadsto_\beta  \NDsym{[}  \NDnt{t_{{\mathrm{2}}}}  \NDsym{/}  \NDmv{x}  \NDsym{]}  \NDnt{t_{{\mathrm{1}}}}}{%
{\NDdruleTbetaXXlamName}{}%
}}
\newcommand{\NDdruleTbetaXXappOneName}[0]{\NDdrulename{Tbeta\_app1}}
\newcommand{\NDdruleTbetaXXappTwoName}[0]{\NDdrulename{Tbeta\_app2}}
\newcommand{\NDdruleTbetaXXappLetName}[0]{\NDdrulename{Tbeta\_appLet}}
\newcommand{\NDdruleTbetaXXletLetName}[0]{\NDdrulename{Tbeta\_letLet}}
\newcommand{\NDdruleTbetaXXletAppName}[0]{\NDdrulename{Tbeta\_letApp}}
\newcommand{\NDdruleSbetaXXletUOneName}[0]{\NDdrulename{Sbeta\_letU1}}
\newcommand{\NDdruleSbetaXXletUOne}[1]{\NDdrule[#1]{%
}{
 \mathsf{let}\,  \mathsf{triv}   :   \mathsf{Unit}  \,\mathsf{be}\,  \mathsf{triv}  \,\mathsf{in}\, \NDnt{s}   \leadsto_\beta  \NDnt{s}}{%
{\NDdruleSbetaXXletUOneName}{}%
}}
\newcommand{\NDdruleSbetaXXletTOneName}[0]{\NDdrulename{Sbeta\_letT1}}
\newcommand{\NDdruleSbetaXXletTOne}[1]{\NDdrule[#1]{%
}{
 \mathsf{let}\, \NDnt{t_{{\mathrm{1}}}}  \otimes  \NDnt{t_{{\mathrm{2}}}}  :  \NDnt{X}  \otimes  \NDnt{Y} \,\mathsf{be}\, \NDmv{x}  \triangleright  \NDmv{y} \,\mathsf{in}\, \NDnt{s}   \leadsto_\beta  \NDsym{[}  \NDnt{t_{{\mathrm{1}}}}  \NDsym{/}  \NDmv{x}  \NDsym{]}  \NDsym{[}  \NDnt{t_{{\mathrm{2}}}}  \NDsym{/}  \NDmv{y}  \NDsym{]}  \NDnt{s}}{%
{\NDdruleSbetaXXletTOneName}{}%
}}
\newcommand{\NDdruleSbetaXXletTTwoName}[0]{\NDdrulename{Sbeta\_letT2}}
\newcommand{\NDdruleSbetaXXletTTwo}[1]{\NDdrule[#1]{%
}{
 \mathsf{let}\, \NDnt{s_{{\mathrm{1}}}}  \triangleright  \NDnt{s_{{\mathrm{2}}}}  :  \NDnt{A}  \triangleright  \NDnt{B} \,\mathsf{be}\, \NDmv{x}  \triangleright  \NDmv{y} \,\mathsf{in}\, \NDnt{s_{{\mathrm{3}}}}   \leadsto_\beta  \NDsym{[}  \NDnt{s_{{\mathrm{1}}}}  \NDsym{/}  \NDmv{x}  \NDsym{]}  \NDsym{[}  \NDnt{s_{{\mathrm{2}}}}  \NDsym{/}  \NDmv{y}  \NDsym{]}  \NDnt{s_{{\mathrm{3}}}}}{%
{\NDdruleSbetaXXletTTwoName}{}%
}}
\newcommand{\NDdruleSbetaXXletFName}[0]{\NDdrulename{Sbeta\_letF}}
\newcommand{\NDdruleSbetaXXletF}[1]{\NDdrule[#1]{%
}{
 \mathsf{let}\,  \mathsf{F} \NDnt{t}   :   \mathsf{F} \NDnt{X}  \,\mathsf{be}\,  \mathsf{F}\, \NDmv{x}  \,\mathsf{in}\, \NDnt{s}   \leadsto_\beta  \NDsym{[}  \NDnt{t}  \NDsym{/}  \NDmv{x}  \NDsym{]}  \NDnt{s}}{%
{\NDdruleSbetaXXletFName}{}%
}}
\newcommand{\NDdruleSbetaXXlamLName}[0]{\NDdrulename{Sbeta\_lamL}}
\newcommand{\NDdruleSbetaXXlamL}[1]{\NDdrule[#1]{%
}{
 \mathsf{app}_l\, \NDsym{(}   \lambda_l  \NDmv{x}  :  \NDnt{A} . \NDnt{s_{{\mathrm{1}}}}   \NDsym{)} \, \NDnt{s_{{\mathrm{2}}}}   \leadsto_\beta  \NDsym{[}  \NDnt{s_{{\mathrm{2}}}}  \NDsym{/}  \NDmv{x}  \NDsym{]}  \NDnt{s_{{\mathrm{1}}}}}{%
{\NDdruleSbetaXXlamLName}{}%
}}
\newcommand{\NDdruleSbetaXXlamRName}[0]{\NDdrulename{Sbeta\_lamR}}
\newcommand{\NDdruleSbetaXXlamR}[1]{\NDdrule[#1]{%
}{
 \mathsf{app}_r\, \NDsym{(}   \lambda_r  \NDmv{x}  :  \NDnt{A} . \NDnt{s_{{\mathrm{1}}}}   \NDsym{)} \, \NDnt{s_{{\mathrm{2}}}}   \leadsto_\beta  \NDsym{[}  \NDnt{s_{{\mathrm{2}}}}  \NDsym{/}  \NDmv{x}  \NDsym{]}  \NDnt{s_{{\mathrm{1}}}}}{%
{\NDdruleSbetaXXlamRName}{}%
}}
\newcommand{\NDdruleSbetaXXapplOneName}[0]{\NDdrulename{Sbeta\_appl1}}
\newcommand{\NDdruleSbetaXXapplTwoName}[0]{\NDdrulename{Sbeta\_appl2}}
\newcommand{\NDdruleSbetaXXapprOneName}[0]{\NDdrulename{Sbeta\_appr1}}
\newcommand{\NDdruleSbetaXXapprTwoName}[0]{\NDdrulename{Sbeta\_appr2}}
\newcommand{\NDdruleSbetaXXderelictName}[0]{\NDdrulename{Sbeta\_derelict}}
\newcommand{\NDdruleSbetaXXderelict}[1]{\NDdrule[#1]{%
}{
 \mathsf{derelict}\, \NDsym{(}   \mathsf{G}\, \NDnt{s}   \NDsym{)}   \leadsto_\beta  \NDnt{s}}{%
{\NDdruleSbetaXXderelictName}{}%
}}
\newcommand{\NDdruleSbetaXXapplLetName}[0]{\NDdrulename{Sbeta\_applLet}}
\newcommand{\NDdruleSbetaXXapprLetName}[0]{\NDdrulename{Sbeta\_apprLet}}
\newcommand{\NDdruleSbetaXXletLetName}[0]{\NDdrulename{Sbeta\_letLet}}
\newcommand{\NDdruleSbetaXXletApplName}[0]{\NDdrulename{Sbeta\_letAppl}}
\newcommand{\NDdruleSbetaXXletApprName}[0]{\NDdrulename{Sbeta\_letAppr}}
\newcommand{\NDdruleTcomXXunitEXXunitEName}[0]{\NDdrulename{Tcom\_unitE\_unitE}}
\newcommand{\NDdruleTcomXXunitEXXunitE}[1]{\NDdrule[#1]{%
}{
 \mathsf{let}\, \NDsym{(}   \mathsf{let}\, \NDnt{t_{{\mathrm{2}}}}  :   \mathsf{Unit}  \,\mathsf{be}\,  \mathsf{triv}  \,\mathsf{in}\, \NDnt{t_{{\mathrm{1}}}}   \NDsym{)}  :   \mathsf{Unit}  \,\mathsf{be}\,  \mathsf{triv}  \,\mathsf{in}\, \NDnt{t_{{\mathrm{3}}}}   \leadsto_\mathsf{c}   \mathsf{let}\, \NDnt{t_{{\mathrm{2}}}}  :   \mathsf{Unit}  \,\mathsf{be}\,  \mathsf{triv}  \,\mathsf{in}\, \NDsym{(}   \mathsf{let}\, \NDnt{t_{{\mathrm{1}}}}  :   \mathsf{Unit}  \,\mathsf{be}\,  \mathsf{triv}  \,\mathsf{in}\, \NDnt{t_{{\mathrm{3}}}}   \NDsym{)} }{%
{\NDdruleTcomXXunitEXXunitEName}{}%
}}
\newcommand{\NDdruleTcomXXunitEXXtenEName}[0]{\NDdrulename{Tcom\_unitE\_tenE}}
\newcommand{\NDdruleTcomXXunitEXXtenE}[1]{\NDdrule[#1]{%
}{
 \mathsf{let}\, \NDsym{(}   \mathsf{let}\, \NDnt{t_{{\mathrm{2}}}}  :   \mathsf{Unit}  \,\mathsf{be}\,  \mathsf{triv}  \,\mathsf{in}\, \NDnt{t_{{\mathrm{1}}}}   \NDsym{)}  :  \NDnt{X}  \otimes  \NDnt{Y} \,\mathsf{be}\, \NDmv{x}  \otimes  \NDmv{y} \,\mathsf{in}\, \NDnt{t_{{\mathrm{3}}}}   \leadsto_\mathsf{c}   \mathsf{let}\, \NDnt{t_{{\mathrm{2}}}}  :   \mathsf{Unit}  \,\mathsf{be}\,  \mathsf{triv}  \,\mathsf{in}\, \NDsym{(}   \mathsf{let}\, \NDnt{t_{{\mathrm{1}}}}  :  \NDnt{X}  \otimes  \NDnt{Y} \,\mathsf{be}\, \NDmv{x}  \otimes  \NDmv{y} \,\mathsf{in}\, \NDnt{t_{{\mathrm{3}}}}   \NDsym{)} }{%
{\NDdruleTcomXXunitEXXtenEName}{}%
}}
\newcommand{\NDdruleTcomXXunitEXXimpEName}[0]{\NDdrulename{Tcom\_unitE\_impE}}
\newcommand{\NDdruleTcomXXunitEXXimpE}[1]{\NDdrule[#1]{%
}{
 \NDsym{(}   \mathsf{let}\, \NDnt{t_{{\mathrm{2}}}}  :   \mathsf{Unit}  \,\mathsf{be}\,  \mathsf{triv}  \,\mathsf{in}\, \NDnt{t_{{\mathrm{1}}}}   \NDsym{)}   \NDnt{t_{{\mathrm{3}}}}   \leadsto_\mathsf{c}   \mathsf{let}\, \NDnt{t_{{\mathrm{2}}}}  :   \mathsf{Unit}  \,\mathsf{be}\,  \mathsf{triv}  \,\mathsf{in}\, \NDsym{(}   \NDnt{t_{{\mathrm{1}}}}   \NDnt{t_{{\mathrm{3}}}}   \NDsym{)} }{%
{\NDdruleTcomXXunitEXXimpEName}{}%
}}
\newcommand{\NDdruleTcomXXtenEXXunitEName}[0]{\NDdrulename{Tcom\_tenE\_unitE}}
\newcommand{\NDdruleTcomXXtenEXXunitE}[1]{\NDdrule[#1]{%
}{
 \mathsf{let}\, \NDsym{(}   \mathsf{let}\, \NDnt{t_{{\mathrm{2}}}}  :  \NDnt{X}  \otimes  \NDnt{Y} \,\mathsf{be}\, \NDmv{x}  \otimes  \NDmv{y} \,\mathsf{in}\, \NDnt{t_{{\mathrm{1}}}}   \NDsym{)}  :   \mathsf{Unit}  \,\mathsf{be}\,  \mathsf{triv}  \,\mathsf{in}\, \NDnt{t_{{\mathrm{3}}}}   \leadsto_\mathsf{c}   \mathsf{let}\, \NDnt{t_{{\mathrm{2}}}}  :  \NDnt{X}  \otimes  \NDnt{Y} \,\mathsf{be}\, \NDmv{x}  \otimes  \NDmv{y} \,\mathsf{in}\, \NDsym{(}   \mathsf{let}\, \NDnt{t_{{\mathrm{1}}}}  :   \mathsf{Unit}  \,\mathsf{be}\,  \mathsf{triv}  \,\mathsf{in}\, \NDnt{t_{{\mathrm{3}}}}   \NDsym{)} }{%
{\NDdruleTcomXXtenEXXunitEName}{}%
}}
\newcommand{\NDdruleTcomXXtenEXXtenEName}[0]{\NDdrulename{Tcom\_tenE\_tenE}}
\newcommand{\NDdruleTcomXXtenEXXtenE}[1]{\NDdrule[#1]{%
}{
 \mathsf{let}\, \NDsym{(}   \mathsf{let}\, \NDnt{t_{{\mathrm{2}}}}  :  \NDnt{X_{{\mathrm{2}}}}  \otimes  \NDnt{Y_{{\mathrm{2}}}} \,\mathsf{be}\, \NDmv{x}  \otimes  \NDmv{y} \,\mathsf{in}\, \NDnt{t_{{\mathrm{1}}}}   \NDsym{)}  :  \NDnt{X_{{\mathrm{1}}}}  \otimes  \NDnt{Y_{{\mathrm{1}}}} \,\mathsf{be}\, \NDmv{w}  \otimes  \NDmv{z} \,\mathsf{in}\, \NDnt{t_{{\mathrm{3}}}}   \leadsto_\mathsf{c}   \mathsf{let}\, \NDnt{t_{{\mathrm{2}}}}  :  \NDnt{X_{{\mathrm{2}}}}  \otimes  \NDnt{Y_{{\mathrm{2}}}} \,\mathsf{be}\, \NDmv{x}  \otimes  \NDmv{y} \,\mathsf{in}\, \NDsym{(}   \mathsf{let}\, \NDnt{t_{{\mathrm{1}}}}  :  \NDnt{X_{{\mathrm{1}}}}  \otimes  \NDnt{Y_{{\mathrm{1}}}} \,\mathsf{be}\, \NDmv{w}  \otimes  \NDmv{z} \,\mathsf{in}\, \NDnt{t_{{\mathrm{3}}}}   \NDsym{)} }{%
{\NDdruleTcomXXtenEXXtenEName}{}%
}}
\newcommand{\NDdruleTcomXXtenEXXimpEName}[0]{\NDdrulename{Tcom\_tenE\_impE}}
\newcommand{\NDdruleTcomXXtenEXXimpE}[1]{\NDdrule[#1]{%
}{
 \NDsym{(}   \mathsf{let}\, \NDnt{t_{{\mathrm{2}}}}  :  \NDnt{X_{{\mathrm{2}}}}  \otimes  \NDnt{Y_{{\mathrm{2}}}} \,\mathsf{be}\, \NDmv{x}  \otimes  \NDmv{y} \,\mathsf{in}\, \NDnt{t_{{\mathrm{1}}}}   \NDsym{)}   \NDnt{t_{{\mathrm{3}}}}   \leadsto_\mathsf{c}   \mathsf{let}\, \NDnt{t_{{\mathrm{2}}}}  :  \NDnt{X_{{\mathrm{2}}}}  \otimes  \NDnt{Y_{{\mathrm{2}}}} \,\mathsf{be}\, \NDmv{x}  \otimes  \NDmv{y} \,\mathsf{in}\, \NDsym{(}   \NDnt{t_{{\mathrm{1}}}}   \NDnt{t_{{\mathrm{3}}}}   \NDsym{)} }{%
{\NDdruleTcomXXtenEXXimpEName}{}%
}}
\newcommand{\NDdruleTcomXXimpEXXunitEName}[0]{\NDdrulename{Tcom\_impE\_unitE}}
\newcommand{\NDdruleTcomXXimpEXXunitE}[1]{\NDdrule[#1]{%
}{
 \mathsf{let}\, \NDsym{(}   \NDnt{t_{{\mathrm{1}}}}   \NDnt{t_{{\mathrm{2}}}}   \NDsym{)}  :   \mathsf{Unit}  \,\mathsf{be}\,  \mathsf{triv}  \,\mathsf{in}\, \NDnt{t_{{\mathrm{3}}}}   \leadsto_\mathsf{c}   \NDnt{t_{{\mathrm{1}}}}   \NDsym{(}   \mathsf{let}\, \NDnt{t_{{\mathrm{2}}}}  :   \mathsf{Unit}  \,\mathsf{be}\,  \mathsf{triv}  \,\mathsf{in}\, \NDnt{t_{{\mathrm{3}}}}   \NDsym{)} }{%
{\NDdruleTcomXXimpEXXunitEName}{}%
}}
\newcommand{\NDdruleScomXXunitEXXunitEName}[0]{\NDdrulename{Scom\_unitE\_unitE}}
\newcommand{\NDdruleScomXXunitEXXunitE}[1]{\NDdrule[#1]{%
}{
 \mathsf{let}\, \NDsym{(}   \mathsf{let}\, \NDnt{s_{{\mathrm{2}}}}  :   \mathsf{Unit}  \,\mathsf{be}\,  \mathsf{triv}  \,\mathsf{in}\, \NDnt{s_{{\mathrm{1}}}}   \NDsym{)}  :   \mathsf{Unit}  \,\mathsf{be}\,  \mathsf{triv}  \,\mathsf{in}\, \NDnt{s_{{\mathrm{3}}}}   \leadsto_\mathsf{c}   \mathsf{let}\, \NDnt{s_{{\mathrm{2}}}}  :   \mathsf{Unit}  \,\mathsf{be}\,  \mathsf{triv}  \,\mathsf{in}\, \NDsym{(}   \mathsf{let}\, \NDnt{s_{{\mathrm{1}}}}  :   \mathsf{Unit}  \,\mathsf{be}\,  \mathsf{triv}  \,\mathsf{in}\, \NDnt{s_{{\mathrm{3}}}}   \NDsym{)} }{%
{\NDdruleScomXXunitEXXunitEName}{}%
}}
\newcommand{\NDdruleScomXXunitETwoXXunitEName}[0]{\NDdrulename{Scom\_unitE2\_unitE}}
\newcommand{\NDdruleScomXXunitETwoXXunitE}[1]{\NDdrule[#1]{%
}{
 \mathsf{let}\, \NDsym{(}   \mathsf{let}\, \NDnt{t}  :   \mathsf{Unit}  \,\mathsf{be}\,  \mathsf{triv}  \,\mathsf{in}\, \NDnt{s_{{\mathrm{1}}}}   \NDsym{)}  :   \mathsf{Unit}  \,\mathsf{be}\,  \mathsf{triv}  \,\mathsf{in}\, \NDnt{s_{{\mathrm{2}}}}   \leadsto_\mathsf{c}   \mathsf{let}\, \NDnt{t}  :   \mathsf{Unit}  \,\mathsf{be}\,  \mathsf{triv}  \,\mathsf{in}\, \NDsym{(}   \mathsf{let}\, \NDnt{s_{{\mathrm{1}}}}  :   \mathsf{Unit}  \,\mathsf{be}\,  \mathsf{triv}  \,\mathsf{in}\, \NDnt{s_{{\mathrm{2}}}}   \NDsym{)} }{%
{\NDdruleScomXXunitETwoXXunitEName}{}%
}}
\newcommand{\NDdruleScomXXunitEXXimprEName}[0]{\NDdrulename{Scom\_unitE\_imprE}}
\newcommand{\NDdruleScomXXunitEXXimprE}[1]{\NDdrule[#1]{%
}{
 \mathsf{app}_r\, \NDsym{(}   \mathsf{let}\, \NDnt{s_{{\mathrm{2}}}}  :   \mathsf{Unit}  \,\mathsf{be}\,  \mathsf{triv}  \,\mathsf{in}\, \NDnt{s_{{\mathrm{1}}}}   \NDsym{)} \, \NDnt{s_{{\mathrm{3}}}}   \leadsto_\mathsf{c}   \mathsf{let}\, \NDnt{s_{{\mathrm{2}}}}  :   \mathsf{Unit}  \,\mathsf{be}\,  \mathsf{triv}  \,\mathsf{in}\, \NDsym{(}   \mathsf{app}_r\, \NDnt{s_{{\mathrm{1}}}} \, \NDnt{s_{{\mathrm{3}}}}   \NDsym{)} }{%
{\NDdruleScomXXunitEXXimprEName}{}%
}}
\newcommand{\NDdruleScomXXunitETwoXXimprEName}[0]{\NDdrulename{Scom\_unitE2\_imprE}}
\newcommand{\NDdruleScomXXunitETwoXXimprE}[1]{\NDdrule[#1]{%
}{
 \mathsf{app}_r\, \NDsym{(}   \mathsf{let}\, \NDnt{t}  :   \mathsf{Unit}  \,\mathsf{be}\,  \mathsf{triv}  \,\mathsf{in}\, \NDnt{s_{{\mathrm{1}}}}   \NDsym{)} \, \NDnt{s_{{\mathrm{2}}}}   \leadsto_\mathsf{c}   \mathsf{let}\, \NDnt{t}  :   \mathsf{Unit}  \,\mathsf{be}\,  \mathsf{triv}  \,\mathsf{in}\, \NDsym{(}   \mathsf{app}_r\, \NDnt{s_{{\mathrm{1}}}} \, \NDnt{s_{{\mathrm{2}}}}   \NDsym{)} }{%
{\NDdruleScomXXunitETwoXXimprEName}{}%
}}
\newcommand{\NDdruleScomXXunitEXXFEName}[0]{\NDdrulename{Scom\_unitE\_FE}}
\newcommand{\NDdruleScomXXunitEXXFE}[1]{\NDdrule[#1]{%
}{
 \mathsf{let}\, \NDsym{(}   \mathsf{let}\, \NDnt{s_{{\mathrm{2}}}}  :   \mathsf{Unit}  \,\mathsf{be}\,  \mathsf{triv}  \,\mathsf{in}\, \NDnt{s_{{\mathrm{1}}}}   \NDsym{)}  :   \mathsf{F} \NDnt{X}  \,\mathsf{be}\,  \mathsf{F}\, \NDmv{x}  \,\mathsf{in}\, \NDnt{s_{{\mathrm{3}}}}   \leadsto_\mathsf{c}   \mathsf{let}\, \NDnt{s_{{\mathrm{2}}}}  :   \mathsf{Unit}  \,\mathsf{be}\,  \mathsf{triv}  \,\mathsf{in}\, \NDsym{(}   \mathsf{let}\, \NDnt{s_{{\mathrm{1}}}}  :   \mathsf{F} \NDnt{X}  \,\mathsf{be}\,  \mathsf{F}\, \NDmv{x}  \,\mathsf{in}\, \NDnt{s_{{\mathrm{3}}}}   \NDsym{)} }{%
{\NDdruleScomXXunitEXXFEName}{}%
}}
\newcommand{\NDdruleScomXXunitETwoXXFEName}[0]{\NDdrulename{Scom\_unitE2\_FE}}
\newcommand{\NDdruleScomXXunitETwoXXFE}[1]{\NDdrule[#1]{%
}{
 \mathsf{let}\, \NDsym{(}   \mathsf{let}\, \NDnt{t}  :   \mathsf{Unit}  \,\mathsf{be}\,  \mathsf{triv}  \,\mathsf{in}\, \NDnt{s_{{\mathrm{1}}}}   \NDsym{)}  :   \mathsf{F} \NDnt{X}  \,\mathsf{be}\,  \mathsf{F}\, \NDmv{x}  \,\mathsf{in}\, \NDnt{s_{{\mathrm{2}}}}   \leadsto_\mathsf{c}   \mathsf{let}\, \NDnt{t}  :   \mathsf{Unit}  \,\mathsf{be}\,  \mathsf{triv}  \,\mathsf{in}\, \NDsym{(}   \mathsf{let}\, \NDnt{s_{{\mathrm{1}}}}  :   \mathsf{F} \NDnt{X}  \,\mathsf{be}\,  \mathsf{F}\, \NDmv{x}  \,\mathsf{in}\, \NDnt{s_{{\mathrm{2}}}}   \NDsym{)} }{%
{\NDdruleScomXXunitETwoXXFEName}{}%
}}
\newcommand{\NDdruleScomXXtenEXXunitEName}[0]{\NDdrulename{Scom\_tenE\_unitE}}
\newcommand{\NDdruleScomXXtenEXXunitE}[1]{\NDdrule[#1]{%
}{
 \mathsf{let}\, \NDsym{(}   \mathsf{let}\, \NDnt{s_{{\mathrm{2}}}}  :  \NDnt{A}  \triangleright  \NDnt{B} \,\mathsf{be}\, \NDmv{x}  \triangleright  \NDmv{y} \,\mathsf{in}\, \NDnt{s_{{\mathrm{1}}}}   \NDsym{)}  :   \mathsf{Unit}  \,\mathsf{be}\,  \mathsf{triv}  \,\mathsf{in}\, \NDnt{s_{{\mathrm{3}}}}   \leadsto_\mathsf{c}   \mathsf{let}\, \NDnt{s_{{\mathrm{2}}}}  :  \NDnt{A}  \triangleright  \NDnt{B} \,\mathsf{be}\, \NDmv{x}  \triangleright  \NDmv{y} \,\mathsf{in}\, \NDsym{(}   \mathsf{let}\, \NDnt{s_{{\mathrm{1}}}}  :   \mathsf{Unit}  \,\mathsf{be}\,  \mathsf{triv}  \,\mathsf{in}\, \NDnt{s_{{\mathrm{3}}}}   \NDsym{)} }{%
{\NDdruleScomXXtenEXXunitEName}{}%
}}
\newcommand{\NDdruleScomXXtenETwoXXunitEName}[0]{\NDdrulename{Scom\_tenE2\_unitE}}
\newcommand{\NDdruleScomXXtenETwoXXunitE}[1]{\NDdrule[#1]{%
}{
 \mathsf{let}\, \NDsym{(}   \mathsf{let}\, \NDnt{t}  :  \NDnt{X}  \otimes  \NDnt{Y} \,\mathsf{be}\, \NDmv{x}  \otimes  \NDmv{y} \,\mathsf{in}\, \NDnt{s_{{\mathrm{1}}}}   \NDsym{)}  :   \mathsf{Unit}  \,\mathsf{be}\,  \mathsf{triv}  \,\mathsf{in}\, \NDnt{s_{{\mathrm{2}}}}   \leadsto_\mathsf{c}   \mathsf{let}\, \NDnt{t}  :  \NDnt{X}  \otimes  \NDnt{Y} \,\mathsf{be}\, \NDmv{x}  \otimes  \NDmv{y} \,\mathsf{in}\, \NDsym{(}   \mathsf{let}\, \NDnt{s_{{\mathrm{1}}}}  :   \mathsf{Unit}  \,\mathsf{be}\,  \mathsf{triv}  \,\mathsf{in}\, \NDnt{s_{{\mathrm{2}}}}   \NDsym{)} }{%
{\NDdruleScomXXtenETwoXXunitEName}{}%
}}
\newcommand{\NDdruleScomXXtenEXXtenEName}[0]{\NDdrulename{Scom\_tenE\_tenE}}
\newcommand{\NDdruleScomXXtenEXXtenE}[1]{\NDdrule[#1]{%
}{
 \mathsf{let}\, \NDsym{(}   \mathsf{let}\, \NDnt{s_{{\mathrm{2}}}}  :  \NDnt{A_{{\mathrm{2}}}}  \triangleright  \NDnt{B_{{\mathrm{2}}}} \,\mathsf{be}\, \NDmv{x}  \triangleright  \NDmv{y} \,\mathsf{in}\, \NDnt{s_{{\mathrm{1}}}}   \NDsym{)}  :  \NDnt{A_{{\mathrm{1}}}}  \triangleright  \NDnt{B_{{\mathrm{1}}}} \,\mathsf{be}\, \NDmv{w}  \triangleright  \NDmv{z} \,\mathsf{in}\, \NDnt{s_{{\mathrm{3}}}}   \leadsto_\mathsf{c}   \mathsf{let}\, \NDnt{s_{{\mathrm{2}}}}  :  \NDnt{A_{{\mathrm{2}}}}  \triangleright  \NDnt{B_{{\mathrm{2}}}} \,\mathsf{be}\, \NDmv{x}  \triangleright  \NDmv{y} \,\mathsf{in}\, \NDsym{(}   \mathsf{let}\, \NDnt{s_{{\mathrm{1}}}}  :  \NDnt{A_{{\mathrm{1}}}}  \triangleright  \NDnt{B_{{\mathrm{1}}}} \,\mathsf{be}\, \NDmv{w}  \triangleright  \NDmv{z} \,\mathsf{in}\, \NDnt{s_{{\mathrm{3}}}}   \NDsym{)} }{%
{\NDdruleScomXXtenEXXtenEName}{}%
}}
\newcommand{\NDdruleScomXXtenETwoXXtenEName}[0]{\NDdrulename{Scom\_tenE2\_tenE}}
\newcommand{\NDdruleScomXXtenETwoXXtenE}[1]{\NDdrule[#1]{%
}{
 \mathsf{let}\, \NDsym{(}   \mathsf{let}\, \NDnt{t}  :  \NDnt{X}  \otimes  \NDnt{Y} \,\mathsf{be}\, \NDmv{x}  \otimes  \NDmv{y} \,\mathsf{in}\, \NDnt{s_{{\mathrm{1}}}}   \NDsym{)}  :  \NDnt{A_{{\mathrm{1}}}}  \triangleright  \NDnt{B_{{\mathrm{1}}}} \,\mathsf{be}\, \NDmv{w}  \triangleright  \NDmv{z} \,\mathsf{in}\, \NDnt{s_{{\mathrm{2}}}}   \leadsto_\mathsf{c}   \mathsf{let}\, \NDnt{t}  :  \NDnt{X}  \otimes  \NDnt{Y} \,\mathsf{be}\, \NDmv{x}  \otimes  \NDmv{y} \,\mathsf{in}\, \NDsym{(}   \mathsf{let}\, \NDnt{s_{{\mathrm{1}}}}  :  \NDnt{A_{{\mathrm{1}}}}  \triangleright  \NDnt{B_{{\mathrm{1}}}} \,\mathsf{be}\, \NDmv{w}  \triangleright  \NDmv{z} \,\mathsf{in}\, \NDnt{s_{{\mathrm{2}}}}   \NDsym{)} }{%
{\NDdruleScomXXtenETwoXXtenEName}{}%
}}
\newcommand{\NDdruleScomXXtenEXXimprEName}[0]{\NDdrulename{Scom\_tenE\_imprE}}
\newcommand{\NDdruleScomXXtenEXXimprE}[1]{\NDdrule[#1]{%
}{
 \mathsf{app}_r\, \NDsym{(}   \mathsf{let}\, \NDnt{s_{{\mathrm{2}}}}  :  \NDnt{A_{{\mathrm{2}}}}  \triangleright  \NDnt{B_{{\mathrm{2}}}} \,\mathsf{be}\, \NDmv{x}  \triangleright  \NDmv{y} \,\mathsf{in}\, \NDnt{s_{{\mathrm{1}}}}   \NDsym{)} \, \NDnt{s_{{\mathrm{3}}}}   \leadsto_\mathsf{c}   \mathsf{let}\, \NDnt{s_{{\mathrm{2}}}}  :  \NDnt{A_{{\mathrm{2}}}}  \triangleright  \NDnt{B_{{\mathrm{2}}}} \,\mathsf{be}\, \NDmv{x}  \triangleright  \NDmv{y} \,\mathsf{in}\, \NDsym{(}   \mathsf{app}_r\, \NDnt{s_{{\mathrm{1}}}} \, \NDnt{s_{{\mathrm{3}}}}   \NDsym{)} }{%
{\NDdruleScomXXtenEXXimprEName}{}%
}}
\newcommand{\NDdruleScomXXtenETwoXXimprEName}[0]{\NDdrulename{Scom\_tenE2\_imprE}}
\newcommand{\NDdruleScomXXtenETwoXXimprE}[1]{\NDdrule[#1]{%
}{
 \mathsf{app}_r\, \NDsym{(}   \mathsf{let}\, \NDnt{t}  :  \NDnt{X}  \otimes  \NDnt{Y} \,\mathsf{be}\, \NDmv{x}  \otimes  \NDmv{y} \,\mathsf{in}\, \NDnt{s_{{\mathrm{1}}}}   \NDsym{)} \, \NDnt{s_{{\mathrm{2}}}}   \leadsto_\mathsf{c}   \mathsf{let}\, \NDnt{t}  :  \NDnt{X}  \otimes  \NDnt{Y} \,\mathsf{be}\, \NDmv{x}  \otimes  \NDmv{y} \,\mathsf{in}\, \NDsym{(}   \mathsf{app}_r\, \NDnt{s_{{\mathrm{1}}}} \, \NDnt{s_{{\mathrm{2}}}}   \NDsym{)} }{%
{\NDdruleScomXXtenETwoXXimprEName}{}%
}}
\newcommand{\NDdruleScomXXtenEXXimplEName}[0]{\NDdrulename{Scom\_tenE\_implE}}
\newcommand{\NDdruleScomXXtenEXXimplE}[1]{\NDdrule[#1]{%
}{
 \mathsf{app}_l\, \NDsym{(}   \mathsf{let}\, \NDnt{s_{{\mathrm{2}}}}  :  \NDnt{A_{{\mathrm{2}}}}  \triangleright  \NDnt{B_{{\mathrm{2}}}} \,\mathsf{be}\, \NDmv{x}  \triangleright  \NDmv{y} \,\mathsf{in}\, \NDnt{s_{{\mathrm{1}}}}   \NDsym{)} \, \NDnt{s_{{\mathrm{3}}}}   \leadsto_\mathsf{c}   \mathsf{let}\, \NDnt{s_{{\mathrm{2}}}}  :  \NDnt{A_{{\mathrm{2}}}}  \triangleright  \NDnt{B_{{\mathrm{2}}}} \,\mathsf{be}\, \NDmv{x}  \triangleright  \NDmv{y} \,\mathsf{in}\, \NDsym{(}   \mathsf{app}_l\, \NDnt{s_{{\mathrm{1}}}} \, \NDnt{s_{{\mathrm{3}}}}   \NDsym{)} }{%
{\NDdruleScomXXtenEXXimplEName}{}%
}}
\newcommand{\NDdruleScomXXtenETwoXXimplEName}[0]{\NDdrulename{Scom\_tenE2\_implE}}
\newcommand{\NDdruleScomXXtenETwoXXimplE}[1]{\NDdrule[#1]{%
}{
 \mathsf{app}_l\, \NDsym{(}   \mathsf{let}\, \NDnt{t}  :  \NDnt{X}  \otimes  \NDnt{Y} \,\mathsf{be}\, \NDmv{x}  \otimes  \NDmv{y} \,\mathsf{in}\, \NDnt{s_{{\mathrm{1}}}}   \NDsym{)} \, \NDnt{s_{{\mathrm{2}}}}   \leadsto_\mathsf{c}   \mathsf{let}\, \NDnt{t}  :  \NDnt{X}  \otimes  \NDnt{Y} \,\mathsf{be}\, \NDmv{x}  \otimes  \NDmv{y} \,\mathsf{in}\, \NDsym{(}   \mathsf{app}_l\, \NDnt{s_{{\mathrm{1}}}} \, \NDnt{s_{{\mathrm{2}}}}   \NDsym{)} }{%
{\NDdruleScomXXtenETwoXXimplEName}{}%
}}
\newcommand{\NDdruleScomXXtenEXXFEName}[0]{\NDdrulename{Scom\_tenE\_FE}}
\newcommand{\NDdruleScomXXtenEXXFE}[1]{\NDdrule[#1]{%
}{
 \mathsf{let}\, \NDsym{(}   \mathsf{let}\, \NDnt{s_{{\mathrm{2}}}}  :  \NDnt{A}  \triangleright  \NDnt{B} \,\mathsf{be}\, \NDmv{x}  \triangleright  \NDmv{y} \,\mathsf{in}\, \NDnt{s_{{\mathrm{1}}}}   \NDsym{)}  :   \mathsf{F} \NDnt{X}  \,\mathsf{be}\,  \mathsf{F}\, \NDmv{z}  \,\mathsf{in}\, \NDnt{s_{{\mathrm{3}}}}   \leadsto_\mathsf{c}   \mathsf{let}\, \NDnt{s_{{\mathrm{2}}}}  :  \NDnt{A}  \triangleright  \NDnt{B} \,\mathsf{be}\, \NDmv{x}  \triangleright  \NDmv{y} \,\mathsf{in}\, \NDsym{(}   \mathsf{let}\, \NDnt{s_{{\mathrm{1}}}}  :   \mathsf{F} \NDnt{X}  \,\mathsf{be}\,  \mathsf{F}\, \NDmv{z}  \,\mathsf{in}\, \NDnt{s_{{\mathrm{3}}}}   \NDsym{)} }{%
{\NDdruleScomXXtenEXXFEName}{}%
}}
\newcommand{\NDdruleScomXXtenETwoXXFEName}[0]{\NDdrulename{Scom\_tenE2\_FE}}
\newcommand{\NDdruleScomXXtenETwoXXFE}[1]{\NDdrule[#1]{%
}{
 \mathsf{let}\, \NDsym{(}   \mathsf{let}\, \NDnt{t}  :  \NDnt{X}  \otimes  \NDnt{Y} \,\mathsf{be}\, \NDmv{x}  \otimes  \NDmv{y} \,\mathsf{in}\, \NDnt{s_{{\mathrm{1}}}}   \NDsym{)}  :   \mathsf{F} \NDnt{Z}  \,\mathsf{be}\,  \mathsf{F}\, \NDmv{z}  \,\mathsf{in}\, \NDnt{s_{{\mathrm{3}}}}   \leadsto_\mathsf{c}   \mathsf{let}\, \NDnt{t}  :  \NDnt{X}  \otimes  \NDnt{Y} \,\mathsf{be}\, \NDmv{x}  \otimes  \NDmv{y} \,\mathsf{in}\, \NDsym{(}   \mathsf{let}\, \NDnt{s_{{\mathrm{1}}}}  :   \mathsf{F} \NDnt{Z}  \,\mathsf{be}\,  \mathsf{F}\, \NDmv{z}  \,\mathsf{in}\, \NDnt{s_{{\mathrm{3}}}}   \NDsym{)} }{%
{\NDdruleScomXXtenETwoXXFEName}{}%
}}
\newcommand{\NDdruleScomXXFEXXunitEName}[0]{\NDdrulename{Scom\_FE\_unitE}}
\newcommand{\NDdruleScomXXFEXXunitE}[1]{\NDdrule[#1]{%
}{
 \mathsf{let}\, \NDsym{(}   \mathsf{let}\, \NDnt{s_{{\mathrm{2}}}}  :   \mathsf{F} \NDnt{X}  \,\mathsf{be}\,  \mathsf{F}\, \NDmv{x}  \,\mathsf{in}\, \NDnt{s_{{\mathrm{1}}}}   \NDsym{)}  :   \mathsf{Unit}  \,\mathsf{be}\,  \mathsf{triv}  \,\mathsf{in}\, \NDnt{s_{{\mathrm{3}}}}   \leadsto_\mathsf{c}   \mathsf{let}\, \NDnt{s_{{\mathrm{2}}}}  :   \mathsf{F} \NDnt{X}  \,\mathsf{be}\,  \mathsf{F}\, \NDmv{x}  \,\mathsf{in}\, \NDsym{(}   \mathsf{let}\, \NDnt{s_{{\mathrm{1}}}}  :   \mathsf{Unit}  \,\mathsf{be}\,  \mathsf{triv}  \,\mathsf{in}\, \NDnt{s_{{\mathrm{2}}}}   \NDsym{)} }{%
{\NDdruleScomXXFEXXunitEName}{}%
}}
\newcommand{\NDdruleScomXXFEXXtenEName}[0]{\NDdrulename{Scom\_FE\_tenE}}
\newcommand{\NDdruleScomXXFEXXtenE}[1]{\NDdrule[#1]{%
}{
 \mathsf{let}\, \NDsym{(}   \mathsf{let}\, \NDnt{s_{{\mathrm{2}}}}  :   \mathsf{F} \NDnt{X}  \,\mathsf{be}\,  \mathsf{F}\, \NDmv{x}  \,\mathsf{in}\, \NDnt{s_{{\mathrm{1}}}}   \NDsym{)}  :  \NDnt{A}  \triangleright  \NDnt{B} \,\mathsf{be}\, \NDmv{x}  \triangleright  \NDmv{y} \,\mathsf{in}\, \NDnt{s_{{\mathrm{3}}}}   \leadsto_\mathsf{c}   \mathsf{let}\, \NDnt{s_{{\mathrm{2}}}}  :   \mathsf{F} \NDnt{X}  \,\mathsf{be}\,  \mathsf{F}\, \NDmv{x}  \,\mathsf{in}\, \NDsym{(}   \mathsf{let}\, \NDnt{s_{{\mathrm{1}}}}  :  \NDnt{A}  \triangleright  \NDnt{B} \,\mathsf{be}\, \NDmv{x}  \triangleright  \NDmv{y} \,\mathsf{in}\, \NDnt{s_{{\mathrm{3}}}}   \NDsym{)} }{%
{\NDdruleScomXXFEXXtenEName}{}%
}}
\newcommand{\NDdruleScomXXFEXXimprEName}[0]{\NDdrulename{Scom\_FE\_imprE}}
\newcommand{\NDdruleScomXXFEXXimprE}[1]{\NDdrule[#1]{%
}{
 \mathsf{app}_r\, \NDsym{(}   \mathsf{let}\, \NDnt{s_{{\mathrm{2}}}}  :   \mathsf{F} \NDnt{X}  \,\mathsf{be}\,  \mathsf{F}\, \NDmv{x}  \,\mathsf{in}\, \NDnt{s_{{\mathrm{1}}}}   \NDsym{)} \, \NDnt{s_{{\mathrm{3}}}}   \leadsto_\mathsf{c}   \mathsf{let}\, \NDnt{s_{{\mathrm{2}}}}  :   \mathsf{F} \NDnt{X}  \,\mathsf{be}\,  \mathsf{F}\, \NDmv{x}  \,\mathsf{in}\, \NDsym{(}   \mathsf{app}_r\, \NDnt{s_{{\mathrm{1}}}} \, \NDnt{s_{{\mathrm{3}}}}   \NDsym{)} }{%
{\NDdruleScomXXFEXXimprEName}{}%
}}
\newcommand{\NDdruleScomXXFEXXimplEName}[0]{\NDdrulename{Scom\_FE\_implE}}
\newcommand{\NDdruleScomXXFEXXimplE}[1]{\NDdrule[#1]{%
}{
 \mathsf{app}_l\, \NDsym{(}   \mathsf{let}\, \NDnt{s_{{\mathrm{2}}}}  :   \mathsf{F} \NDnt{X}  \,\mathsf{be}\,  \mathsf{F}\, \NDmv{x}  \,\mathsf{in}\, \NDnt{s_{{\mathrm{1}}}}   \NDsym{)} \, \NDnt{s_{{\mathrm{3}}}}   \leadsto_\mathsf{c}   \mathsf{let}\, \NDnt{s_{{\mathrm{2}}}}  :   \mathsf{F} \NDnt{X}  \,\mathsf{be}\,  \mathsf{F}\, \NDmv{x}  \,\mathsf{in}\, \NDsym{(}   \mathsf{app}_l\, \NDnt{s_{{\mathrm{1}}}} \, \NDnt{s_{{\mathrm{3}}}}   \NDsym{)} }{%
{\NDdruleScomXXFEXXimplEName}{}%
}}
\newcommand{\NDdruleScomXXFEXXFEName}[0]{\NDdrulename{Scom\_FE\_FE}}
\newcommand{\NDdruleScomXXFEXXFE}[1]{\NDdrule[#1]{%
}{
 \mathsf{let}\, \NDsym{(}   \mathsf{let}\, \NDnt{s_{{\mathrm{2}}}}  :   \mathsf{F} \NDnt{X}  \,\mathsf{be}\,  \mathsf{F}\, \NDmv{x}  \,\mathsf{in}\, \NDnt{s_{{\mathrm{1}}}}   \NDsym{)}  :   \mathsf{F} \NDnt{Y}  \,\mathsf{be}\,  \mathsf{F}\, \NDmv{y}  \,\mathsf{in}\, \NDnt{s_{{\mathrm{3}}}}   \leadsto_\mathsf{c}   \mathsf{let}\, \NDnt{s_{{\mathrm{2}}}}  :   \mathsf{F} \NDnt{X}  \,\mathsf{be}\,  \mathsf{F}\, \NDmv{x}  \,\mathsf{in}\, \NDsym{(}   \mathsf{let}\, \NDnt{s_{{\mathrm{1}}}}  :   \mathsf{F} \NDnt{Y}  \,\mathsf{be}\,  \mathsf{F}\, \NDmv{y}  \,\mathsf{in}\, \NDnt{s_{{\mathrm{3}}}}   \NDsym{)} }{%
{\NDdruleScomXXFEXXFEName}{}%
}}
\renewcommand{\NDdrule}[4][]{{\displaystyle\frac{\begin{array}{l}#2\end{array}}{#3}\,#4}}
\renewcommand{\NDdruleTXXidName}{\mathcal{C}\text{-ax}}
\renewcommand{\NDdruleTXXunitIName}{\mathcal{C}\text{-}\mathsf{Unit}_I}
\renewcommand{\NDdruleTXXunitEName}{\mathcal{C}\text{-}\mathsf{Unit}_E}
\renewcommand{\NDdruleTXXtenIName}{\mathcal{C}\text{-}\otimes_I}
\renewcommand{\NDdruleTXXtenEName}{\mathcal{C}\text{-}\otimes_E}
\renewcommand{\NDdruleTXXimpIName}{\mathcal{C}\text{-}\multimap_I}
\renewcommand{\NDdruleTXXimpEName}{\mathcal{C}\text{-}\multimap_E}
\renewcommand{\NDdruleTXXGIName}{\mathcal{C}\text{-}\mathsf{G}_I}
\renewcommand{\NDdruleTXXbetaName}{\mathcal{C}\text{-}\mathsf{ex}}
\renewcommand{\NDdruleTXXcutName}{\mathcal{C}\text{-}\mathsf{Cut}}
\renewcommand{\NDdruleSXXidName}{\mathcal{L}\text{-ax}}
\renewcommand{\NDdruleSXXunitIName}{\mathcal{L}\text{-}\mathsf{Unit}_I}
\renewcommand{\NDdruleSXXunitEOneName}{\mathcal{LC}\text{-}\mathsf{Unit}_E}
\renewcommand{\NDdruleSXXunitETwoName}{\mathcal{L}\text{-}\mathsf{Unit}_E}
\renewcommand{\NDdruleSXXtenIName}{\mathcal{L}\text{-}\otimes_I}
\renewcommand{\NDdruleSXXtenEOneName}{\mathcal{LC}\text{-}\otimes_E}
\renewcommand{\NDdruleSXXtenETwoName}{\mathcal{L}\text{-}\otimes_E}
\renewcommand{\NDdruleSXXimprIName}{\mathcal{L}\text{-}\rightharpoonup_I}
\renewcommand{\NDdruleSXXimprEName}{\mathcal{L}\text{-}\rightharpoonup_E}
\renewcommand{\NDdruleSXXimplIName}{\mathcal{L}\text{-}\leftharpoonup_I}
\renewcommand{\NDdruleSXXimplEName}{\mathcal{L}\text{-}\leftharpoonup_E}
\renewcommand{\NDdruleSXXFIName}{\mathcal{L}\text{-}\mathsf{F}_I}
\renewcommand{\NDdruleSXXFEName}{\mathcal{L}\text{-}\mathsf{F}_E}
\renewcommand{\NDdruleSXXGEName}{\mathcal{L}\text{-}\mathsf{G}_E}
\renewcommand{\NDdruleSXXbetaName}{\mathcal{L}\text{-}\mathsf{ex}}
\renewcommand{\NDdruleSXXcutOneName}{\mathcal{LC}\text{-}\mathsf{Cut}}
\renewcommand{\NDdruleSXXcutTwoName}{\mathcal{L}\text{-}\mathsf{Cut}}
\newcommand{\SCdrule}[4][]{{\displaystyle\frac{\begin{array}{l}#2\end{array}}{#3}\quad\SCdrulename{#4}}}
\newcommand{\SCpremise}[1]{ #1 \\}
\newenvironment{SCdefnblock}[3][]{ \framebox{\mbox{#2}} \quad #3 \\[0pt]}{}
\newcommand{\SCnt}[1]{\mathit{#1}}
\newcommand{\SCsym}[1]{#1}
\newcommand{\SCdruleTXXaxName}[0]{\SCdrulename{T\_ax}}
\newcommand{\SCdruleTXXax}[1]{\SCdrule[#1]{%
}{
\SCnt{X}  \vdash_\mathcal{C}  \SCnt{X}}{%
{\SCdruleTXXaxName}{}%
}}
\newcommand{\SCdruleTXXunitLName}[0]{\SCdrulename{T\_unitL}}
\newcommand{\SCdruleTXXunitL}[1]{\SCdrule[#1]{%
\SCpremise{\Phi  \SCsym{,}  \Psi  \vdash_\mathcal{C}  \SCnt{X}}%
}{
\Phi  \SCsym{,}   \mathsf{Unit}   \SCsym{,}  \Psi  \vdash_\mathcal{C}  \SCnt{X}}{%
{\SCdruleTXXunitLName}{}%
}}
\newcommand{\SCdruleTXXunitRName}[0]{\SCdrulename{T\_unitR}}
\newcommand{\SCdruleTXXunitR}[1]{\SCdrule[#1]{%
}{
 \cdot   \vdash_\mathcal{C}   \mathsf{Unit} }{%
{\SCdruleTXXunitRName}{}%
}}
\newcommand{\SCdruleTXXtenLName}[0]{\SCdrulename{T\_tenL}}
\newcommand{\SCdruleTXXtenL}[1]{\SCdrule[#1]{%
\SCpremise{\Phi  \SCsym{,}  \SCnt{X}  \SCsym{,}  \SCnt{Y}  \SCsym{,}  \Psi  \vdash_\mathcal{C}  \SCnt{Z}}%
}{
\Phi  \SCsym{,}  \SCnt{X}  \otimes  \SCnt{Y}  \SCsym{,}  \Psi  \vdash_\mathcal{C}  \SCnt{Z}}{%
{\SCdruleTXXtenLName}{}%
}}
\newcommand{\SCdruleTXXtenRName}[0]{\SCdrulename{T\_tenR}}
\newcommand{\SCdruleTXXtenR}[1]{\SCdrule[#1]{%
\SCpremise{ \Phi  \vdash_\mathcal{C}  \SCnt{X}  \quad  \Psi  \vdash_\mathcal{C}  \SCnt{Y} }%
}{
\Phi  \SCsym{,}  \Psi  \vdash_\mathcal{C}  \SCnt{X}  \otimes  \SCnt{Y}}{%
{\SCdruleTXXtenRName}{}%
}}
\newcommand{\SCdruleTXXimpLName}[0]{\SCdrulename{T\_impL}}
\newcommand{\SCdruleTXXimpL}[1]{\SCdrule[#1]{%
\SCpremise{ \Phi  \vdash_\mathcal{C}  \SCnt{X}  \quad  \Psi_{{\mathrm{1}}}  \SCsym{,}  \SCnt{Y}  \SCsym{,}  \Psi_{{\mathrm{2}}}  \vdash_\mathcal{C}  \SCnt{Z} }%
}{
\Psi_{{\mathrm{1}}}  \SCsym{,}  \SCnt{X}  \multimap  \SCnt{Y}  \SCsym{,}  \Phi  \SCsym{,}  \Psi_{{\mathrm{2}}}  \vdash_\mathcal{C}  \SCnt{Z}}{%
{\SCdruleTXXimpLName}{}%
}}
\newcommand{\SCdruleTXXimpRName}[0]{\SCdrulename{T\_impR}}
\newcommand{\SCdruleTXXimpR}[1]{\SCdrule[#1]{%
\SCpremise{\Phi  \SCsym{,}  \SCnt{X}  \SCsym{,}  \Psi  \vdash_\mathcal{C}  \SCnt{Y}}%
}{
\Phi  \SCsym{,}  \Psi  \vdash_\mathcal{C}  \SCnt{X}  \multimap  \SCnt{Y}}{%
{\SCdruleTXXimpRName}{}%
}}
\newcommand{\SCdruleTXXGrName}[0]{\SCdrulename{T\_Gr}}
\newcommand{\SCdruleTXXGr}[1]{\SCdrule[#1]{%
\SCpremise{\Phi  \vdash_\mathcal{L}  \SCnt{A}}%
}{
\Phi  \vdash_\mathcal{C}   \mathsf{G} \SCnt{A} }{%
{\SCdruleTXXGrName}{}%
}}
\newcommand{\SCdruleTXXcutName}[0]{\SCdrulename{T\_cut}}
\newcommand{\SCdruleTXXcut}[1]{\SCdrule[#1]{%
\SCpremise{ \Phi  \vdash_\mathcal{C}  \SCnt{X}  \quad  \Psi_{{\mathrm{1}}}  \SCsym{,}  \SCnt{X}  \SCsym{,}  \Psi_{{\mathrm{2}}}  \vdash_\mathcal{C}  \SCnt{Y} }%
}{
\Psi_{{\mathrm{1}}}  \SCsym{,}  \Phi  \SCsym{,}  \Psi_{{\mathrm{2}}}  \vdash_\mathcal{C}  \SCnt{Y}}{%
{\SCdruleTXXcutName}{}%
}}
\newcommand{\SCdruleTXXexName}[0]{\SCdrulename{T\_ex}}
\newcommand{\SCdruleTXXex}[1]{\SCdrule[#1]{%
\SCpremise{\Phi  \SCsym{,}  \SCnt{X}  \SCsym{,}  \SCnt{Y}  \SCsym{,}  \Psi  \vdash_\mathcal{C}  \SCnt{Z}}%
}{
\Phi  \SCsym{,}  \SCnt{Y}  \SCsym{,}  \SCnt{X}  \SCsym{,}  \Psi  \vdash_\mathcal{C}  \SCnt{Z}}{%
{\SCdruleTXXexName}{}%
}}
\newcommand{\SCdruleSXXaxName}[0]{\SCdrulename{S\_ax}}
\newcommand{\SCdruleSXXax}[1]{\SCdrule[#1]{%
}{
\SCnt{A}  \vdash_\mathcal{L}  \SCnt{A}}{%
{\SCdruleSXXaxName}{}%
}}
\newcommand{\SCdruleSXXunitLOneName}[0]{\SCdrulename{S\_unitL1}}
\newcommand{\SCdruleSXXunitLOne}[1]{\SCdrule[#1]{%
\SCpremise{\Gamma  \SCsym{;}  \Delta  \vdash_\mathcal{L}  \SCnt{A}}%
}{
\Gamma  \SCsym{;}   \mathsf{Unit}   \SCsym{;}  \Delta  \vdash_\mathcal{L}  \SCnt{A}}{%
{\SCdruleSXXunitLOneName}{}%
}}
\newcommand{\SCdruleSXXunitLTwoName}[0]{\SCdrulename{S\_unitL2}}
\newcommand{\SCdruleSXXunitLTwo}[1]{\SCdrule[#1]{%
\SCpremise{\Gamma  \SCsym{;}  \Delta  \vdash_\mathcal{L}  \SCnt{A}}%
}{
\Gamma  \SCsym{;}   \mathsf{Unit}   \SCsym{;}  \Delta  \vdash_\mathcal{L}  \SCnt{A}}{%
{\SCdruleSXXunitLTwoName}{}%
}}
\newcommand{\SCdruleSXXunitRName}[0]{\SCdrulename{S\_unitR}}
\newcommand{\SCdruleSXXunitR}[1]{\SCdrule[#1]{%
}{
 \cdot   \vdash_\mathcal{L}   \mathsf{Unit} }{%
{\SCdruleSXXunitRName}{}%
}}
\newcommand{\SCdruleSXXexName}[0]{\SCdrulename{S\_ex}}
\newcommand{\SCdruleSXXex}[1]{\SCdrule[#1]{%
\SCpremise{\Gamma  \SCsym{;}  \SCnt{X}  \SCsym{;}  \SCnt{Y}  \SCsym{;}  \Delta  \vdash_\mathcal{L}  \SCnt{A}}%
}{
\Gamma  \SCsym{;}  \SCnt{Y}  \SCsym{;}  \SCnt{X}  \SCsym{;}  \Delta  \vdash_\mathcal{L}  \SCnt{A}}{%
{\SCdruleSXXexName}{}%
}}
\newcommand{\SCdruleSXXtenLOneName}[0]{\SCdrulename{S\_tenL1}}
\newcommand{\SCdruleSXXtenLOne}[1]{\SCdrule[#1]{%
\SCpremise{\Gamma  \SCsym{;}  \SCnt{X}  \SCsym{;}  \SCnt{Y}  \SCsym{;}  \Delta  \vdash_\mathcal{L}  \SCnt{A}}%
}{
\Gamma  \SCsym{;}  \SCnt{X}  \otimes  \SCnt{Y}  \SCsym{;}  \Delta  \vdash_\mathcal{L}  \SCnt{A}}{%
{\SCdruleSXXtenLOneName}{}%
}}
\newcommand{\SCdruleSXXtenLTwoName}[0]{\SCdrulename{S\_tenL2}}
\newcommand{\SCdruleSXXtenLTwo}[1]{\SCdrule[#1]{%
\SCpremise{\Gamma  \SCsym{;}  \SCnt{A}  \SCsym{;}  \SCnt{B}  \SCsym{;}  \Delta  \vdash_\mathcal{L}  \SCnt{C}}%
}{
\Gamma  \SCsym{;}  \SCnt{A}  \triangleright  \SCnt{B}  \SCsym{;}  \Delta  \vdash_\mathcal{L}  \SCnt{C}}{%
{\SCdruleSXXtenLTwoName}{}%
}}
\newcommand{\SCdruleSXXtenRName}[0]{\SCdrulename{S\_tenR}}
\newcommand{\SCdruleSXXtenR}[1]{\SCdrule[#1]{%
\SCpremise{ \Gamma  \vdash_\mathcal{L}  \SCnt{A}  \quad  \Delta  \vdash_\mathcal{L}  \SCnt{B} }%
}{
\Gamma  \SCsym{;}  \Delta  \vdash_\mathcal{L}  \SCnt{A}  \triangleright  \SCnt{B}}{%
{\SCdruleSXXtenRName}{}%
}}
\newcommand{\SCdruleSXXimpLName}[0]{\SCdrulename{S\_impL}}
\newcommand{\SCdruleSXXimpL}[1]{\SCdrule[#1]{%
\SCpremise{ \Phi  \vdash_\mathcal{C}  \SCnt{X}  \quad  \Gamma  \SCsym{;}  \SCnt{Y}  \SCsym{;}  \Delta  \vdash_\mathcal{L}  \SCnt{A} }%
}{
\Gamma  \SCsym{;}  \SCnt{X}  \multimap  \SCnt{Y}  \SCsym{;}  \Phi  \SCsym{;}  \Delta  \vdash_\mathcal{L}  \SCnt{A}}{%
{\SCdruleSXXimpLName}{}%
}}
\newcommand{\SCdruleSXXimprLName}[0]{\SCdrulename{S\_imprL}}
\newcommand{\SCdruleSXXimprL}[1]{\SCdrule[#1]{%
\SCpremise{ \Gamma  \vdash_\mathcal{L}  \SCnt{A}  \quad  \Delta_{{\mathrm{1}}}  \SCsym{;}  \SCnt{B}  \SCsym{;}  \Delta_{{\mathrm{2}}}  \vdash_\mathcal{L}  \SCnt{C} }%
}{
\Delta_{{\mathrm{1}}}  \SCsym{;}  \SCnt{A}  \rightharpoonup  \SCnt{B}  \SCsym{;}  \Gamma  \SCsym{;}  \Delta_{{\mathrm{2}}}  \vdash_\mathcal{L}  \SCnt{C}}{%
{\SCdruleSXXimprLName}{}%
}}
\newcommand{\SCdruleSXXimprRName}[0]{\SCdrulename{S\_imprR}}
\newcommand{\SCdruleSXXimprR}[1]{\SCdrule[#1]{%
\SCpremise{\Gamma  \SCsym{;}  \SCnt{A}  \vdash_\mathcal{L}  \SCnt{B}}%
}{
\Gamma  \vdash_\mathcal{L}  \SCnt{A}  \rightharpoonup  \SCnt{B}}{%
{\SCdruleSXXimprRName}{}%
}}
\newcommand{\SCdruleSXXimplLName}[0]{\SCdrulename{S\_implL}}
\newcommand{\SCdruleSXXimplL}[1]{\SCdrule[#1]{%
\SCpremise{ \Gamma  \vdash_\mathcal{L}  \SCnt{A}  \quad  \Delta_{{\mathrm{1}}}  \SCsym{;}  \SCnt{B}  \SCsym{;}  \Delta_{{\mathrm{2}}}  \vdash_\mathcal{L}  \SCnt{C} }%
}{
\Delta_{{\mathrm{1}}}  \SCsym{;}  \Gamma  \SCsym{;}  \SCnt{B}  \leftharpoonup  \SCnt{A}  \SCsym{;}  \Delta_{{\mathrm{2}}}  \vdash_\mathcal{L}  \SCnt{C}}{%
{\SCdruleSXXimplLName}{}%
}}
\newcommand{\SCdruleSXXimplRName}[0]{\SCdrulename{S\_implR}}
\newcommand{\SCdruleSXXimplR}[1]{\SCdrule[#1]{%
\SCpremise{\SCnt{A}  \SCsym{;}  \Gamma  \vdash_\mathcal{L}  \SCnt{B}}%
}{
\Gamma  \vdash_\mathcal{L}  \SCnt{B}  \leftharpoonup  \SCnt{A}}{%
{\SCdruleSXXimplRName}{}%
}}
\newcommand{\SCdruleSXXFlName}[0]{\SCdrulename{S\_Fl}}
\newcommand{\SCdruleSXXFl}[1]{\SCdrule[#1]{%
\SCpremise{\Gamma  \SCsym{;}  \SCnt{X}  \SCsym{;}  \Delta  \vdash_\mathcal{L}  \SCnt{A}}%
}{
\Gamma  \SCsym{;}   \mathsf{F} \SCnt{X}   \SCsym{;}  \Delta  \vdash_\mathcal{L}  \SCnt{A}}{%
{\SCdruleSXXFlName}{}%
}}
\newcommand{\SCdruleSXXFrName}[0]{\SCdrulename{S\_Fr}}
\newcommand{\SCdruleSXXFr}[1]{\SCdrule[#1]{%
\SCpremise{\Phi  \vdash_\mathcal{C}  \SCnt{X}}%
}{
\Phi  \vdash_\mathcal{L}   \mathsf{F} \SCnt{X} }{%
{\SCdruleSXXFrName}{}%
}}
\newcommand{\SCdruleSXXGlName}[0]{\SCdrulename{S\_Gl}}
\newcommand{\SCdruleSXXGl}[1]{\SCdrule[#1]{%
\SCpremise{\Gamma  \SCsym{;}  \SCnt{A}  \SCsym{;}  \Delta  \vdash_\mathcal{L}  \SCnt{B}}%
}{
\Gamma  \SCsym{;}   \mathsf{G} \SCnt{A}   \SCsym{;}  \Delta  \vdash_\mathcal{L}  \SCnt{B}}{%
{\SCdruleSXXGlName}{}%
}}
\newcommand{\SCdruleSXXcutOneName}[0]{\SCdrulename{S\_cut1}}
\newcommand{\SCdruleSXXcutOne}[1]{\SCdrule[#1]{%
\SCpremise{ \Phi  \vdash_\mathcal{C}  \SCnt{X}  \quad  \Delta_{{\mathrm{1}}}  \SCsym{;}  \SCnt{X}  \SCsym{;}  \Delta_{{\mathrm{2}}}  \vdash_\mathcal{L}  \SCnt{A} }%
}{
\Delta_{{\mathrm{1}}}  \SCsym{;}  \Phi  \SCsym{;}  \Delta_{{\mathrm{1}}}  \vdash_\mathcal{L}  \SCnt{A}}{%
{\SCdruleSXXcutOneName}{}%
}}
\newcommand{\SCdruleSXXcutTwoName}[0]{\SCdrulename{S\_cut2}}
\newcommand{\SCdruleSXXcutTwo}[1]{\SCdrule[#1]{%
\SCpremise{ \Gamma  \vdash_\mathcal{L}  \SCnt{A}  \quad  \Delta_{{\mathrm{1}}}  \SCsym{;}  \SCnt{A}  \SCsym{;}  \Delta_{{\mathrm{2}}}  \vdash_\mathcal{L}  \SCnt{B} }%
}{
\Delta_{{\mathrm{1}}}  \SCsym{;}  \Gamma  \SCsym{;}  \Delta_{{\mathrm{2}}}  \vdash_\mathcal{L}  \SCnt{B}}{%
{\SCdruleSXXcutTwoName}{}%
}}
\renewcommand{\SCdrule}[4][]{{\displaystyle\frac{\begin{array}{l}#2\end{array}}{#3}\,#4}}
\renewcommand{\SCdruleTXXaxName}{\mathcal{C}\text{-ax}}
\renewcommand{\SCdruleTXXunitLName}{\mathcal{C}\text{-}\mathsf{Unit}_L}
\renewcommand{\SCdruleTXXunitRName}{\mathcal{C}\text{-}\mathsf{Unit}_L}
\renewcommand{\SCdruleTXXtenLName}{\mathcal{C}\text{-}\otimes_L}
\renewcommand{\SCdruleTXXtenRName}{\mathcal{C}\text{-}\otimes_R}
\renewcommand{\SCdruleTXXimpLName}{\mathcal{C}\text{-}\multimap_L}
\renewcommand{\SCdruleTXXimpRName}{\mathcal{C}\text{-}\multimap_R}
\renewcommand{\SCdruleTXXGrName}{\mathcal{C}\text{-}\mathsf{G}_R}
\renewcommand{\SCdruleTXXcutName}{\mathcal{C}\text{-}\mathsf{Cut}}
\renewcommand{\SCdruleTXXexName}{\mathcal{C}\text{-}\mathsf{ex}}
\renewcommand{\SCdruleSXXaxName}{\mathcal{L}\text{-ax}}
\renewcommand{\SCdruleSXXunitLOneName}{\mathcal{LC}\text{-}\mathsf{Unit}_L}
\renewcommand{\SCdruleSXXunitLTwoName}{\mathcal{L}\text{-}\mathsf{Unit}_L}
\renewcommand{\SCdruleSXXunitRName}{\mathcal{L}\text{-}\mathsf{Unit}_R}
\renewcommand{\SCdruleSXXexName}{\mathcal{L}\text{-}\mathsf{ex}}
\renewcommand{\SCdruleSXXtenLOneName}{\mathcal{LC}\text{-}\otimes_L}
\renewcommand{\SCdruleSXXtenLTwoName}{\mathcal{L}\text{-}\otimes_L}
\renewcommand{\SCdruleSXXtenRName}{\mathcal{L}\text{-}\otimes_R}
\renewcommand{\SCdruleSXXimpLName}{\mathcal{L}\text{-}\multimap_L}
\renewcommand{\SCdruleSXXimprLName}{\mathcal{L}\text{-}\rightharpoonup_L}
\renewcommand{\SCdruleSXXimprRName}{\mathcal{L}\text{-}\rightharpoonup_R}
\renewcommand{\SCdruleSXXimplLName}{\mathcal{L}\text{-}\leftharpoonup_L}
\renewcommand{\SCdruleSXXimplRName}{\mathcal{L}\text{-}\leftharpoonup_R}
\renewcommand{\SCdruleSXXFlName}{\mathcal{L}\text{-}\mathsf{F}_L}
\renewcommand{\SCdruleSXXFrName}{\mathcal{L}\text{-}\mathsf{F}_R}
\renewcommand{\SCdruleSXXGlName}{\mathcal{L}\text{-}\mathsf{G}_L}
\renewcommand{\SCdruleSXXcutOneName}{\mathcal{LC}\text{-}\mathsf{Cut}}
\renewcommand{\SCdruleSXXcutTwoName}{\mathcal{L}\text{-}\mathsf{Cut}}
\newcommand{\Elledrule}[4][]{{\displaystyle\frac{\begin{array}{l}#2\end{array}}{#3}\quad\Elledrulename{#4}}}
\newcommand{\Ellepremise}[1]{ #1 \\}
\newenvironment{Elledefnblock}[3][]{ \framebox{\mbox{#2}} \quad #3 \\[0pt]}{}
\newcommand{\Ellent}[1]{\mathit{#1}}
\newcommand{\Ellemv}[1]{\mathit{#1}}
\newcommand{\Ellesym}[1]{#1}
\newcommand{\ElledruleTXXaxName}[0]{\Elledrulename{T\_ax}}
\newcommand{\ElledruleTXXunitLName}[0]{\Elledrulename{T\_unitL}}
\newcommand{\ElledruleTXXunitL}[1]{\Elledrule[#1]{%
\Ellepremise{\Phi  \Ellesym{,}  \Psi  \vdash_\mathcal{C}  \Ellent{t}  \Ellesym{:}  \Ellent{X}}%
}{
\Phi  \Ellesym{,}  \Ellemv{x}  \Ellesym{:}   \mathsf{Unit}   \Ellesym{,}  \Psi  \vdash_\mathcal{C}   \mathsf{let}\, \Ellemv{x}  :   \mathsf{Unit}  \,\mathsf{be}\,  \mathsf{triv}  \,\mathsf{in}\, \Ellent{t}   \Ellesym{:}  \Ellent{X}}{%
{\ElledruleTXXunitLName}{}%
}}
\newcommand{\ElledruleTXXtenLName}[0]{\Elledrulename{T\_tenL}}
\newcommand{\ElledruleTXXtenL}[1]{\Elledrule[#1]{%
\Ellepremise{\Phi  \Ellesym{,}  \Ellemv{x}  \Ellesym{:}  \Ellent{X}  \Ellesym{,}  \Ellemv{y}  \Ellesym{:}  \Ellent{Y}  \Ellesym{,}  \Psi  \vdash_\mathcal{C}  \Ellent{t}  \Ellesym{:}  \Ellent{Z}}%
}{
\Phi  \Ellesym{,}  \Ellemv{z}  \Ellesym{:}  \Ellent{X}  \otimes  \Ellent{Y}  \Ellesym{,}  \Psi  \vdash_\mathcal{C}   \mathsf{let}\, \Ellemv{z}  :  \Ellent{X}  \otimes  \Ellent{Y} \,\mathsf{be}\, \Ellemv{x}  \otimes  \Ellemv{y} \,\mathsf{in}\, \Ellent{t}   \Ellesym{:}  \Ellent{Z}}{%
{\ElledruleTXXtenLName}{}%
}}
\newcommand{\ElledruleTXXimpLName}[0]{\Elledrulename{T\_impL}}
\newcommand{\ElledruleTXXimpL}[1]{\Elledrule[#1]{%
\Ellepremise{ \Phi  \vdash_\mathcal{C}  \Ellent{t_{{\mathrm{1}}}}  \Ellesym{:}  \Ellent{X}  \quad  \Psi_{{\mathrm{1}}}  \Ellesym{,}  \Ellemv{x}  \Ellesym{:}  \Ellent{Y}  \Ellesym{,}  \Psi_{{\mathrm{2}}}  \vdash_\mathcal{C}  \Ellent{t_{{\mathrm{2}}}}  \Ellesym{:}  \Ellent{Z} }%
}{
\Psi_{{\mathrm{1}}}  \Ellesym{,}  \Ellemv{y}  \Ellesym{:}  \Ellent{X}  \multimap  \Ellent{Y}  \Ellesym{,}  \Phi  \Ellesym{,}  \Psi_{{\mathrm{2}}}  \vdash_\mathcal{C}  \Ellesym{[}   \Ellemv{y}   \Ellent{t_{{\mathrm{1}}}}   \Ellesym{/}  \Ellemv{x}  \Ellesym{]}  \Ellent{t_{{\mathrm{2}}}}  \Ellesym{:}  \Ellent{Z}}{%
{\ElledruleTXXimpLName}{}%
}}
\newcommand{\ElledruleTXXcutName}[0]{\Elledrulename{T\_cut}}
\newcommand{\ElledruleSXXaxName}[0]{\Elledrulename{S\_ax}}
\newcommand{\ElledruleSXXunitLOneName}[0]{\Elledrulename{S\_unitL1}}
\newcommand{\ElledruleSXXunitLOne}[1]{\Elledrule[#1]{%
\Ellepremise{\Gamma  \Ellesym{;}  \Delta  \vdash_\mathcal{L}  \Ellent{s}  \Ellesym{:}  \Ellent{A}}%
}{
\Gamma  \Ellesym{;}  \Ellemv{x}  \Ellesym{:}   \mathsf{Unit}   \Ellesym{;}  \Delta  \vdash_\mathcal{L}   \mathsf{let}\, \Ellemv{x}  :   \mathsf{Unit}  \,\mathsf{be}\,  \mathsf{triv}  \,\mathsf{in}\, \Ellent{s}   \Ellesym{:}  \Ellent{A}}{%
{\ElledruleSXXunitLOneName}{}%
}}
\newcommand{\ElledruleSXXunitLTwoName}[0]{\Elledrulename{S\_unitL2}}
\newcommand{\ElledruleSXXunitLTwo}[1]{\Elledrule[#1]{%
\Ellepremise{\Gamma  \Ellesym{;}  \Delta  \vdash_\mathcal{L}  \Ellent{s}  \Ellesym{:}  \Ellent{A}}%
}{
\Gamma  \Ellesym{;}  \Ellemv{x}  \Ellesym{:}   \mathsf{Unit}   \Ellesym{;}  \Delta  \vdash_\mathcal{L}   \mathsf{let}\, \Ellemv{x}  :   \mathsf{Unit}  \,\mathsf{be}\,  \mathsf{triv}  \,\mathsf{in}\, \Ellent{s}   \Ellesym{:}  \Ellent{A}}{%
{\ElledruleSXXunitLTwoName}{}%
}}
\newcommand{\ElledruleSXXtenLOneName}[0]{\Elledrulename{S\_tenL1}}
\newcommand{\ElledruleSXXtenLOne}[1]{\Elledrule[#1]{%
\Ellepremise{\Gamma  \Ellesym{;}  \Ellemv{x}  \Ellesym{:}  \Ellent{X}  \Ellesym{;}  \Ellemv{y}  \Ellesym{:}  \Ellent{Y}  \Ellesym{;}  \Delta  \vdash_\mathcal{L}  \Ellent{s}  \Ellesym{:}  \Ellent{A}}%
}{
\Gamma  \Ellesym{;}  \Ellemv{z}  \Ellesym{:}  \Ellent{X}  \otimes  \Ellent{Y}  \Ellesym{;}  \Delta  \vdash_\mathcal{L}   \mathsf{let}\, \Ellemv{z}  :  \Ellent{X}  \otimes  \Ellent{Y} \,\mathsf{be}\, \Ellemv{x}  \otimes  \Ellemv{y} \,\mathsf{in}\, \Ellent{s}   \Ellesym{:}  \Ellent{A}}{%
{\ElledruleSXXtenLOneName}{}%
}}
\newcommand{\ElledruleSXXtenLTwoName}[0]{\Elledrulename{S\_tenL2}}
\newcommand{\ElledruleSXXtenLTwo}[1]{\Elledrule[#1]{%
\Ellepremise{\Gamma  \Ellesym{;}  \Ellemv{x}  \Ellesym{:}  \Ellent{A}  \Ellesym{;}  \Ellemv{y}  \Ellesym{:}  \Ellent{B}  \Ellesym{;}  \Delta  \vdash_\mathcal{L}  \Ellent{s}  \Ellesym{:}  \Ellent{C}}%
}{
\Gamma  \Ellesym{;}  \Ellemv{z}  \Ellesym{:}  \Ellent{A}  \triangleright  \Ellent{B}  \Ellesym{;}  \Delta  \vdash_\mathcal{L}   \mathsf{let}\, \Ellemv{z}  :  \Ellent{A}  \triangleright  \Ellent{B} \,\mathsf{be}\, \Ellemv{x}  \triangleright  \Ellemv{y} \,\mathsf{in}\, \Ellent{s}   \Ellesym{:}  \Ellent{C}}{%
{\ElledruleSXXtenLTwoName}{}%
}}
\newcommand{\ElledruleSXXimpLName}[0]{\Elledrulename{S\_impL}}
\newcommand{\ElledruleSXXimpL}[1]{\Elledrule[#1]{%
\Ellepremise{ \Phi  \vdash_\mathcal{C}  \Ellent{t}  \Ellesym{:}  \Ellent{X}  \quad  \Gamma  \Ellesym{;}  \Ellemv{x}  \Ellesym{:}  \Ellent{Y}  \Ellesym{;}  \Delta  \vdash_\mathcal{L}  \Ellent{s}  \Ellesym{:}  \Ellent{A} }%
}{
\Gamma  \Ellesym{;}  \Ellemv{y}  \Ellesym{:}  \Ellent{X}  \multimap  \Ellent{Y}  \Ellesym{;}  \Phi  \Ellesym{;}  \Delta  \vdash_\mathcal{L}  \Ellesym{[}   \Ellemv{y}   \Ellent{t}   \Ellesym{/}  \Ellemv{x}  \Ellesym{]}  \Ellent{s}  \Ellesym{:}  \Ellent{A}}{%
{\ElledruleSXXimpLName}{}%
}}
\newcommand{\ElledruleSXXimprLName}[0]{\Elledrulename{S\_imprL}}
\newcommand{\ElledruleSXXimprL}[1]{\Elledrule[#1]{%
\Ellepremise{ \Gamma  \vdash_\mathcal{L}  \Ellent{s_{{\mathrm{1}}}}  \Ellesym{:}  \Ellent{A}  \quad  \Delta_{{\mathrm{1}}}  \Ellesym{;}  \Ellemv{x}  \Ellesym{:}  \Ellent{B}  \Ellesym{;}  \Delta_{{\mathrm{2}}}  \vdash_\mathcal{L}  \Ellent{s_{{\mathrm{2}}}}  \Ellesym{:}  \Ellent{C} }%
}{
\Delta_{{\mathrm{1}}}  \Ellesym{;}  \Ellemv{y}  \Ellesym{:}  \Ellent{A}  \rightharpoonup  \Ellent{B}  \Ellesym{;}  \Gamma  \Ellesym{;}  \Delta_{{\mathrm{2}}}  \vdash_\mathcal{L}  \Ellesym{[}   \mathsf{app}_r\, \Ellemv{y} \, \Ellent{s_{{\mathrm{1}}}}   \Ellesym{/}  \Ellemv{x}  \Ellesym{]}  \Ellent{s_{{\mathrm{2}}}}  \Ellesym{:}  \Ellent{C}}{%
{\ElledruleSXXimprLName}{}%
}}
\newcommand{\ElledruleSXXimplLName}[0]{\Elledrulename{S\_implL}}
\newcommand{\ElledruleSXXimplL}[1]{\Elledrule[#1]{%
\Ellepremise{ \Gamma  \vdash_\mathcal{L}  \Ellent{s_{{\mathrm{1}}}}  \Ellesym{:}  \Ellent{A}  \quad  \Delta_{{\mathrm{1}}}  \Ellesym{;}  \Ellemv{x}  \Ellesym{:}  \Ellent{B}  \Ellesym{;}  \Delta_{{\mathrm{2}}}  \vdash_\mathcal{L}  \Ellent{s_{{\mathrm{2}}}}  \Ellesym{:}  \Ellent{C} }%
}{
\Delta_{{\mathrm{1}}}  \Ellesym{;}  \Gamma  \Ellesym{;}  \Ellemv{y}  \Ellesym{:}  \Ellent{B}  \leftharpoonup  \Ellent{A}  \Ellesym{;}  \Delta_{{\mathrm{2}}}  \vdash_\mathcal{L}  \Ellesym{[}   \mathsf{app}_l\, \Ellemv{y} \, \Ellent{s_{{\mathrm{1}}}}   \Ellesym{/}  \Ellemv{x}  \Ellesym{]}  \Ellent{s_{{\mathrm{2}}}}  \Ellesym{:}  \Ellent{C}}{%
{\ElledruleSXXimplLName}{}%
}}
\newcommand{\ElledruleSXXFlName}[0]{\Elledrulename{S\_Fl}}
\newcommand{\ElledruleSXXFl}[1]{\Elledrule[#1]{%
\Ellepremise{\Gamma  \Ellesym{;}  \Ellemv{x}  \Ellesym{:}  \Ellent{X}  \Ellesym{;}  \Delta  \vdash_\mathcal{L}  \Ellent{s}  \Ellesym{:}  \Ellent{A}}%
}{
\Gamma  \Ellesym{;}  \Ellemv{y}  \Ellesym{:}   \mathsf{F} \Ellent{X}   \Ellesym{;}  \Delta  \vdash_\mathcal{L}   \mathsf{let}\, \Ellemv{y}  :   \mathsf{F} \Ellent{X}  \,\mathsf{be}\,  \mathsf{F}\, \Ellemv{x}  \,\mathsf{in}\, \Ellent{s}   \Ellesym{:}  \Ellent{A}}{%
{\ElledruleSXXFlName}{}%
}}
\newcommand{\ElledruleSXXGlName}[0]{\Elledrulename{S\_Gl}}
\newcommand{\ElledruleSXXGl}[1]{\Elledrule[#1]{%
\Ellepremise{\Gamma  \Ellesym{;}  \Ellemv{x}  \Ellesym{:}  \Ellent{A}  \Ellesym{;}  \Delta  \vdash_\mathcal{L}  \Ellent{s}  \Ellesym{:}  \Ellent{B}}%
}{
\Gamma  \Ellesym{;}  \Ellemv{y}  \Ellesym{:}   \mathsf{G} \Ellent{A}   \Ellesym{;}  \Delta  \vdash_\mathcal{L}   \mathsf{let}\, \Ellemv{y}  :   \mathsf{G} \Ellent{A}  \,\mathsf{be}\,  \mathsf{G}\, \Ellemv{x}  \,\mathsf{in}\, \Ellent{s}   \Ellesym{:}  \Ellent{B}}{%
{\ElledruleSXXGlName}{}%
}}
\newcommand{\ElledruleSXXcutOneName}[0]{\Elledrulename{S\_cut1}}
\newcommand{\ElledruleSXXcutTwoName}[0]{\Elledrulename{S\_cut2}}
\renewcommand{\Elledrule}[4][]{{\displaystyle\frac{\begin{array}{l}#2\end{array}}{#3}\,#4}}
\renewcommand{\ElledruleTXXaxName}{\mathcal{C}\text{-ax}}
\renewcommand{\ElledruleTXXunitLName}{\mathcal{C}\text{-}\mathsf{Unit}_L}
\renewcommand{\ElledruleTXXtenLName}{\mathcal{C}\text{-}\otimes_L}
\renewcommand{\ElledruleTXXimpLName}{\mathcal{C}\text{-}\multimap_L}
\renewcommand{\ElledruleTXXcutName}{\mathcal{C}\text{-}\mathsf{Cut}}
\renewcommand{\ElledruleSXXaxName}{\mathcal{L}\text{-ax}}
\renewcommand{\ElledruleSXXunitLOneName}{\mathcal{LC}\text{-}\mathsf{Unit}_L}
\renewcommand{\ElledruleSXXunitLTwoName}{\mathcal{L}\text{-}\mathsf{Unit}_L}
\renewcommand{\ElledruleSXXtenLOneName}{\mathcal{LC}\text{-}\otimes_L}
\renewcommand{\ElledruleSXXtenLTwoName}{\mathcal{L}\text{-}\otimes_L}
\renewcommand{\ElledruleSXXimpLName}{\mathcal{L}\text{-}\multimap_L}
\renewcommand{\ElledruleSXXimprLName}{\mathcal{L}\text{-}\rightharpoonup_L}
\renewcommand{\ElledruleSXXimplLName}{\mathcal{L}\text{-}\leftharpoonup_L}
\renewcommand{\ElledruleSXXFlName}{\mathcal{L}\text{-}\mathsf{F}_L}
\renewcommand{\ElledruleSXXGlName}{\mathcal{L}\text{-}\mathsf{G}_L}
\renewcommand{\ElledruleSXXcutOneName}{\mathcal{LC}\text{-}\mathsf{Cut}}
\renewcommand{\ElledruleSXXcutTwoName}{\mathcal{L}\text{-}\mathsf{Cut}}
\newenvironment{LNLdefnblock}[3][]{ \framebox{\mbox{#2}} \quad #3 \\[0pt]}{}
\newcommand{\Ldrule}[4][]{{\displaystyle\frac{\begin{array}{l}#2\end{array}}{#3}\quad\Ldrulename{#4}}}
\newcommand{\Lpremise}[1]{ #1 \\}
\newenvironment{Ldefnblock}[3][]{ \framebox{\mbox{#2}} \quad #3 \\[0pt]}{}
\newcommand{\Lnt}[1]{\mathit{#1}}
\newcommand{\Lsym}[1]{#1}
\newcommand{\Ldrulename}[1]{\textsc{#1}}
\newcommand{\LdruleErName}[0]{\Ldrulename{Er}}
\newcommand{\LdruleEr}[1]{\Ldrule[#1]{%
\Lpremise{  \kappa  \Gamma   \vdash  \Lnt{B} }%
}{
  \kappa  \Gamma   \vdash   \kappa  \Lnt{B}  }{%
{\LdruleErName}{}%
}}
\newcommand{\LdruleElName}[0]{\Ldrulename{El}}
\newcommand{\LdruleEl}[1]{\Ldrule[#1]{%
\Lpremise{ \Gamma_{{\mathrm{1}}}  \Lsym{,}  \Lnt{A}  \Lsym{,}  \Gamma_{{\mathrm{2}}}  \vdash  \Lnt{B} }%
}{
 \Gamma_{{\mathrm{1}}}  \Lsym{,}   \kappa  \Lnt{A}   \Lsym{,}  \Gamma_{{\mathrm{2}}}  \vdash  \Lnt{B} }{%
{\LdruleElName}{}%
}}
\newcommand{\LdruleEOneName}[0]{\Ldrulename{E1}}
\newcommand{\LdruleEOne}[1]{\Ldrule[#1]{%
\Lpremise{ \Gamma_{{\mathrm{1}}}  \Lsym{,}   \kappa  \Lnt{A}   \Lsym{,}  \Lnt{B}  \Lsym{,}  \Gamma_{{\mathrm{2}}}  \vdash  \Lnt{C} }%
}{
 \Gamma_{{\mathrm{1}}}  \Lsym{,}  \Lnt{B}  \Lsym{,}   \kappa  \Lnt{A}   \Lsym{,}  \Gamma_{{\mathrm{2}}}  \vdash  \Lnt{C} }{%
{\LdruleEOneName}{}%
}}
\newcommand{\LdruleETwoName}[0]{\Ldrulename{E2}}
\newcommand{\LdruleETwo}[1]{\Ldrule[#1]{%
\Lpremise{ \Gamma_{{\mathrm{1}}}  \Lsym{,}  \Lnt{A}  \Lsym{,}   \kappa  \Lnt{B}   \Lsym{,}  \Gamma_{{\mathrm{2}}}  \vdash  \Lnt{C} }%
}{
 \Gamma_{{\mathrm{1}}}  \Lsym{,}   \kappa  \Lnt{B}   \Lsym{,}  \Lnt{A}  \Lsym{,}  \Gamma_{{\mathrm{2}}}  \vdash  \Lnt{C} }{%
{\LdruleETwoName}{}%
}}
\renewcommand{\Ldrule}[4][]{{\displaystyle\frac{\begin{array}{l}#2\end{array}}{#3}\,\Ldrulename{#4}}}
\let\mto\to
\let\to\relax
\newcommand{\to}{\rightarrow}
\newcommand{\rto}{\leftharpoonup}
\newcommand{\lto}{\rightharpoonup}
\newcommand{\tri}{\triangleright}
\let\t\relax
\newcommand{\cat}[1]{\mathcal{#1}}
\newcommand{\func}[1]{\mathsf{#1}}
\newcommand{\Hom}[3]{\mathsf{Hom}_{\cat{#1}}(#2,#3)}
\newcommand{\Id}[0]{\mathsf{Id}}
\newcommand{\e}[1]{\mathsf{ex}_{#1}}
\newcommand{\m}[1]{\mathsf{m}_{#1}}
\newcommand{\n}[1]{\mathsf{n}_{#1}}
\newcommand{\p}[1]{\mathsf{p}_{#1}}
\newcommand{\t}[1]{\mathsf{t}_{#1}}
\newtheorem{theorem}{Theorem}
\newtheorem{lemma}[theorem]{Lemma}
\newtheorem{definition}[theorem]{Definition}
\title{On the Lambek Calculus with an Exchange Modality}
\author{Jiaming Jiang
\institute{Computer Science \\ North Carolina State University \\ Raleigh, North Carolina, USA}
\email{jjiang13@ncsu.edu}
\and
Harley Eades III
\institute{Computer Science \\ Augusta University \\ Augusta, Georgia, USA}
\email{harley.eades@gmail.com}
\and
Valeria de Paiva
\institute{Nuance Communications \\ Sunnyvale, California, USA}
\email{valeria.depaiva@gmail.com}
}
\begin{document}
\maketitle 

\begin{abstract}
  In this paper we introduce Commutative/Non-Commutative Logic (CNC
  logic) and two categorical models for CNC logic.  This work
  abstracts Benton's Linear/Non-Linear Logic \cite{Benton:1994} by
  removing the existence of the exchange structural rule. One should
  view this logic as composed of two logics; one sitting to the left
  of the other.  On the left, there is intuitionistic linear logic,
  and on the right is a mixed commutative/non-commutative
  formalization of the Lambek calculus. Then both of these logics are
  connected via a pair of monoidal adjoint functors.  An exchange
  modality is then derivable within the logic using the adjunction
  between both sides.  Thus, the adjoint functors allow one to pull
  the exchange structural rule from the left side to the right side.
  We then give a categorical model in terms of a monoidal adjunction,
  and then a concrete model in terms of dialectica Lambek spaces.
\end{abstract}

\section{Introduction}
\label{sec:introduction}
Joachim Lambek first introduced the Syntactic Calculus, now known as
the Lambek Calculus, in 1958 \cite{Lambek1958}.  Since then the Lambek
Calculus has largely been motivated by providing an explanation of the
mathematics of sentence structure, and can be found at the core of
Categorical Grammar; a term first used in the title of Bar-Hillel,
Gaifman and Shamir (1960), but categorical grammar began with
Ajdukiewicz (1935) quite a few years earlier. At the end of the
eighties the Lambek calculus and other systems of categorical grammars
were taken up by computational linguists as exemplified by
\cite{oehrle2012categorial,moortgat1988categorial,Barry:1991:PFS:977180.977215,hepple1990grammar}.

It was computational linguists who posed the question of whether it is
possible to isolate exchange using a modality in the same way that the
of-course modality of linear logic, $!A$, isolates weakening and
contraction.  de Paiva and Eades \cite{dePaiva2018} propose one
solution to this problem by extending the Lambek calculus with the
modality characterized by the following sequent calculus inference
rules:
\[
\small
\begin{array}{ccccccccccccccccccccc}  
  \LdruleEr{} & & \LdruleEl{} & & \LdruleEOne{} & & \LdruleETwo{} 
\end{array}
\]
The thing to note is that the modality $\kappa A$ appears on only one
of the operands being exchanged.  That is, these rules along with
those for the tensor product allow one to prove that $\kappa A \otimes
B \multimap B \otimes \kappa A$ holds.  This is somewhat at odds with
algebraic intuition, and it is unclear how this modality could be
decomposed into adjoint functors in a linear/non-linear (LNL)
formalization of the Lambek calculus.

In this paper we show how to add an exchange modality, $eA$, where the
modality now occurs on both operands being exchanged. That is, one can
show that $eA \otimes eB \multimap eB \otimes eA$ holds.  We give a sequent
calculus and a LNL natural deduction formalization for the Lambek calculus
with this new modality, and two categorical models: a LNL model and a
concrete model in dialectica spaces. Thus giving a second solution to the
problem proposed above.

The Lambek Calculus also has the potential for many applications in
other areas of computer science, such as, modeling processes.  Linear
Logic has been at the forefront of the study of process calculi for
many years \cite{HONDA20102223,Pratt:1997,ABRAMSKY19945}. We can think
of the commutative tensor product of linear logic as a parallel
operator.  For example, given a process $A$ and a process $B$, then we
can form the process $A \otimes B$ which runs both processes in
parallel.  If we remove commutativity from the tensor product we
obtain a sequential composition instead of parallel composition.  That
is, the process $A \rhd B$ first runs process $A$ and then process $B$
in that order.  Vaughan Pratt has stated that , ``sequential
composition has no evident counterpart in type theory'' see page 11 of
\cite{Pratt:1997}.  We believe that the Lambek Calculus will lead to
filling this hole.  

\textbf{Acknowledgments.}  The first two authors were supported by NSF
award \#1565557.  We thank the anonymous reviewers for their helpful
feedback that made this a better paper.



\section{A Sequent Calculus Formalization of CNC Logic}
\label{sec:sequent-calc}
We now introduce Commutative/Non-commutative (CNC) logic in the form of a
sequent calculus. One should view this logic as composed of two logics; one
sitting to the left of the other. On the left, there is intuitionistic
linear logic, denoted by $\cat{C}$ and on the right is the Lambek calculus
denoted by $\cat{L}$. Then we connect these two systems by a pair of
monoidal adjoint functors $\cat{C} : \func{F} \dashv \func{G} : \cat{L}$.
Keeping this intuition in mind we now define the syntax for CNC logic.

\begin{definition}
  \label{def:Lambek-syntax}
  The following grammar describes the syntax of the sequent calculus of
  CNC logic:
  \begin{center}\vspace{-3px}\small
    \begin{math}
      \begin{array}{lll}
        \text{($\cat{C}$-Types)} & \SCnt{W},\SCnt{X},\SCnt{Y},\SCnt{Z} ::=  \mathsf{Unit}  \mid \SCnt{X}  \otimes  \SCnt{Y} \mid \SCnt{X}  \multimap  \SCnt{Y} \mid  \mathsf{G} \SCnt{A} \\
        \text{($\cat{L}$-Types)} & \SCnt{A},\SCnt{B},\SCnt{C},D ::=  \mathsf{Unit}  \mid \SCnt{A}  \triangleright  \SCnt{B} \mid \SCnt{A}  \rightharpoonup  \SCnt{B} \mid \SCnt{B}  \leftharpoonup  \SCnt{A} \mid  \mathsf{F} \SCnt{X} \\
        \text{($\cat{C}$-Contexts)} & \Phi,\Psi ::=  \cdot  \mid \SCnt{X} \mid \Phi  \SCsym{,}  \Psi\\
        \text{($\cat{L}$-Contexts)} & \Gamma,\Delta ::=  \cdot  \mid \SCnt{A} \mid \SCnt{X} \mid \Gamma  \SCsym{;}  \Delta\\
      \end{array}
    \end{math}
  \end{center}
\end{definition}

The syntax for $\cat{C}$-types are the standard types for intuitionistic
linear logic. We have a constant $ \mathsf{Unit} $, tensor product $\SCnt{X}  \otimes  \SCnt{Y}$,
and linear implication $\SCnt{X}  \multimap  \SCnt{Y}$, but just as in LNL logic we also have a
type $ \mathsf{G} \SCnt{A} $ where $\SCnt{A}$ is an $\cat{L}$-type; that is, a type from the
non-commutative side corresponding to the right-adjoint functor between
$\cat{L}$ and $\cat{C}$. This functor can be used to import types from the
non-commutative side into the commutative side. Now a sequent in the the
commutative side is denoted by $\Phi  \vdash_\mathcal{C}  \SCnt{X}$ where $\Phi$ is a
$\cat{C}$-context, which is a sequence of types $\SCnt{X}$.

The non-commutative side is a bit more interesting than the commutative side
just introduced. Sequents in the non-commutative side are denoted by
$\Gamma  \vdash_\mathcal{L}  \SCnt{A}$ where $\Gamma$ is now a $\cat{L}$-context. These contexts are
ordered sequences of types from \emph{both} sides denoted by $\SCnt{B}$ and
$\SCnt{X}$ respectively. Given two contexts $\Gamma$ and $\Delta$ we denote their
concatenation by $\Gamma  \SCsym{;}  \Delta$; we use a semicolon here to emphasize the fact
that the contexts are ordered.

The context consisting of hypotheses from both sides goes back to
Benton~\cite{Benton:1994} and is a property unique to adjoint logics such as
Benton's LNL logic and CNC logic. This is also a very useful property
because it allows one to make use of both sides within the Lambek calculus
without the need to annotate every formula with a modality.

The reader familiar with LNL logic will notice that our sequent,
$\Gamma  \vdash_\mathcal{L}  \SCnt{A}$, differs from Benton's. His is of the form $\Gamma  \SCsym{;}  \Delta  \vdash_\mathcal{L}  \SCnt{A}$,
where $\Gamma$ contains non-linear types, and $\Delta$ contains linear
formulas. Just as Benton remarks, the splitting of his contexts was a
presentational device. One should view his contexts as merged, and hence,
linear formulas were fully mixed with non-linear formulas. Now why did we
not use this presentational device? Because, when contexts from LNL logic
become out of order Benton could use the exchange rule to put them back in
order again, but we no longer have general exchange. Thus, we are not able
to keep the context organized in this way.

The syntax for $\cat{L}$-types are of the typical form for the Lambek
Calculus. We have two unit types $ \mathsf{Unit} $ (one for each side), a
non-commutative tensor product $\SCnt{A}  \triangleright  \SCnt{B}$, right implication $\SCnt{A}  \rightharpoonup  \SCnt{B}$, and left implication $\SCnt{B}  \leftharpoonup  \SCnt{A}$. In standard Lambek
Calculus \cite{Pentus1995}, $\SCnt{A}  \rightharpoonup  \SCnt{B}$ is written as $B / A$ and
$\SCnt{B}  \leftharpoonup  \SCnt{A}$ as $A \backslash B$. We use $\rightharpoonup$ and
$\leftharpoonup$ here instead to indicate they are two directions of
the linear implication $\multimap$.

The sequent calculus for CNC logic can be found in
Figure~\ref{fig:CNC-sequent-calculus}. We split the figure in two: the top
of the figure are the rules of intuitionistic linear logic whose sequents
are the $\mathcal{C}$-sequents denoted by $\Psi  \vdash_\mathcal{C}  \SCnt{X}$, and the bottom of
the figure are the rules for the mixed commutative/non-commutative Lambek
calculus whose sequents are the $\mathcal{L}$-sequents denoted by
$\Gamma  \vdash_\mathcal{L}  \SCnt{A}$, but the two halves are connected via the functor rules
$\SCdruleTXXGrName{}$, $\SCdruleSXXGlName{}$, $\SCdruleSXXFlName{}$, and
$\SCdruleSXXFrName{}$, and the rules $\SCdruleSXXunitLOneName{}$,
$\SCdruleSXXexName{}$, $\SCdruleSXXtenLOneName{}$, $\SCdruleSXXimpLName{}$,
$\NDdruleSXXcutOneName{}$.
\begin{figure}[!h]
  \footnotesize
  \resizebox{\columnwidth}{!}{
  \begin{tabular}{|c|}
    \hline\\
    \begin{mathpar}
    \SCdruleTXXax{} \and
    \SCdruleTXXunitL{} \and
    \SCdruleTXXunitR{} \and
    \SCdruleTXXtenL{} \and
    \SCdruleTXXtenR{} \and
    \SCdruleTXXimpL{} \and
    \SCdruleTXXimpR{} \and
    \SCdruleTXXGr{} \and
    \SCdruleTXXex{} \and
    \SCdruleTXXcut{}
    \end{mathpar}\\\\
    \hline
    \\[5px]
    \begin{mathpar}
    \SCdruleSXXax{} \and
    \SCdruleSXXunitLOne{} \and
    \SCdruleSXXunitLTwo{} \and
    \SCdruleSXXunitR{} \and
    \SCdruleSXXex{} \and
    \SCdruleSXXtenLOne{} \and
    \SCdruleSXXtenLTwo{} \and
    \SCdruleSXXtenR{} \and
    \SCdruleSXXimpL{} \and
    \SCdruleSXXimprL{} \and
    \SCdruleSXXimprR{} \and
    \SCdruleSXXimplL{} \and
    \SCdruleSXXimplR{} \and
    \SCdruleSXXFl{} \and
    \SCdruleSXXFr{} \and
    \SCdruleSXXGl{} \and
    \SCdruleSXXcutOne{} \and
    \SCdruleSXXcutTwo{} \and
    \end{mathpar}\\\\
    \hline
  \end{tabular}
  }
  \caption{Sequent Calculus for CNC Logic}
  \label{fig:CNC-sequent-calculus}
\end{figure}

We prove cut elimination for the sequent calculus. We define the
\textit{rank} $|X|$ (resp. $|A|$) of a commutative (resp. non-commutative)
formula to be the number of logical connectives in the proposition. For
instance, $|\SCnt{X}  \otimes  \SCnt{Y}| = |\SCnt{X}| + |\SCnt{Y}| + 1$. The \textit{cut rank}
$c(\Pi)$ of a proof $\Pi$ is one more than the maximum of the ranks of all
the cut formulae in $\Pi$, and $0$ if $\Pi$ is cut-free. Then the
\textit{depth} $d(\Pi)$ of a proof $\Pi$ is the length of the longest path
in the proof tree (so the depth of an axiom is $0$). The key to the proof
of cut elimination is the following lemma, which shows how to transform a
single cut, either by removing it or by replacing it with one or more
simpler cuts.
\begin{lemma}[Cut Reduction]
  \label{lem:cut-reduction}
  The cut-reduction steps are as follows:
  \begin{enumerate}
  \item If $\Pi_1$ is a proof of $\Phi  \vdash_\mathcal{C}  \SCnt{X}$ and $\Pi_2$ is a proof of
  $\Psi_{{\mathrm{1}}}  \SCsym{,}  \SCnt{X}  \SCsym{,}  \Psi_{{\mathrm{2}}}  \vdash_\mathcal{C}  \SCnt{Y}$ with $c(\Pi_1)$, $c(\Pi_2)\leq |X|$, then there exists
  a proof $\Pi$ of $\Psi_{{\mathrm{1}}}  \SCsym{,}  \Phi  \SCsym{,}  \Psi_{{\mathrm{2}}}  \vdash_\mathcal{C}  \SCnt{Y}$ with $c(\Pi)\leq |X|$.
  \item If $\Pi_1$ is a proof of $\Phi  \vdash_\mathcal{C}  \SCnt{X}$ and $\Pi_2$ is a proof of
  $\Gamma_{{\mathrm{1}}}  \SCsym{;}  \SCnt{X}  \SCsym{;}  \Gamma_{{\mathrm{2}}}  \vdash_\mathcal{L}  \SCnt{A}$ with $c(\Pi_1)$, $c(\Pi_2)\leq |X|$, then there
  exists a proof $\Pi$ of $\Gamma_{{\mathrm{1}}}  \SCsym{;}  \Phi  \SCsym{;}  \Gamma_{{\mathrm{2}}}  \vdash_\mathcal{L}  \SCnt{A}$ with $c(\Pi)\leq |X|$.
  \item If $\Pi_1$ is a proof of $\Gamma  \vdash_\mathcal{L}  \SCnt{A}$ and $\Pi_2$ is a proof of
  $\Delta_{{\mathrm{1}}}  \SCsym{;}  \SCnt{A}  \SCsym{;}  \Delta_{{\mathrm{2}}}  \vdash_\mathcal{L}  \SCnt{B}$ with $c(\Pi_1)$, $c(\Pi_2)\leq |A|$, then there
  exists a proof $\Pi$ of $\Delta_{{\mathrm{1}}}  \SCsym{;}  \Gamma  \SCsym{;}  \Delta_{{\mathrm{2}}}  \vdash_\mathcal{L}  \SCnt{B}$ with $c(\Pi)\leq |A|$.
  \end{enumerate}
\end{lemma}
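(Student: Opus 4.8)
The plan is to prove the three statements simultaneously by a nested induction: the outer induction is on the rank of the cut formula ($|X|$ in parts (1) and (2), $|A|$ in part (3)), and the inner induction is on the sum of depths $d(\Pi_1) + d(\Pi_2)$. The three parts genuinely have to be proved together, because a principal cut on a modal formula crosses between the two systems. Reducing a cut on an $\mathsf{F}X$ formula (where $\Pi_1$ ends in $\mathcal{L}\text{-}\mathsf{F}_R$ and $\Pi_2$ ends in $\mathcal{L}\text{-}\mathsf{F}_L$) leaves a cut of the $\mathcal{C}$-proof $\Phi \vdash_\mathcal{C} X$ against a proof of $\Gamma_1  \SCsym{;}  X  \SCsym{;}  \Gamma_2 \vdash_\mathcal{L} A$, which is exactly the situation of part (2); and reducing a cut on a $\mathsf{G}A$ formula (using $\mathcal{C}\text{-}\mathsf{G}_R$ against $\mathcal{L}\text{-}\mathsf{G}_L$) leaves a cut of $\Phi \vdash_\mathcal{L} A$ against $\Gamma_1  \SCsym{;}  A  \SCsym{;}  \Gamma_2 \vdash_\mathcal{L} B$, which is the situation of part (3). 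In each such reduction the rank strictly drops ($|X| < |\mathsf{F}X|$ and $|A| < |\mathsf{G}A|$), so the outer induction hypothesis applies even as we move between the parts.

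Within each part I would split on the last rule applied in $\Pi_1$ and in $\Pi_2$. First come the axiom cases: if either premise is an instance of $\mathcal{C}\text{-ax}$ or $\mathcal{L}\text{-ax}$ on the cut formula, the cut simply disappears and the conclusion is (a renaming of) the other proof, trivially satisfying the cut-rank bound. Next are the principal cases, where both proofs introduce the cut formula with matching right/left rules—$\mathcal{C}\text{-}\otimes_R$ against $\mathcal{C}\text{-}\otimes_L$, $\mathcal{L}\text{-}\rightharpoonup_R$ against $\mathcal{L}\text{-}\rightharpoonup_L$, the symmetric pair for $\leftharpoonup$, and the modal pairs $\mathsf{F}_R/\mathsf{F}_L$ and $\mathsf{G}_R/\mathsf{G}_L$ noted above. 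Each principal case is replaced by one or two cuts on immediate subformulas of strictly smaller rank, so the outer induction hypothesis discharges them and the assembled proof meets the required bound $c(\Pi) \leq |X|$ (resp.\ $|A|$).

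The remaining cases are the commutative (permutation) cases, in which the cut formula is not principal in at least one of $\Pi_1$, $\Pi_2$. Here I would push the cut upward past the last rule of that proof; since this lowers $d(\Pi_1) + d(\Pi_2)$ while preserving the rank bound, the inner induction hypothesis applies. This is routine on the commutative side, where $\mathcal{C}\text{-}\mathsf{ex}$ is available to realign contexts, but it is the locus of the main difficulty on the non-commutative side.

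The hard part will be the bookkeeping of ordered contexts in the non-commutative permutation cases. Because general exchange has been dropped, the exact position of a formula inside an $\mathcal{L}$-context is significant, and the context splittings in rules such as $\mathcal{L}\text{-}\otimes_L$, $\mathcal{L}\text{-}\rightharpoonup_L$, $\mathcal{L}\text{-}\leftharpoonup_L$, and $\mathcal{L}\text{-}\multimap_L$ must be threaded precisely when the cut is permuted past a rule that itself partitions the context. In each such configuration one must check that the cut formula occupies a uniquely determined sub-context, so that the permuted cut is well-formed and delivers exactly the stated conclusion. This is most delicate for parts (2) and (3), since the mixed $\mathcal{L}$-contexts may contain both a $\mathcal{C}$-formula and an $\mathcal{L}$-formula simultaneously, and one must confirm that permuting a $\mathcal{C}$-cut past a purely $\mathcal{L}$ left rule (and vice versa) never disturbs the relative order demanded by the conclusion.
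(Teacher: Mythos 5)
Your case analysis (axiom cases, principal cases, permutation cases), the observation that the three parts must be proved simultaneously because the $\mathsf{F}$ and $\mathsf{G}$ principal cases cross between the systems, and the identification of the ordered-context bookkeeping as the main labour all match the paper's proof. But there is a genuine gap in how you handle the principal cases. You propose to replace a principal cut on, say, $X \otimes Y$ by cuts on $X$ and $Y$ and then ``discharge them'' with the outer induction hypothesis at the smaller rank. That induction hypothesis is the lemma itself at rank $|Y|$, and its precondition is $c(\Pi_1'), c(\Pi_2') \leq |Y|$ for the proofs being cut. The subproofs you would feed it are the immediate subproofs of $\Pi_1$ and $\Pi_2$, about which you only know $c \leq |X \otimes Y| = |X| + |Y| + 1$; they may contain cuts on formulas of rank up to $|X|+|Y|$, far exceeding $|Y|$. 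So the outer IH simply does not apply, and the same problem recurs in every principal case, including the cross-system ones for $\mathsf{F}X$ and $\mathsf{G}A$.

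The repair is to do less, not more: the lemma only asks for a proof $\Pi$ with $c(\Pi) \leq |X|$, not for the residual cuts to be removed. A cut on a proper subformula of $X$ contributes at most $|X|$ to the cut rank, so in each principal case you may simply write down the one or two new cuts and check directly that $c(\Pi) \leq \max\{c(\pi_1), c(\pi_2), c(\pi_3), |X'|+1, |Y'|+1\} \leq |X|$, with no appeal to any induction hypothesis. This is exactly what the paper does; its only induction in this lemma is on the depths $d(\Pi_1), d(\Pi_2)$, used in the permutation cases (where your inner induction is fine). The induction on rank that you are trying to smuggle into this lemma lives in the \emph{next} lemma of the paper, which takes a proof of positive cut rank and produces one of strictly smaller cut rank by repeatedly invoking cut reduction on topmost maximal-rank cuts; keeping the two inductions in separate lemmas is precisely what makes the rank bookkeeping go through.
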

\begin{proof}
  This proof is done case by case on the last step of $\Pi_1$ and
  $\Pi_2$ and by induction on $d(\Pi_1)$ and $d(\Pi_2)$, following
  \cite{Mellies:2009}. For instance, suppose $\Pi_1$ is a proof of
  $\Phi_{{\mathrm{1}}}  \SCsym{,}  \SCnt{X_{{\mathrm{2}}}}  \SCsym{,}  \SCnt{X_{{\mathrm{1}}}}  \SCsym{,}  \Phi_{{\mathrm{2}}}  \vdash_\mathcal{C}  \SCnt{Y}$ and $\Pi_2$ is a proof of $\Psi_{{\mathrm{1}}}  \SCsym{,}  \SCnt{Y}  \SCsym{,}  \Psi_{{\mathrm{2}}}  \vdash_\mathcal{C}  \SCnt{Z}$.  Consider the case where the last step in $\Pi_1$ uses
  the rule $\NDdruleTXXbetaName{}$. $\Pi_1$ can be depicted as
  follows, where the previous steps are denoted by $\pi$:
  \begin{center}
    \scriptsize
    $\Pi_1$:
    \begin{math}
      $$\mprset{flushleft}
      \inferrule* [right={\tiny $\SCdruleTXXexName$}] {
        {
          \begin{array}{c}
            \pi \\
                {\Phi_{{\mathrm{1}}}  \SCsym{,}  \SCnt{X_{{\mathrm{1}}}}  \SCsym{,}  \SCnt{X_{{\mathrm{2}}}}  \SCsym{,}  \Phi_{{\mathrm{2}}}  \vdash_\mathcal{C}  \SCnt{Y}}
          \end{array}
        }
      }{\Phi_{{\mathrm{1}}}  \SCsym{,}  \SCnt{X_{{\mathrm{2}}}}  \SCsym{,}  \SCnt{X_{{\mathrm{1}}}}  \SCsym{,}  \Phi_{{\mathrm{2}}}  \vdash_\mathcal{C}  \SCnt{Y}}
    \end{math}
  \end{center}
  By assumption, $c(\Pi_1),c(\Pi_2)\leq |Y|$. By induction on $\pi$ and
  $\Pi_2$, there is a proof $\Pi'$ for the sequent \\
  $\Psi_{{\mathrm{1}}}  \SCsym{,}  \Phi_{{\mathrm{1}}}  \SCsym{,}  \SCnt{X_{{\mathrm{1}}}}  \SCsym{,}  \SCnt{X_{{\mathrm{2}}}}  \SCsym{,}  \Phi_{{\mathrm{2}}}  \SCsym{,}  \Psi_{{\mathrm{2}}}  \vdash_\mathcal{C}  \SCnt{Z}$ s.t. $c(\Pi')\leq|Y|$. Therefore, the
  proof $\Pi$ can be constructed as follows, and $c(\Pi)=c(\Pi')\leq|Y|$.
  \begin{center}
    \scriptsize
    \begin{math}
      $$\mprset{flushleft}
      \inferrule* [right={\tiny $\SCdruleTXXexName$}] {
        {
          \begin{array}{c}
            \Pi' \\
                 {\Psi_{{\mathrm{1}}}  \SCsym{,}  \Phi_{{\mathrm{1}}}  \SCsym{,}  \SCnt{X_{{\mathrm{1}}}}  \SCsym{,}  \SCnt{X_{{\mathrm{2}}}}  \SCsym{,}  \Phi_{{\mathrm{2}}}  \SCsym{,}  \Psi_{{\mathrm{2}}}  \vdash_\mathcal{C}  \SCnt{Z}}
          \end{array}
        }
      }{\Psi_{{\mathrm{1}}}  \SCsym{,}  \Phi_{{\mathrm{1}}}  \SCsym{,}  \SCnt{X_{{\mathrm{2}}}}  \SCsym{,}  \SCnt{X_{{\mathrm{1}}}}  \SCsym{,}  \Phi_{{\mathrm{2}}}  \SCsym{,}  \Psi_{{\mathrm{2}}}  \vdash_\mathcal{C}  \SCnt{Z}}
    \end{math}
  \end{center}
  The full proof can be found in Appendix~\ref{app:cut-reduction}.
\end{proof}
\noindent
Then we have the following lemma.

\begin{lemma}
  \label{lem:less-cut-rank}
  Let $\Pi$ be a proof of a sequent $\Phi  \vdash_\mathcal{C}  \SCnt{X}$ or $\Gamma  \vdash_\mathcal{L}  \SCnt{A}$ s.t.
  $c(\Pi)>0$. Then there is a proof $\Pi'$ of the same sequent with
  $c(\Pi')<c(\Pi)$.
\end{lemma}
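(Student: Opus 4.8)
The plan is to prove the lemma by induction on the depth $d(\Pi)$, using Lemma~\ref{lem:cut-reduction} as the engine that trades a single maximal-rank cut for simpler ones. Write $n = c(\Pi)$; since $n > 0$, the number $n-1$ is the largest rank attained by any cut formula occurring in $\Pi$, so it suffices to build a proof $\Pi'$ of the same sequent in which every cut formula has rank strictly below $n-1$, i.e. $c(\Pi') \leq n - 1 < n$. Throughout I will use that every immediate subproof $\Pi_i$ of $\Pi$ satisfies $c(\Pi_i) \leq n$, because the cut formulae of $\Pi_i$ are a subset of those of $\Pi$.

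First I would dispatch the case where the last rule of $\Pi$ is \emph{not} a cut (an axiom, a logical rule, or a structural rule such as $\mathcal{C}\text{-}\mathsf{ex}$ or $\mathcal{L}\text{-}\mathsf{ex}$). Here I normalise each immediate subproof $\Pi_i$: if $c(\Pi_i) = n$ then, since its depth is strictly smaller, the induction hypothesis yields a replacement with cut rank $\leq n-1$, and if $c(\Pi_i) \leq n-1$ already I leave it untouched. Reapplying the original last rule to the normalised premises produces $\Pi'$, and because that final rule introduces no new cut, $c(\Pi')$ is just the maximum of the (now $\leq n-1$) cut ranks of the premises, hence $\leq n-1$.

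The interesting case is when the last rule is one of the three cut rules, $\mathcal{C}\text{-}\mathsf{Cut}$, $\mathcal{LC}\text{-}\mathsf{Cut}$, or $\mathcal{L}\text{-}\mathsf{Cut}$, which correspond respectively to the three clauses of Lemma~\ref{lem:cut-reduction}. Let $X$ (resp.\ $A$) be the cut formula and $\Pi_1,\Pi_2$ the two premises. As above I first use the induction hypothesis to replace $\Pi_1,\Pi_2$ by proofs $\Pi_1',\Pi_2'$ of the same sequents with $c(\Pi_1'),c(\Pi_2') \leq n-1$. Now I split on the rank of the cut formula. If $|X| < n-1$, then keeping the cut gives a proof whose cut rank is $\max(c(\Pi_1'),c(\Pi_2'),|X|+1) \leq n-1$, and we are done. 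If $|X| = n-1$, then the hypotheses $c(\Pi_1'),c(\Pi_2') \leq |X|$ of the appropriate clause of Lemma~\ref{lem:cut-reduction} are exactly satisfied, so that lemma delivers a proof $\Pi'$ of the conclusion with $c(\Pi') \leq |X| = n-1$.

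The main obstacle is not any single step but the bookkeeping that keeps the cut rank from creeping back up: I must check that the three syntactic cut rules line up with the three clauses of Lemma~\ref{lem:cut-reduction}, that normalising the premises \emph{before} invoking it is what guarantees its precondition $c(\Pi_1),c(\Pi_2)\leq|X|$, and that reassembling non-cut rules genuinely leaves the cut rank equal to the maximum over the premises (so no rule silently raises it). With these invariants in place the induction on $d(\Pi)$ closes, and the lemma follows.
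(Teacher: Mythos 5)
Your proof is correct and takes essentially the same route as the paper's: induction on the depth $d(\Pi)$, reassembly for a non-cut final inference, and, when the final inference is a cut, an appeal to Lemma~\ref{lem:cut-reduction} exactly in the case where the cut formula has maximal rank, after the premises have been normalised by the induction hypothesis. The only cosmetic difference is that you normalise the premises before splitting on the rank of the cut formula, whereas the paper performs the case split first; the two case analyses ($|X|<n-1$ versus $|X|=n-1$) coincide with the paper's ($c(\Pi)>|X|+1$ versus $c(\Pi)=|X|+1$).
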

\begin{proof}
  We prove the lemma by induction on $d(\Pi)$. We denote the proof $\Pi$ by 
  $\pi+r$, where $r$ is the last inference of $\Pi$ and $\pi$ denotes the
  rest of the proof. If $r$ is not a cut, then by induction hypothesis on
  $\pi$, there is a proof $\pi'$ s.t. $c(\pi')<c(\pi)$ and $\Pi'=\pi'+r$.
  Otherwise, we assume $r$ is a cut on a formula $Y$. If $c(\Pi)>|X|+1$,
  then there is a cut on $|Y|$ in $\pi$ with $|Y|>|X|$. So we can apply
  the induction hypothesis on $\pi$ to get $\Pi'$ with $c(\Pi')<c(\Pi)$. The
  last case to consider is when $c(\Pi)=|X|+1$ (note that $c(\Pi)$ cannot be
  less than $|X|+1$). In this case, $\Pi$ is in the form of
  \begin{center}
    \scriptsize
    \begin{math}
      $$\mprset{flushleft}
      \inferrule* [right={\tiny $\SCdruleTXXcutName$}] {
        {
          \begin{array}{cc}
            \Pi_1 & \Pi_2 \\
            {\Phi  \vdash_\mathcal{C}  \SCnt{X}} & {\Psi_{{\mathrm{1}}}  \SCsym{,}  \SCnt{X}  \SCsym{,}  \Psi_{{\mathrm{2}}}  \vdash_\mathcal{C}  \SCnt{Y}}
          \end{array}
        }
      }{\Psi_{{\mathrm{1}}}  \SCsym{,}  \Phi  \SCsym{,}  \Psi_{{\mathrm{2}}}  \vdash_\mathcal{C}  \SCnt{Y}}
    \end{math}
    \qquad\qquad
    or,
    \qquad\qquad
    \begin{math}
      $$\mprset{flushleft}
      \inferrule* [right={\tiny $\SCdruleSXXcutOneName$}] {
        {
          \begin{array}{cc}
            \Pi_1 & \Pi_2 \\
            {\Phi  \vdash_\mathcal{C}  \SCnt{X}} & {\Gamma_{{\mathrm{1}}}  \SCsym{;}  \SCnt{X}  \SCsym{;}  \Gamma_{{\mathrm{2}}}  \vdash_\mathcal{L}  \SCnt{A}}
          \end{array}
        }
      }{\Gamma_{{\mathrm{1}}}  \SCsym{;}  \Phi  \SCsym{;}  \Gamma_{{\mathrm{2}}}  \vdash_\mathcal{L}  \SCnt{A}}
    \end{math}
  \end{center}
  By assumption, $c(\Pi_1),c(\Pi_2)\leq |X|+1$. By induction, we can
  construct $c(\Pi_1')$ proving $\Phi  \vdash_\mathcal{C}  \SCnt{X}$ and $c(\Pi_2')$ proving
  $\Psi_{{\mathrm{1}}}  \SCsym{,}  \SCnt{X}  \SCsym{,}  \Psi_{{\mathrm{2}}}  \vdash_\mathcal{C}  \SCnt{Y}$ (or $<\Gamma_{{\mathrm{1}}}  \SCsym{;}  \SCnt{X}  \SCsym{;}  \Gamma_{{\mathrm{2}}}  \vdash_\mathcal{L}  \SCnt{A}$) with
  $c(\Pi_1'), c(\Pi_2')\leq |X|$. Then by Lemma~\ref{lem:cut-reduction}, we
  can construct $\Pi'$ proving $\Psi_{{\mathrm{1}}}  \SCsym{,}  \Phi  \SCsym{,}  \Psi_{{\mathrm{2}}}  \vdash_\mathcal{C}  \SCnt{Y}$ (or
  $\Gamma_{{\mathrm{1}}}  \SCsym{;}  \Phi  \SCsym{;}  \Gamma_{{\mathrm{2}}}  \vdash_\mathcal{L}  \SCnt{A}$) with $c(\Pi')\leq |X|$. \\
  The case where the last inference is a cut on a formula $A$ is similar as
  when it is a cut on $X$.
\end{proof}
\noindent
By induction on $c(\Pi)$ and Lemma~\ref{lem:less-cut-rank}, the cut
elimination theorem follows immediately.
\begin{theorem}[Cut Elimination]
  Let $\Pi$ be a proof of a sequent $\Phi  \vdash_\mathcal{C}  \SCnt{X}$ or $\Gamma  \vdash_\mathcal{L}  \SCnt{A}$ s.t.
  $c(\Pi)>0$. Then there is an algorithm which yields a cut-free proof
  $\Pi'$ of the same sequent.
\end{theorem}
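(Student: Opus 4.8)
The plan is to prove the theorem by well-founded induction on the cut rank $c(\Pi)$, using Lemma~\ref{lem:less-cut-rank} as the single inductive engine. Since $c(\Pi)$ is a natural number, ordinary induction on $\mathbb{N}$ is available, and the two judgement forms $\Phi \vdash_\mathcal{C} \SCnt{X}$ and $\Gamma \vdash_\mathcal{L} \SCnt{A}$ can be handled simultaneously because Lemma~\ref{lem:less-cut-rank} is already stated uniformly over both kinds of sequent. This is why the excerpt can claim the result "follows immediately": all the genuine work lives in the two preceding lemmas.

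First I would dispose of the base case. If $c(\Pi) = 0$, then $\Pi$ is by definition cut-free, so we simply output $\Pi' = \Pi$. For the inductive step, suppose $c(\Pi) = n > 0$, and take as induction hypothesis that every proof (of either sequent form) whose cut rank is strictly smaller than $n$ can be transformed into a cut-free proof of the same sequent. Applying Lemma~\ref{lem:less-cut-rank} to $\Pi$ produces a proof $\Pi_0$ of the same sequent with $c(\Pi_0) < c(\Pi) = n$. Applying the induction hypothesis to $\Pi_0$ then yields a cut-free proof $\Pi'$ of the same sequent, which is exactly what is required.

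The algorithmic reading is immediate from this argument: one repeatedly invokes the transformation underlying Lemma~\ref{lem:less-cut-rank}, each call strictly lowering the cut rank, until the rank reaches $0$ and the proof is cut-free. Because $c$ takes values in $\mathbb{N}$ and decreases strictly with each iteration, the loop terminates after at most $c(\Pi)$ rounds, so the procedure is a bona fide algorithm rather than merely an existence claim.

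I do not expect a substantive obstacle at this level, since the delicate case analysis---permuting cuts past the structural and logical rules, and in particular handling the mixed $\mathcal{LC}$-cut ($\SCdruleSXXcutOneName$) that bridges the commutative and non-commutative sides---has already been discharged in Lemma~\ref{lem:cut-reduction} and then packaged into Lemma~\ref{lem:less-cut-rank}. The only points I would take care to verify are bookkeeping ones: that the cut rank is well-defined and is strictly positive precisely when a cut on a maximal-rank formula survives (so that the base case $c(\Pi)=0$ really does coincide with cut-freeness), and that Lemma~\ref{lem:less-cut-rank} is genuinely uniform across the two sequent forms, so that a single induction on $c(\Pi)$ suffices and one need not set up two intertwined inductions.
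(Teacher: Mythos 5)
Your proof is correct and is exactly the argument the paper intends: the paper states only that the theorem follows "by induction on $c(\Pi)$ and Lemma~\ref{lem:less-cut-rank}," and your write-up simply makes that induction (base case $c(\Pi)=0$, inductive step via the rank-lowering lemma, termination of the resulting algorithm) explicit. No differences in approach and no gaps.
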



\section{A Type Theoretic Formalization of CNC Logic}
\label{sec:the-lambek-calculus}
Similar as the sequent calculus, the term assignment for CNC logic is
also composed of two logics; intuitionistic linear logic on the left,
denoted by $\cat{C}$, and the Lambek calculus on the right, denoted by
$\cat{L}$. The syntax for types and contexts we use in the term
assignment is the same as in the sequent calculus. The rest of the
syntax for the term assignment is defined as follows.
\begin{definition}
  \label{def:Lambek-syntax}
  The following grammar describes the syntax of the term assignment of the
  CNC logic:
  \begin{center}\vspace{-3px}\small
    \begin{math}
      \begin{array}{lll}        
        \text{($\cat{C}$-Terms)} & \NDnt{t} ::= \NDmv{x} \mid  \mathsf{triv}  \mid \NDnt{t_{{\mathrm{1}}}}  \otimes  \NDnt{t_{{\mathrm{2}}}} \mid  \mathsf{let}\, \NDnt{t_{{\mathrm{1}}}}  :  \NDnt{X} \,\mathsf{be}\, \NDnt{q} \,\mathsf{in}\, \NDnt{t_{{\mathrm{2}}}}  \mid  \lambda  \NDmv{x}  :  \NDnt{X} . \NDnt{t}  \mid  \NDnt{t_{{\mathrm{1}}}}   \NDnt{t_{{\mathrm{2}}}}  \mid  \mathsf{ex}\, \NDnt{t_{{\mathrm{1}}}} , \NDnt{t_{{\mathrm{2}}}} \,\mathsf{with}\, \NDmv{x_{{\mathrm{1}}}} , \NDmv{x_{{\mathrm{2}}}} \,\mathsf{in}\, \NDnt{t_{{\mathrm{3}}}}  \mid  \mathsf{G}\, \NDnt{s} \\
        \text{($\cat{L}$-Terms)} & \NDnt{s} ::= \NDmv{x} \mid  \mathsf{triv}  \mid \NDnt{s_{{\mathrm{1}}}}  \triangleright  \NDnt{s_{{\mathrm{2}}}} \mid  \mathsf{let}\, \NDnt{s_{{\mathrm{1}}}}  :  \NDnt{A} \,\mathsf{be}\, \NDnt{p} \,\mathsf{in}\, \NDnt{s_{{\mathrm{2}}}}  \mid  \mathsf{let}\, \NDnt{t}  :  \NDnt{X} \,\mathsf{be}\, \NDnt{q} \,\mathsf{in}\, \NDnt{s}  \mid  \lambda_l  \NDmv{x}  :  \NDnt{A} . \NDnt{s}  \mid  \lambda_r  \NDmv{x}  :  \NDnt{A} . \NDnt{s}  \\
        & \,\,\,\,\,\,\,\,\,\mid  \mathsf{app}_l\, \NDnt{s_{{\mathrm{1}}}} \, \NDnt{s_{{\mathrm{2}}}}  \mid  \mathsf{app}_r\, \NDnt{s_{{\mathrm{1}}}} \, \NDnt{s_{{\mathrm{2}}}}  \mid  \mathsf{F} \NDnt{t} \\        
        \text{($\cat{C}$-Patterns)} & \NDnt{q} ::=  \mathsf{triv}  \mid \NDmv{x} \mid \NDnt{q_{{\mathrm{1}}}}  \otimes  \NDnt{q_{{\mathrm{2}}}} \mid  \mathsf{G}\, \NDnt{p} \\
        \text{($\cat{L}$-Patterns)} & \NDnt{p} ::=  \mathsf{triv}  \mid \NDmv{x} \mid \NDnt{p_{{\mathrm{1}}}}  \triangleright  \NDnt{p_{{\mathrm{2}}}} \mid  \mathsf{F}\, \NDnt{q} \\        
        \text{($\cat{C}$-Typing Judgment)} & \Phi  \vdash_\mathcal{C}  \NDnt{t}  \NDsym{:}  \NDnt{X}\\
        \text{($\cat{L}$-Typing Judgment)} & \Gamma  \vdash_\mathcal{L}  \NDnt{s}  \NDsym{:}  \NDnt{A}\\
      \end{array}
    \end{math}
  \end{center}
\end{definition}

Now $\cat{C}$-typing judgments are denoted by $\Psi  \vdash_\mathcal{C}  \NDnt{t}  \NDsym{:}  \NDnt{X}$ where
$\Psi$ is a sequence of pairs of variables and their types, denoted by
$\NDmv{x}  \NDsym{:}  \NDnt{X}$, $\NDnt{t}$ is a $\cat{C}$-term, and $\NDnt{X}$ is a $\cat{C}$-type.  
The $\cat{C}$-terms are all standard, but $ \mathsf{G}\, \NDnt{s} $ corresponds to the
morphism part of the right-adjoint of the adjunction between both logics,
and $ \mathsf{ex}\, \NDnt{t_{{\mathrm{1}}}} , \NDnt{t_{{\mathrm{2}}}} \,\mathsf{with}\, \NDmv{x_{{\mathrm{1}}}} , \NDmv{x_{{\mathrm{2}}}} \,\mathsf{in}\, \NDnt{t_{{\mathrm{3}}}} $ is the introduction form for the
structural rule exchange.

The $\cat{L}$-typing judgment has the form $\Gamma  \vdash_\mathcal{L}  \NDnt{s}  \NDsym{:}  \NDnt{A}$ where $\Gamma$
is now a $\cat{L}$-context, denoted by $\Gamma$ or $\Delta$. These contexts
are ordered sequences of pairs of free variables with their types from
\emph{both} sides denoted by $\NDmv{x}  \NDsym{:}  \NDnt{B}$ and $\NDmv{x}  \NDsym{:}  \NDnt{X}$ respectively.
Finally, the term $\NDnt{s}$ is a $\cat{L}$-term, and $\NDnt{A}$ is a
$\cat{L}$-type.  Given two typing contexts $\Gamma$ and $\Delta$ we denote
their concatenation by $\Gamma  \NDsym{;}  \Delta$; we use a semicolon here to emphasize the
fact that the contexts are ordered. $\cat{L}$-terms correspond to
introduction and elimination forms for each of the previous types. For
example, $\NDnt{s_{{\mathrm{1}}}}  \triangleright  \NDnt{s_{{\mathrm{2}}}}$ introduces a tensor, and
$ \mathsf{let}\, \NDnt{s_{{\mathrm{1}}}}  :  \NDnt{A}  \triangleright  \NDnt{B} \,\mathsf{be}\, \NDmv{x}  \triangleright  \NDmv{y} \,\mathsf{in}\, \NDnt{s_{{\mathrm{2}}}} $ eliminates a tensor.

The typing rules for CNC logic can be found in
Figure~\ref{fig:CNC-typing-rules}.
\begin{figure}
  \footnotesize
  \resizebox{\columnwidth}{!}{
  \begin{tabular}{|c|}
    \hline\\
      \begin{mathpar}
      \NDdruleTXXid{} \and
      \NDdruleTXXunitI{} \and
      \NDdruleTXXunitE{} \and
      \NDdruleTXXtenI{} \and
      \NDdruleTXXtenE{} \and
      \NDdruleTXXimpI{} \and
      \NDdruleTXXimpE{} \and
      \NDdruleTXXGI{} \and
      \NDdruleTXXbeta{} \and
      \NDdruleTXXcut{}      
      \end{mathpar}
      \\
      \\
      \hline
      \\[5px]
    \begin{mathpar}
      \NDdruleSXXid{} \and
      \NDdruleSXXunitI{} \and
      \NDdruleSXXunitETwo{} \and
      \NDdruleSXXunitEOne{} \and
      \NDdruleSXXtenI{} \and
      \NDdruleSXXtenETwo{} \and
      \NDdruleSXXtenEOne{} \and
      \NDdruleSXXimprI{} \and
      \NDdruleSXXimprE{} \and
      \NDdruleSXXimplI{} \and
      \NDdruleSXXimplE{} \and
      \NDdruleSXXFI{} \and
      \NDdruleSXXFE{} \and
      \NDdruleSXXGE{} \and
      \NDdruleSXXbeta{} \and
      \NDdruleSXXcutTwo{} \and
      \NDdruleSXXcutOne{}
    \end{mathpar}\\\\
    \hline
  \end{tabular}
  }
  \caption{Typing Rules for CNC Logic}
  \label{fig:CNC-typing-rules}
\end{figure}
We split the figure in two: the top of the figure are the rules of
intuitionistic linear logic whose judgment is the $\mathcal{C}$-typing
judgment denoted by $\Psi  \vdash_\mathcal{C}  \NDnt{t}  \NDsym{:}  \NDnt{X}$, and the bottom of the figure
are the rules for the mixed commutative/non-commutative Lambek
calculus whose judgment is the $\mathcal{L}$-judgment denoted by
$\Gamma  \vdash_\mathcal{L}  \NDnt{s}  \NDsym{:}  \NDnt{A}$, and the two halves are connected via the rules 
rules $\NDdruleTXXGIName{}$, $\NDdruleSXXGEName{}$,
$\NDdruleSXXFIName{}$, and $\NDdruleSXXFEName{}$,
$\NDdruleSXXunitEOneName{}$, $\NDdruleSXXtenEOneName{}$, and
$\NDdruleSXXcutOneName{}$.

The one step $\beta$-reduction rules are listed in
Figure~\ref{fig:CNC-beta-reductions}. Similarly to the typing rules,
the figure is split in two: the top lists the rules of the
intuitionistic linear logic, and the bottom are those of the mixed
commutative/non-commutative Lambek calculus. 
\renewcommand{\NDdruleTbetaXXletUName}{}
\renewcommand{\NDdruleTbetaXXletTName}{}
\renewcommand{\NDdruleTbetaXXlamName}{}
\renewcommand{\NDdruleTbetaXXappOneName}{}
\renewcommand{\NDdruleTbetaXXappTwoName}{}
\renewcommand{\NDdruleTbetaXXappLetName}{}
\renewcommand{\NDdruleTbetaXXletLetName}{}
\renewcommand{\NDdruleTbetaXXletAppName}{}
\renewcommand{\NDdruleSbetaXXletUOneName}{}
\renewcommand{\NDdruleSbetaXXletTOneName}{}
\renewcommand{\NDdruleSbetaXXletTTwoName}{}
\renewcommand{\NDdruleSbetaXXletFName}{}
\renewcommand{\NDdruleSbetaXXlamLName}{}
\renewcommand{\NDdruleSbetaXXlamRName}{}
\renewcommand{\NDdruleSbetaXXapplOneName}{}
\renewcommand{\NDdruleSbetaXXapplTwoName}{}
\renewcommand{\NDdruleSbetaXXapprOneName}{}
\renewcommand{\NDdruleSbetaXXapprTwoName}{}
\renewcommand{\NDdruleSbetaXXderelictName}{}
\renewcommand{\NDdruleSbetaXXapplLetName}{}
\renewcommand{\NDdruleSbetaXXapprLetName}{}
\renewcommand{\NDdruleSbetaXXletLetName}{}
\renewcommand{\NDdruleSbetaXXletApplName}{}
\renewcommand{\NDdruleSbetaXXletApprName}{}
\renewcommand{\NDdruleTcomXXunitEXXunitEName}{}
\renewcommand{\NDdruleTcomXXunitEXXtenEName}{}
\renewcommand{\NDdruleTcomXXunitEXXimpEName}{}
\renewcommand{\NDdruleTcomXXtenEXXunitEName}{}
\renewcommand{\NDdruleTcomXXtenEXXtenEName}{}
\renewcommand{\NDdruleTcomXXtenEXXimpEName}{}
\renewcommand{\NDdruleTcomXXimpEXXunitEName}{}
\renewcommand{\NDdruleScomXXunitEXXunitEName}{}
\renewcommand{\NDdruleScomXXunitETwoXXunitEName}{}
\renewcommand{\NDdruleScomXXunitEXXimprEName}{}
\renewcommand{\NDdruleScomXXunitETwoXXimprEName}{}
\renewcommand{\NDdruleScomXXunitEXXFEName}{}
\renewcommand{\NDdruleScomXXunitETwoXXFEName}{}
\renewcommand{\NDdruleScomXXtenEXXunitEName}{}
\renewcommand{\NDdruleScomXXtenETwoXXunitEName}{}
\renewcommand{\NDdruleScomXXtenEXXtenEName}{}
\renewcommand{\NDdruleScomXXtenETwoXXtenEName}{}
\renewcommand{\NDdruleScomXXtenEXXimprEName}{}
\renewcommand{\NDdruleScomXXtenETwoXXimprEName}{}
\renewcommand{\NDdruleScomXXtenEXXimplEName}{}
\renewcommand{\NDdruleScomXXtenETwoXXimplEName}{}
\renewcommand{\NDdruleScomXXtenEXXFEName}{}
\renewcommand{\NDdruleScomXXtenETwoXXFEName}{}
\renewcommand{\NDdruleScomXXFEXXunitEName}{}
\renewcommand{\NDdruleScomXXFEXXtenEName}{}
\renewcommand{\NDdruleScomXXFEXXimprEName}{}
\renewcommand{\NDdruleScomXXFEXXimplEName}{}
\renewcommand{\NDdruleScomXXFEXXFEName}{}
\begin{figure}[!h]
  \footnotesize
  \resizebox{\columnwidth}{!}{
  \begin{tabular}{|c|}
    \hline\\
      \begin{mathpar}
      \NDdruleTbetaXXletU{} \and
      \NDdruleTbetaXXletT{} \and
      \NDdruleTbetaXXlam{}
      \end{mathpar}
      \\
      \\
      \hline
      \\
    \begin{mathpar}
      \NDdruleSbetaXXletUOne{} \and
      \NDdruleSbetaXXletTOne{} \and
      \NDdruleSbetaXXletTTwo{} \and
      \NDdruleSbetaXXletF{} \and
      \NDdruleSbetaXXlamL{} \and
      \NDdruleSbetaXXlamR{} \and
      \NDdruleSbetaXXderelict{}
    \end{mathpar}\\\\
    \hline
  \end{tabular}
  }
  \caption{$\beta$-reductions for CNC Logic}
  \label{fig:CNC-beta-reductions}
\end{figure}

The commuting conversions can be found in
Figures~\ref{fig:CNC-commutating-conversions-intuitionistic}-\ref{fig:CNC-commutating-conversions-both}. We
divide the rules into three parts due to the length. The first part,
Figure~\ref{fig:CNC-commutating-conversions-intuitionistic}, includes
the rules for the intuitionistic linear logic. The second,
Figure~\ref{fig:CNC-commutating-conversions-mixed}, includes the rules
for the commutative/non-commutative Lambek calculus. The third,
Figure~\ref{fig:CNC-commutating-conversions-both}, includes the mixed
rules $\NDdruleSXXunitEOneName{}$ and $\NDdruleSXXtenEOneName{}$.

\begin{figure}[!h]
  \footnotesize
  \resizebox{\columnwidth}{!}{
  \begin{tabular}{|c|}
    \hline\\
    \begin{mathpar}
      \NDdruleTcomXXunitEXXunitE{} \and
      \NDdruleTcomXXunitEXXtenE{} \and
      \NDdruleTcomXXunitEXXimpE{} \and
      \NDdruleTcomXXtenEXXunitE{} \and
      \NDdruleTcomXXtenEXXtenE{} \and
      \NDdruleTcomXXtenEXXimpE{} \and
      \NDdruleTcomXXimpEXXunitE{}
    \end{mathpar}
    \\
    \\
    \hline
  \end{tabular}
  }
  \caption{Commuting Conversions: Intuitionistic Linear Logic}
  \label{fig:CNC-commutating-conversions-intuitionistic}
\end{figure}
\begin{figure}[!h]
  \footnotesize
  \resizebox{\columnwidth}{!}{
  \begin{tabular}{|c|}
    \hline\\
    \begin{mathpar}
      \NDdruleScomXXunitEXXunitE{} \and
      \NDdruleScomXXunitEXXimprE{} \and
      \NDdruleScomXXunitEXXFE{} \and
      \NDdruleScomXXtenEXXunitE{} \and
      \NDdruleScomXXtenEXXtenE{} \and
      \NDdruleScomXXtenEXXimprE{} \and
      \NDdruleScomXXtenEXXimplE{} \and
      \NDdruleScomXXtenEXXFE{} \and
      \NDdruleScomXXFEXXunitE{} \and
      \NDdruleScomXXFEXXtenE{} \and
      \NDdruleScomXXFEXXimprE{} \and
      \NDdruleScomXXFEXXimplE{} \and
      \NDdruleScomXXFEXXFE{}
    \end{mathpar}\\\\
    \hline
  \end{tabular}
  }
  \caption{Commuting Conversions: Commutative/Non-commutative Lambek Calculus}
  \label{fig:CNC-commutating-conversions-mixed}
\end{figure}
\begin{figure}[!h]
  \footnotesize
  \resizebox{\columnwidth}{!}{
  \begin{tabular}{|c|}
    \hline\\
    \begin{mathpar}
      \NDdruleScomXXunitETwoXXunitE{} \and
      \NDdruleScomXXunitETwoXXimprE{} \and
      \NDdruleScomXXunitETwoXXFE{} \and
      \NDdruleScomXXtenETwoXXunitE{} \and
      \NDdruleScomXXtenETwoXXtenE{} \and
      \NDdruleScomXXtenETwoXXimprE{} \and
      \NDdruleScomXXtenETwoXXimplE{} \and
      \NDdruleScomXXtenETwoXXFE{}
    \end{mathpar}\\\\
    \hline
  \end{tabular}
  }
  \caption{Commuting Conversions: Mixed Rules}
  \label{fig:CNC-commutating-conversions-both}
\end{figure}

We also proved that the sequent calculus formalization given in
Figure~\ref{fig:CNC-sequent-calculus} is equivalent to the typing rules (or
else called the natural deduction formalization) given in
Figure~\ref{fig:CNC-typing-rules} are equivalent, as stated in the following
theorem.
\begin{theorem}
  \label{thm:sc-nd-equiv}
  The sequent calculus ($\mathit{SC}$) and natural deduction ($\mathit{ND}$)
  formalizations for CNC logic are equivalent in the sense that there are
  two mappings $N:\mathit{SC}\rightarrow\mathit{ND}$ and
  $S:\mathit{ND}\rightarrow\mathit{SC}$ that map each rule in $\mathit{SC}$
  to a proof in $\mathit{ND}$, and each rule in $\mathit{ND}$ to a proof
  in $\mathit{SC}$, respectively.
\end{theorem}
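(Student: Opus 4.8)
The plan is to construct the two maps $N$ and $S$ by induction on the structure (height) of derivations, verifying rule-by-rule that every inference of one system is admissible in the other. The crucial observation that makes this tractable is that \emph{both} formalizations already carry their own cut/substitution principles: the sequent calculus has $\SCdruleTXXcutName$, $\SCdruleSXXcutOneName$ and $\SCdruleSXXcutTwoName$, while the natural deduction system has the substitution rules $\NDdruleTXXcutName$, $\NDdruleSXXcutOneName$ and $\NDdruleSXXcutTwoName$. Since the statement only asks for a rule-to-proof simulation (not a cut-free one), I am free to use cut in either direction, so the two maps send cut to cut and the remaining work is to simulate the left rules of $\mathit{SC}$ by eliminations in $\mathit{ND}$ and, conversely, the eliminations of $\mathit{ND}$ by left rules in $\mathit{SC}$.

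For $N:\mathit{SC}\rightarrow\mathit{ND}$ I would proceed by induction on the $\mathit{SC}$ derivation. The axioms and the right rules match their $\mathit{ND}$ counterparts essentially verbatim: the introduction rules for $\mathsf{Unit}$, $\otimes$, $\multimap$, $\triangleright$, $\rightharpoonup$, $\leftharpoonup$, and the functor rules for $\mathsf{F}$ and $\mathsf{G}$. The left rules are the substantive cases, and each is realized by the corresponding $\mathit{ND}$ elimination form followed by a substitution: the left rule for $\multimap$ is translated by forming the application recorded in its term annotation and substituting it for the active variable, the left rules for $\otimes$ and $\triangleright$ become $\mathsf{let}$-eliminations, and the $\mathsf{F}$-, $\mathsf{G}$- and unit-left rules become the matching $\mathsf{let}/\mathsf{derelict}$ eliminations. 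Exchange on the $\cat{C}$ side maps to the $\mathit{ND}$ rule $\NDdruleTXXbetaName$.

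For $S:\mathit{ND}\rightarrow\mathit{SC}$, again by induction on derivation height, axioms and introduction rules map to $\mathit{SC}$ axioms and right rules. Each elimination rule is simulated by cutting the translated proof of its principal premise against an instance of the corresponding $\mathit{SC}$ left rule applied to axioms: for $\multimap$-elimination, for example, I would cut the $\mathit{SC}$ proof of $\Phi \vdash_\mathcal{C} \SCnt{X} \multimap \SCnt{Y}$ against the proof of $\SCnt{X} \multimap \SCnt{Y}, \Psi \vdash_\mathcal{C} \SCnt{Y}$ obtained from the left rule for $\multimap$ over axioms, and the $\mathsf{F}$-, $\mathsf{G}$-, $\otimes$-, $\triangleright$- and unit-eliminations are handled in the same pattern.

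The step I expect to be the main obstacle is the bookkeeping of the \emph{ordered, two-sided} contexts in the $S$ direction for the non-commutative calculus $\cat{L}$. Because general exchange is unavailable there, the $\mathit{SC}$ left rules insert the active formula at a fixed position in the ordered context $\Gamma_1 ; \Gamma_2$, whereas the $\mathit{ND}$ eliminations distribute the split contexts (and the mixed $\Phi ; \Gamma$) around the eliminated subterm; reconciling the two after a cut forces me to track precisely where each sub-context lands and to invoke the restricted exchange (the $\cat{C}$-side rule and the $\mathcal{LC}$ rules) only where it is genuinely admissible. The connecting and shared-context rules $\SCdruleSXXunitLOneName$, $\SCdruleSXXtenLOneName$, $\SCdruleSXXimpLName$ and $\SCdruleSXXcutOneName$ each need separate checking that the imported commutative hypotheses are placed consistently on both sides. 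Once every rule has been verified in both directions the two maps exist and the equivalence follows; I would note that strengthening this to a round-trip inverse up to the $\beta$ and commuting conversions of Figures~\ref{fig:CNC-beta-reductions}--\ref{fig:CNC-commutating-conversions-both} is a separate, more laborious induction that the present statement does not require.
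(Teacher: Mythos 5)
Your proposal is correct and follows essentially the same route as the paper: both directions are handled rule by rule, with axioms and introduction/right rules matching directly, left rules simulated in $\mathit{ND}$ by the corresponding elimination applied to an axiom followed by the $\mathit{ND}$ substitution rule, and eliminations simulated in $\mathit{SC}$ by cutting the principal premise against the matching left rule instantiated with axioms. Your closing remarks on the ordered-context bookkeeping and on the statement not requiring a round-trip inverse are consistent with what the paper actually verifies in its appendix.
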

\begin{proof}
  The proof is done case by case on each rule in the sequence calculus and
  natural deduction formalizations. It is obvious that the axioms in one
  formalization can be mapped to the axioms in the other. The introduction
  rules in $\mathit{ND}$ are mapped to the right rules in $\mathit{SC}$, and
  vice versa. The elimination rules and lefts rules are mapped to each other
  with some fiddling. For instance, the elimination rule for the
  non-commutative tensor is mapped to the following proof in $\mathit{SC}$:
  \begin{center}
    \scriptsize
    \begin{math}
      $$\mprset{flushleft}
      \inferrule* [right={\scriptsize $\ElledruleTXXcutName$}] {
        {\Phi  \vdash_\mathcal{C}  \NDnt{t_{{\mathrm{1}}}}  \NDsym{:}  \NDnt{X}  \otimes  \NDnt{Y}} \\
        $$\mprset{flushleft}
        \inferrule* [right={\scriptsize $\ElledruleTXXtenLName$}] {
          {\Psi_{{\mathrm{1}}}  \NDsym{,}  \NDmv{x}  \NDsym{:}  \NDnt{X}  \NDsym{,}  \NDmv{y}  \NDsym{:}  \NDnt{Y}  \NDsym{,}  \Psi_{{\mathrm{2}}}  \vdash_\mathcal{C}  \NDnt{t_{{\mathrm{2}}}}  \NDsym{:}  \NDnt{Z}}
        }{\Psi_{{\mathrm{1}}}  \NDsym{,}  \NDmv{z}  \NDsym{:}  \NDnt{X}  \otimes  \NDnt{Y}  \NDsym{,}  \Psi_{{\mathrm{2}}}  \vdash_\mathcal{C}   \mathsf{let}\, \NDmv{z}  :  \NDnt{X}  \otimes  \NDnt{Y} \,\mathsf{be}\, \NDmv{x}  \otimes  \NDmv{y} \,\mathsf{in}\, \NDnt{t_{{\mathrm{2}}}}   \NDsym{:}  \NDnt{Z}}
      }{\Psi_{{\mathrm{1}}}  \NDsym{,}  \Phi  \NDsym{,}  \Psi_{{\mathrm{2}}}  \vdash_\mathcal{C}  \NDsym{[}  \NDnt{t_{{\mathrm{1}}}}  \NDsym{/}  \NDmv{z}  \NDsym{]}  \NDsym{(}   \mathsf{let}\, \NDmv{z}  :  \NDnt{X}  \otimes  \NDnt{Y} \,\mathsf{be}\, \NDmv{x}  \otimes  \NDmv{y} \,\mathsf{in}\, \NDnt{t_{{\mathrm{2}}}}   \NDsym{)}  \NDsym{:}  \NDnt{Z}}
    \end{math}
  \end{center}
  The full proof is in Appendix~\ref{app:sc-nd-equiv}.
\end{proof}




\section{An Adjoint Model}
\label{sec:adjoint-model}
In this section we introduce Lambek Adjoint Models (LAMs). Benton's
LNL model consists of a symmetric monoidal adjunction
$F:\cat{C}\dashv\cat{L}:G$ between a Cartesian closed category
$\cat{C}$ and a symmetric monoidal closed category $\cat{L}$. LAM
consists of a monoidal adjunction between a symmetric monoidal closed
category and a Lambek category.
\begin{definition}
  \label{def:lambek-category}
  A \textbf{Lambek category} is a monoidal category $(\cat{L},\tri,I',\alpha',\lambda',\rho')$
  with two functors $- \rightharpoonup - : \cat{L}^{\mathsf{op}} \times \cat{L} \mto \cat{L}$ and
  $- \leftharpoonup - : \cat{L} \times \cat{L}^{\mathsf{op}} \mto \cat{L}$ such that the following
  two natural bijections hold:
  \[
  \begin{array}{lllll}
    \Hom{L}{A \tri B}{C} \cong \Hom{L}{A}{B \rightharpoonup C} & \quad &
    \Hom{L}{A \tri B}{C} \cong \Hom{L}{B}{C \leftharpoonup A}\\
  \end{array}
  \]  
\end{definition}
\noindent
Lambek categories are also known as monoidal bi-closed categories.

\begin{definition}
  A \textbf{Lambek Adjoint Model (LAM)}, $(\cat{C},\cat{L},F,G,\eta,\varepsilon)$, consists of
  \begin{itemize}
  \item a symmetric monoidal closed category $(\cat{C},\otimes,I,\alpha,\lambda,\rho)$;
  \item a Lambek category $(\cat{L},\tri,I',\alpha',\lambda',\rho')$;
  \item a monoidal adjunction $F:\cat{C}\dashv\cat{L}:G$ with unit $\eta:\Id_{\cat{C}} \rightarrow GF$ and
        counit $\varepsilon:FG\rightarrow \Id_\cat{L}$, where $(F:\cat{C}\rightarrow\cat{L}, m)$
        and $(G:\cat{L}\rightarrow\cat{C}, n)$ are monoidal functors.
  \end{itemize}
\end{definition}
\noindent
Following the tradition, we use letters $X$, $Y$, $Z$ for objects in
$\cat{C}$ and $A$, $B$, $C$ for objects in $\cat{L}$. The rest of this
section proves essential properties of any LAM.

\textbf{An isomorphism}. Let $(\cat{C},\cat{L},F,G,\eta,\varepsilon)$
be a LAM, where $(F,\m{})$ and $(G,\n{})$ are monoidal
functors. Similarly as in Benton's LNL model, $\m{X,Y} : FX \tri FY
\mto F(X \otimes Y)$ are components of a natural isomorphism, and
$\m{I} : I' \mto FI$ is an isomorphism. This is essential for modeling
certain rules of CNC logic, such as tensor elimination in natural
deduction.  We define the inverses of $\m{X,Y}:FX\tri FY\rightarrow
F(X\otimes Y)$ and $\m{I}:I'\rightarrow FI$ as:
\vspace{-1em}
\begin{mathpar}
\footnotesize
\bfig
  \morphism<1000,0>[\p{X,Y}:F(X\otimes Y)`F(GFX\otimes GFY);F(\eta_X\otimes\eta_Y)]
  \morphism(1000,0)<900,0>[F(GFX\otimes GFY)`FG(FX\tri FY);F\n{FX,FY}]
  \morphism(1900,0)<750,0>[FG(FX\tri FY)`FX\tri FY;\varepsilon_{FX\tri FX}]
\efig
\end{mathpar}
\vspace{-1.6em}
\begin{mathpar}
\footnotesize
\bfig
  \morphism<500,0>[\p{I}:FI`FGI';F\n{I'}]
  \morphism(500,0)<400,0>[FGI'`I';\varepsilon_{I'}]
\efig
\end{mathpar}
Due to \cite{kelly1974doctrinal}, it can be easily shown that $\m{I}$ is an
isomorphism with inverse, and that $\m{X,Y}$ are components of a natural
isomorphism with inverses $\p{X,Y}$.

\textbf{Strong non-commutative monad}. Next we show that the monad on
$\cat{C}$ in LAM is strong but non-commutative. In Benton's LNL model,
the monad on the Cartesian closed category is commutative, but later
Benton and Wadler \cite{Benton:1996} wonder, is it possible to model
non-commutative monads using adjoint models similar to LNL models? The
following shows that LAMs correspond to strong non-commutative
monoidal monads.
\begin{lemma}
\label{lem:monoidal-monad}
The monad induced by any LAM, $GF : \cat{C} \mto \cat{C}$, is monoidal.
\end{lemma}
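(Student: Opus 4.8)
The plan is to read ``$GF$ is monoidal'' in its strongest sense, namely that $(GF,\eta,\mu)$ is a monoidal monad, and to obtain this by first exhibiting $GF$ as a lax monoidal functor and then checking that the monad unit and multiplication are monoidal natural transformations. For the functor part I would use that $F$ is strong monoidal (its coherence maps $\m{X,Y}$ and $\m{I}$ are the isomorphisms displayed just above) and that $G$ is lax monoidal with coherence maps $\n{A,B}$ and $\n{I'}$. The composite is then lax monoidal with laxators $G\m{X,Y}\circ\n{FX,FY} : GFX\otimes GFY \to GF(X\otimes Y)$ and $G\m{I}\circ\n{I'} : I \to GFI$. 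The associativity and unit coherence diagrams for these composite laxators follow by pasting the corresponding diagrams for $F$ and for $G$; this is routine and refers only to $\tri$, $I'$ and the associator/unitors of $\cat{L}$, so the non-commutativity of $\cat{L}$ plays no role here.

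The substantive input is that the LAM adjunction is a \emph{monoidal} adjunction. By doctrinal adjunction \cite{kelly1974doctrinal}, this is precisely the assertion that $\eta : \Id_{\cat{C}} \to GF$ and $\varepsilon : FG \to \Id_{\cat{L}}$ are monoidal natural transformations for the given structures. Monoidality of $\eta$ is exactly compatibility with the laxators, which discharges the unit half of the monoidal-monad axioms. For the multiplication I would observe that $\mu = G\varepsilon F$ is obtained from the monoidal natural transformation $\varepsilon$ by whiskering on the right by the monoidal functor $F$ and on the left by the monoidal functor $G$; since whiskering a monoidal natural transformation by monoidal functors again yields a monoidal natural transformation, $\mu$ is monoidal. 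Combined with the usual monad laws for $(GF,\eta,\mu)$, this makes $GF$ a monoidal monad.

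The only laborious step is verifying the two whiskering claims and the coherence of the composite laxators as explicit commuting diagrams, and I would relegate these pasting computations to the standard manipulations underlying Kelly's doctrinal adjunction rather than grinding them out. Conceptually the cleanest route, which I would state in order to bypass the diagram chase, is that a monoidal adjunction is exactly an adjunction internal to the $2$-category $\mathbf{MonCat}$ of monoidal categories, lax monoidal functors, and monoidal natural transformations; every adjunction in a $2$-category induces a monad in that $2$-category, and a monad in $\mathbf{MonCat}$ is by definition a monoidal monad. The main obstacle is thus mere bookkeeping rather than any real difficulty: one must keep the left/right asymmetry of $\cat{L}$'s two internal homs $\rightharpoonup$ and $\leftharpoonup$ from intruding, but because monoidality involves only $\tri$ and its coherence data this asymmetry is irrelevant to the present lemma, resurfacing only later when one checks that the induced monad fails to be commutative.
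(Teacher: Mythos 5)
Your proposal is correct and follows essentially the same route as the paper's proof: the paper also equips $GF$ with the composite laxators $\t{X,Y}=G\m{X,Y}\circ\n{FX,FY}$ and $\t{I}=G\m{I}\circ\n{I'}$, verifies the functor coherence diagrams by pasting the corresponding diagrams for $F$ and $G$, and then observes that $\eta$ is monoidal because it is the unit of a monoidal adjunction and that $\mu=G\varepsilon F$ is monoidal because $\varepsilon$ is. The only difference is presentational: the paper grinds out the coherence diagrams explicitly where you defer them to doctrinal adjunction and the $2$-categorical picture of an adjunction in $\mathbf{MonCat}$.
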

\begin{proof}
  The proof is done by checking the conditions for a functor being monoidal.
  The detail of the proof is in Appendix~\ref{app:monoidal-monad}.
\end{proof}
\noindent
However, the monad is not symmetric because the following diagram does
not commute.
\begin{mathpar}
\bfig
  \ptriangle/->`->`/<900,400>[
    GFX\otimes GFY`GFY\otimes GFX`G(FX\tri FY);\e{GFX,GFY}`\n{FX,FY}`]
  \morphism(900,400)<900,0>[GFY\otimes GFX`G(FY\tri FX);\n{FY,FX}]
  \dtriangle(900,0)/`->`->/<900,400>[
    G(FY\tri FX)`GF(X\otimes Y)`GF(Y\otimes X);`G\m{Y,X}`GF\e{X,Y}]
  \morphism|b|<900,0>[G(FX\tri FY)`GF(X\otimes Y);G\m{X,Y}]
\efig
\end{mathpar}
Commutativity fails, because the functors defining the monad are not
symmetric monoidal, but only monoidal. This means that the diagram
\[
\bfig
\square<700,400>[
  FA\otimes'FB`FB\otimes'FA`F(A\otimes B)`F(B\otimes A);
  \e{FA,FB}`\m{A,B}`\m{B,A}`F\e{A,B}]
\efig
\]
does not hold for $G$ nor $F$.  However, we can prove the monad is
strong.
\begin{lemma}
  \label{lem:strong-monad}
  The monad, $GF : \cat{C} \mto \cat{C}$, on the symmetric monoidal
  closed category in LAM is strong.
\end{lemma}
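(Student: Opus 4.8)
The plan is to build the strength directly out of the monoidal monad structure exhibited in Lemma~\ref{lem:monoidal-monad}, following the standard passage from a monoidal monad to a strong monad. Write $T := GF$, let $\eta : \Id_{\cat{C}} \to T$ be the monad unit (the unit of the adjunction) and $\mu_X := G\varepsilon_{FX} : TTX \to TX$ its multiplication. By Lemma~\ref{lem:monoidal-monad}, $T$ is lax monoidal with structure map $\ell_{X,Y} := G\m{X,Y} \circ \n{FX,FY} : TX \otimes TY \to T(X \otimes Y)$ and unit $\ell_I := G\m{I} \circ \n{I'} : I \to TI$, and moreover $\eta$ and $\mu$ are monoidal natural transformations. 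I define the (left) tensorial strength by
\[
  \tau_{X,Y} \;:\; X \otimes TY \xrightarrow{\;\eta_X \otimes \id_{TY}\;} TX \otimes TY \xrightarrow{\;\ell_{X,Y}\;} T(X \otimes Y).
\]
Naturality of $\tau_{X,Y}$ in both variables is immediate from naturality of $\eta$ and of $\ell$.

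It then remains to check the four coherence axioms of a strength, and I would first dispatch the three that are essentially monoidal-functor bookkeeping. The unit axiom, $T\lambda_Y \circ \tau_{I,Y} = \lambda_{TY}$, follows from $\ell_I$ being an isomorphism together with the left-unit coherence of the lax structure of $T$ and the triangle law for $\eta$. The associativity axiom, relating $\tau$ to the associator $\alpha$ of $\cat{C}$, follows from the associativity coherence of $\ell$ and naturality of $\ell$ and $\eta$. The compatibility with the unit, $\tau_{X,Y} \circ (\id_X \otimes \eta_Y) = \eta_{X \otimes Y}$, is nothing but the statement that $\eta$ is a monoidal natural transformation, i.e. $\eta_{X\otimes Y} = \ell_{X,Y} \circ (\eta_X \otimes \eta_Y)$, which is part of Lemma~\ref{lem:monoidal-monad}.

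The real work is the remaining axiom, compatibility of $\tau$ with the multiplication:
\[
  \tau_{X,Y} \circ (\id_X \otimes \mu_Y) \;=\; \mu_{X \otimes Y} \circ T\tau_{X,Y} \circ \tau_{X, TY}.
\]
I expect this to be the main obstacle, since it is the one diagram that genuinely entangles the laxity $\ell$ with the monad multiplication. The strategy is to expand both sides using the definition of $\tau$, and then rewrite the right-hand side by pushing the two occurrences of $\eta_X$ through $\ell$ and $T\ell$ via naturality, reducing the goal to the square expressing that $\mu$ is a monoidal natural transformation together with one application of the monad unit law $\mu \circ T\eta = \id$. Assembling these is a finite diagram chase whose commuting cells are exactly the monoidal-monad data already in hand.

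Two remarks guide the write-up. The construction uses only the monoidal structure of $\cat{C}$ and the monoidal monad data; it never invokes any exchange in $\cat{L}$, which is consistent with the surrounding observation that $T$ is strong yet non-commutative. Since $\cat{C}$ is symmetric, the symmetry $\e{X,Y}$ may be used afterwards to transport $\tau$ to a right strength $\sigma_{X,Y} : TX \otimes Y \to T(X\otimes Y)$, so that $T$ is strong with respect to the full symmetric monoidal structure of $\cat{C}$; what still fails is \emph{commutativity}, namely the equality of the two composites $TX \otimes TY \rightrightarrows T(X\otimes Y)$ built from $\tau$ and $\sigma$, and this is precisely the non-symmetry already recorded by the non-commuting square displayed just before the lemma.
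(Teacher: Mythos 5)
Your proposal matches the paper's proof: the paper defines the strength in exactly the same way, as $\tau_{X,Y}=\t{X,Y}\circ(\eta_X\otimes \id_{TY})$ built from the lax monoidal structure of $T=GF$ established in Lemma~\ref{lem:monoidal-monad}, and verifies the four strength axioms by the same diagram chases (the unit triangle via $\eta_I=\t{I}$ and the unit coherence of $T$, the $\eta$-compatibility via monoidality of $\eta$, the associativity pentagon via the associativity coherence of $\t{}$ and naturality, and the $\mu$-compatibility via naturality of $\t{}$, monoidality of $\mu$, and the monad unit law $\mu\circ T\eta=\id$). The approach and the supporting facts invoked are essentially identical.
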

\begin{proof}
The proof is done by first defining a natural transformation $\tau$, called
the \textbf{tensorial strength}, with components
$\tau_{A,B}:A\tri TB\rightarrow T(A\tri B)$, and then proving the
commutativity of several diagrams through diagram chasing. The formal
definition for a strong monad and the full proof are in
Appendix~\ref{app:strong-monad}.
\end{proof}
\noindent
Finally, we obtain the non-commutativity of the monad induced by some 
LAM as follows.
\begin{lemma}[Due to Kock~\cite{kock1972strong}]
\label{lem:monad-com-iff-sym}
  Let $\cat{M}$ be a symmetric monoidal category and $T$ be a strong monad on $\cat{M}$. Then
  $T$ is commutative iff it is symmetric monoidal.
\end{lemma}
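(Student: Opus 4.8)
The plan is to prove both implications by expressing commutativity and symmetric monoidality in terms of the tensorial strength $\tau_{A,B}\colon A\otimes TB\to T(A\otimes B)$ produced in Lemma~\ref{lem:strong-monad}. Because $\cat{M}$ carries a symmetry $\gamma$, I would first derive from $\tau$ a costrength $\tau'_{A,B}\colon TA\otimes B\to T(A\otimes B)$ by setting $\tau'_{A,B}=T\gamma_{B,A}\circ\tau_{B,A}\circ\gamma_{TA,B}$. The strength and costrength then yield two natural transformations $TA\otimes TB\to T(A\otimes B)$, namely $\bar m_{A,B}=\mu_{A\otimes B}\circ T\tau_{A,B}\circ\tau'_{A,TB}$ and $\tilde m_{A,B}=\mu_{A\otimes B}\circ T\tau'_{A,B}\circ\tau_{TA,B}$, and $T$ is commutative precisely when $\bar m=\tilde m$. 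Symmetric monoidality, on the other hand, asks for a lax monoidal structure $(m,m_I)$ on $T$ for which $\eta$ and $\mu$ are monoidal and which commutes with $\gamma$. The whole argument consists in matching these two packages of data.

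For the forward direction I would assume $\bar m=\tilde m$, put $m_{A,B}$ equal to this common value and $m_I=\eta_I$, and verify that $(T,m,m_I)$ is a symmetric monoidal monad. The unit triangles and the associativity square of the lax monoidal functor $T$ follow by diagram chasing from the three strength axioms (the unit law, the associativity law relating $\tau_{A\otimes B,C}$ to iterated strengths, and the $\mu$-compatibility law) together with the two monad laws; the corresponding laws for $\tau'$ are obtained by conjugating with $\gamma$. That $\eta$ and $\mu$ are monoidal transformations drops out of the same axioms. The symmetry condition $T\gamma_{A,B}\circ m_{A,B}=m_{B,A}\circ\gamma_{TA,TB}$ is exactly the place where the hypothesis is used: rewriting both sides through $\tau$, $\tau'$ and the defining equation for $\tau'$ reduces this square to the identity $\bar m=\tilde m$.

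For the reverse direction I would assume a symmetric monoidal structure $(m,m_I)$ and recover the strength and costrength as $\tau_{A,B}=m_{A,B}\circ(\eta_A\otimes\id_{TB})$ and $\tau'_{A,B}=m_{A,B}\circ(\id_{TA}\otimes\eta_B)$, checking that the former coincides with the strength of Lemma~\ref{lem:strong-monad}. Feeding these into $\bar m$ and $\tilde m$ and using the compatibility of $m$ with $\mu$ (the monoidal-monad axiom for multiplication) together with the unit axiom $m_{A,B}\circ(\eta_A\otimes\eta_B)=\eta_{A\otimes B}$, both composites collapse to $m_{A,B}$; the symmetry axiom for $m$ is what identifies the two intermediate morphisms, so $\bar m=\tilde m$ and $T$ is commutative.

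The hard part will be the forward direction, specifically showing that the common morphism $m$ satisfies the associativity (hexagon) coherence of a lax monoidal functor. This requires a careful chase that interleaves the strength axiom for $\tau$, its $\gamma$-conjugate for $\tau'$, and both monad laws while tracking the reassociations in $\cat{M}$; the unit and symmetry conditions are comparatively mechanical once associativity is in hand. Since the statement is classical, one could alternatively cite Kock~\cite{kock1972strong} directly, but the chase above is the self-contained route, and it is the form we apply to conclude that the monoidal-but-not-symmetric monad $GF$ of a LAM fails to be commutative.
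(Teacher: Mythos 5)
The paper gives no proof of this lemma at all: it is stated with the attribution to Kock and used as a black box in the theorem that follows. Your proposal therefore does strictly more than the paper, and its skeleton is the standard one from Kock's argument: the costrength $\tau'$ obtained by conjugating $\tau$ with the symmetry $\gamma$, the two composites $\bar m,\tilde m\colon TA\otimes TB\to T(A\otimes B)$, commutativity read as $\bar m=\tilde m$, and a translation in both directions. The reverse direction as you describe it is essentially right: monoidality of $\mu$ together with the monad unit laws collapses both composites to $m$, and the symmetry of $m$ is what guarantees that the costrength recovered from $m$ is the $\gamma$-conjugate of the recovered strength, which is what the definition of commutativity refers to.

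There is, however, one concrete inaccuracy in the forward direction. You assert that the monoidality of $\eta$ and $\mu$ ``drops out of the same axioms'' (the strength axioms plus the monad laws) and that the symmetry square is ``exactly the place where the hypothesis is used.'' That is not correct: the monoidality of $\mu$ also requires $\bar m=\tilde m$. The writer monad $TX=M\times X$ on $\mathbf{Set}$ for a noncommutative monoid $M$ is a witness: there $\bar m$ is an associative, unital lax monoidal structure and $\eta$ is monoidal, yet the square expressing monoidality of $\mu$ sends $\bigl((m,(m',a)),(n,(n',b))\bigr)$ to $mnm'n'$ along one side and to $mm'nn'$ along the other, and these agree only when $M$ is commutative. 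So in the chase you must invoke $\bar m=\tilde m$ twice, once for the $\mu$-square and once for the symmetry square; if you budget for it only in the latter, the former will not close. A further minor point: in the statement of the lemma the strength is part of the data of the abstract strong monad $T$ on $\cat{M}$, so the step ``checking that the former coincides with the strength of Lemma~\ref{lem:strong-monad}'' belongs to the application of the lemma to $GF$, not to its proof. With the dependency on commutativity tracked correctly your outline is salvageable, though citing Kock as the paper does remains the economical choice.
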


\begin{theorem}
  There exists a LAM whose monad, $GF : \cat{C} \mto \cat{C}$, on the SMCC
  in the LAM is strong but non-commutative.
\end{theorem}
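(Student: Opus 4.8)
The plan is to combine the two structural lemmas already proved with a single concrete witness. By Lemma~\ref{lem:strong-monad} the monad $GF$ of \emph{every} LAM is strong, and by Lemma~\ref{lem:monoidal-monad} it is monoidal; so the only feature separating the desired LAM from Benton's commutative situation is the failure of commutativity. Hence it suffices to produce one LAM $(\cat{C},\cat{L},F,G,\eta,\varepsilon)$ for which the strong monad $GF$ is not commutative. By Kock's characterisation (Lemma~\ref{lem:monad-com-iff-sym}), commutativity of a strong monad on a symmetric monoidal category is equivalent to its being symmetric monoidal, so the goal reduces to exhibiting a LAM in which $GF$ is \emph{not} symmetric monoidal. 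As observed just above the theorem, a symmetric monoidal structure on $GF$ would force the symmetry square for $\m{-,-}$ (equivalently for $\n{-,-}$) to commute, and that square encodes compatibility of $F$ (resp. $G$) with the symmetry of $\otimes$ and the braiding of $\tri$. Thus the witness I seek is any LAM whose Lambek category $\cat{L}$ carries a genuinely non-symmetric tensor $\tri$ that is \emph{detected} by the image of $F$; conceptually, any LAM built over a Lambek category in which the two implications $\rightharpoonup$ and $\leftharpoonup$ are genuinely distinct should do.

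For the concrete witness I would take the dialectica Lambek model promised in the introduction. First I would let $\cat{L}$ be the category of dialectica Lambek spaces, whose tensor $\tri$ is monoidal and bi-closed (so $\cat{L}$ is a Lambek category in the sense of Definition~\ref{def:lambek-category}) but is deliberately \emph{not} symmetric. I would take $\cat{C}$ to be the symmetric monoidal closed category of the corresponding commutative dialectica spaces, and define $F \dashv G$ as the evident monoidal adjunction relating the two. The verification that this data satisfies the LAM axioms is then routine: $\cat{C}$ is an SMCC, $\cat{L}$ is Lambek by its bi-closed structure, and $(F,\m{}),(G,\n{})$ are monoidal functors forming a monoidal adjunction with unit $\eta$ and counit $\varepsilon$, so all of Lemmas~\ref{lem:monoidal-monad}, \ref{lem:strong-monad} and~\ref{lem:monad-com-iff-sym} apply to it.

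The substance of the argument, and the step I expect to be the main obstacle, is to certify that the non-commutativity of $\tri$ in $\cat{L}$ actually survives the round trip $GF$ on $\cat{C}$ --- a priori $G$ could symmetrise the braiding away. Concretely, I would exhibit two objects $X,Y$ of $\cat{C}$ together with the two canonical morphisms $GFX \otimes GFY \mto GF(X \otimes Y)$ built from the tensorial strength $\tau$ and its costrength, and compute them in the dialectica model to show that they differ. Equivalently, via Kock's lemma, I would show that for this choice the symmetry square displayed above for $F$ (relating $\e{FX,FY}$, $\m{X,Y}$, $\m{Y,X}$ and $F\e{X,Y}$) fails to commute, so that $GF$ is not symmetric monoidal. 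Once a single pair $(X,Y)$ with distinct composites is produced, Lemma~\ref{lem:monad-com-iff-sym} yields non-commutativity, and together with Lemma~\ref{lem:strong-monad} we conclude that this LAM has a strong, non-commutative monad, which is exactly the claim. The only delicate points are bookkeeping the coherence isomorphisms of the dialectica tensor and choosing $(X,Y)$ concrete enough that the inequality of the two composites can be read off directly rather than argued abstractly.
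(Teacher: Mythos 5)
Your argument has the same logical skeleton as the paper's: the paper proves this theorem in one line by combining Lemma~\ref{lem:strong-monad} (every LAM's monad is strong) with Lemma~\ref{lem:monad-com-iff-sym} (Kock's characterisation), relying on the remark made just before the theorem that the square relating $\e{FX,FY}$, $\m{X,Y}$, $\m{Y,X}$ and $F\e{X,Y}$ is not forced to commute when $F$ and $G$ are merely monoidal. Where you genuinely differ is in taking the existential quantifier seriously: you observe, correctly, that ``the symmetry square need not commute'' does not by itself produce a LAM in which it actually fails, and you propose the dialectica model $\Dial{M}{\Set}$ of Section~\ref{sec:a-model-in-dialectica-spaces} as the concrete witness. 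This is a strengthening of the paper's argument, which never names a witness inside the proof. But note that your proposal stops precisely at the decisive step: you defer the exhibition of a pair of objects $(X,Y)$ for which the two composites $GFX\otimes GFY \mto GF(Y\otimes X)$ (via the strength and costrength) are actually distinct, or equivalently for which the symmetry square for $\m{-,-}$ demonstrably fails. Until that computation is carried out in the dialectica model (using a biclosed poset whose multiplication is genuinely non-commutative on elements hit by the relevant relations), your proof has the same gap as the paper's one-line version --- you have only identified it more honestly. Everything else in your plan (the reduction via Kock, the appeal to Lemmas~\ref{lem:monoidal-monad} and~\ref{lem:strong-monad}, the choice of model) matches the paper's intent.
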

\begin{proof}
  This proof follows from Lemma~\ref{lem:strong-monad} and
  Lemma~\ref{lem:monad-com-iff-sym}.
\end{proof}

\textbf{Comonad for exchange}.  We conclude this section by showing
that the comonad induced by some LAM is monoidal and extends $\cat{L}$
with exchange. The former is proved in \cite{kelly1974doctrinal}. The
latter is shown by proving that its corresponding co-Eilenberg-Moore
category is symmetric monoidal.

\begin{theorem}
  \label{thm:em-exchange}
  Given a LAM $(\cat{C},\cat{L},F,G,\eta,\varepsilon)$ and the comonad
  $FG : \cat{L} \mto \cat{L}$, the co-Eilenberg-Moore category
  $\cat{L}^{FG}$ has an exchange natural transformation $\e{A,B}^{FG}:A\tri
  B\rightarrow B\tri A$, and $\e{A,B}^{FG}$ is a symmetry, i.e.,
  $\e{A,B}^{FG} \circ \e{A,B}^{FG} = id_A$.
\end{theorem}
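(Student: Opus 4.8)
The plan is to exhibit $\cat{L}^{FG}$ as a symmetric monoidal category by transporting the symmetry of $\cat{C}$ along the adjunction, with the coalgebra structure maps serving as the bridge that pulls every object of $\cat{L}^{FG}$ back into the commutative world $\cat{C}$. First I would record the monoidal structure on $\cat{L}^{FG}$: since $F$ is strong monoidal and $G$ lax monoidal, $FG$ is a lax monoidal comonad, so by \cite{kelly1974doctrinal} the co-Eilenberg--Moore category is monoidal with strict monoidal forgetful functor $U:\cat{L}^{FG}\mto\cat{L}$. Concretely, for coalgebras $(A,h_A)$ and $(B,h_B)$ (with $h_A:A\to FGA$, etc.) the tensor $(A,h_A)\tri(B,h_B)$ has underlying object $A\tri B$ and structure map $F\n{A,B}\circ\m{GA,GB}\circ(h_A\tri h_B)$, where $\n{A,B}:GA\otimes GB\to G(A\tri B)$ is the lax structure of $G$.

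Next I would define the exchange. For coalgebras $(A,h_A),(B,h_B)$, let $\e{A,B}^{FG}$ be the composite in $\cat{L}$
\[
A\tri B \xrightarrow{h_A\tri h_B} FGA\tri FGB \xrightarrow{\m{GA,GB}} F(GA\otimes GB) \xrightarrow{F\e{GA,GB}} F(GB\otimes GA) \xrightarrow{\p{GB,GA}} FGB\tri FGA \xrightarrow{\varepsilon_B\tri\varepsilon_A} B\tri A,
\]
where $\e{GA,GB}$ is the symmetry of the SMCC $\cat{C}$. The idea is that $h_A\tri h_B$ lifts the tensor into the image of $F$; there the strong-monoidal isomorphisms $\m{}/\p{}$ (shown to be iso in the ``An isomorphism'' paragraph) identify $\tri$ with $F$ applied to an $\otimes$ in $\cat{C}$, on which $\cat{C}$'s symmetry acts; the counits $\varepsilon$ then return to $\cat{L}$. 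I would then check three things: (i) $\e{A,B}^{FG}$ is a coalgebra morphism $(A,h_A)\tri(B,h_B)\to(B,h_B)\tri(A,h_A)$; (ii) naturality in both coalgebra arguments; and (iii) the symmetry axioms, namely the involution $\e{B,A}^{FG}\circ\e{A,B}^{FG}=\id_{A\tri B}$ and the hexagon coherence with the associator $\alpha'$.

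The organizing device for (ii) and (iii) is the map $\beta_{X,Y}:=\p{Y,X}\circ F\e{X,Y}\circ\m{X,Y}:FX\tri FY\to FY\tri FX$, which equips the image of $F$ with a genuine symmetry: its involution $\beta_{Y,X}\circ\beta_{X,Y}=\id$ and its hexagon follow immediately from the corresponding identities for $\e{}$ in $\cat{C}$ together with the coherence of the strong monoidal functor $F$ and the naturality of $\m{},\p{}$. Given this, the outer counit law $\varepsilon_A\circ h_A=\id_A$ is exactly what collapses the sandwiched $\varepsilon$'s and $h$'s when one composes $\e{B,A}^{FG}$ after $\e{A,B}^{FG}$: were the middle factor the identity, the computation would reduce cleanly to $(\varepsilon_A\tri\varepsilon_B)\circ\beta_{GB,GA}\circ\beta_{GA,GB}\circ(h_A\tri h_B)=(\varepsilon_A h_A)\tri(\varepsilon_B h_B)=\id$.

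The hard part is precisely that this middle factor is $(h_B\circ\varepsilon_B)\tri(h_A\circ\varepsilon_A)$, not the identity, and $\beta$ is natural only with respect to morphisms in the image of $F$, so the ``obvious'' cancellation is blocked. Bridging this gap is the crux of both (i) and (iii): I would rewrite $h\circ\varepsilon$ via naturality of $\varepsilon$ (e.g.\ $h_B\circ\varepsilon_B=\varepsilon_{FGB}\circ F(G h_B)$, exposing the $F$-morphism $F(Gh_B)$ to which $\beta$-naturality applies) and then use coassociativity $\delta_A\circ h_A=FGh_A\circ h_A$ with $\delta=F\eta_G$ and the triangle identities to thread the remaining $\varepsilon$'s through the $\beta$-naturality squares, reducing everything to the involution of $\e{}$ in $\cat{C}$. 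I expect this diagram chase, feeding the coalgebra laws into the restricted naturality of $\beta$, to be the main obstacle; naturality and the hexagon are then routine consequences of the matching coherences in $\cat{C}$. The complete verification I would defer to an appendix, alongside the other LAM calculations, and the final clause $\e{B,A}^{FG}\circ\e{A,B}^{FG}=\id_{A\tri B}$ then records that $\e{}^{FG}$ is a symmetry.
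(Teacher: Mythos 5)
Your map $\e{A,B}^{FG}$ is the paper's map in disguise: since $\varepsilon$ is monoidal, $(\varepsilon_B\tri\varepsilon_A)\circ\p{GB,GA}=\varepsilon_{B\tri A}\circ F\n{B,A}$, so your last two legs compose to the paper's last two legs, and your naturality check is the same pasting of naturality squares. The problem is step (iii). You have correctly located the crux — expanding $\e{B,A}^{FG}\circ\e{A,B}^{FG}$ leaves the idempotent middle factor $(h_B\circ\varepsilon_B)\tri(h_A\circ\varepsilon_A)$, which is not the identity, and $\beta$ is natural only against morphisms in the image of $F$ — but you then defer exactly that computation, and the tools you name do not visibly close it. Transposing across the adjunction (the assignment $g\mapsto\varepsilon_{FZ}\circ Fg$ is a bijection $\Hom{C}{W}{GFZ}\cong\Hom{L}{FW}{FZ}$), the commutation you need, $\beta_{GB,GA}\circ\bigl((h_B\varepsilon_B)\tri(h_A\varepsilon_A)\bigr)=\bigl((h_A\varepsilon_A)\tri(h_B\varepsilon_B)\bigr)\circ\beta_{GB,GA}$, reduces to $GF\e{GB,GA}\circ\t{GB,GA}\circ(Gh_B\otimes Gh_A)=\t{GA,GB}\circ\e{GFGB,GFGA}\circ(Gh_B\otimes Gh_A)$, where $\t{}$ is the lax monoidal structure of the monad $GF$. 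Without the precomposition by $Gh_B\otimes Gh_A$ this is precisely the assertion that $GF$ is symmetric monoidal, which Section 4 of the paper explicitly shows to fail. For cofree coalgebras $(FX,F\eta_X)$ one has $Gh=GF\eta_X$, naturality of $\t{}$ applies, and everything collapses to the involution of $\e{}$ in $\cat{C}$; for a general coalgebra $Gh_A$ is not in the image of $GF$, and neither naturality of $\varepsilon$ nor coassociativity obviously produces the needed square. So the involution — the entire content of the theorem's second clause — is not proved in your proposal.

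To be fair, you have put your finger on a step the paper's own diagram also glosses over: its tiles commute by the counit law, monoidality of $\varepsilon$, and $\e{}\circ\e{}=\id$ in $\cat{C}$, and those alone do not paste to the boundary $\e{B,A}^{FG}\circ\e{A,B}^{FG}=\id$ without the commutation above. But identifying the obstruction is not the same as removing it. You should either carry out the diagram chase for a general coalgebra (making explicit where the coalgebra laws substitute for the missing symmetry of $GF$), or restrict the statement to the cofree coalgebras $(FX,F\eta_X)$, where the argument genuinely goes through.
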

\begin{proof}
  The natural transformation $\e{A,B}^{FG}:A\tri B\rightarrow B\tri A$ is defied
  as follows:
  $$\bfig
    \morphism<600,0>[A\tri B`FGA\tri FGB;h_A\tri h_B]
    \morphism(600,0)<800,0>[FGA\tri FGB`F(GA\otimes GB);\m{GA,GB}]
    \morphism(1400,0)<800,0>[F(GA\otimes GB)`F(GB\otimes GA);F\e{GA,GB}]
    \morphism(2200,0)<700,0>[F(GB\otimes GA)`FG(B\tri A);F\n{B,A}]
    \morphism(2900,0)<500,0>[FG(B\tri A)`B\tri A;\varepsilon_{B\tri A}]
  \efig$$
  in which $\e{}$ is the exchange for $\cat{C}$. Then $\e{}^{FG}$ is a
  natural transformation because the following diagrams commute for
  morphisms $f:A\rightarrow A'$ and $g:B\rightarrow B'$:
  \begin{mathpar}
  \bfig
    \square|almb|<700,400>[
      A\tri B`FGA\tri FGB`A'\tri B'`FGA'\tri FGB';
      h_A\tri h_B`f\tri g`FGf\tri FGg`h_{A'}\tri h_{B'}]
    \square(700,0)|ammb|/->``->`->/<800,400>[
      FGA\tri FGB`F(GA\otimes GB)`FGA'\tri FGB'`F(GA'\otimes GB');
      \m{GA,GB}``F(Gf\otimes Gg)`\m{GA',GB'}]
    \square(1500,0)|ammb|/->``->`->/<800,400>[
      F(GA\otimes GB)`F(GB\otimes GA)`F(GA'\otimes GB')`F(GB'\otimes GA');
      F\e{A,B}``F(Gg\otimes Gf)`F\e{A',B'}]
    \square(2300,0)|ammb|/->``->`->/<800,400>[
      F(GB\otimes GA)`FG(B\tri A)`F(GB'\otimes GA')`FG(B'\tri A');
      F\n{B,A}``FG(g\tri f)`F\n{B',A'}]
    \square(3100,0)|amrb|/->``->`->/<600,400>[
      FG(B\tri A)`B\tri A`FG(B'\tri A')`B'\tri A';
      \varepsilon_{B\tri A}``g\tri f`\varepsilon_{B'\tri A'}]
  \efig
  \end{mathpar}
  $\e{A,B}^{FG}$ is a symmetry because the following diagrams commute:
  \begin{mathpar}
  \bfig
    \ptriangle|amm|/->`=`->/<600,600>[
      A\tri B`FGA\tri FGB`A\tri B;h_A\tri h_B``\varepsilon_A\tri\varepsilon_B]
    \morphism(600,0)|b|<-600,0>[FG(A\tri B)`A\tri B;\varepsilon_{A\tri B}]
    \morphism(600,600)<700,0>[FGA\tri FGB`F(GA\otimes GB);\m{GA,GB}]
    \morphism(1300,0)|b|<-700,0>[F(GA\otimes GB)`FG(A\tri B);F\n{A,B}]
    \square(1300,0)/->`=`=`<-/<800,600>[
      F(GA\otimes GB)`F(GB\otimes GA)`F(GA\otimes GB)`F(GB\otimes GA);F\e{A,B}```F\e{B,A}]
    \qtriangle(2100,0)|amr|/->``->/<1400,600>[F(GB\otimes GA)`FG(B\tri A)`B\tri A;F\n{B,A}``\varepsilon_{B\tri A}]
    \morphism(2900,0)|b|<-800,0>[FGB\tri FGA`F(GB\otimes GA);\m{GB,GA}]
    \btriangle(2900,0)|mmb|/<-`=`<-/<600,400>[
      B\tri A`FGB\tri FGA`B\tri A;
      \varepsilon_B\tri\varepsilon_A``h_A\tri h_A])
  \efig
  \end{mathpar}
\end{proof}




\section{A Model in Dialectica Spaces}
\label{sec:a-model-in-dialectica-spaces}
\newcommand{\Set}{\mathsf{Set}}
\newcommand{\Dial}[2]{\mathsf{Dial}_{#1}(#2)}

In this section we give a different categorical model in terms of
dialectica categories; which are a sound and complete categorical
model of the Lambek Calculus as was shown by de Paiva and Eades
\cite{dePaiva2018}. This section is largely the same as the
corresponding section de Paiva and Eades give, but with some
modifications to their definition of biclosed posets with exchange
(see Definition~\ref{def:biclosed-exchange}).  However, we try to make
this section as self contained as possible.

Dialectica categories were first introduced by de Paiva as a
categorification of G\"odel's Dialectica interpretation
\cite{depaiva1990}.  Dialectica categories were one of the first sound
categorical models of intuitionistic linear logic with linear
modalities.  We show in this section that they can be adapted to
become a sound and complete model for CNC logic, with both the
exchange and of-course modalities.  Due to the complexities of working
with dialectica categories we have formally verified\footnote{The
  complete formalization can be found online at
  {\tiny \url{https://bit.ly/2TpoyWU.}}}
this section in the proof assistant Agda~\cite{bove2009}.

First, we define the notion of a biclosed poset.  These are used to
control the definition of morphisms in the dialectica model.
\begin{definition}
  \label{def:biclosed-poset}
  Suppose $(M, \leq, \circ, e)$ is an ordered non-commutative monoid.
  If there exists a largest $x \in M$ such that $a \circ x \leq b$ for
  any $a, b \in M$, then we denote $x$ by $a \lto b$ and called it
  the \textbf{left-pseudocomplement} of $a$ w.r.t $b$.  Additionally,
  if there exists a largest $x \in M$ such that $x \circ a \leq b$ for
  any $a, b \in M$, then we denote $x$ by $b \rto a$ and called it
  the \textbf{right-pseudocomplement} of $a$ w.r.t $b$.

  A \textbf{biclosed poset}, $(M, \leq, \circ, e, \lto, \rto)$, is an
  ordered non-commutative monoid, $(M, \leq, \circ, e)$, such that $a
  \lto b$ and $b \rto a$ exist for any $a,b \in M$.
\end{definition}
Now using the previous definition we define dialectica Lambek spaces.
\begin{definition}
  \label{def:dialectica-lambek-spaces}
  Suppose $(M, \leq, \circ, e, \lto, \rto)$ is a biclosed poset. Then
  we define the category of \textbf{dialectica Lambek spaces},
  $\mathsf{Dial}_M(\Set)$, as follows:
  \begin{itemize}
  \item[-] objects, or dialectica Lambek spaces, are triples $(U, X,
    \alpha)$ where $U$ and $X$ are sets, and $\alpha : U \times X \mto
    M$ is a generalized relation over $M$, and

  \item[-] maps that are pairs $(f, F) : (U , X, \alpha) \mto (V , Y ,
    \beta)$ where $f : U \mto V$, and $F : Y \mto X$ are functions
    such that the weak adjointness condition
    $\forall u \in U.\forall y \in Y. \alpha(u , F(y)) \leq \beta(f(u), y)$
    holds.
  \end{itemize}
\end{definition}
Notice that the biclosed poset is used here as the target of the
relations in objects, but also as providing the order  relation in the weak adjoint condition on morphisms.  This will allow the structure of the biclosed
poset to lift up into $\Dial{M}{\Set}$.

We will show that $\Dial{M}{\Set}$ is a model of the Lambek Calculus
with modalities.  First, we must show that $\Dial{M}{\Set}$ is
monoidal biclosed.
\begin{definition}
  \label{def:dial-monoidal-structure}
  Suppose $(U, X, \alpha)$ and $(V, Y, \beta)$ are two objects of
  $\Dial{M}{\Set}$. Then their tensor product is defined as follows:
  \[ \small
  (U, X, \alpha) \rhd (V, Y, \beta) = (U \times V, (V \to X) \times (U \to Y), \alpha \rhd \beta)
  \]
  where $- \to -$ is the function space from $\Set$, and $(\alpha
  \rhd \beta)((u, v), (f, g)) = \alpha(u, f(v)) \circ \beta(g(u), v)$.

  \ \\
  \noindent
  The unit of the above tensor product is defined as follows:
  \[ \small
  I = (\top , \top , \iota)
  \]
  where $\top$ is the initial object in $\Set$, and $\iota(*,*) = e$.
\end{definition}

\noindent
It follows from de Paiva and Eades \cite{dePaiva2018} that this does
indeed define a monoidal tensor product, but take note of the fact
that this tensor product is indeed non-commutative, because the
non-commutative multiplication of the biclosed poset is used to define
the relation of the tensor product.

The tensor product has two right adjoints making $\Dial{M}{\Set}$
biclosed.
\begin{definition}
  \label{def:dial-is-biclosed}
  Suppose $(U, X, \alpha)$ and $(V, Y, \beta)$ are two objects of
  $\Dial{M}{\Set}$. Then two internal-homs can be defined as follows:
  \[ \small
  \begin{array}{lll}
    (U, X, \alpha) \lto (V, Y, \beta) = ((U \to V) \times (Y \to X), U \times Y, \alpha \lto \beta)\\
    (V, Y, \beta) \rto (U, X, \alpha) = ((U \to V) \times (Y \to X), U \times Y, \alpha \rto \beta)\\
  \end{array}
  \]
\end{definition}
\noindent
It is straightforward to show that the typical bijections defining the
corresponding adjunctions hold; see de Paiva and Eades for the details
\cite{dePaiva2018}.

We now extend $\Dial{M}{\Set}$ with two modalities: the usual
modality, of-course, denoted $!A$, and the exchange modality denoted
$\xi A$.  However, we must first extended biclosed posets to
include an exchange operation.
\begin{definition}
  \label{def:biclosed-exchange}
  A \textbf{biclosed poset with exchange} is a biclosed poset $(M,
  \leq, \circ, e, \lto, \rto)$ equipped with an unary operation
  $\xi : M \to M$ satisfying the following:
  \[ \small
  \setlength{\arraycolsep}{4px}
  \begin{array}{lll}
    \begin{array}{lll}
    \text{(Compatibility)} & a \leq b \text{ implies } \xi a \leq \xi b \text{ for all } a,b,c \in M\\
    \text{(Minimality)} & \xi a \leq a \text{ for all } a \in M\\    
  \end{array}
  &
  \begin{array}{lll}
    \text{(Duplication)} & \xi a \leq \xi\xi a \text{ for all } a \in M\\
    \text{(Exchange)} & (\xi a \circ \xi b) \leq (\xi b \circ \xi a) \text{ for all } a, b \in M\\
  \end{array}
  \end{array}
  \]
\end{definition}
\noindent
This definition is where the construction given here departs from the
definition of biclosed posets with exchange given by de Paiva and
Eades \cite{dePaiva2018}.

We can now define the two modalities in $\Dial{M}{\Set}$ where $M$ is
a biclosed poset with exchange.
\begin{definition}
  \label{def:modalities-dial}
  Suppose $(U, X, \alpha)$ is an object of $\Dial{M}{\Set}$ where $M$
  is a biclosed poset with exchange. Then the \textbf{of-course} and
  \textbf{exchange} modalities can be defined as 
  $! (U, X, \alpha) = (U, U \to X^*, !\alpha)$ and
  $\xi (U, X, \alpha) = (U, X, \xi \alpha)$
  where $X^*$ is the free commutative monoid on $X$, $(!\alpha)(u, f)
  = \alpha(u, x_1) \circ \cdots \circ \alpha(u, x_i)$ for $f(u) =
  (x_1, \ldots, x_i)$, and $(\xi \alpha)(u, x) = \xi (\alpha(u,
  x))$.
\end{definition}
This definition highlights a fundamental difference between the two
modalities.  The definition of the exchange modality relies on an
extension of biclosed posets with essentially the exchange modality in
the category of posets.  However, the of-course modality is defined by
the structure already present in $\Dial{M}{\Set}$, specifically, the
structure of $\Set$.

Both of the modalities have the structure of a comonad.  That is,
there are monoidal natural transformations $\varepsilon_! : !A \mto
A$, $\varepsilon_\xi : \xi A \mto A$, $\delta_! : !A \mto !!A$,
and $\delta_\xi : \xi A \mto \xi\xi A$ which satisfy the
appropriate diagrams; see the formalization for the full
proofs. Furthermore, these comonads come equipped with arrows $w : !A
\mto I$, $d : !A \mto !A \otimes !A$, $\e{A,B} : \xi A \otimes \xi B \mto \xi B
\otimes \xi A$. 

Finally, using the fact that $\Dial{M}{\Set}$ for any biclosed poset
is essentially a non-commutative formalization of Bierman's linear
categories \cite{Bierman:1994} we can use Benton's construction of an
LNL model from a linear category to obtain a LAM model, and hence,
obtain the following.
\begin{theorem}
  \label{theorem:sound-dial-exchange-!}
  Suppose $M$ is a biclosed poset with exchange.  Then
  $\Dial{M}{\Set}$ is a sound model for CNC logic.
\end{theorem}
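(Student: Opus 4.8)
The plan is to avoid interpreting CNC logic directly in $\Dial{M}{\Set}$ and instead to factor the argument through the adjoint semantics of Section~\ref{sec:adjoint-model}: I would exhibit a Lambek Adjoint Model whose non-commutative component is $\Dial{M}{\Set}$, and then appeal to the interpretation of CNC derivations in an arbitrary LAM. Concretely, I would set $\cat{L} = \Dial{M}{\Set}$ and produce a symmetric monoidal closed category $\cat{C}$ together with a monoidal adjunction $F : \cat{C} \dashv \cat{L} : G$ whose induced comonad $FG$ is isomorphic to the exchange modality $\xi$ of Definition~\ref{def:modalities-dial}. Once such a LAM is in hand, the $\cat{C}$- and $\cat{L}$-interpretations $\interp{-}$ land in $\cat{C}$ and $\cat{L}$ respectively, and soundness of the theorem follows from the construction of the adjoint model.

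First I would discharge the structural prerequisites on $\cat{L} = \Dial{M}{\Set}$. Its monoidal biclosed structure, i.e.\ that it is a Lambek category in the sense of Definition~\ref{def:lambek-category}, is supplied by Definitions~\ref{def:dial-monoidal-structure} and~\ref{def:dial-is-biclosed} together with the defining adjunction bijections inherited from de Paiva and Eades~\cite{dePaiva2018}. Next I would record that, over a biclosed poset with exchange (Definition~\ref{def:biclosed-exchange}), both $!$ and $\xi$ are monoidal comonads on $\Dial{M}{\Set}$, exactly as in Definition~\ref{def:modalities-dial}: this makes $\Dial{M}{\Set}$ a non-commutative analogue of a Bierman linear category~\cite{Bierman:1994}, in which the of-course comonad $!$ carries weakening and contraction $w, d$ while the exchange comonad $\xi$ carries only the symmetry $\e{A,B} : \xi A \otimes \xi B \to \xi B \otimes \xi A$.

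The heart of the construction is then Benton's recipe~\cite{Benton:1994} for extracting an adjoint model, applied to the exchange comonad $\xi$ rather than to the of-course comonad. I would take $\cat{C}$ to be the category of $\xi$-coalgebras, with $G$ the forgetful functor and $F$ the cofree functor, so that $FG \cong \xi$ by construction; since $\xi$ is a monoidal comonad the cofree--forgetful adjunction is monoidal, the coherence data being supplied by the monoidal structure of $\xi$. The Exchange axiom of Definition~\ref{def:biclosed-exchange}, namely $\xi a \circ \xi b \leq \xi b \circ \xi a$, is precisely what makes $\xi$ a symmetric (commutative) monoidal comonad, so that the tensor on $\cat{C} = \cat{L}^{\xi}$ acquires a genuine symmetry; Theorem~\ref{thm:em-exchange} is the abstract shadow of this fact. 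Finally, the monoidal biclosed structure of $\cat{L}$ descends to a single closed structure on $\cat{C}$, the two implications $\lto$ and $\rto$ of Definition~\ref{def:dial-is-biclosed} being identified once the tensor is commutative. This yields the required symmetric monoidal closed $\cat{C}$, hence a LAM $(\cat{C}, \Dial{M}{\Set}, F, G, \eta, \varepsilon)$.

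I expect the main obstacle to be the third step: verifying that $\cat{C}$ is genuinely symmetric monoidal \emph{closed}, and that the adjunction is monoidal with all of Benton's coherence conditions, rather than merely symmetric monoidal. The subtlety is that the exchange comonad, unlike a linear exponential comonad, provides no weakening or contraction, so the closed structure of $\cat{C}$ cannot be read off by the usual co-Kleisli argument (which for $!$ would instead yield a Cartesian closed category) and must instead be transported from the biclosed structure of $\Dial{M}{\Set}$ through the $\xi$-coalgebra structure maps. All of these verifications are diagram chases over $\Dial{M}{\Set}$, and I would lean on the Agda formalization cited in this section to discharge the routine coherence obligations; the genuinely conceptual work lies only in setting up the coalgebra category and its closed structure correctly.
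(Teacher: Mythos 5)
Your proposal follows essentially the same route as the paper: the paper's entire proof of this theorem is the single remark that $\Dial{M}{\Set}$ over a biclosed poset with exchange is a non-commutative analogue of a Bierman linear category, so that Benton's construction of an LNL model from a linear category yields a LAM and hence soundness. You supply the details the paper omits — the choice of the coalgebra category for the exchange comonad as $\cat{C}$ and the cofree–forgetful adjunction — and you correctly flag that establishing the symmetric monoidal \emph{closed} structure on that coalgebra category is the one genuinely non-routine obligation, a point the paper's one-sentence argument does not address at all.
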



\section{Future Work}
\label{sec:future-work}
We introduce the idea above of having a modality for exchange, but
what about individual modalities for weakening and contraction?
Indeed it is possible to give modalities for these structural rules as
well using adjoint models.  Now that we have each structural rule
isolated into their own modality is it possible to put them together
to form new modalities that combine structural rules?  The answer to
this question has already been shown to be positive, at least for
weakening and contraction, by Melli{\'e}s~\cite{Mellies:2004}, but we
plan to extend this line of work to include exchange.

The monads induced by the adjunction in CNC logic is non-commutative,
but Benton and Wadler show that the monads induced by the adjunction
in LNL logic \cite{Benton:1996} are commutative.  Using the extension
of Melli{\'e}s' work we mention above would allow us to combine both
CNC logic with LNL logic, and then be able to support both commutative
monads as well as non-commutative monads.  We plan on exploring this
in the future.

Hasegawa~\cite{EPTCS238.6} studies the linear of-course modality,
$!A$, as a comonad induced by an adjunction between a Cartesian closed
category a (non-symmetric) monoidal category.  The results here
generalizes his by generalizing the Cartesian closed category to a
symmetric monoidal closed category.  However, his approach focuses on
the comonad rather than the adjunctions.  It would be interesting to
do the same for LAM as well.


\bibliographystyle{plainurl}
\bibliography{ref}

\appendix
\label{sec:appendix}
\section{Proof For Lemma~\ref{lem:cut-reduction}}
\label{app:cut-reduction}

\subsection{Commuting Conversion Cut vs. Cut}

\subsubsection{$\SCdruleTXXcutName$ vs. $\SCdruleTXXcutName$}
\begin{itemize}
\item Case 1:
      \begin{center}
        \scriptsize
        \begin{math}
          \begin{array}{c}
            \Pi_1 \\
            {\Phi  \vdash_\mathcal{C}  \SCnt{X}}
          \end{array}
        \end{math}
        \qquad\qquad
        $\Pi_2:$
        \begin{math}
          $$\mprset{flushleft}
          \inferrule* [right={\tiny cut}] {
            {
              \begin{array}{cc}
                \pi_1 & \pi_2 \\
                {\Psi_{{\mathrm{2}}}  \SCsym{,}  \SCnt{X}  \SCsym{,}  \Psi_{{\mathrm{3}}}  \vdash_\mathcal{C}  \SCnt{Y}} & {\Psi_{{\mathrm{1}}}  \SCsym{,}  \SCnt{Y}  \SCsym{,}  \Psi_{{\mathrm{4}}}  \vdash_\mathcal{C}  \SCnt{Z}}
              \end{array}
            }
          }{\Psi_{{\mathrm{1}}}  \SCsym{,}  \Psi_{{\mathrm{2}}}  \SCsym{,}  \SCnt{X}  \SCsym{,}  \Psi_{{\mathrm{3}}}  \SCsym{,}  \Psi_{{\mathrm{4}}}  \vdash_\mathcal{C}  \SCnt{Z}}
        \end{math}
      \end{center}
      By assumption, $c(\Pi_1),c(\Pi_2)\leq |X|$. Therefore, $c(\pi_1)$,
      $c(\pi_2)\leq |X|$. Since $Y$ is the cut formula on $\pi_1$ and
      $\pi_2$, we have $|Y|+1\leq|X|$. By induction on $\Pi_1$ and $\pi_1$
      there exists a proof $\Pi'$ for sequent $\Psi_{{\mathrm{2}}}  \SCsym{,}  \Phi  \SCsym{,}  \Psi_{{\mathrm{3}}}  \vdash_\mathcal{C}  \SCnt{Y}$ s.t.
      $c(\Pi')\leq|X|$. So $\Pi$ can be constructed as follows, with
      $c(\Pi)\leq max\{c(\Pi'),c(\pi_2),|Y|+1\}\leq |X|$.
      \begin{center}
        \scriptsize
        \begin{math}
          $$\mprset{flushleft}
          \inferrule* [right={\tiny cut}] {
            {
              \begin{array}{cc}
                \Pi' & \pi_2 \\
                {\Psi_{{\mathrm{2}}}  \SCsym{,}  \Phi  \SCsym{,}  \Psi_{{\mathrm{3}}}  \vdash_\mathcal{C}  \SCnt{Y}} & {\Psi_{{\mathrm{1}}}  \SCsym{,}  \SCnt{Y}  \SCsym{,}  \Psi_{{\mathrm{4}}}  \vdash_\mathcal{C}  \SCnt{Z}}
              \end{array}
            }
          }{\Psi_{{\mathrm{1}}}  \SCsym{,}  \Psi_{{\mathrm{2}}}  \SCsym{,}  \Phi  \SCsym{,}  \Psi_{{\mathrm{3}}}  \SCsym{,}  \Psi_{{\mathrm{4}}}  \vdash_\mathcal{C}  \SCnt{Z}}
        \end{math}
      \end{center}

\item Case 2:
      \begin{center}
        \scriptsize
        $\Pi_1$:
        \begin{math}
          $$\mprset{flushleft}
          \inferrule* [right={\tiny cut}] {
            {
              \begin{array}{cc}
                \pi_1 & \pi_2 \\
                {\Phi  \vdash_\mathcal{C}  \SCnt{X}} & {\Psi_{{\mathrm{2}}}  \SCsym{,}  \SCnt{X}  \SCsym{,}  \Psi_{{\mathrm{3}}}  \vdash_\mathcal{C}  \SCnt{Y}}
              \end{array}
            }
          }{\Psi_{{\mathrm{2}}}  \SCsym{,}  \Phi  \SCsym{,}  \Psi_{{\mathrm{3}}}  \vdash_\mathcal{C}  \SCnt{Y}}
        \end{math}
        \qquad\qquad
        \begin{math}
          \begin{array}{c}
            \Pi_2 \\
            {\Psi_{{\mathrm{1}}}  \SCsym{,}  \SCnt{Y}  \SCsym{,}  \Psi_{{\mathrm{4}}}  \vdash_\mathcal{C}  \SCnt{Z}}
          \end{array}
        \end{math}
      \end{center}
      By assumption, $c(\Pi_1),c(\Pi_2)\leq |Y|$. Since the cut rank of the last cut in
      $\Pi_1$ is $|X|+1$, then $|X|+1\leq |Y|$. By induction on $\Pi_1$ and $\Pi_2$, there is
      a proof $\Pi'$ for sequent $\Psi_{{\mathrm{1}}}  \SCsym{,}  \Psi_{{\mathrm{2}}}  \SCsym{,}  \SCnt{X}  \SCsym{,}  \Psi_{{\mathrm{3}}}  \SCsym{,}  \Psi_{{\mathrm{4}}}  \vdash_\mathcal{C}  \SCnt{Z}$ s.t. $c(\Pi')\leq|Y|$.
      Therefore, the proof $\Pi$ can be constructed as follows, and
      $c(\Pi)\leq max\{c(\pi_1),c(\Pi'),|X|+1\}\leq |Y|$.
      \begin{center}
        \scriptsize
        \begin{math}
          $$\mprset{flushleft}
          \inferrule* [right={\tiny cut}] {
            {
              \begin{array}{cc}
                \pi_1 & \Pi' \\
                {\Phi  \vdash_\mathcal{C}  \SCnt{X}} & {\Psi_{{\mathrm{1}}}  \SCsym{,}  \Psi_{{\mathrm{2}}}  \SCsym{,}  \SCnt{X}  \SCsym{,}  \Psi_{{\mathrm{3}}}  \SCsym{,}  \Psi_{{\mathrm{4}}}  \vdash_\mathcal{C}  \SCnt{Z}}
              \end{array}
            }
          }{\Psi_{{\mathrm{1}}}  \SCsym{,}  \Psi_{{\mathrm{2}}}  \SCsym{,}  \Phi  \SCsym{,}  \Psi_{{\mathrm{3}}}  \SCsym{,}  \Psi_{{\mathrm{4}}}  \vdash_\mathcal{C}  \SCnt{Z}}
        \end{math}
      \end{center}
\end{itemize}

\subsubsection{$\SCdruleTXXcutName$ vs. $\SCdruleSXXcutOneName$}
\begin{itemize}
\item Case 1:
      \begin{center}
        \scriptsize
        \begin{math}
          \begin{array}{c}
            \Pi_1 \\
            {\Phi  \vdash_\mathcal{C}  \SCnt{X}}
          \end{array}
        \end{math}
        \qquad\qquad
        $\Pi_2:$
        \begin{math}
          $$\mprset{flushleft}
          \inferrule* [right={\tiny cut1}] {
            {
              \begin{array}{cc}
                \pi_2 & \pi_3 \\
                {\Psi_{{\mathrm{1}}}  \SCsym{,}  \SCnt{X}  \SCsym{,}  \Psi_{{\mathrm{2}}}  \vdash_\mathcal{C}  \SCnt{Y}} & {\Gamma_{{\mathrm{1}}}  \SCsym{;}  \SCnt{Y}  \SCsym{;}  \Gamma_{{\mathrm{2}}}  \vdash_\mathcal{L}  \SCnt{A}}
              \end{array}
            }
          }{\Gamma_{{\mathrm{1}}}  \SCsym{;}  \Psi_{{\mathrm{1}}}  \SCsym{;}  \SCnt{X}  \SCsym{;}  \Psi_{{\mathrm{2}}}  \SCsym{;}  \Gamma_{{\mathrm{2}}}  \vdash_\mathcal{L}  \SCnt{A}}
        \end{math}
      \end{center}
      By assumption, $c(\Pi_1),c(\Pi_2)\leq |X|$. Therefore, $c(\pi_1)$,
      $c(\pi_2)\leq |X|$. Since $Y$ is the cut formula on $\pi_1$ and
      $\pi_2$, we have $|Y|+1\leq|X|$. By induction on $\Pi_1$ and $\pi_1$,
      there exists a proof $\Pi'$ for sequent $\Psi_{{\mathrm{1}}}  \SCsym{,}  \Phi  \SCsym{,}  \Psi_{{\mathrm{2}}}  \vdash_\mathcal{C}  \SCnt{Y}$ s.t.
      $c(\Pi')\leq|X|$. So $\Pi$ can be constructed as follows, with
      $c(\Pi)\leq max\{c(\Pi'),c(\pi_2),|Y|+1\}\leq |X|$.
      \begin{center}
        \scriptsize
        \begin{math}
          $$\mprset{flushleft}
          \inferrule* [right={\tiny cut1}] {
            {
              \begin{array}{cc}
                \Pi' & \pi_2 \\
                {\Psi_{{\mathrm{1}}}  \SCsym{,}  \Phi  \SCsym{,}  \Psi_{{\mathrm{2}}}  \vdash_\mathcal{C}  \SCnt{Y}} & {\Gamma_{{\mathrm{1}}}  \SCsym{;}  \SCnt{Y}  \SCsym{;}  \Gamma_{{\mathrm{2}}}  \vdash_\mathcal{L}  \SCnt{A}}
              \end{array}
            }
          }{\Gamma_{{\mathrm{1}}}  \SCsym{;}  \Psi_{{\mathrm{1}}}  \SCsym{;}  \Phi  \SCsym{;}  \Psi_{{\mathrm{2}}}  \SCsym{;}  \Gamma_{{\mathrm{2}}}  \vdash_\mathcal{L}  \SCnt{A}}
        \end{math}
      \end{center}

\item Case 2:
      \begin{center}
        \scriptsize
        $\Pi_1$:
        \begin{math}
          $$\mprset{flushleft}
          \inferrule* [right={\tiny cut}] {
            {
              \begin{array}{cc}
                \pi_1 & \pi_2 \\
                {\Phi  \vdash_\mathcal{C}  \SCnt{X}} & {\Psi_{{\mathrm{1}}}  \SCsym{,}  \SCnt{X}  \SCsym{,}  \Psi_{{\mathrm{2}}}  \vdash_\mathcal{C}  \SCnt{Y}}
              \end{array}
            }
          }{\Psi_{{\mathrm{1}}}  \SCsym{,}  \Phi  \SCsym{,}  \Psi_{{\mathrm{2}}}  \vdash_\mathcal{C}  \SCnt{Y}}
        \end{math}
        \qquad\qquad
        \begin{math}
          \begin{array}{c}
            \Pi_2 \\
            {\Gamma_{{\mathrm{1}}}  \SCsym{;}  \SCnt{Y}  \SCsym{;}  \Gamma_{{\mathrm{2}}}  \vdash_\mathcal{L}  \SCnt{A}}
          \end{array}
        \end{math}
      \end{center}
      By assumption, $c(\Pi_1),c(\Pi_2)\leq |Y|$. Similar as above,
      $|X|+1\leq |Y|$ and there is a proof $\Pi'$ constructed from $\pi_2$
      and $\Pi_2$ for sequent $\Gamma_{{\mathrm{1}}}  \SCsym{;}  \Psi_{{\mathrm{1}}}  \SCsym{;}  \SCnt{X}  \SCsym{;}  \Psi_{{\mathrm{2}}}  \SCsym{;}  \Gamma_{{\mathrm{2}}}  \vdash_\mathcal{L}  \SCnt{A}$ s.t.
      $c(\Pi')\leq|Y|$. Therefore, the proof $\Pi$ can be constructed as
      follows, and $c(\Pi)\leq max\{c(\pi_1),c(\Pi'),|X|+1\}\leq |Y|$.
      \begin{center}
        \scriptsize
        \begin{math}
          $$\mprset{flushleft}
          \inferrule* [right={\tiny cut}] {
            {
              \begin{array}{cc}
                \pi_1 & \Pi'\\
                {\Phi  \vdash_\mathcal{C}  \SCnt{X}} & {\Gamma_{{\mathrm{1}}}  \SCsym{;}  \Psi_{{\mathrm{1}}}  \SCsym{;}  \SCnt{X}  \SCsym{;}  \Psi_{{\mathrm{2}}}  \SCsym{;}  \Gamma_{{\mathrm{2}}}  \vdash_\mathcal{L}  \SCnt{A}}
              \end{array}
            }
          }{\Gamma_{{\mathrm{1}}}  \SCsym{;}  \Psi_{{\mathrm{1}}}  \SCsym{;}  \Phi  \SCsym{;}  \Psi_{{\mathrm{2}}}  \SCsym{;}  \Gamma_{{\mathrm{2}}}  \vdash_\mathcal{L}  \SCnt{A}}
        \end{math}
      \end{center}
\end{itemize}

\subsubsection{$\SCdruleSXXcutOneName$ vs. $\SCdruleSXXcutTwoName$}
\begin{itemize}
\item Case 1:
      \begin{center}
        \scriptsize
        \begin{math}
          \begin{array}{c}
            \Pi_1 \\
            {\Phi  \vdash_\mathcal{C}  \SCnt{X}}
          \end{array}
        \end{math}
        \qquad\qquad
        $\Pi_2:$
        \begin{math}
          $$\mprset{flushleft}
          \inferrule* [right={\tiny cut2}] {
            {
              \begin{array}{cc}
                \pi_1 & \pi_2 \\
                {\Gamma_{{\mathrm{2}}}  \SCsym{;}  \SCnt{X}  \SCsym{;}  \Gamma_{{\mathrm{3}}}  \vdash_\mathcal{L}  \SCnt{A}} & {\Gamma_{{\mathrm{1}}}  \SCsym{;}  \SCnt{A}  \SCsym{;}  \Gamma_{{\mathrm{4}}}  \vdash_\mathcal{L}  \SCnt{B}}
              \end{array}
            }
          }{\Gamma_{{\mathrm{1}}}  \SCsym{;}  \Gamma_{{\mathrm{2}}}  \SCsym{;}  \SCnt{X}  \SCsym{;}  \Gamma_{{\mathrm{3}}}  \SCsym{;}  \Gamma_{{\mathrm{4}}}  \vdash_\mathcal{L}  \SCnt{B}}
        \end{math}
      \end{center}
      By assumption, $c(\Pi_1),c(\Pi_2)\leq |X|$. Therefore, $c(\pi_1)$,
      $c(\pi_2)\leq |X|$. Since $A$ is the cut formula on $\pi_1$ and
      $\pi_2$, we have $|A|+1\leq|X|$. By induction on $\Pi_1$ and $\pi_1$,
      there exists a proof $\Pi'$ for sequent $\Gamma_{{\mathrm{2}}}  \SCsym{;}  \Phi  \SCsym{;}  \Gamma_{{\mathrm{3}}}  \vdash_\mathcal{L}  \SCnt{A}$ s.t.
      $c(\Pi')\leq|X|$. So $\Pi$ can be constructed as follows, with
      $c(\Pi)\leq max\{c(\Pi'),c(\pi_2),|A|+1\}\leq |X|$.
      \begin{center}
        \scriptsize
        \begin{math}
          $$\mprset{flushleft}
          \inferrule* [right={\tiny cut2}] {
            {
              \begin{array}{cc}
                \Pi' & \pi_2 \\
                {\Gamma_{{\mathrm{2}}}  \SCsym{;}  \Phi  \SCsym{;}  \Gamma_{{\mathrm{3}}}  \vdash_\mathcal{L}  \SCnt{A}} & {\Gamma_{{\mathrm{1}}}  \SCsym{;}  \SCnt{A}  \SCsym{;}  \Gamma_{{\mathrm{4}}}  \vdash_\mathcal{L}  \SCnt{B}}
              \end{array}
            }
          }{\Gamma_{{\mathrm{1}}}  \SCsym{;}  \Gamma_{{\mathrm{2}}}  \SCsym{;}  \Phi  \SCsym{;}  \Gamma_{{\mathrm{3}}}  \SCsym{;}  \Gamma_{{\mathrm{4}}}  \vdash_\mathcal{L}  \SCnt{B}}
        \end{math}
      \end{center}

\item Case 2:
      \begin{center}
        \scriptsize
        $\Pi_1$:
        \begin{math}
          $$\mprset{flushleft}
          \inferrule* [right={\tiny cut}] {
            {
              \begin{array}{cc}
                \pi_1 & \pi_2 \\
                {\Phi  \vdash_\mathcal{C}  \SCnt{X}} & {\Gamma_{{\mathrm{2}}}  \SCsym{;}  \SCnt{X}  \SCsym{;}  \Gamma_{{\mathrm{3}}}  \vdash_\mathcal{L}  \SCnt{A}}
              \end{array}
            }
          }{\Gamma_{{\mathrm{2}}}  \SCsym{;}  \Phi  \SCsym{;}  \Gamma_{{\mathrm{3}}}  \vdash_\mathcal{L}  \SCnt{A}}
        \end{math}
        \qquad\qquad
        \begin{math}
          \begin{array}{c}
            \Pi_2 \\
            {\Gamma_{{\mathrm{1}}}  \SCsym{;}  \SCnt{A}  \SCsym{;}  \Gamma_{{\mathrm{4}}}  \vdash_\mathcal{L}  \SCnt{B}}
          \end{array}
        \end{math}
      \end{center}
      By assumption, $c(\Pi_1),c(\Pi_2)\leq |A|$. Similar as above,
      $|X|+1\leq |A|$ and there is a proof $\Pi'$ constructed from'
      $\pi_2$ and $\Pi_2$ for sequent $\Gamma_{{\mathrm{1}}}  \SCsym{;}  \Gamma_{{\mathrm{2}}}  \SCsym{;}  \SCnt{X}  \SCsym{;}  \Gamma_{{\mathrm{3}}}  \SCsym{;}  \Gamma_{{\mathrm{4}}}  \vdash_\mathcal{L}  \SCnt{B}$ s.t.
      $c(\Pi')\leq|A|$. Therefore, the proof $\Pi$ can be constructed as
      follows, and $c(\Pi)\leq max\{c(\pi_1),c(\Pi'),|X|+1\}\leq |A|$.
      \begin{center}
        \scriptsize
        \begin{math}
          $$\mprset{flushleft}
          \inferrule* [right={\tiny cut}] {
            {
              \begin{array}{cc}
                \pi_1  & \Pi' \\
                {\Phi  \vdash_\mathcal{C}  \SCnt{X}} & {\Gamma_{{\mathrm{1}}}  \SCsym{;}  \Gamma_{{\mathrm{2}}}  \SCsym{;}  \SCnt{X}  \SCsym{;}  \Gamma_{{\mathrm{3}}}  \SCsym{;}  \Gamma_{{\mathrm{4}}}  \vdash_\mathcal{L}  \SCnt{B}}
              \end{array}
            }
          }{\Gamma_{{\mathrm{1}}}  \SCsym{;}  \Gamma_{{\mathrm{2}}}  \SCsym{;}  \Phi  \SCsym{;}  \Gamma_{{\mathrm{3}}}  \SCsym{;}  \Gamma_{{\mathrm{4}}}  \vdash_\mathcal{L}  \SCnt{B}}
        \end{math}
      \end{center}
\end{itemize}

\subsubsection{$\SCdruleSXXcutTwoName$ vs. $\SCdruleSXXcutTwoName$}
\begin{itemize}
\item Case 1:
      \begin{center}
        \scriptsize
        \begin{math}
          \begin{array}{c}
            \Pi_1 \\
            {\Gamma  \vdash_\mathcal{L}  \SCnt{A}}
          \end{array}
        \end{math}
        \qquad\qquad
        $\Pi_2:$
        \begin{math}
          $$\mprset{flushleft}
          \inferrule* [right={\tiny cut2}] {
            {
              \begin{array}{cc}
                \pi_1 & \pi_2 \\
                {\Delta_{{\mathrm{2}}}  \SCsym{;}  \SCnt{A}  \SCsym{;}  \Delta_{{\mathrm{3}}}  \vdash_\mathcal{L}  \SCnt{B}} & {\Delta_{{\mathrm{1}}}  \SCsym{;}  \SCnt{B}  \SCsym{;}  \Delta_{{\mathrm{4}}}  \vdash_\mathcal{L}  \SCnt{C}}
              \end{array}
            }
          }{\Delta_{{\mathrm{1}}}  \SCsym{;}  \Delta_{{\mathrm{2}}}  \SCsym{;}  \SCnt{A}  \SCsym{;}  \Delta_{{\mathrm{3}}}  \SCsym{;}  \Delta_{{\mathrm{4}}}  \vdash_\mathcal{L}  \SCnt{C}}
        \end{math}
      \end{center}
      By assumption, $c(\Pi_1),c(\Pi_2)\leq |A|$. Therefore, $c(\pi_1)$,
      $c(\pi_2)\leq |A|$. Since $B$ is the cut formula on $\pi_1$ and
      $\pi_3$, we have $|B|+1\leq|A|$. By induction on $\Pi_1$ and
      $\pi_1$, there exists a proof $\Pi'$ for sequent
      $\Delta_{{\mathrm{2}}}  \SCsym{;}  \Gamma  \SCsym{;}  \Delta_{{\mathrm{3}}}  \vdash_\mathcal{L}  \SCnt{B}$ s.t. $c(\Pi')\leq|A|$. So $\Pi$ can be
      constructed as follows,  with
      $c(\Pi)\leq max\{c(\Pi'),c(\pi_2),|B|+1\}\leq |A|$.
      \begin{center}
        \scriptsize
        \begin{math}
          $$\mprset{flushleft}
          \inferrule* [right={\tiny cut}] {
            {
              \begin{array}{cc}
                \Pi' & \pi_2 \\
                {\Delta_{{\mathrm{2}}}  \SCsym{;}  \Gamma  \SCsym{;}  \Delta_{{\mathrm{3}}}  \vdash_\mathcal{L}  \SCnt{B}} & {\Delta_{{\mathrm{1}}}  \SCsym{;}  \SCnt{B}  \SCsym{;}  \Delta_{{\mathrm{4}}}  \vdash_\mathcal{L}  \SCnt{C}}
              \end{array}
            }
          }{\Delta_{{\mathrm{1}}}  \SCsym{;}  \Delta_{{\mathrm{2}}}  \SCsym{;}  \Gamma  \SCsym{;}  \Delta_{{\mathrm{3}}}  \SCsym{;}  \Delta_{{\mathrm{4}}}  \vdash_\mathcal{L}  \SCnt{C}}
        \end{math}
      \end{center}

\item Case 2:
      \begin{center}
        \scriptsize
        $\Pi_1$:
        \begin{math}
          $$\mprset{flushleft}
          \inferrule* [right={\tiny cut}] {
            {
              \begin{array}{cc}
                \pi_1 & \pi_2 \\
                {\Delta  \vdash_\mathcal{L}  \SCnt{A}} & {\Delta_{{\mathrm{2}}}  \SCsym{;}  \SCnt{A}  \SCsym{;}  \Delta_{{\mathrm{3}}}  \vdash_\mathcal{L}  \SCnt{B}}
              \end{array}
            }
          }{\Delta_{{\mathrm{2}}}  \SCsym{;}  \Delta  \SCsym{;}  \Delta_{{\mathrm{3}}}  \vdash_\mathcal{L}  \SCnt{A}}
        \end{math}
        \qquad\qquad
        \begin{math}
          \begin{array}{c}
            \Pi_2 \\
            {\Delta_{{\mathrm{1}}}  \SCsym{;}  \SCnt{B}  \SCsym{;}  \Delta_{{\mathrm{4}}}  \vdash_\mathcal{L}  \SCnt{C}}
          \end{array}
        \end{math}
      \end{center}
      By assumption, $c(\Pi_1),c(\Pi_2)\leq |B|$. Similar as above,
      $|A|+1\leq |B|$ and there is a proof $\Pi'$ constructed from $\pi_2$ 
      and $\Pi_2$ for sequent $\Delta_{{\mathrm{1}}}  \SCsym{;}  \Delta_{{\mathrm{2}}}  \SCsym{;}  \SCnt{A}  \SCsym{;}  \Delta_{{\mathrm{3}}}  \SCsym{;}  \Delta_{{\mathrm{4}}}  \vdash_\mathcal{L}  \SCnt{C}$ s.t.
      $c(\Pi')\leq|A|$. Therefore, the proof $\Pi$ can be constructed as
      follows, and $c(\Pi)\leq max\{c(\pi_1),c(\Pi'),|A|+1\}\leq |B|$.
      \begin{center}
        \scriptsize
        \begin{math}
          $$\mprset{flushleft}
          \inferrule* [right={\tiny cut}] {
            {
              \begin{array}{cc}
                \pi_1 & \Pi' \\
                {\Gamma  \vdash_\mathcal{L}  \SCnt{A}} & {\Delta_{{\mathrm{1}}}  \SCsym{;}  \Delta_{{\mathrm{2}}}  \SCsym{;}  \SCnt{A}  \SCsym{;}  \Delta_{{\mathrm{3}}}  \SCsym{;}  \Delta_{{\mathrm{4}}}  \vdash_\mathcal{L}  \SCnt{C}}
              \end{array}
            }
          }{\Delta_{{\mathrm{1}}}  \SCsym{;}  \Delta_{{\mathrm{2}}}  \SCsym{;}  \Gamma  \SCsym{;}  \Delta_{{\mathrm{3}}}  \SCsym{;}  \Delta_{{\mathrm{4}}}  \vdash_\mathcal{L}  \SCnt{C}}
        \end{math}
      \end{center}

\end{itemize}

\subsection{The Axiom Steps}

\subsubsection{$\SCdruleTXXaxName$}
\begin{itemize}
\item Case 1:
      \begin{center}
        \scriptsize
        $\Pi_1$:
        \begin{math}
          $$\mprset{flushleft}
          \inferrule* [right={\tiny ax}] {
            \,
          }{\SCnt{X}  \vdash_\mathcal{C}  \SCnt{X}}
        \end{math}
        \qquad\qquad
        \begin{math}
          \begin{array}{c}
            \Pi_2 \\
            {\Phi_{{\mathrm{1}}}  \SCsym{,}  \SCnt{X}  \SCsym{,}  \Phi_{{\mathrm{2}}}  \vdash_\mathcal{C}  \SCnt{Y}}
          \end{array}
        \end{math}
      \end{center}
      By assumption, $c(\Pi_1),c(\Pi_2)\leq |X|$. The proof $\Pi$ is the
      same as $\Pi_2$.

\item Case 2:
      \begin{center}
        \scriptsize
        $\Pi_1$:
        \begin{math}
          \begin{array}{c}
            \Pi_1 \\
            {\Phi  \vdash_\mathcal{C}  \SCnt{X}}
          \end{array}
        \end{math}
        \qquad\qquad
        $\Pi_2$:
        \begin{math}
          $$\mprset{flushleft}
          \inferrule* [right={\tiny ax}] {
            \,
          }{\SCnt{X}  \vdash_\mathcal{C}  \SCnt{X}}
        \end{math}
      \end{center}
      By assumption, $c(\Pi_1),c(\Pi_2)\leq |X|$. The proof $\Pi$ is the
      same as $\Pi_1$.

\item Case 3:
      \begin{center}
        \scriptsize
        $\Pi_1$:
        \begin{math}
          $$\mprset{flushleft}
          \inferrule* [right={\tiny ax}] {
            \,
          }{\SCnt{X}  \vdash_\mathcal{C}  \SCnt{X}}
        \end{math}
        \qquad\qquad
        \begin{math}
          \begin{array}{c}
            \Pi_2 \\
            {\Gamma_{{\mathrm{1}}}  \SCsym{;}  \SCnt{X}  \SCsym{;}  \Gamma_{{\mathrm{2}}}  \vdash_\mathcal{L}  \SCnt{A}}
          \end{array}
        \end{math}
      \end{center}
      By assumption, $c(\Pi_1),c(\Pi_2)\leq |X|$. The proof $\Pi$ is the
      same as $\Pi_2$.
\end{itemize}

\subsubsection{$\SCdruleTXXaxName$}
\begin{itemize}
\item Case 1:
      \begin{center}
        \scriptsize
        $\Pi_1$:
        \begin{math}
          $$\mprset{flushleft}
          \inferrule* [right={\tiny ax}] {
            \,
          }{\SCnt{A}  \vdash_\mathcal{L}  \SCnt{A}}
        \end{math}
        \qquad\qquad
        \begin{math}
          \begin{array}{c}
            \Pi_2 \\
            {\Gamma_{{\mathrm{1}}}  \SCsym{;}  \SCnt{A}  \SCsym{;}  \Gamma_{{\mathrm{2}}}  \vdash_\mathcal{L}  \SCnt{B}}
          \end{array}
        \end{math}
      \end{center}
      By assumption, $c(\Pi_1),c(\Pi_2)\leq |A|$. The proof $\Pi$ is the
      same as $\Pi_2$.

\item Case 2:
      \begin{center}
        \scriptsize
        $\Pi_1$:
        \begin{math}
          \begin{array}{c}
            \Pi_1 \\
            {\Delta  \vdash_\mathcal{L}  \SCnt{A}}
          \end{array}
        \end{math}
        \qquad\qquad
        $\Pi_2$:
        \begin{math}
          $$\mprset{flushleft}
          \inferrule* [right={\tiny ax}] {
            \,
          }{\SCnt{A}  \vdash_\mathcal{L}  \SCnt{A}}
        \end{math}
      \end{center}
      By assumption, $c(\Pi_1),c(\Pi_2)\leq |X|$. The proof $\Pi$ is the
      same as $\Pi_1$.
\end{itemize}

\subsection{The Exchange Steps}

\subsubsection{$\SCdruleTXXexName$}

\begin{itemize}
\item Case 1:
      \begin{center}
        \scriptsize
        \begin{math}
          \begin{array}{c}
            \Pi_1 \\
            {\Psi  \vdash_\mathcal{C}  \SCnt{X_{{\mathrm{1}}}}}
          \end{array}
        \end{math}
        \qquad\qquad
        $\Pi_2$:
        \begin{math}
          $$\mprset{flushleft}
          \inferrule* [right={\tiny ex}] {
            {
              \begin{array}{c}
                \pi \\
                {\Phi_{{\mathrm{1}}}  \SCsym{,}  \SCnt{X_{{\mathrm{1}}}}  \SCsym{,}  \SCnt{X_{{\mathrm{2}}}}  \SCsym{,}  \Phi_{{\mathrm{2}}}  \vdash_\mathcal{C}  \SCnt{Y}}
              \end{array}
            }
          }{\Phi_{{\mathrm{1}}}  \SCsym{,}  \SCnt{X_{{\mathrm{2}}}}  \SCsym{,}  \SCnt{X_{{\mathrm{1}}}}  \SCsym{,}  \Phi_{{\mathrm{2}}}  \vdash_\mathcal{C}  \SCnt{Y}}
        \end{math}
      \end{center}
      By assumption, $c(\Pi_1),c(\Pi_2)\leq |X_1|$. By induction on $\pi$
      and $\Pi_1$, there is a proof $\Pi'$ for sequent
      $\Phi_{{\mathrm{1}}}  \SCsym{,}  \Psi  \SCsym{,}  \SCnt{X_{{\mathrm{2}}}}  \SCsym{,}  \Phi_{{\mathrm{2}}}  \vdash_\mathcal{C}  \SCnt{Y}$ s.t. $c(\Pi')\leq|X_1|$. Therefore, the
      proof $\Pi$ can be constructed as follows, and
      $c(\Pi)=c(\Pi')\leq|X_1|$.
      \begin{center}
        \scriptsize
        \begin{math}
          $$\mprset{flushleft}
          \inferrule* [right={\tiny series of ex}] {
            {
              \begin{array}{c}
                \Pi' \\
                {\Phi_{{\mathrm{1}}}  \SCsym{,}  \Psi  \SCsym{,}  \SCnt{X_{{\mathrm{2}}}}  \SCsym{,}  \Phi_{{\mathrm{2}}}  \vdash_\mathcal{C}  \SCnt{Y}}
              \end{array}
            }
          }{\Phi_{{\mathrm{1}}}  \SCsym{,}  \SCnt{X_{{\mathrm{2}}}}  \SCsym{,}  \Psi  \SCsym{,}  \Phi_{{\mathrm{2}}}  \vdash_\mathcal{C}  \SCnt{Y}}
        \end{math}
      \end{center}

\item Case 2:
      \begin{center}
        \scriptsize
        \begin{math}
          \begin{array}{c}
            \Pi_1 \\
            {\Psi  \vdash_\mathcal{C}  \SCnt{X_{{\mathrm{2}}}}}
          \end{array}
        \end{math}
        \qquad\qquad
        $\Pi_2$:
        \begin{math}
          $$\mprset{flushleft}
          \inferrule* [right={\tiny ex}] {
            {
              \begin{array}{c}
                \pi \\
                {\Phi_{{\mathrm{1}}}  \SCsym{,}  \SCnt{X_{{\mathrm{1}}}}  \SCsym{,}  \SCnt{X_{{\mathrm{2}}}}  \SCsym{,}  \Phi_{{\mathrm{2}}}  \vdash_\mathcal{C}  \SCnt{Y}}
              \end{array}
            }
          }{\Phi_{{\mathrm{1}}}  \SCsym{,}  \SCnt{X_{{\mathrm{2}}}}  \SCsym{,}  \SCnt{X_{{\mathrm{1}}}}  \SCsym{,}  \Phi_{{\mathrm{2}}}  \vdash_\mathcal{C}  \SCnt{Y}}
        \end{math}
      \end{center}
      By assumption, $c(\Pi_1),c(\Pi_2)\leq |X_2|$. By induction on $\pi$
      and $\Pi_1$, there is a proof $\Pi'$ for sequent
      $\Phi_{{\mathrm{1}}}  \SCsym{,}  \SCnt{X_{{\mathrm{1}}}}  \SCsym{,}  \Psi  \SCsym{,}  \Phi_{{\mathrm{2}}}  \vdash_\mathcal{C}  \SCnt{Y}$ s.t. $c(\Pi')\leq|X_2|$. Therefore, the
      proof $\Pi$ can be constructed as follows, and
      $c(\Pi)=c(\Pi')\leq|X_2|$.
      \begin{center}
        \scriptsize
        \begin{math}
          $$\mprset{flushleft}
          \inferrule* [right={\tiny series of ex}] {
            {
              \begin{array}{c}
                \Pi' \\
                {\Phi_{{\mathrm{1}}}  \SCsym{,}  \SCnt{X_{{\mathrm{1}}}}  \SCsym{,}  \Psi  \SCsym{,}  \Phi_{{\mathrm{2}}}  \vdash_\mathcal{C}  \SCnt{Y}}
              \end{array}
            }
          }{\Phi_{{\mathrm{1}}}  \SCsym{,}  \Psi  \SCsym{,}  \SCnt{X_{{\mathrm{1}}}}  \SCsym{,}  \Phi_{{\mathrm{2}}}  \vdash_\mathcal{C}  \SCnt{Y}}
        \end{math}
      \end{center}
\end{itemize}

\subsubsection{$\SCdruleSXXexName$}
\begin{itemize}
\item Case 1:
      \begin{center}
        \scriptsize
        \begin{math}
          \begin{array}{c}
            \Pi_1 \\
            {\Psi  \vdash_\mathcal{C}  \SCnt{X_{{\mathrm{1}}}}}
          \end{array}
        \end{math}
        \qquad\qquad
        $\Pi_2$:
        \begin{math}
          $$\mprset{flushleft}
          \inferrule* [right={\tiny ex}] {
            {
              \begin{array}{c}
                \pi \\
                {\Delta_{{\mathrm{1}}}  \SCsym{;}  \SCnt{X_{{\mathrm{1}}}}  \SCsym{;}  \SCnt{X_{{\mathrm{2}}}}  \SCsym{;}  \Delta_{{\mathrm{2}}}  \vdash_\mathcal{L}  \SCnt{A}}
              \end{array}
            }
          }{\Delta_{{\mathrm{1}}}  \SCsym{;}  \SCnt{X_{{\mathrm{2}}}}  \SCsym{;}  \SCnt{X_{{\mathrm{1}}}}  \SCsym{;}  \Delta_{{\mathrm{2}}}  \vdash_\mathcal{L}  \SCnt{A}}
        \end{math}
      \end{center}
      By assumption, $c(\Pi_1),c(\Pi_2)\leq |X_1|$. By induction on $\pi$
      and $\Pi_1$, there is a proof $\Pi'$ for sequent
      $\Delta_{{\mathrm{1}}}  \SCsym{;}  \Psi  \SCsym{;}  \SCnt{X_{{\mathrm{2}}}}  \SCsym{;}  \Delta_{{\mathrm{2}}}  \vdash_\mathcal{L}  \SCnt{A}$ s.t. $c(\Pi')\leq|X_1|$. Therefore, the
      proof $\Pi$ can be constructed as follows, and
      $c(\Pi)=c(\Pi')\leq|X_1|$.
      \begin{center}
        \scriptsize
        \begin{math}
          $$\mprset{flushleft}
          \inferrule* [right={\tiny series of ex}] {
            {
              \begin{array}{c}
                \Pi' \\
                {\Delta_{{\mathrm{1}}}  \SCsym{;}  \Psi  \SCsym{;}  \SCnt{X_{{\mathrm{2}}}}  \SCsym{;}  \Delta_{{\mathrm{2}}}  \vdash_\mathcal{L}  \SCnt{A}}
              \end{array}
            }
          }{\Delta_{{\mathrm{1}}}  \SCsym{;}  \SCnt{X_{{\mathrm{2}}}}  \SCsym{;}  \Psi  \SCsym{;}  \Delta_{{\mathrm{2}}}  \vdash_\mathcal{L}  \SCnt{A}}
        \end{math}
      \end{center}

\item Case 2:
      \begin{center}
        \scriptsize
        \begin{math}
          \begin{array}{c}
            \Pi_1 \\
            {\Psi  \vdash_\mathcal{C}  \SCnt{X_{{\mathrm{2}}}}}
          \end{array}
        \end{math}
        \qquad\qquad
        $\Pi_2$:
        \begin{math}
          $$\mprset{flushleft}
          \inferrule* [right={\tiny ex}] {
            {
              \begin{array}{c}
                \pi \\
                {\Delta_{{\mathrm{1}}}  \SCsym{;}  \SCnt{X_{{\mathrm{1}}}}  \SCsym{;}  \SCnt{X_{{\mathrm{2}}}}  \SCsym{;}  \Delta_{{\mathrm{2}}}  \vdash_\mathcal{L}  \SCnt{A}}
              \end{array}
            }
          }{\Delta_{{\mathrm{1}}}  \SCsym{;}  \SCnt{X_{{\mathrm{2}}}}  \SCsym{;}  \SCnt{X_{{\mathrm{1}}}}  \SCsym{;}  \Delta_{{\mathrm{2}}}  \vdash_\mathcal{L}  \SCnt{A}}
        \end{math}
      \end{center}
      By assumption, $c(\Pi_1),c(\Pi_2)\leq |X_2|$. By induction on $\pi$
      and $\Pi_1$, there is a proof $\Pi'$ for sequent
      $\Delta_{{\mathrm{1}}}  \SCsym{;}  \SCnt{X_{{\mathrm{1}}}}  \SCsym{;}  \Psi  \SCsym{;}  \Delta_{{\mathrm{2}}}  \vdash_\mathcal{L}  \SCnt{A}$ s.t. $c(\Pi')\leq|X_2|$. Therefore, the
      proof $\Pi$ can be constructed as follows, and
      $c(\Pi)=c(\Pi')\leq|X_2|$.
      \begin{center}
        \scriptsize
        \begin{math}
          $$\mprset{flushleft}
          \inferrule* [right={\tiny series of ex}] {
            {
              \begin{array}{c}
                \Pi' \\
                {\Delta_{{\mathrm{1}}}  \SCsym{;}  \SCnt{X_{{\mathrm{1}}}}  \SCsym{;}  \Psi  \SCsym{;}  \Phi_{{\mathrm{2}}}  \vdash_\mathcal{L}  \SCnt{A}}
              \end{array}
            }
          }{\Phi_{{\mathrm{1}}}  \SCsym{;}  \Psi  \SCsym{;}  \SCnt{X_{{\mathrm{1}}}}  \SCsym{;}  \Phi_{{\mathrm{2}}}  \vdash_\mathcal{L}  \SCnt{A}}
        \end{math}
      \end{center}
\end{itemize}

\subsection{Principal Formula vs. Principal Formula} 

\subsubsection{The Commutative Tensor Product $\otimes$}
\begin{center}
  \scriptsize
  $\Pi_1:$
  \begin{math}
    $$\mprset{flushleft}
    \inferrule* [right={\tiny tenR}] {
      {
        \begin{array}{cc}
          \pi_1 & \pi_2 \\
          {\Phi_{{\mathrm{1}}}  \vdash_\mathcal{C}  \SCnt{X}} & {\Phi_{{\mathrm{2}}}  \vdash_\mathcal{C}  \SCnt{Y}}
        \end{array}
      }
    }{\Phi_{{\mathrm{1}}}  \SCsym{,}  \Phi_{{\mathrm{2}}}  \vdash_\mathcal{C}  \SCnt{X}  \otimes  \SCnt{Y}}
  \end{math}
  \qquad\qquad
  $\Pi_2:$
  \begin{math}
    $$\mprset{flushleft}
    \inferrule* [right={\tiny tenL}] {
      {
        \begin{array}{c}
          \pi_3 \\
          {\Psi_{{\mathrm{1}}}  \SCsym{,}  \SCnt{X}  \SCsym{,}  \SCnt{Y}  \SCsym{,}  \Psi_{{\mathrm{2}}}  \vdash_\mathcal{C}  \SCnt{Z}}
        \end{array}
      }
    }{\Psi_{{\mathrm{1}}}  \SCsym{,}  \SCnt{X}  \otimes  \SCnt{Y}  \SCsym{,}  \Psi_{{\mathrm{2}}}  \vdash_\mathcal{C}  \SCnt{Z}}
  \end{math}
\end{center}
By assumption, $c(\Pi_1),c(\Pi_2)\leq |\SCnt{X}  \otimes  \SCnt{Y}| = |X|+|Y|+1$. The proof
$\Pi$ can be constructed as follows, and
$c(\Pi)\leq max\{c(\pi_1),c(\pi_2),c(\pi_3),|X|+1,|Y|+1\}\leq |X|+|Y|+1 = |\SCnt{X}  \otimes  \SCnt{Y}|$.
\begin{center}
  \scriptsize
  \begin{math}
    $$\mprset{flushleft}
    \inferrule* [right={\tiny cut}] {
      {
        \begin{array}{c}
          \pi_1 \\
          {\Phi_{{\mathrm{1}}}  \vdash_\mathcal{C}  \SCnt{X}}
        \end{array}
      }
      $$\mprset{flushleft}
      \inferrule* [right={\tiny cut}] {
      {
        \begin{array}{cc}
          \pi_2 & \pi_3 \\
          {\Phi_{{\mathrm{2}}}  \vdash_\mathcal{C}  \SCnt{Y}} & {\Psi_{{\mathrm{1}}}  \SCsym{,}  \SCnt{X}  \SCsym{,}  \SCnt{Y}  \SCsym{,}  \Psi_{{\mathrm{2}}}  \vdash_\mathcal{C}  \SCnt{Z}}
        \end{array}
      }
      }{\Psi_{{\mathrm{1}}}  \SCsym{,}  \SCnt{X}  \SCsym{,}  \Phi_{{\mathrm{2}}}  \SCsym{,}  \Psi_{{\mathrm{2}}}  \vdash_\mathcal{C}  \SCnt{Z}}
    }{\Psi_{{\mathrm{1}}}  \SCsym{,}  \Phi_{{\mathrm{1}}}  \SCsym{,}  \Phi_{{\mathrm{2}}}  \SCsym{,}  \Psi_{{\mathrm{2}}}  \vdash_\mathcal{C}  \SCnt{Z}}
  \end{math}
\end{center}

\subsubsection{The Non-commutative Tensor Product $\tri$}
\begin{center}
  \scriptsize
  $\Pi_1:$
  \begin{math}
    $$\mprset{flushleft}
    \inferrule* [right={\tiny tenR}] {
      {
        \begin{array}{cc}
          \pi_1 & \pi_2 \\
          {\Gamma_{{\mathrm{1}}}  \vdash_\mathcal{L}  \SCnt{A}} & {\Gamma_{{\mathrm{2}}}  \vdash_\mathcal{L}  \SCnt{B}}
        \end{array}
      }
    }{\Gamma_{{\mathrm{1}}}  \SCsym{;}  \Gamma_{{\mathrm{2}}}  \vdash_\mathcal{L}  \SCnt{A}  \triangleright  \SCnt{B}}
  \end{math}
  \qquad\qquad
  $\Pi_2:$
  \begin{math}
    $$\mprset{flushleft}
    \inferrule* [right={\tiny tenL1}] {
      {
        \begin{array}{c}
          \pi_3 \\
          {\Delta_{{\mathrm{1}}}  \SCsym{;}  \SCnt{A}  \SCsym{;}  \SCnt{B}  \SCsym{;}  \Delta_{{\mathrm{2}}}  \vdash_\mathcal{L}  \SCnt{C}}
        \end{array}
      }
    }{\Delta_{{\mathrm{1}}}  \SCsym{;}  \SCnt{A}  \triangleright  \SCnt{B}  \SCsym{;}  \Delta_{{\mathrm{2}}}  \vdash_\mathcal{L}  \SCnt{C}}
  \end{math}
\end{center}
By assumption, $c(\Pi_1),c(\Pi_2)\leq |\SCnt{A}  \triangleright  \SCnt{B}| = |X|+|Y|+1$. The proof
$\Pi$ can be constructed as follows, and
$c(\Pi)\leq max\{c(\pi_1),c(\pi_2),c(\pi_3),|A|+1,|B|+1\}\leq |A|+|B|+1 = |\SCnt{A}  \triangleright  \SCnt{B}|$.
\begin{center}
  \scriptsize
  \begin{math}
    $$\mprset{flushleft}
    \inferrule* [right={\tiny cut2}] {
      {
        \begin{array}{c}
          \pi_1 \\
          {\Gamma_{{\mathrm{1}}}  \vdash_\mathcal{L}  \SCnt{A}}
        \end{array}
      }
      $$\mprset{flushleft}
      \inferrule* [right={\tiny cut2}] {
      {
        \begin{array}{cc}
          \pi_2 & \pi_3 \\
          {\Gamma_{{\mathrm{2}}}  \vdash_\mathcal{L}  \SCnt{B}} & {\Delta_{{\mathrm{1}}}  \SCsym{;}  \SCnt{A}  \SCsym{;}  \SCnt{B}  \SCsym{;}  \Delta_{{\mathrm{2}}}  \vdash_\mathcal{L}  \SCnt{C}}
        \end{array}
      }
      }{\Delta_{{\mathrm{1}}}  \SCsym{;}  \SCnt{A}  \SCsym{;}  \Gamma_{{\mathrm{2}}}  \SCsym{;}  \Delta_{{\mathrm{2}}}  \vdash_\mathcal{L}  \SCnt{C}}
    }{\Delta_{{\mathrm{1}}}  \SCsym{;}  \Gamma_{{\mathrm{1}}}  \SCsym{;}  \Gamma_{{\mathrm{2}}}  \SCsym{;}  \Psi_{{\mathrm{2}}}  \vdash_\mathcal{L}  \SCnt{C}}
  \end{math}
\end{center}

\subsubsection{The Commutative Implication $\multimap$}
\begin{center}
  \scriptsize
  $\Pi_1:$
  \begin{math}
    $$\mprset{flushleft}
    \inferrule* [right={\tiny tenR}] {
      {
        \begin{array}{c}
          \pi_1 \\
          {\Phi_{{\mathrm{1}}}  \SCsym{,}  \SCnt{X}  \vdash_\mathcal{C}  \SCnt{Y}}
        \end{array}
      }
    }{\Phi_{{\mathrm{1}}}  \vdash_\mathcal{C}  \SCnt{X}  \multimap  \SCnt{Y}}
  \end{math}
  \qquad\qquad
  $\Pi_2:$
  \begin{math}
    $$\mprset{flushleft}
    \inferrule* [right={\tiny tenL}] {
      {
        \begin{array}{cc}
          \pi_2 & \pi_3 \\
          {\Phi_{{\mathrm{2}}}  \vdash_\mathcal{C}  \SCnt{X}} & {\Psi_{{\mathrm{1}}}  \SCsym{,}  \SCnt{Y}  \SCsym{,}  \Psi_{{\mathrm{2}}}  \vdash_\mathcal{C}  \SCnt{Z}}
        \end{array}
      }
    }{\Psi_{{\mathrm{1}}}  \SCsym{,}  \SCnt{X}  \multimap  \SCnt{Y}  \SCsym{,}  \Phi  \SCsym{,}  \Psi_{{\mathrm{2}}}  \vdash_\mathcal{C}  \SCnt{Z}}
  \end{math}
\end{center}
By assumption, $c(\Pi_1),c(\Pi_2)\leq |\SCnt{X}  \multimap  \SCnt{Y}| = |X|+|Y|+1$. The proof 
$\Pi$ is constructed as follows
$c(\Pi)\leq max\{c(\pi_1),c(\pi_2),c(\pi_3),|X|+1,|Y|+1\}\leq |X|+|Y|+1 = |\SCnt{X}  \multimap  \SCnt{Y}|$.
\begin{center}
  \scriptsize
  \begin{math}
    $$\mprset{flushleft}
    \inferrule* [right={\tiny tenR}] {
      $$\mprset{flushleft}
      \inferrule* [right={\tiny tenR}] {
        {
          \begin{array}{cc}
            \pi_1 & \pi_2 \\
            {\Phi_{{\mathrm{1}}}  \SCsym{,}  \SCnt{X}  \vdash_\mathcal{C}  \SCnt{Y}} & {\Phi_{{\mathrm{2}}}  \vdash_\mathcal{C}  \SCnt{X}}
          \end{array}
        }
      }{\Phi_{{\mathrm{1}}}  \SCsym{,}  \Phi_{{\mathrm{2}}}  \vdash_\mathcal{C}  \SCnt{Y}} \\
       {
         \begin{array}{c}
           \pi_3 \\
           {\Psi_{{\mathrm{1}}}  \SCsym{,}  \SCnt{Y}  \SCsym{,}  \Psi_{{\mathrm{2}}}  \vdash_\mathcal{C}  \SCnt{Z}}
         \end{array}
       }
    }{\Psi_{{\mathrm{1}}}  \SCsym{,}  \Phi_{{\mathrm{1}}}  \SCsym{,}  \Phi_{{\mathrm{2}}}  \SCsym{,}  \Psi_{{\mathrm{2}}}  \vdash_\mathcal{C}  \SCnt{Z}}
  \end{math}
\end{center}

\subsubsection{The Non-commutative Right Implication $\lto$}
\begin{center}
  \scriptsize
  $\Pi_1:$
  \begin{math}
    $$\mprset{flushleft}
    \inferrule* [right={\tiny imprR}] {
      {
        \begin{array}{c}
          \pi_1 \\
          {\Gamma  \SCsym{;}  \SCnt{A}  \vdash_\mathcal{L}  \SCnt{B}}
        \end{array}
      }
    }{\Gamma  \vdash_\mathcal{L}  \SCnt{A}  \rightharpoonup  \SCnt{B}}
  \end{math}
  \qquad\qquad
  $\Pi_2:$
  \begin{math}
    $$\mprset{flushleft}
    \inferrule* [right={\tiny imprL}] {
      {
        \begin{array}{cc}
          \pi_2 & \pi_3 \\
          {\Delta_{{\mathrm{1}}}  \vdash_\mathcal{L}  \SCnt{A}} & {\Delta_{{\mathrm{2}}}  \SCsym{;}  \SCnt{B}  \vdash_\mathcal{L}  \SCnt{C}}
        \end{array}
      }
    }{\Delta_{{\mathrm{2}}}  \SCsym{;}  \SCnt{A}  \rightharpoonup  \SCnt{B}  \SCsym{;}  \Delta_{{\mathrm{1}}}  \vdash_\mathcal{L}  \SCnt{C}}
  \end{math}
\end{center}
By assumption, $c(\Pi_1),c(\Pi_2)\leq |\SCnt{A}  \rightharpoonup  \SCnt{B}| = |A|+|B|+1$. The proof
$\Pi$ is constructed as follows, and
$c(\Pi)\leq max\{c(\pi_1),c(\pi_2),c(\pi_3),|A|+1,|B|+1\}\leq |A|+|B|+1 = |\SCnt{A}  \rightharpoonup  \SCnt{B}|$.
\begin{center}
  \scriptsize
  \begin{math}
    $$\mprset{flushleft}
    \inferrule* [right={\tiny cut2}] {
      $$\mprset{flushleft}
      \inferrule* [right={\tiny cut2}] {
        {
          \begin{array}{cc}
            \pi_1 & \pi_2 \\
            {\Gamma  \SCsym{;}  \SCnt{A}  \vdash_\mathcal{L}  \SCnt{B}} & {\Delta_{{\mathrm{1}}}  \vdash_\mathcal{L}  \SCnt{A}}
          \end{array}
        }
      }{\Gamma  \SCsym{;}  \Delta_{{\mathrm{1}}}  \vdash_\mathcal{L}  \SCnt{B}}
       {
         \begin{array}{c}
           \pi_3 \\
           {\Delta_{{\mathrm{2}}}  \SCsym{;}  \SCnt{B}  \vdash_\mathcal{L}  \SCnt{C}}
         \end{array}
       }
    }{\Delta_{{\mathrm{2}}}  \SCsym{;}  \Gamma  \SCsym{;}  \Delta_{{\mathrm{1}}}  \vdash_\mathcal{L}  \SCnt{C}}
  \end{math}
\end{center}

\subsubsection{The Non-commutative Left Implication $\rto$}
\begin{center}
  \scriptsize
  $\Pi_1:$
  \begin{math}
    $$\mprset{flushleft}
    \inferrule* [right={\tiny implR}] {
      {
        \begin{array}{c}
          \pi_1 \\
          {\SCnt{A}  \SCsym{;}  \Gamma  \vdash_\mathcal{L}  \SCnt{B}}
        \end{array}
      }
    }{\Gamma  \vdash_\mathcal{L}  \SCnt{B}  \leftharpoonup  \SCnt{A}}
  \end{math}
  \qquad\qquad
  $\Pi_2:$
  \begin{math}
    $$\mprset{flushleft}
    \inferrule* [right={\tiny implL}] {
      {
        \begin{array}{cc}
          \pi_2 & \pi_3 \\
          {\Delta_{{\mathrm{1}}}  \vdash_\mathcal{L}  \SCnt{A}} & {\SCnt{B}  \SCsym{;}  \Delta_{{\mathrm{2}}}  \vdash_\mathcal{L}  \SCnt{C}}
        \end{array}
      }
    }{\Delta_{{\mathrm{1}}}  \SCsym{;}  \SCnt{B}  \leftharpoonup  \SCnt{A}  \SCsym{;}  \Delta_{{\mathrm{2}}}  \vdash_\mathcal{L}  \SCnt{C}}
  \end{math}
\end{center}
By assumption, $c(\Pi_1),c(\Pi_2)\leq |\SCnt{B}  \leftharpoonup  \SCnt{A}| = |A|+|B|+1$. The
proof $\Pi$ is constructed as follows, and
$c(\Pi)\leq max\{c(\pi_1),c(\pi_2),c(\pi_3),|A|+1,|B|+1\}\leq |A|+|B|+1 = |\SCnt{B}  \leftharpoonup  \SCnt{A}|$.
\begin{center}
  \scriptsize
  \begin{math}
    $$\mprset{flushleft}
    \inferrule* [right={\tiny cut1}] {
      $$\mprset{flushleft}
      \inferrule* [right={\tiny cut2}] {
        {
          \begin{array}{cc}
            \pi_1 & \pi_2 \\
            {\SCnt{A}  \SCsym{;}  \Gamma  \vdash_\mathcal{L}  \SCnt{B}} & {\Delta_{{\mathrm{1}}}  \vdash_\mathcal{L}  \SCnt{A}}
          \end{array}
        }
      }{\Delta_{{\mathrm{1}}}  \SCsym{;}  \Gamma  \vdash_\mathcal{L}  \SCnt{B}}
       {
         \begin{array}{c}
           \pi_3 \\
           {\SCnt{B}  \SCsym{;}  \Delta_{{\mathrm{2}}}  \vdash_\mathcal{L}  \SCnt{C}}
         \end{array}
       }
    }{\Delta_{{\mathrm{1}}}  \SCsym{;}  \Gamma  \SCsym{;}  \Delta_{{\mathrm{2}}}  \vdash_\mathcal{L}  \SCnt{C}}
  \end{math}
\end{center}

\subsubsection{The Commutative Unit $ \mathsf{Unit} $}
\begin{itemize}
\item Case 1:
      \begin{center}
        \scriptsize
        $\Pi_1:$
        \begin{math}
          $$\mprset{flushleft}
          \inferrule* [right={\tiny unitR}] {
            \,
          }{ \cdot   \vdash_\mathcal{C}   \mathsf{Unit} }
        \end{math}
        \qquad\qquad
        $\Pi_2:$
        \begin{math}
          $$\mprset{flushleft}
          \inferrule* [right={\tiny unitL}] {
            {
              \begin{array}{c}
                \pi \\
                {\Phi  \SCsym{,}  \Psi  \vdash_\mathcal{C}  \SCnt{X}}
              \end{array}
            }
          }{\Phi  \SCsym{,}   \mathsf{Unit}   \SCsym{,}  \Psi  \vdash_\mathcal{C}  \SCnt{X}}
        \end{math}
      \end{center}
      By assumption, $c(\Pi_1),c(\Pi_2)\leq | \mathsf{Unit} |$. The proof $\Pi$
      is the subproof $\pi$ in $\Pi_2$ for sequent $\Phi  \vdash_\mathcal{C}  \SCnt{X}$. So
      $c(\Pi)=c(\Pi_2)\leq | \mathsf{Unit} |$.

\item Case 2:
      \begin{center}
        \scriptsize
        $\Pi_1:$
        \begin{math}
          $$\mprset{flushleft}
          \inferrule* [right={\tiny unitR}] {
            \,
          }{ \cdot   \vdash_\mathcal{C}   \mathsf{Unit} }
        \end{math}
        \qquad\qquad
        $\Pi_2:$
        \begin{math}
          $$\mprset{flushleft}
          \inferrule* [right={\tiny unitL1}] {
            {
              \begin{array}{c}
                \pi \\
                {\Gamma  \SCsym{;}  \Delta  \vdash_\mathcal{L}  \SCnt{A}}
              \end{array}
            }
          }{\Gamma  \SCsym{;}   \mathsf{Unit}   \SCsym{;}  \Delta  \vdash_\mathcal{L}  \SCnt{A}}
        \end{math}
      \end{center}
      Similar as above, $\Pi$ is $\pi$.
\end{itemize}

\subsubsection{The Non-commutative Unit $ \mathsf{Unit} $}
\begin{center}
  \scriptsize
  $\Pi_1:$
  \begin{math}
    $$\mprset{flushleft}
    \inferrule* [right={\tiny unitR}] {
      \,
    }{ \cdot   \vdash_\mathcal{L}   \mathsf{Unit} }
  \end{math}
  \qquad\qquad
  $\Pi_2:$
  \begin{math}
    $$\mprset{flushleft}
    \inferrule* [right={\tiny unitL2}] {
      {
        \begin{array}{c}
          \pi \\
          {\Gamma  \SCsym{;}  \Delta  \vdash_\mathcal{L}  \SCnt{A}}
        \end{array}
      }
    }{\Gamma  \SCsym{;}   \mathsf{Unit}   \SCsym{;}  \Delta  \vdash_\mathcal{L}  \SCnt{A}}
  \end{math}
\end{center}
By assumption, $c(\Pi_1),c(\Pi_2)\leq | \mathsf{Unit} |$. The proof $\Pi$ is the
subproof $\pi$ in $\Pi_2$ for sequent $\Delta  \vdash_\mathcal{L}  \SCnt{A}$. So
$c(\Pi)=c(\Pi_2)\leq | \mathsf{Unit} |$.

\subsubsection{The Functor $F$}
\begin{center}
  \scriptsize
  $\Pi_1:$
  \begin{math}
    $$\mprset{flushleft}
    \inferrule* [right={\tiny FR}] {
      {
        \begin{array}{c}
          \pi_1 \\
          {\Phi  \vdash_\mathcal{C}  \SCnt{X}}
        \end{array}
      }
    }{\Phi  \vdash_\mathcal{L}   \mathsf{F} \SCnt{X} }
  \end{math}
  \qquad\qquad
  $\Pi_2:$
  \begin{math}
    $$\mprset{flushleft}
    \inferrule* [right={\tiny FL}] {
      {
        \begin{array}{c}
          \pi_2 \\
          {\Gamma  \SCsym{;}  \SCnt{X}  \SCsym{;}  \Delta  \vdash_\mathcal{L}  \SCnt{A}}
        \end{array}
      }
    }{\Gamma  \SCsym{;}   \mathsf{F} \SCnt{X}   \SCsym{;}  \Delta  \vdash_\mathcal{L}  \SCnt{A}}
  \end{math}
\end{center}
By assumption, $c(\Pi_1),c(\Pi_2)\leq | \mathsf{F} \SCnt{X} | = |X|+1$. The proof
$\Pi$ is constructed as follows, and \\
$c(\Pi)\leq max\{c(\pi_1),c(\pi_2),|X|+1\}\leq | \mathsf{F} \SCnt{X} |$.
\begin{center}
  \scriptsize
  \begin{math}
    $$\mprset{flushleft}
    \inferrule* [right={\tiny cut2}] {
      {
        \begin{array}{cc}
          \pi_1 & \pi_2 \\
          {\Phi  \vdash_\mathcal{C}  \SCnt{X}} & {\Gamma  \SCsym{;}  \SCnt{X}  \SCsym{;}  \Delta  \vdash_\mathcal{L}  \SCnt{A}}
        \end{array}
      }
    }{\Gamma  \SCsym{;}  \Phi  \SCsym{;}  \Delta  \vdash_\mathcal{L}  \SCnt{A}}
  \end{math}
\end{center}

\subsubsection{The Functor $G$}
\begin{center}
  \scriptsize
  $\Pi_1:$
  \begin{math}
    $$\mprset{flushleft}
    \inferrule* [right={\tiny GR}] {
      {
        \begin{array}{c}
          \pi_1 \\
          {\Phi  \vdash_\mathcal{L}  \SCnt{A}}
        \end{array}
      }
    }{\Phi  \vdash_\mathcal{C}   \mathsf{G} \SCnt{A} }
  \end{math}
  \qquad\qquad
  $\Pi_2:$
  \begin{math}
    $$\mprset{flushleft}
    \inferrule* [right={\tiny GL}] {
      {
        \begin{array}{c}
          \pi_2 \\
          {\Gamma  \SCsym{;}  \SCnt{A}  \SCsym{;}  \Delta  \vdash_\mathcal{L}  \SCnt{B}}
        \end{array}
      }
    }{\Gamma  \SCsym{;}   \mathsf{G} \SCnt{A}   \SCsym{;}  \Delta  \vdash_\mathcal{L}  \SCnt{B}}
  \end{math}
\end{center}
By assumption, $c(\Pi_1),c(\Pi_2)\leq | \mathsf{G} \SCnt{A} | = |A|+1$. The proof $\Pi$ 
is constructed as follows, and \\
$c(\Pi)\leq max\{c(\pi_1),c(\pi_2),|A|+1\}\leq | \mathsf{G} \SCnt{A} |$.
\begin{center}
  \scriptsize
  \begin{math}
    $$\mprset{flushleft}
    \inferrule* [right={\tiny GL}] {
      {
        \begin{array}{cc}
          \pi_1 & \pi_2 \\
          {\Phi  \vdash_\mathcal{L}  \SCnt{A}} & {\Gamma  \SCsym{;}  \SCnt{A}  \SCsym{;}  \Delta  \vdash_\mathcal{L}  \SCnt{B}}
        \end{array}
      }
    }{\Gamma  \SCsym{;}  \Phi  \SCsym{;}  \Delta  \vdash_\mathcal{L}  \SCnt{B}}
  \end{math}
\end{center}

\subsection{Secondary Conclusion}

\subsubsection{Left introduction of the commutative implication $\multimap$}
\begin{itemize}
\item Case 1:
      \begin{center}
        \scriptsize
        $\Pi_1$:
        \begin{math}
          $$\mprset{flushleft}
          \inferrule* [right={\tiny impL}] {
            {
              \begin{array}{cc}
                \pi_1 & \pi_2 \\
                {\Phi_{{\mathrm{1}}}  \vdash_\mathcal{C}  \SCnt{X_{{\mathrm{1}}}}} & {\Phi_{{\mathrm{2}}}  \SCsym{,}  \SCnt{X_{{\mathrm{2}}}}  \SCsym{,}  \Phi_{{\mathrm{3}}}  \vdash_\mathcal{C}  \SCnt{Y}}
              \end{array}
            }
          }{\Phi_{{\mathrm{2}}}  \SCsym{,}  \SCnt{X_{{\mathrm{1}}}}  \multimap  \SCnt{X_{{\mathrm{2}}}}  \SCsym{,}  \Phi_{{\mathrm{1}}}  \SCsym{,}  \Phi_{{\mathrm{3}}}  \vdash_\mathcal{C}  \SCnt{Y}}
        \end{math}
        \qquad\qquad
        \begin{math}
          \begin{array}{c}
            \Pi_2 \\
            {\Psi_{{\mathrm{1}}}  \SCsym{,}  \SCnt{Y}  \SCsym{,}  \Psi_{{\mathrm{2}}}  \vdash_\mathcal{C}  \SCnt{Z}}
          \end{array}
        \end{math}
      \end{center}
      By assumption, $c(\Pi_1),c(\Pi_2)\leq |Y|$. By induction, there is a
      proof $\Pi'$ from $\pi_2$ and $\Pi_2$ for sequent
      $\Psi_{{\mathrm{1}}}  \SCsym{,}  \Phi_{{\mathrm{2}}}  \SCsym{,}  \SCnt{X_{{\mathrm{2}}}}  \SCsym{,}  \Phi_{{\mathrm{3}}}  \SCsym{,}  \Psi_{{\mathrm{2}}}  \vdash_\mathcal{C}  \SCnt{Z}$ s.t. $c(\Pi')\leq |Y|$. Therefore,
      the proof $\Pi$ can be constructed as follows with $c(\Pi)\leq |Y|$.
      \begin{center}
        \scriptsize
        \begin{math}
          $$\mprset{flushleft}
          \inferrule* [right={\tiny impL}] {
            {
              \begin{array}{c}
                \pi_1 \\
                {\Phi_{{\mathrm{1}}}  \vdash_\mathcal{C}  \SCnt{X_{{\mathrm{1}}}}}
              \end{array}
            }
            $$\mprset{flushleft}
            \inferrule* [right={\tiny cut}] {
              {
                \begin{array}{cc}
                  \pi_2 & \Pi_2 \\
                  {\Phi_{{\mathrm{2}}}  \SCsym{,}  \SCnt{X_{{\mathrm{2}}}}  \SCsym{,}  \Phi_{{\mathrm{3}}}  \vdash_\mathcal{C}  \SCnt{Y}} & {\Psi_{{\mathrm{1}}}  \SCsym{,}  \SCnt{Y}  \SCsym{,}  \Psi_{{\mathrm{2}}}  \vdash_\mathcal{C}  \SCnt{Z}}
                \end{array}
              }
            }{\Psi_{{\mathrm{1}}}  \SCsym{,}  \Phi_{{\mathrm{2}}}  \SCsym{,}  \SCnt{X_{{\mathrm{2}}}}  \SCsym{,}  \Phi_{{\mathrm{3}}}  \SCsym{,}  \Psi_{{\mathrm{2}}}  \vdash_\mathcal{C}  \SCnt{Z}}
          }{\Psi_{{\mathrm{1}}}  \SCsym{,}  \Phi_{{\mathrm{2}}}  \SCsym{,}  \SCnt{X_{{\mathrm{1}}}}  \multimap  \SCnt{X_{{\mathrm{2}}}}  \SCsym{,}  \Phi_{{\mathrm{1}}}  \SCsym{,}  \Phi_{{\mathrm{3}}}  \SCsym{,}  \Psi_{{\mathrm{2}}}  \vdash_\mathcal{C}  \SCnt{Z}}
        \end{math}
      \end{center}

\item Case 2:
      \begin{center}
        \scriptsize
        $\Pi_1$:
        \begin{math}
          $$\mprset{flushleft}
          \inferrule* [right={\tiny impL}] {
            {
              \begin{array}{cc}
                \pi_1 & \pi_2 \\
                {\Phi_{{\mathrm{1}}}  \vdash_\mathcal{C}  \SCnt{X_{{\mathrm{1}}}}} & {\Phi_{{\mathrm{2}}}  \SCsym{,}  \SCnt{X_{{\mathrm{2}}}}  \SCsym{,}  \Phi_{{\mathrm{3}}}  \vdash_\mathcal{C}  \SCnt{Y}}
              \end{array}
            }
          }{\Phi_{{\mathrm{2}}}  \SCsym{,}  \SCnt{X_{{\mathrm{1}}}}  \multimap  \SCnt{X_{{\mathrm{2}}}}  \SCsym{,}  \Phi_{{\mathrm{1}}}  \SCsym{,}  \Phi_{{\mathrm{3}}}  \vdash_\mathcal{C}  \SCnt{Y}}
        \end{math}
        \qquad\qquad
        \begin{math}
          \begin{array}{c}
            \Pi_2 \\
            {\Gamma_{{\mathrm{1}}}  \SCsym{;}  \SCnt{Y}  \SCsym{;}  \Gamma_{{\mathrm{2}}}  \vdash_\mathcal{L}  \SCnt{A}}
          \end{array}
        \end{math}
      \end{center}
      By assumption, $c(\Pi_1),c(\Pi_2)\leq |Y|$. By induction, there is a
      proof $\Pi'$ from $\pi_2$ and $\Pi_2$ for sequent
      $\Gamma_{{\mathrm{1}}}  \SCsym{;}  \Phi_{{\mathrm{2}}}  \SCsym{;}  \SCnt{X_{{\mathrm{2}}}}  \SCsym{;}  \Phi_{{\mathrm{3}}}  \SCsym{;}  \Gamma_{{\mathrm{2}}}  \vdash_\mathcal{L}  \SCnt{A}$ s.t. $c(\Pi')\leq |Y|$. Therefore, the
      proof $\Pi$ can be constructed as follows with $c(\Pi)\leq |Y|$.
      \begin{center}
        \scriptsize
        \begin{math}
          $$\mprset{flushleft}
          \inferrule* [right={\tiny impL}] {
            {
              \begin{array}{c}
                \pi_1 \\
                {\Phi_{{\mathrm{1}}}  \vdash_\mathcal{C}  \SCnt{X_{{\mathrm{1}}}}}
              \end{array}
            }
            $$\mprset{flushleft}
            \inferrule* [right={\tiny cut}] {
              {
                \begin{array}{cc}
                  \pi_2 & \Pi_2 \\
                  {\Phi_{{\mathrm{2}}}  \SCsym{,}  \SCnt{X_{{\mathrm{2}}}}  \SCsym{,}  \Phi_{{\mathrm{3}}}  \vdash_\mathcal{C}  \SCnt{Y}} & {\Gamma_{{\mathrm{1}}}  \SCsym{;}  \SCnt{Y}  \SCsym{;}  \Gamma_{{\mathrm{2}}}  \vdash_\mathcal{L}  \SCnt{A}}
                \end{array}
              }
            }{\Gamma_{{\mathrm{1}}}  \SCsym{;}  \Phi_{{\mathrm{2}}}  \SCsym{;}  \SCnt{X_{{\mathrm{2}}}}  \SCsym{;}  \Phi_{{\mathrm{3}}}  \SCsym{;}  \Gamma_{{\mathrm{2}}}  \vdash_\mathcal{L}  \SCnt{A}}
          }{\Gamma_{{\mathrm{1}}}  \SCsym{;}  \Phi_{{\mathrm{2}}}  \SCsym{;}  \SCnt{X_{{\mathrm{1}}}}  \multimap  \SCnt{X_{{\mathrm{2}}}}  \SCsym{;}  \Phi_{{\mathrm{1}}}  \SCsym{;}  \Phi_{{\mathrm{3}}}  \SCsym{;}  \Gamma_{{\mathrm{2}}}  \vdash_\mathcal{L}  \SCnt{A}}
        \end{math}
      \end{center}
\end{itemize}

\subsubsection{Left introduction of the non-commutative left implication $\lto$}
\begin{center}
\scriptsize
  $\Pi_1$:
  \begin{math}
    $$\mprset{flushleft}
    \inferrule* [right={\tiny impL}] {
      {
        \begin{array}{cc}
          \pi_1 & \pi_2 \\
          {\Gamma_{{\mathrm{1}}}  \vdash_\mathcal{L}  \SCnt{A_{{\mathrm{1}}}}} & {\Gamma_{{\mathrm{2}}}  \SCsym{;}  \SCnt{A_{{\mathrm{2}}}}  \SCsym{;}  \Gamma_{{\mathrm{3}}}  \vdash_\mathcal{L}  \SCnt{B}}
        \end{array}
      }
    }{\Gamma_{{\mathrm{2}}}  \SCsym{;}  \SCnt{A_{{\mathrm{1}}}}  \rightharpoonup  \SCnt{A_{{\mathrm{2}}}}  \SCsym{;}  \Gamma_{{\mathrm{1}}}  \SCsym{;}  \Gamma_{{\mathrm{3}}}  \vdash_\mathcal{L}  \SCnt{B}}
  \end{math}
  \qquad\qquad
  \begin{math}
    \begin{array}{c}
      \Pi_2 \\
      {\Delta_{{\mathrm{1}}}  \SCsym{;}  \SCnt{B}  \SCsym{;}  \Delta_{{\mathrm{2}}}  \vdash_\mathcal{L}  \SCnt{C}}
    \end{array}
  \end{math}
\end{center}
By assumption, $c(\Pi_1),c(\Pi_2)\leq |B|$. By induction, there is a
proof $\Pi'$ from $\pi_2$ and $\Pi_2$ for sequent
$\Delta_{{\mathrm{1}}}  \SCsym{;}  \Gamma_{{\mathrm{2}}}  \SCsym{;}  \SCnt{A_{{\mathrm{2}}}}  \SCsym{;}  \Gamma_{{\mathrm{3}}}  \SCsym{;}  \Delta_{{\mathrm{2}}}  \vdash_\mathcal{L}  \SCnt{C}$ s.t. $c(\Pi')\leq |B|$.
Therefore, the proof $\Pi$ can be constructed as follows with
$c(\Pi)\leq |B|$.
\begin{center}
  \scriptsize
  \begin{math}
    $$\mprset{flushleft}
    \inferrule* [right={\tiny impL}] {
      {
        \begin{array}{c}
          \pi_1 \\
          {\Gamma_{{\mathrm{1}}}  \vdash_\mathcal{L}  \SCnt{A_{{\mathrm{1}}}}}
        \end{array}
      }
      $$\mprset{flushleft}
      \inferrule* [right={\tiny cut}] {
        {
          \begin{array}{cc}
            \pi_2 & \Pi_2 \\
            {\Gamma_{{\mathrm{2}}}  \SCsym{;}  \SCnt{A_{{\mathrm{2}}}}  \SCsym{;}  \Gamma_{{\mathrm{3}}}  \vdash_\mathcal{L}  \SCnt{B}} & {\Delta_{{\mathrm{1}}}  \SCsym{;}  \SCnt{B}  \SCsym{;}  \Delta_{{\mathrm{2}}}  \vdash_\mathcal{L}  \SCnt{C}}
          \end{array}
        }
      }{\Delta_{{\mathrm{1}}}  \SCsym{;}  \Gamma_{{\mathrm{2}}}  \SCsym{;}  \SCnt{A_{{\mathrm{2}}}}  \SCsym{;}  \Gamma_{{\mathrm{3}}}  \SCsym{;}  \Delta_{{\mathrm{2}}}  \vdash_\mathcal{L}  \SCnt{C}}
    }{\Delta_{{\mathrm{1}}}  \SCsym{;}  \Gamma_{{\mathrm{2}}}  \SCsym{;}  \SCnt{A_{{\mathrm{1}}}}  \rightharpoonup  \SCnt{A_{{\mathrm{2}}}}  \SCsym{;}  \Gamma_{{\mathrm{1}}}  \SCsym{;}  \Gamma_{{\mathrm{3}}}  \SCsym{;}  \Delta_{{\mathrm{2}}}  \vdash_\mathcal{L}  \SCnt{C}}
  \end{math}
\end{center}

\subsubsection{Left introduction of the non-commutative right implication $\rto$}
\begin{center}
  \scriptsize
  $\Pi_1$:
  \begin{math}
    $$\mprset{flushleft}
    \inferrule* [right={\tiny impL}] {
      {
        \begin{array}{cc}
          \pi_1 & \pi_2 \\
          {\Gamma_{{\mathrm{1}}}  \vdash_\mathcal{L}  \SCnt{A_{{\mathrm{1}}}}} & {\Gamma_{{\mathrm{2}}}  \SCsym{;}  \SCnt{A_{{\mathrm{2}}}}  \SCsym{;}  \Gamma_{{\mathrm{3}}}  \vdash_\mathcal{L}  \SCnt{B}}
        \end{array}
      }
    }{\Gamma_{{\mathrm{2}}}  \SCsym{;}  \Gamma_{{\mathrm{1}}}  \SCsym{;}  \SCnt{A_{{\mathrm{2}}}}  \leftharpoonup  \SCnt{A_{{\mathrm{1}}}}  \SCsym{;}  \Gamma_{{\mathrm{3}}}  \vdash_\mathcal{L}  \SCnt{B}}
  \end{math}
  \qquad\qquad
  \begin{math}
    \begin{array}{c}
      \Pi_2 \\
      {\Delta_{{\mathrm{1}}}  \SCsym{;}  \SCnt{B}  \SCsym{;}  \Delta_{{\mathrm{2}}}  \vdash_\mathcal{L}  \SCnt{C}}
    \end{array}
  \end{math}
\end{center}
By assumption, $c(\Pi_1),c(\Pi_2)\leq |B|$. By induction, there is a
proof $\Pi'$ from $\pi_2$ and $\Pi_2$ for sequent
$\Delta_{{\mathrm{1}}}  \SCsym{;}  \Gamma_{{\mathrm{2}}}  \SCsym{;}  \SCnt{A_{{\mathrm{2}}}}  \SCsym{;}  \Gamma_{{\mathrm{3}}}  \SCsym{;}  \Delta_{{\mathrm{2}}}  \vdash_\mathcal{L}  \SCnt{C}$ s.t. $c(\Pi')\leq |B|$. Therefore, the
proof $\Pi$ can be constructed as follows with $c(\Pi)\leq |B|$.
\begin{center}
  \scriptsize
  \begin{math}
    $$\mprset{flushleft}
    \inferrule* [right={\tiny impL}] {
      {
        \begin{array}{c}
          \pi_1 \\
          {\Gamma_{{\mathrm{1}}}  \vdash_\mathcal{L}  \SCnt{A_{{\mathrm{1}}}}}
        \end{array}
      }
      $$\mprset{flushleft}
      \inferrule* [right={\tiny cut}] {
        {
          \begin{array}{cc}
            \pi_2 & \Pi_2 \\
            {\Gamma_{{\mathrm{2}}}  \SCsym{;}  \SCnt{A_{{\mathrm{2}}}}  \SCsym{;}  \Gamma_{{\mathrm{3}}}  \vdash_\mathcal{L}  \SCnt{B}} & {\Delta_{{\mathrm{1}}}  \SCsym{;}  \SCnt{B}  \SCsym{;}  \Delta_{{\mathrm{2}}}  \vdash_\mathcal{L}  \SCnt{C}}
          \end{array}
        }
      }{\Delta_{{\mathrm{1}}}  \SCsym{;}  \Gamma_{{\mathrm{2}}}  \SCsym{;}  \SCnt{A_{{\mathrm{2}}}}  \SCsym{;}  \Gamma_{{\mathrm{3}}}  \SCsym{;}  \Delta_{{\mathrm{2}}}  \vdash_\mathcal{L}  \SCnt{C}}
    }{\Delta_{{\mathrm{1}}}  \SCsym{;}  \Gamma_{{\mathrm{2}}}  \SCsym{;}  \Gamma_{{\mathrm{1}}}  \SCsym{;}  \SCnt{A_{{\mathrm{2}}}}  \leftharpoonup  \SCnt{A_{{\mathrm{1}}}}  \SCsym{;}  \Gamma_{{\mathrm{3}}}  \SCsym{;}  \Delta_{{\mathrm{2}}}  \vdash_\mathcal{L}  \SCnt{C}}
  \end{math}
\end{center}

\subsubsection{$\SCdruleTXXexName$}
\begin{itemize}
\item Case 1:
      \begin{center}
        \scriptsize
        $\Pi_1$:
        \begin{math}
          $$\mprset{flushleft}
          \inferrule* [right={\tiny ex}] {
            {
              \begin{array}{c}
                \pi \\
                {\Phi_{{\mathrm{1}}}  \SCsym{,}  \SCnt{X_{{\mathrm{1}}}}  \SCsym{,}  \SCnt{X_{{\mathrm{2}}}}  \SCsym{,}  \Phi_{{\mathrm{2}}}  \vdash_\mathcal{C}  \SCnt{Y}}
              \end{array}
            }
          }{\Phi_{{\mathrm{1}}}  \SCsym{,}  \SCnt{X_{{\mathrm{2}}}}  \SCsym{,}  \SCnt{X_{{\mathrm{1}}}}  \SCsym{,}  \Phi_{{\mathrm{2}}}  \vdash_\mathcal{C}  \SCnt{Y}}
        \end{math}
        \qquad\qquad
        \begin{math}
          \begin{array}{c}
            \Pi_2 \\
            {\Psi_{{\mathrm{1}}}  \SCsym{,}  \SCnt{Y}  \SCsym{,}  \Psi_{{\mathrm{2}}}  \vdash_\mathcal{C}  \SCnt{Z}}
          \end{array}
        \end{math}
      \end{center}
      By assumption, $c(\Pi_1),c(\Pi_2)\leq |Y|$. By induction on $\pi$
      and $\Pi_2$, there is a proof $\Pi'$ for sequent
      $\Psi_{{\mathrm{1}}}  \SCsym{,}  \Phi_{{\mathrm{1}}}  \SCsym{,}  \SCnt{X_{{\mathrm{1}}}}  \SCsym{,}  \SCnt{X_{{\mathrm{2}}}}  \SCsym{,}  \Phi_{{\mathrm{2}}}  \SCsym{,}  \Psi_{{\mathrm{2}}}  \vdash_\mathcal{C}  \SCnt{Z}$ s.t. $c(\Pi')\leq|Y|$. Therefore,
      the proof $\Pi$ can be constructed as follows, and
      $c(\Pi)=c(\Pi')\leq|Y|$.
      \begin{center}
        \scriptsize
        \begin{math}
          $$\mprset{flushleft}
          \inferrule* [right={\tiny ex}] {
            {
              \begin{array}{c}
                \Pi' \\
                {\Psi_{{\mathrm{1}}}  \SCsym{,}  \Phi_{{\mathrm{1}}}  \SCsym{,}  \SCnt{X_{{\mathrm{1}}}}  \SCsym{,}  \SCnt{X_{{\mathrm{2}}}}  \SCsym{,}  \Phi_{{\mathrm{2}}}  \SCsym{,}  \Psi_{{\mathrm{2}}}  \vdash_\mathcal{C}  \SCnt{Z}}
              \end{array}
            }
          }{\Psi_{{\mathrm{1}}}  \SCsym{,}  \Phi_{{\mathrm{1}}}  \SCsym{,}  \SCnt{X_{{\mathrm{2}}}}  \SCsym{,}  \SCnt{X_{{\mathrm{1}}}}  \SCsym{,}  \Phi_{{\mathrm{2}}}  \SCsym{,}  \Psi_{{\mathrm{2}}}  \vdash_\mathcal{C}  \SCnt{Z}}
        \end{math}
      \end{center}

\item Case 2:
      \begin{center}
        \scriptsize
        $\Pi_1$:
        \begin{math}
          $$\mprset{flushleft}
          \inferrule* [right={\tiny beta}] {
            {
              \begin{array}{c}
                \pi \\
                {\Phi_{{\mathrm{1}}}  \SCsym{,}  \SCnt{X}  \SCsym{,}  \SCnt{Y}  \SCsym{,}  \Phi_{{\mathrm{2}}}  \vdash_\mathcal{C}  \SCnt{Z}}
              \end{array}
            }
          }{\Phi_{{\mathrm{1}}}  \SCsym{,}  \SCnt{Y}  \SCsym{,}  \SCnt{X}  \SCsym{,}  \Phi_{{\mathrm{2}}}  \vdash_\mathcal{C}  \SCnt{Z}}
        \end{math}
        \qquad\qquad
        \begin{math}
          \begin{array}{c}
            \Pi_2 \\
            {\Gamma_{{\mathrm{1}}}  \SCsym{;}  \SCnt{Z}  \SCsym{;}  \Gamma_{{\mathrm{2}}}  \vdash_\mathcal{L}  \SCnt{A}}
          \end{array}
        \end{math}
      \end{center}
      By assumption, $c(\Pi_1),c(\Pi_2)\leq |Z|$. Similar as above, there
      is a proof $\Pi'$ constructed from $\pi$ and $\Pi_2$ for 
      $\Gamma_{{\mathrm{1}}}  \SCsym{;}  \Phi_{{\mathrm{1}}}  \SCsym{;}  \SCnt{X}  \SCsym{;}  \SCnt{Y}  \SCsym{;}  \Phi_{{\mathrm{2}}}  \SCsym{;}  \Gamma_{{\mathrm{2}}}  \vdash_\mathcal{L}  \SCnt{A}$ s.t. $c(\Pi')\leq|Z|$. Therefore,
      the proof $\Pi$ can be constructed as follows, and
      $c(\Pi)=c(\Pi')\leq|Z|$.
      \begin{center}
        \scriptsize
        \begin{math}
          $$\mprset{flushleft}
          \inferrule* [right={\tiny beta}] {
            {
              \begin{array}{c}
                \Pi' \\
                {\Gamma_{{\mathrm{1}}}  \SCsym{;}  \Phi_{{\mathrm{1}}}  \SCsym{;}  \SCnt{X}  \SCsym{;}  \SCnt{Y}  \SCsym{;}  \Phi_{{\mathrm{2}}}  \SCsym{;}  \Gamma_{{\mathrm{2}}}  \vdash_\mathcal{L}  \SCnt{A}}
              \end{array}
            }
          }{\Gamma_{{\mathrm{1}}}  \SCsym{;}  \Phi_{{\mathrm{1}}}  \SCsym{;}  \SCnt{Y}  \SCsym{;}  \SCnt{X}  \SCsym{;}  \Phi_{{\mathrm{2}}}  \SCsym{;}  \Gamma_{{\mathrm{2}}}  \vdash_\mathcal{L}  \SCnt{A}}
        \end{math}
      \end{center}
\end{itemize}

\subsubsection{$\SCdruleSXXexName$}
\begin{center}
  \scriptsize
  $\Pi_1$:
  \begin{math}
    $$\mprset{flushleft}
    \inferrule* [right={\tiny beta}] {
      {
        \begin{array}{c}
          \pi \\
          {\Gamma_{{\mathrm{1}}}  \SCsym{;}  \SCnt{X}  \SCsym{;}  \SCnt{Y}  \SCsym{;}  \Gamma_{{\mathrm{2}}}  \vdash_\mathcal{L}  \SCnt{A}}
        \end{array}
      }
    }{\Gamma_{{\mathrm{1}}}  \SCsym{;}  \SCnt{Y}  \SCsym{;}  \SCnt{X}  \SCsym{;}  \Gamma_{{\mathrm{2}}}  \vdash_\mathcal{L}  \SCnt{A}}
  \end{math}
  \qquad\qquad
  \begin{math}
    \begin{array}{c}
      \Pi_2 \\
      {\Delta_{{\mathrm{1}}}  \SCsym{;}  \SCnt{A}  \SCsym{;}  \Delta_{{\mathrm{2}}}  \vdash_\mathcal{L}  \SCnt{B}}
    \end{array}
  \end{math}
\end{center}
By assumption, $c(\Pi_1),c(\Pi_2)\leq |A|$. Similar as above, there
is a proof $\Pi'$ constructed from $\pi$ and $\Pi_2$ for sequent
$\Delta_{{\mathrm{1}}}  \SCsym{;}  \Gamma_{{\mathrm{1}}}  \SCsym{;}  \SCnt{X}  \SCsym{;}  \SCnt{Y}  \SCsym{;}  \Gamma_{{\mathrm{2}}}  \SCsym{;}  \Delta_{{\mathrm{2}}}  \vdash_\mathcal{L}  \SCnt{B}$ s.t. $c(\Pi')\leq|A|$. Therefore,
the proof $\Pi$ can be constructed as follows, and
$c(\Pi)=c(\Pi')\leq|A|$.
\begin{center}
  \scriptsize
  \begin{math}
    $$\mprset{flushleft}
    \inferrule* [right={\tiny beta}] {
      {
        \begin{array}{cc}
          \Pi' \\
          {\Delta_{{\mathrm{1}}}  \SCsym{;}  \Gamma_{{\mathrm{1}}}  \SCsym{;}  \SCnt{X}  \SCsym{;}  \SCnt{Y}  \SCsym{;}  \Gamma_{{\mathrm{2}}}  \SCsym{;}  \Delta_{{\mathrm{2}}}  \vdash_\mathcal{L}  \SCnt{B}}
        \end{array}
      }
    }{\Delta_{{\mathrm{1}}}  \SCsym{;}  \Gamma_{{\mathrm{1}}}  \SCsym{;}  \SCnt{Y}  \SCsym{;}  \SCnt{X}  \SCsym{;}  \Gamma_{{\mathrm{2}}}  \SCsym{;}  \Delta_{{\mathrm{2}}}  \vdash_\mathcal{L}  \SCnt{B}}
  \end{math}
\end{center}

\subsubsection{Left introduction of the commutative tensor product $\otimes$}
\begin{itemize}
\item Case 1:
      \begin{center}
        \scriptsize
        $\Pi_1$:
        \begin{math}
          $$\mprset{flushleft}
          \inferrule* [right={\tiny tenL}] {
            {
              \begin{array}{c}
                \pi \\
                {\Phi_{{\mathrm{1}}}  \SCsym{,}  \SCnt{X_{{\mathrm{1}}}}  \SCsym{,}  \SCnt{X_{{\mathrm{2}}}}  \SCsym{,}  \Phi_{{\mathrm{2}}}  \vdash_\mathcal{C}  \SCnt{Y}}
              \end{array}
            }
          }{\Phi_{{\mathrm{1}}}  \SCsym{,}  \SCnt{X_{{\mathrm{1}}}}  \otimes  \SCnt{X_{{\mathrm{2}}}}  \SCsym{,}  \Phi_{{\mathrm{2}}}  \vdash_\mathcal{C}  \SCnt{Y}}
        \end{math}
        \qquad\qquad
        \begin{math}
          \begin{array}{c}
            \Pi_2 \\
            {\Psi_{{\mathrm{1}}}  \SCsym{,}  \SCnt{Y}  \SCsym{,}  \Psi_{{\mathrm{2}}}  \vdash_\mathcal{C}  \SCnt{Z}}
          \end{array}
        \end{math}
      \end{center}
      By assumption, $c(\Pi_1),c(\Pi_2)\leq |Y|$. By induction, there is a
      proof $\Pi'$ from $\pi$ and $\Pi_2$ for sequent
      $\Psi_{{\mathrm{1}}}  \SCsym{,}  \Phi_{{\mathrm{1}}}  \SCsym{,}  \SCnt{X_{{\mathrm{1}}}}  \SCsym{,}  \SCnt{X_{{\mathrm{2}}}}  \SCsym{,}  \Phi_{{\mathrm{2}}}  \SCsym{,}  \Psi_{{\mathrm{2}}}  \vdash_\mathcal{C}  \SCnt{Z}$ s.t. $c(\Pi')\leq |Y|$. Therefore,
      the proof $\Pi$ can be constructed as follows with $c(\Pi)\leq |Y|$.
      \begin{center}
        \scriptsize
        \begin{math}
          $$\mprset{flushleft}
          \inferrule* [right={\tiny tenL}] {
            $$\mprset{flushleft}
            \inferrule* [right={\tiny cut}] {
              {
                \begin{array}{cc}
                  \pi & \Pi_2 \\
                  {\Phi_{{\mathrm{1}}}  \SCsym{,}  \SCnt{X_{{\mathrm{1}}}}  \SCsym{,}  \SCnt{X_{{\mathrm{2}}}}  \SCsym{,}  \Phi_{{\mathrm{2}}}  \vdash_\mathcal{C}  \SCnt{Y}} & {\Psi_{{\mathrm{1}}}  \SCsym{,}  \SCnt{Y}  \SCsym{,}  \Psi_{{\mathrm{2}}}  \vdash_\mathcal{C}  \SCnt{Z}}
                \end{array}
              }
            }{\Psi_{{\mathrm{1}}}  \SCsym{,}  \Phi_{{\mathrm{1}}}  \SCsym{,}  \SCnt{X_{{\mathrm{1}}}}  \SCsym{,}  \SCnt{X_{{\mathrm{2}}}}  \SCsym{,}  \Phi_{{\mathrm{2}}}  \SCsym{,}  \Psi_{{\mathrm{2}}}  \vdash_\mathcal{C}  \SCnt{Z}}
          }{\Psi_{{\mathrm{1}}}  \SCsym{,}  \Phi_{{\mathrm{1}}}  \SCsym{,}  \SCnt{X_{{\mathrm{1}}}}  \otimes  \SCnt{X_{{\mathrm{2}}}}  \SCsym{,}  \Phi_{{\mathrm{2}}}  \SCsym{,}  \Psi_{{\mathrm{2}}}  \vdash_\mathcal{C}  \SCnt{Z}}
        \end{math}
      \end{center}

\item Case 2:
      \begin{center}
        \scriptsize
        $\Pi_1$:
        \begin{math}
          $$\mprset{flushleft}
          \inferrule* [right={\tiny tenL}] {
            {
              \begin{array}{c}
                \pi \\
                {\Phi_{{\mathrm{1}}}  \SCsym{,}  \SCnt{X_{{\mathrm{1}}}}  \SCsym{,}  \SCnt{X_{{\mathrm{2}}}}  \SCsym{,}  \Phi_{{\mathrm{2}}}  \vdash_\mathcal{C}  \SCnt{Y}}
              \end{array}
            }
          }{\Phi_{{\mathrm{1}}}  \SCsym{,}  \SCnt{X_{{\mathrm{1}}}}  \otimes  \SCnt{X_{{\mathrm{2}}}}  \SCsym{,}  \Phi_{{\mathrm{2}}}  \vdash_\mathcal{C}  \SCnt{Y}}
        \end{math}
        \qquad\qquad
        \begin{math}
          \begin{array}{c}
            \Pi_2 \\
            {\Gamma_{{\mathrm{1}}}  \SCsym{;}  \SCnt{Y}  \SCsym{;}  \Gamma_{{\mathrm{2}}}  \vdash_\mathcal{L}  \SCnt{A}}
          \end{array}
        \end{math}
      \end{center}
      By assumption, $c(\Pi_1),c(\Pi_2)\leq |Y|$. By induction, there is a
      proof $\Pi'$ from $\pi$ and $\Pi_2$ for sequent
      $\Gamma_{{\mathrm{1}}}  \SCsym{;}  \Phi_{{\mathrm{1}}}  \SCsym{;}  \SCnt{X_{{\mathrm{1}}}}  \SCsym{;}  \SCnt{X_{{\mathrm{2}}}}  \SCsym{;}  \Phi_{{\mathrm{2}}}  \SCsym{;}  \Gamma_{{\mathrm{2}}}  \vdash_\mathcal{L}  \SCnt{A}$ s.t. $c(\Pi')\leq |Y|$. Therefore,
      the proof $\Pi$ can be constructed as follows with $c(\Pi)\leq |Y|$.
      \begin{center}
        \scriptsize
        \begin{math}
          $$\mprset{flushleft}
          \inferrule* [right={\tiny tenL1}] {
            $$\mprset{flushleft}
            \inferrule* [right={\tiny cut1}] {
              {
                \begin{array}{cc}
                  \pi & \Pi_2 \\
                  {\Phi_{{\mathrm{1}}}  \SCsym{,}  \SCnt{X_{{\mathrm{1}}}}  \SCsym{,}  \SCnt{X_{{\mathrm{2}}}}  \SCsym{,}  \Phi_{{\mathrm{2}}}  \vdash_\mathcal{C}  \SCnt{Y}} & {\Gamma_{{\mathrm{1}}}  \SCsym{;}  \SCnt{Y}  \SCsym{;}  \Gamma_{{\mathrm{2}}}  \vdash_\mathcal{L}  \SCnt{A}}
                \end{array}
              }
            }{\Gamma_{{\mathrm{1}}}  \SCsym{;}  \Phi_{{\mathrm{1}}}  \SCsym{;}  \SCnt{X_{{\mathrm{1}}}}  \SCsym{;}  \SCnt{X_{{\mathrm{2}}}}  \SCsym{;}  \Phi_{{\mathrm{2}}}  \SCsym{;}  \Gamma_{{\mathrm{2}}}  \vdash_\mathcal{L}  \SCnt{A}}
          }{\Gamma_{{\mathrm{1}}}  \SCsym{;}  \Phi_{{\mathrm{1}}}  \SCsym{;}  \SCnt{X_{{\mathrm{1}}}}  \otimes  \SCnt{X_{{\mathrm{2}}}}  \SCsym{;}  \Phi_{{\mathrm{2}}}  \SCsym{;}  \Gamma_{{\mathrm{2}}}  \vdash_\mathcal{L}  \SCnt{A}}
        \end{math}
      \end{center}

\item Case 3:
      \begin{center}
        \scriptsize
        $\Pi_1$:
        \begin{math}
          $$\mprset{flushleft}
          \inferrule* [right={\tiny tenL}] {
            {
              \begin{array}{c}
                \pi \\
                {\Gamma_{{\mathrm{1}}}  \SCsym{;}  \SCnt{X}  \SCsym{;}  \SCnt{Y}  \SCsym{;}  \Gamma_{{\mathrm{2}}}  \vdash_\mathcal{L}  \SCnt{A}}
              \end{array}
            }
          }{\Gamma_{{\mathrm{1}}}  \SCsym{;}  \SCnt{X}  \otimes  \SCnt{Y}  \SCsym{;}  \Gamma_{{\mathrm{2}}}  \vdash_\mathcal{L}  \SCnt{A}}
        \end{math}
        \qquad\qquad
        \begin{math}
          \begin{array}{c}
            \Pi_2 \\
            {\Delta_{{\mathrm{1}}}  \SCsym{;}  \SCnt{A}  \SCsym{;}  \Delta_{{\mathrm{2}}}  \vdash_\mathcal{L}  \SCnt{B}}
          \end{array}
        \end{math}
      \end{center}
      By assumption, $c(\Pi_1),c(\Pi_2)\leq |A|$. By induction, there is a
      proof $\Pi'$ from $\pi$ and $\Pi_2$ for sequent
      $\Delta_{{\mathrm{1}}}  \SCsym{;}  \SCnt{X}  \SCsym{;}  \SCnt{Y}  \SCsym{;}  \Gamma_{{\mathrm{2}}}  \SCsym{;}  \Delta_{{\mathrm{2}}}  \vdash_\mathcal{L}  \SCnt{B}$ s.t. $c(\Pi')\leq |A|$. Therefore, the
      proof $\Pi$ can be constructed as follows with $c(\Pi)\leq |A|$.
      \begin{center}
        \scriptsize
        \begin{math}
          $$\mprset{flushleft}
          \inferrule* [right={\tiny tenL1}] {
            $$\mprset{flushleft}
            \inferrule* [right={\tiny cut2}] {
              {
                \begin{array}{cc}
                  \pi & \Pi_2 \\
                  {\Gamma_{{\mathrm{1}}}  \SCsym{;}  \SCnt{X}  \SCsym{;}  \SCnt{Y}  \SCsym{;}  \Gamma_{{\mathrm{2}}}  \vdash_\mathcal{L}  \SCnt{A}} & {\Delta_{{\mathrm{1}}}  \SCsym{;}  \SCnt{A}  \SCsym{;}  \Delta_{{\mathrm{2}}}  \vdash_\mathcal{L}  \SCnt{B}}
                \end{array}
              }
            }{\Delta_{{\mathrm{1}}}  \SCsym{;}  \Gamma_{{\mathrm{1}}}  \SCsym{;}  \SCnt{X}  \SCsym{;}  \SCnt{Y}  \SCsym{;}  \Gamma_{{\mathrm{2}}}  \SCsym{;}  \Delta_{{\mathrm{2}}}  \vdash_\mathcal{L}  \SCnt{B}}
          }{\Delta_{{\mathrm{1}}}  \SCsym{;}  \Gamma_{{\mathrm{1}}}  \SCsym{;}  \SCnt{X}  \otimes  \SCnt{Y}  \SCsym{;}  \Gamma_{{\mathrm{2}}}  \SCsym{;}  \Delta_{{\mathrm{2}}}  \vdash_\mathcal{L}  \SCnt{B}}
        \end{math}
      \end{center}
\end{itemize}

\subsubsection{Left introduction of the non-commutative tensor products $\tri$}
\begin{center}
  \scriptsize
  $\Pi_1$:
  \begin{math}
    $$\mprset{flushleft}
    \inferrule* [right={\tiny tenL2}] {
      {
        \begin{array}{c}
          \pi \\
          {\Gamma_{{\mathrm{1}}}  \SCsym{;}  \SCnt{A_{{\mathrm{1}}}}  \SCsym{;}  \SCnt{A_{{\mathrm{2}}}}  \SCsym{;}  \Gamma_{{\mathrm{2}}}  \vdash_\mathcal{L}  \SCnt{B}}
        \end{array}
      }
    }{\Gamma_{{\mathrm{1}}}  \SCsym{;}  \SCnt{A_{{\mathrm{1}}}}  \triangleright  \SCnt{A_{{\mathrm{2}}}}  \SCsym{;}  \Gamma_{{\mathrm{2}}}  \vdash_\mathcal{L}  \SCnt{B}}
  \end{math}
  \qquad\qquad
  \begin{math}
    \begin{array}{c}
      \Pi_2 \\
      {\Delta_{{\mathrm{1}}}  \SCsym{;}  \SCnt{B}  \SCsym{;}  \Delta_{{\mathrm{2}}}  \vdash_\mathcal{L}  \SCnt{C}}
    \end{array}
  \end{math}
\end{center}
By assumption, $c(\Pi_1),c(\Pi_2)\leq |B|$. By induction, there is a
proof $\Pi'$ from $\pi$ and $\Pi_2$ for sequent \\
$\Delta_{{\mathrm{1}}}  \SCsym{;}  \Gamma_{{\mathrm{1}}}  \SCsym{;}  \SCnt{A_{{\mathrm{1}}}}  \SCsym{;}  \SCnt{A_{{\mathrm{2}}}}  \SCsym{;}  \Gamma_{{\mathrm{2}}}  \SCsym{;}  \Delta_{{\mathrm{2}}}  \vdash_\mathcal{L}  \SCnt{C}$ s.t. $c(\Pi')\leq |B|$.
Therefore, the proof $\Pi$ can be constructed as follows with
$c(\Pi)\leq |B|$.
\begin{center}
  \scriptsize
  \begin{math}
    $$\mprset{flushleft}
    \inferrule* [right={\tiny tenL2}] {
      $$\mprset{flushleft}
      \inferrule* [right={\tiny cut2}] {
        {
          \begin{array}{cc}
            \pi & \Pi_2 \\
            {\Gamma_{{\mathrm{1}}}  \SCsym{;}  \SCnt{A_{{\mathrm{1}}}}  \SCsym{;}  \SCnt{A_{{\mathrm{2}}}}  \SCsym{;}  \Gamma_{{\mathrm{2}}}  \vdash_\mathcal{L}  \SCnt{B}} & {\Delta_{{\mathrm{1}}}  \SCsym{;}  \SCnt{B}  \SCsym{;}  \Delta_{{\mathrm{2}}}  \vdash_\mathcal{L}  \SCnt{C}}
          \end{array}
        }
      }{\Delta_{{\mathrm{1}}}  \SCsym{;}  \Gamma_{{\mathrm{1}}}  \SCsym{;}  \SCnt{A_{{\mathrm{1}}}}  \SCsym{;}  \SCnt{A_{{\mathrm{2}}}}  \SCsym{;}  \Gamma_{{\mathrm{2}}}  \SCsym{;}  \Delta_{{\mathrm{2}}}  \vdash_\mathcal{L}  \SCnt{C}}
    }{\Delta_{{\mathrm{1}}}  \SCsym{;}  \Gamma_{{\mathrm{1}}}  \SCsym{;}  \SCnt{A_{{\mathrm{1}}}}  \triangleright  \SCnt{A_{{\mathrm{2}}}}  \SCsym{;}  \Gamma_{{\mathrm{2}}}  \SCsym{;}  \Delta_{{\mathrm{2}}}  \vdash_\mathcal{L}  \SCnt{C}}
  \end{math}
\end{center}

\subsubsection{Left introduction of the commutative unit $ \mathsf{Unit} $}
\begin{itemize}
\item Case 1:
      \begin{center}
        \scriptsize
        $\Pi_1$:
        \begin{math}
          $$\mprset{flushleft}
          \inferrule* [right={\tiny unitL}] {
            {
              \begin{array}{c}
                \pi \\
                {\Phi_{{\mathrm{1}}}  \SCsym{,}  \Phi_{{\mathrm{2}}}  \vdash_\mathcal{C}  \SCnt{X}}
              \end{array}
            }
          }{\Phi_{{\mathrm{1}}}  \SCsym{,}   \mathsf{Unit}   \SCsym{,}  \Phi_{{\mathrm{2}}}  \vdash_\mathcal{C}  \SCnt{X}}
        \end{math}
        \qquad\qquad
        \begin{math}
          \begin{array}{c}
            \Pi_2 \\
            {\Psi_{{\mathrm{1}}}  \SCsym{,}  \SCnt{X}  \SCsym{,}  \Psi_{{\mathrm{2}}}  \vdash_\mathcal{C}  \SCnt{Y}}
          \end{array}
        \end{math}
      \end{center}
      By assumption, $c(\Pi_1),c(\Pi_2)\leq |X|$. By induction, there is a
      proof $\Pi'$ from $\pi$ and $\Pi_2$ for sequent
      $\Psi_{{\mathrm{1}}}  \SCsym{,}  \Phi_{{\mathrm{1}}}  \SCsym{,}  \Phi_{{\mathrm{2}}}  \SCsym{,}  \Psi_{{\mathrm{2}}}  \vdash_\mathcal{C}  \SCnt{Y}$
      s.t. $c(\Pi')\leq |X|$. Therefore, the proof $\Pi$ can be constructed
      as follows, and $c(\Pi)=c(\Pi')\leq |X|$.
      \begin{center}
        \scriptsize
        \begin{math}
          $$\mprset{flushleft}
          \inferrule* [right={\tiny unitL}] {
            {
              \begin{array}{c}
                \Pi' \\
                {\Psi_{{\mathrm{1}}}  \SCsym{,}  \Phi_{{\mathrm{1}}}  \SCsym{,}  \Phi_{{\mathrm{2}}}  \SCsym{,}  \Psi_{{\mathrm{2}}}  \vdash_\mathcal{C}  \SCnt{Y}}
              \end{array}
            }
          }{\Psi_{{\mathrm{1}}}  \SCsym{,}  \Phi_{{\mathrm{1}}}  \SCsym{,}   \mathsf{Unit}   \SCsym{,}  \Phi_{{\mathrm{2}}}  \SCsym{,}  \Psi_{{\mathrm{2}}}  \vdash_\mathcal{C}  \SCnt{Y}}
        \end{math}
      \end{center}

\item Case 2:
      \begin{center}
        \scriptsize
        $\Pi_1$:
        \begin{math}
          $$\mprset{flushleft}
          \inferrule* [right={\tiny unitL}] {
            {
              \begin{array}{c}
                \pi \\
                {\Phi_{{\mathrm{1}}}  \SCsym{,}  \Phi_{{\mathrm{2}}}  \vdash_\mathcal{C}  \SCnt{X}}
              \end{array}
            }
          }{\Phi_{{\mathrm{1}}}  \SCsym{,}   \mathsf{Unit}   \SCsym{,}  \Phi_{{\mathrm{2}}}  \vdash_\mathcal{C}  \SCnt{X}}
        \end{math}
        \qquad\qquad
        \begin{math}
          \begin{array}{c}
            \Pi_2 \\
            {\Gamma_{{\mathrm{1}}}  \SCsym{;}  \SCnt{X}  \SCsym{;}  \Gamma_{{\mathrm{2}}}  \vdash_\mathcal{L}  \SCnt{A}}
          \end{array}
        \end{math}
      \end{center}
      By assumption, $c(\Pi_1),c(\Pi_2)\leq |X|$. By induction, there is a
      proof $\Pi'$ from $\pi$ and $\Pi_2$ for sequent
      $\Gamma_{{\mathrm{1}}}  \SCsym{;}  \Phi_{{\mathrm{1}}}  \SCsym{;}  \Phi_{{\mathrm{2}}}  \SCsym{;}  \Gamma_{{\mathrm{2}}}  \vdash_\mathcal{L}  \SCnt{A}$
      s.t. $c(\Pi')\leq |X|$. Therefore, the proof $\Pi$ can be constructed
      as follows, and $c(\Pi)=c(\Pi')\leq |X|$.
      \begin{center}
        \scriptsize
        \begin{math}
          $$\mprset{flushleft}
          \inferrule* [right={\tiny unitL}] {
            {
              \begin{array}{c}
                \Pi' \\
                {\Gamma_{{\mathrm{1}}}  \SCsym{;}  \Phi_{{\mathrm{1}}}  \SCsym{;}  \Phi_{{\mathrm{2}}}  \SCsym{;}  \Gamma_{{\mathrm{2}}}  \vdash_\mathcal{L}  \SCnt{A}}
              \end{array}
            }
          }{\Gamma_{{\mathrm{1}}}  \SCsym{;}  \Phi_{{\mathrm{1}}}  \SCsym{;}   \mathsf{Unit}   \SCsym{;}  \Phi_{{\mathrm{2}}}  \SCsym{;}  \Gamma_{{\mathrm{2}}}  \vdash_\mathcal{L}  \SCnt{A}}
        \end{math}
      \end{center}

\item Case 3:
      \begin{center}
        \scriptsize
        $\Pi_1$:
        \begin{math}
          $$\mprset{flushleft}
          \inferrule* [right={\tiny unitL}] {
            {
              \begin{array}{c}
                \pi \\
                {\Delta_{{\mathrm{1}}}  \SCsym{;}  \Delta_{{\mathrm{2}}}  \vdash_\mathcal{L}  \SCnt{A}}
              \end{array}
            }
          }{\Delta_{{\mathrm{1}}}  \SCsym{;}   \mathsf{Unit}   \SCsym{;}  \Delta_{{\mathrm{2}}}  \vdash_\mathcal{L}  \SCnt{A}}
        \end{math}
        \qquad\qquad
        \begin{math}
          \begin{array}{c}
            \Pi_2 \\
            {\Gamma_{{\mathrm{1}}}  \SCsym{;}  \SCnt{A}  \SCsym{;}  \Gamma_{{\mathrm{2}}}  \vdash_\mathcal{L}  \SCnt{B}}
          \end{array}
        \end{math}
      \end{center}
      By assumption, $c(\Pi_1),c(\Pi_2)\leq |X|$. By induction, there is a
      proof $\Pi'$ from $\pi$ and $\Pi_2$ for sequent
      $\Gamma_{{\mathrm{1}}}  \SCsym{;}  \Delta_{{\mathrm{1}}}  \SCsym{;}  \Delta_{{\mathrm{2}}}  \SCsym{;}  \Gamma_{{\mathrm{2}}}  \vdash_\mathcal{L}  \SCnt{B}$
      s.t. $c(\Pi')\leq |X|$. Therefore, the proof $\Pi$ can be constructed
      as follows, and $c(\Pi)=c(\Pi')\leq |X|$.
      \begin{center}
        \scriptsize
        \begin{math}
          $$\mprset{flushleft}
          \inferrule* [right={\tiny unitL}] {
            {
              \begin{array}{c}
                \Pi' \\
                {\Gamma_{{\mathrm{1}}}  \SCsym{;}  \Delta_{{\mathrm{1}}}  \SCsym{;}  \Delta_{{\mathrm{2}}}  \SCsym{;}  \Gamma_{{\mathrm{2}}}  \vdash_\mathcal{L}  \SCnt{B}}
              \end{array}
            }
          }{\Gamma_{{\mathrm{1}}}  \SCsym{;}  \Delta_{{\mathrm{1}}}  \SCsym{;}   \mathsf{Unit}   \SCsym{;}  \Delta_{{\mathrm{2}}}  \SCsym{;}  \Gamma_{{\mathrm{2}}}  \vdash_\mathcal{L}  \SCnt{B}}
        \end{math}
      \end{center}
\end{itemize}

\subsubsection{Left introduction of the non-commutative unit $ \mathsf{Unit} $}
\begin{center}
  \scriptsize
  $\Pi_1$:
  \begin{math}
    $$\mprset{flushleft}
    \inferrule* [right={\tiny unitL}] {
      {
        \begin{array}{c}
          \pi \\
          {\Delta_{{\mathrm{1}}}  \SCsym{;}  \Delta_{{\mathrm{2}}}  \vdash_\mathcal{L}  \SCnt{A}}
        \end{array}
      }
    }{\Delta_{{\mathrm{1}}}  \SCsym{;}   \mathsf{Unit}   \SCsym{;}  \Delta_{{\mathrm{2}}}  \vdash_\mathcal{L}  \SCnt{A}}
  \end{math}
  \qquad\qquad
  \begin{math}
    \begin{array}{c}
      \Pi_2 \\
      {\Gamma_{{\mathrm{1}}}  \SCsym{;}  \SCnt{A}  \SCsym{;}  \Gamma_{{\mathrm{2}}}  \vdash_\mathcal{L}  \SCnt{B}}
    \end{array}
  \end{math}
\end{center}
By assumption, $c(\Pi_1),c(\Pi_2)\leq |X|$. By induction, there is a
proof $\Pi'$ from $\pi$ and $\Pi_2$ for sequent
\begin{center}
  \begin{math}
    \Gamma_{{\mathrm{1}}}  \SCsym{;}  \Delta_{{\mathrm{1}}}  \SCsym{;}  \Delta_{{\mathrm{2}}}  \SCsym{;}  \Gamma_{{\mathrm{2}}}  \vdash_\mathcal{L}  \SCnt{B}
  \end{math}
\end{center}
s.t. $c(\Pi')\leq |X|$. Therefore, the proof $\Pi$ can be constructed
as follows, and $c(\Pi)=c(\Pi')\leq |X|$.
\begin{center}
  \scriptsize
  \begin{math}
    $$\mprset{flushleft}
    \inferrule* [right={\tiny unitL}] {
      {
        \begin{array}{c}
          \Pi' \\
          {\Gamma_{{\mathrm{1}}}  \SCsym{;}  \Delta_{{\mathrm{1}}}  \SCsym{;}  \Delta_{{\mathrm{2}}}  \SCsym{;}  \Gamma_{{\mathrm{2}}}  \vdash_\mathcal{L}  \SCnt{B}}
        \end{array}
      }
    }{\Gamma_{{\mathrm{1}}}  \SCsym{;}  \Delta_{{\mathrm{1}}}  \SCsym{;}   \mathsf{Unit}   \SCsym{;}  \Delta_{{\mathrm{2}}}  \SCsym{;}  \Gamma_{{\mathrm{2}}}  \vdash_\mathcal{L}  \SCnt{B}}
  \end{math}
\end{center}

\subsubsection{Left introduction of the functor $F$}
\begin{center}
  \scriptsize
  $\Pi_1$:
  \begin{math}
    $$\mprset{flushleft}
    \inferrule* [right={\tiny FL}] {
      {
        \begin{array}{c}
          \pi_1 \\
          {\Gamma_{{\mathrm{1}}}  \SCsym{;}  \SCnt{X}  \SCsym{;}  \Gamma_{{\mathrm{2}}}  \vdash_\mathcal{L}  \SCnt{A}}
        \end{array}
      }
    }{\Gamma_{{\mathrm{1}}}  \SCsym{;}   \mathsf{F} \SCnt{X}   \SCsym{;}  \Gamma_{{\mathrm{2}}}  \vdash_\mathcal{L}  \SCnt{A}}
  \end{math}
  \qquad\qquad
  \begin{math}
    \begin{array}{c}
      \Pi_2 \\
      {\Delta_{{\mathrm{1}}}  \SCsym{;}  \SCnt{A}  \SCsym{;}  \Delta_{{\mathrm{2}}}  \vdash_\mathcal{L}  \SCnt{B}}
    \end{array}
  \end{math}
\end{center}
By assumption, $c(\Pi_1),c(\Pi_2)\leq |A|$. By induction, there is a
proof $\Pi'$ from $\pi_2$ and $\Pi_2$ for sequent
$\Delta_{{\mathrm{1}}}  \SCsym{;}  \Gamma_{{\mathrm{1}}}  \SCsym{;}  \SCnt{X}  \SCsym{;}  \Gamma_{{\mathrm{2}}}  \SCsym{;}  \Delta_{{\mathrm{2}}}  \vdash_\mathcal{L}  \SCnt{B}$ s.t. $c(\Pi')\leq |A|$. Therefore, the
proof $\Pi$ can be constructed as follows with $c(\Pi)\leq |A|$.
\begin{center}
  \scriptsize
  \begin{math}
    $$\mprset{flushleft}
    \inferrule* [right={\tiny FL}] {
      $$\mprset{flushleft}
      \inferrule* [right={\tiny cut2}] {
        {
          \begin{array}{cc}
            \pi_2 & \Pi_2 \\
            {\Gamma_{{\mathrm{1}}}  \SCsym{;}  \SCnt{X}  \SCsym{;}  \Gamma_{{\mathrm{2}}}  \vdash_\mathcal{L}  \SCnt{A}} & {\Delta_{{\mathrm{1}}}  \SCsym{;}  \SCnt{A}  \SCsym{;}  \Delta_{{\mathrm{2}}}  \vdash_\mathcal{L}  \SCnt{B}}
          \end{array}
        }
      }{\Delta_{{\mathrm{1}}}  \SCsym{;}  \Gamma_{{\mathrm{1}}}  \SCsym{;}  \SCnt{X}  \SCsym{;}  \Gamma_{{\mathrm{2}}}  \SCsym{;}  \Delta_{{\mathrm{2}}}  \vdash_\mathcal{L}  \SCnt{B}}
    }{\Delta_{{\mathrm{1}}}  \SCsym{;}  \Gamma_{{\mathrm{1}}}  \SCsym{;}   \mathsf{F} \SCnt{X}   \SCsym{;}  \Gamma_{{\mathrm{2}}}  \SCsym{;}  \Delta_{{\mathrm{2}}}  \vdash_\mathcal{L}  \SCnt{B}}
  \end{math}
\end{center}

\subsubsection{Left introduction of the functor $G$}
\begin{center}
  \scriptsize
  $\Pi_1$:
  \begin{math}
    $$\mprset{flushleft}
    \inferrule* [right={\tiny GL}] {
      {
        \begin{array}{c}
          \pi_1 \\
          {\Gamma_{{\mathrm{1}}}  \SCsym{;}  \SCnt{A}  \SCsym{;}  \Gamma_{{\mathrm{2}}}  \vdash_\mathcal{L}  \SCnt{B}}
        \end{array}
      }
    }{\Gamma_{{\mathrm{1}}}  \SCsym{;}   \mathsf{G} \SCnt{A}   \SCsym{;}  \Gamma_{{\mathrm{2}}}  \vdash_\mathcal{L}  \SCnt{B}}
  \end{math}
  \qquad\qquad
  \begin{math}
    \begin{array}{c}
      \Pi_2 \\
      {\Delta_{{\mathrm{1}}}  \SCsym{;}  \SCnt{B}  \SCsym{;}  \Delta_{{\mathrm{2}}}  \vdash_\mathcal{L}  \SCnt{C}}
    \end{array}
  \end{math}
\end{center}
By assumption, $c(\Pi_1),c(\Pi_2)\leq |B|$. By induction, there is a
proof $\Pi'$ from $\pi_2$ and $\Pi_2$ for sequent
$\Delta_{{\mathrm{1}}}  \SCsym{;}  \Gamma_{{\mathrm{1}}}  \SCsym{;}  \SCnt{A}  \SCsym{;}  \Gamma_{{\mathrm{2}}}  \SCsym{;}  \Delta_{{\mathrm{2}}}  \vdash_\mathcal{L}  \SCnt{C}$ s.t. $c(\Pi')\leq |B|$. Therefore, the
proof $\Pi$ can be constructed as follows with $c(\Pi)\leq |B|$.
\begin{center}
  \scriptsize
  \begin{math}
    $$\mprset{flushleft}
    \inferrule* [right={\tiny GL}] {
      $$\mprset{flushleft}
      \inferrule* [right={\tiny cut2}] {
        {
          \begin{array}{cc}
            \pi_2 & \Pi_2 \\
            {\Gamma_{{\mathrm{1}}}  \SCsym{;}  \SCnt{A}  \SCsym{;}  \Gamma_{{\mathrm{2}}}  \vdash_\mathcal{L}  \SCnt{B}} & {\Delta_{{\mathrm{1}}}  \SCsym{;}  \SCnt{B}  \SCsym{;}  \Delta_{{\mathrm{2}}}  \vdash_\mathcal{L}  \SCnt{C}}
          \end{array}
        }
      }{\Delta_{{\mathrm{1}}}  \SCsym{;}  \Gamma_{{\mathrm{1}}}  \SCsym{;}  \SCnt{A}  \SCsym{;}  \Gamma_{{\mathrm{2}}}  \SCsym{;}  \Delta_{{\mathrm{2}}}  \vdash_\mathcal{L}  \SCnt{C}}
    }{\Delta_{{\mathrm{1}}}  \SCsym{;}  \Gamma_{{\mathrm{1}}}  \SCsym{;}   \mathsf{G} \SCnt{A}   \SCsym{;}  \Gamma_{{\mathrm{2}}}  \SCsym{;}  \Delta_{{\mathrm{2}}}  \vdash_\mathcal{L}  \SCnt{C}}
  \end{math}
\end{center}

\subsection{Secondary Hypothesis}

\subsubsection{Right introduction of the commutative tensor product $\otimes$}
\begin{itemize}
\item Case 1:
      \begin{center}
        \scriptsize
        \begin{math}
          \begin{array}{c}
            \Pi_1 \\
            {\Phi_{{\mathrm{2}}}  \vdash_\mathcal{C}  \SCnt{X}}
          \end{array}
        \end{math}
        \qquad\qquad
        $\Pi_2$:
        \begin{math}
          $$\mprset{flushleft}
          \inferrule* [right={\tiny tenR}] {
            {
              \begin{array}{cc}
                \pi_1 & \pi_2 \\
                {\Psi_{{\mathrm{1}}}  \SCsym{,}  \SCnt{X}  \SCsym{,}  \Psi_{{\mathrm{2}}}  \vdash_\mathcal{C}  \SCnt{Y_{{\mathrm{1}}}}} & {\Phi_{{\mathrm{1}}}  \vdash_\mathcal{C}  \SCnt{Y_{{\mathrm{2}}}}}
              \end{array}
            }
          }{\Psi_{{\mathrm{1}}}  \SCsym{,}  \SCnt{X}  \SCsym{,}  \Psi_{{\mathrm{2}}}  \SCsym{,}  \Phi_{{\mathrm{1}}}  \vdash_\mathcal{C}  \SCnt{Y_{{\mathrm{1}}}}  \otimes  \SCnt{Y_{{\mathrm{2}}}}}
        \end{math}
      \end{center}
      By assumption, $c(\Pi_1),c(\Pi_2)\leq |X|$. By induction on $\Pi_1$
      and $\pi_1$, there is a proof $\Pi'$ for sequent
      $\Psi_{{\mathrm{1}}}  \SCsym{,}  \Phi_{{\mathrm{2}}}  \SCsym{,}  \Psi_{{\mathrm{2}}}  \vdash_\mathcal{C}  \SCnt{Y_{{\mathrm{1}}}}$ s.t. $c(\Pi') \leq |X|$. Therefore, the proof
      $\Pi$ can be constructed as follows with $c(\Pi) = c(\Pi') \leq |X|$.
      \begin{center}
        \scriptsize
        \begin{math}
          $$\mprset{flushleft}
          \inferrule* [right={\tiny tenR}] {
            {
              \begin{array}{cc}
                \Pi' & \pi_1 \\
                {\Psi_{{\mathrm{1}}}  \SCsym{,}  \Phi_{{\mathrm{2}}}  \SCsym{,}  \Psi_{{\mathrm{2}}}  \vdash_\mathcal{C}  \SCnt{Y_{{\mathrm{1}}}}} & {\Phi_{{\mathrm{1}}}  \vdash_\mathcal{C}  \SCnt{Y_{{\mathrm{2}}}}}
              \end{array}
            }
          }{\Psi_{{\mathrm{1}}}  \SCsym{,}  \Phi_{{\mathrm{2}}}  \SCsym{,}  \Psi_{{\mathrm{2}}}  \SCsym{,}  \Phi_{{\mathrm{1}}}  \vdash_\mathcal{C}  \SCnt{Y_{{\mathrm{1}}}}  \otimes  \SCnt{Y_{{\mathrm{2}}}}}
        \end{math}
      \end{center}

\item Case 2:
      \begin{center}
        \scriptsize
        \begin{math}
          \begin{array}{c}
            \Pi_1 \\
            {\Phi_{{\mathrm{2}}}  \vdash_\mathcal{C}  \SCnt{X}}
          \end{array}
        \end{math}
        \qquad\qquad
        $\Pi_2$:
        \begin{math}
          $$\mprset{flushleft}
          \inferrule* [right={\tiny tenR}] {
            {
              \begin{array}{cc}
                \pi_1 & \pi_2 \\
                {\Phi_{{\mathrm{1}}}  \vdash_\mathcal{C}  \SCnt{Y_{{\mathrm{1}}}}} & {\Psi_{{\mathrm{1}}}  \SCsym{,}  \SCnt{X}  \SCsym{,}  \Psi_{{\mathrm{2}}}  \vdash_\mathcal{C}  \SCnt{Y_{{\mathrm{2}}}}}
              \end{array}
            }
          }{\Phi_{{\mathrm{1}}}  \SCsym{,}  \Psi_{{\mathrm{1}}}  \SCsym{,}  \SCnt{X}  \SCsym{,}  \Psi_{{\mathrm{2}}}  \vdash_\mathcal{C}  \SCnt{Y_{{\mathrm{1}}}}  \otimes  \SCnt{Y_{{\mathrm{2}}}}}
        \end{math}
      \end{center}
      By assumption, $c(\Pi_1),c(\Pi_2)\leq |X|$. By induction on $\Pi_1$
      and $\pi_2$, there is a proof $\Pi'$ for sequent
      $\Psi_{{\mathrm{1}}}  \SCsym{,}  \Phi_{{\mathrm{2}}}  \SCsym{,}  \Psi_{{\mathrm{2}}}  \vdash_\mathcal{C}  \SCnt{Y_{{\mathrm{2}}}}$ s.t. $c(\Pi') \leq |X|$. Therefore, the proof
      $\Pi$ can be constructed as follows with $c(\Pi) = c(\Pi') \leq |X|$.
      \begin{center}
        \scriptsize
        \begin{math}
          $$\mprset{flushleft}
          \inferrule* [right={\tiny tenR}] {
            {
              \begin{array}{cc}
                \pi_1 & \Pi' \\
                {\Phi_{{\mathrm{1}}}  \vdash_\mathcal{C}  \SCnt{Y_{{\mathrm{1}}}}} & {\Psi_{{\mathrm{1}}}  \SCsym{,}  \Phi_{{\mathrm{2}}}  \SCsym{,}  \Psi_{{\mathrm{2}}}  \vdash_\mathcal{C}  \SCnt{Y_{{\mathrm{2}}}}}
              \end{array}
            }
          }{\Phi_{{\mathrm{1}}}  \SCsym{,}  \Psi_{{\mathrm{1}}}  \SCsym{,}  \Phi_{{\mathrm{2}}}  \SCsym{,}  \Psi_{{\mathrm{2}}}  \vdash_\mathcal{C}  \SCnt{Y_{{\mathrm{1}}}}  \otimes  \SCnt{Y_{{\mathrm{2}}}}}
        \end{math}
      \end{center}
\end{itemize}

\subsubsection{Right introduction of the non-commutative tensor product $\tri$}
\begin{itemize}
\item Case 1:
      \begin{center}
        \scriptsize
        \begin{math}
          \begin{array}{c}
            \Pi_1 \\
            {\Phi  \vdash_\mathcal{C}  \SCnt{X}}
          \end{array}
        \end{math}
        \qquad\qquad
        $\Pi_2$:
        \begin{math}
          $$\mprset{flushleft}
          \inferrule* [right={\tiny tenR}] {
            {
              \begin{array}{cc}
                \pi_1 & \pi_2 \\
                {\Gamma_{{\mathrm{1}}}  \SCsym{;}  \SCnt{X}  \SCsym{;}  \Gamma_{{\mathrm{2}}}  \vdash_\mathcal{L}  \SCnt{A}} & {\Gamma_{{\mathrm{3}}}  \vdash_\mathcal{L}  \SCnt{B}}
              \end{array}
            }
          }{\Gamma_{{\mathrm{1}}}  \SCsym{;}  \SCnt{X}  \SCsym{;}  \Gamma_{{\mathrm{2}}}  \SCsym{;}  \Gamma_{{\mathrm{3}}}  \vdash_\mathcal{L}  \SCnt{A}  \triangleright  \SCnt{B}}
        \end{math}
      \end{center}
      By assumption, $c(\Pi_1),c(\Pi_2)\leq |X|$. By induction on $\Pi_1$
      and $\pi_1$, there is a proof $\Pi'$ for sequent
      $\Gamma_{{\mathrm{1}}}  \SCsym{;}  \Phi  \SCsym{;}  \Gamma_{{\mathrm{2}}}  \vdash_\mathcal{L}  \SCnt{A}$ s.t. $c(\Pi') \leq |X|$. Therefore, the proof
      $\Pi$ can be constructed as follows with $c(\Pi) = c(\Pi') \leq |X|$.
      \begin{center}
        \scriptsize
        \begin{math}
          $$\mprset{flushleft}
          \inferrule* [right={\tiny tenR}] {
            {
              \begin{array}{cc}
                \Pi' & \pi_1 \\
                {\Gamma_{{\mathrm{1}}}  \SCsym{;}  \Phi  \SCsym{;}  \Gamma_{{\mathrm{2}}}  \vdash_\mathcal{L}  \SCnt{A}} & {\Gamma_{{\mathrm{3}}}  \vdash_\mathcal{L}  \SCnt{B}}
              \end{array}
            }
          }{\Gamma_{{\mathrm{1}}}  \SCsym{;}  \Phi  \SCsym{;}  \Gamma_{{\mathrm{2}}}  \SCsym{;}  \Gamma_{{\mathrm{3}}}  \vdash_\mathcal{L}  \SCnt{A}  \triangleright  \SCnt{B}}
        \end{math}
      \end{center}

\item Case 2:
      \begin{center}
        \scriptsize
        \begin{math}
          \begin{array}{c}
            \Pi_1 \\
            {\Delta  \vdash_\mathcal{L}  \SCnt{C}}
          \end{array}
        \end{math}
        \qquad\qquad
        $\Pi_2$:
        \begin{math}
          $$\mprset{flushleft}
          \inferrule* [right={\tiny tenR}] {
            {
              \begin{array}{cc}
                \pi_1 & \pi_2 \\
                {\Gamma_{{\mathrm{1}}}  \SCsym{;}  \SCnt{C}  \SCsym{;}  \Gamma_{{\mathrm{2}}}  \vdash_\mathcal{L}  \SCnt{A}} & {\Gamma_{{\mathrm{3}}}  \vdash_\mathcal{L}  \SCnt{B}}
              \end{array}
            }
          }{\Gamma_{{\mathrm{1}}}  \SCsym{;}  \SCnt{C}  \SCsym{;}  \Gamma_{{\mathrm{2}}}  \SCsym{;}  \Gamma_{{\mathrm{3}}}  \vdash_\mathcal{L}  \SCnt{A}  \triangleright  \SCnt{B}}
        \end{math}
      \end{center}
      By assumption, $c(\Pi_1),c(\Pi_2)\leq |C|$. By induction on $\Pi_1$
      and $\pi_1$, there is a proof $\Pi'$ for sequent
      $\Gamma_{{\mathrm{1}}}  \SCsym{;}  \Delta  \SCsym{;}  \Gamma_{{\mathrm{2}}}  \vdash_\mathcal{L}  \SCnt{A}$ s.t. $c(\Pi') \leq |C|$. Therefore, the proof
      $\Pi$ can be constructed as follows with $c(\Pi) = c(\Pi') \leq |C|$.
      \begin{center}
        \scriptsize
        \begin{math}
          $$\mprset{flushleft}
          \inferrule* [right={\tiny tenR}] {
            {
              \begin{array}{cc}
                \Pi' & \pi_1 \\
                {\Gamma_{{\mathrm{1}}}  \SCsym{;}  \Delta  \SCsym{;}  \Gamma_{{\mathrm{2}}}  \vdash_\mathcal{L}  \SCnt{A}} & {\Gamma_{{\mathrm{3}}}  \vdash_\mathcal{L}  \SCnt{B}}
              \end{array}
            }
          }{\Gamma_{{\mathrm{1}}}  \SCsym{;}  \Delta  \SCsym{;}  \Gamma_{{\mathrm{2}}}  \SCsym{;}  \Gamma_{{\mathrm{3}}}  \vdash_\mathcal{L}  \SCnt{A}  \triangleright  \SCnt{B}}
        \end{math}
      \end{center}

\item Case 3:
      \begin{center}
        \scriptsize
        \begin{math}
          \begin{array}{c}
            \Pi_1 \\
            {\Phi  \vdash_\mathcal{C}  \SCnt{X}}
          \end{array}
        \end{math}
        \qquad\qquad
        $\Pi_2$:
        \begin{math}
          $$\mprset{flushleft}
          \inferrule* [right={\tiny tenR}] {
            {
              \begin{array}{cc}
                \pi_1 & \pi_2 \\
                {\Gamma_{{\mathrm{1}}}  \vdash_\mathcal{L}  \SCnt{A}} & {\Gamma_{{\mathrm{2}}}  \SCsym{;}  \SCnt{X}  \SCsym{;}  \Gamma_{{\mathrm{3}}}  \vdash_\mathcal{L}  \SCnt{B}}
              \end{array}
            }
          }{\Gamma_{{\mathrm{1}}}  \SCsym{;}  \Gamma_{{\mathrm{2}}}  \SCsym{;}  \SCnt{X}  \SCsym{;}  \Gamma_{{\mathrm{3}}}  \vdash_\mathcal{L}  \SCnt{A}  \triangleright  \SCnt{B}}
        \end{math}
      \end{center}
      By assumption, $c(\Pi_1),c(\Pi_2)\leq |X|$. By induction on $\Pi_1$
      and $\pi_2$, there is a proof $\Pi'$ for sequent
      $\Gamma_{{\mathrm{2}}}  \SCsym{;}  \Phi  \SCsym{;}  \Gamma_{{\mathrm{3}}}  \vdash_\mathcal{L}  \SCnt{B}$ s.t. $c(\Pi') \leq |X|$. Therefore, the proof
      $\Pi$ can be constructed as follows with $c(\Pi) = c(\Pi') \leq |X|$.
      \begin{center}
        \scriptsize
        \begin{math}
          $$\mprset{flushleft}
          \inferrule* [right={\tiny tenR}] {
            {
              \begin{array}{cc}
                \pi_1 & \Pi' \\
                {\Gamma_{{\mathrm{1}}}  \vdash_\mathcal{L}  \SCnt{A}} & {\Gamma_{{\mathrm{2}}}  \SCsym{;}  \Phi  \SCsym{;}  \Gamma_{{\mathrm{3}}}  \vdash_\mathcal{L}  \SCnt{B}}
              \end{array}
            }
          }{\Gamma_{{\mathrm{1}}}  \SCsym{;}  \Gamma_{{\mathrm{2}}}  \SCsym{;}  \Phi  \SCsym{;}  \Gamma_{{\mathrm{3}}}  \vdash_\mathcal{L}  \SCnt{A}  \triangleright  \SCnt{B}}
        \end{math}
      \end{center}

\item Case 4:
      \begin{center}
        \scriptsize
        \begin{math}
          \begin{array}{c}
            \Pi_1 \\
            {\Delta  \vdash_\mathcal{L}  \SCnt{C}}
          \end{array}
        \end{math}
        \qquad\qquad
        $\Pi_2$:
        \begin{math}
          $$\mprset{flushleft}
          \inferrule* [right={\tiny tenR}] {
            {
              \begin{array}{cc}
                \pi_1 & \pi_2 \\
                {\Gamma_{{\mathrm{1}}}  \vdash_\mathcal{L}  \SCnt{A}} & {\Gamma_{{\mathrm{2}}}  \SCsym{;}  \SCnt{C}  \SCsym{;}  \Gamma_{{\mathrm{3}}}  \vdash_\mathcal{L}  \SCnt{B}}
              \end{array}
            }
          }{\Gamma_{{\mathrm{1}}}  \SCsym{;}  \Gamma_{{\mathrm{2}}}  \SCsym{;}  \SCnt{C}  \SCsym{;}  \Gamma_{{\mathrm{3}}}  \vdash_\mathcal{L}  \SCnt{A}  \triangleright  \SCnt{B}}
        \end{math}
      \end{center}
      By assumption, $c(\Pi_1),c(\Pi_2)\leq |C|$. By induction on $\Pi_1$
      and $\pi_2$, there is a proof $\Pi'$ for sequent
      $\Gamma_{{\mathrm{2}}}  \SCsym{;}  \Delta  \SCsym{;}  \Gamma_{{\mathrm{3}}}  \vdash_\mathcal{L}  \SCnt{B}$ s.t. $c(\Pi') \leq |C|$. Therefore, the proof
      $\Pi$ can be constructed as follows with $c(\Pi) = c(\Pi') \leq |C|$.
      \begin{center}
        \scriptsize
        \begin{math}
          $$\mprset{flushleft}
          \inferrule* [right={\tiny tenR}] {
            {
              \begin{array}{cc}
                \pi_1 & \Pi' \\
                {\Gamma_{{\mathrm{1}}}  \vdash_\mathcal{L}  \SCnt{A}} & {\Gamma_{{\mathrm{2}}}  \SCsym{;}  \Delta  \SCsym{;}  \Gamma_{{\mathrm{3}}}  \vdash_\mathcal{L}  \SCnt{B}}
              \end{array}
            }
          }{\Gamma_{{\mathrm{1}}}  \SCsym{;}  \Gamma_{{\mathrm{2}}}  \SCsym{;}  \Delta  \SCsym{;}  \Gamma_{{\mathrm{3}}}  \vdash_\mathcal{L}  \SCnt{A}  \triangleright  \SCnt{B}}
        \end{math}
      \end{center}
\end{itemize}

\subsubsection{Left introduction of the commutative implication $\multimap$}
\begin{itemize}
\item Case 1:
      \begin{center}
        \scriptsize
        \begin{math}
          \begin{array}{c}
            \Pi_1 \\
            {\Phi  \vdash_\mathcal{C}  \SCnt{X}}
          \end{array}
        \end{math}
        \qquad\qquad
        $\Pi_2$:
        \begin{math}
          $$\mprset{flushleft}
          \inferrule* [right={\tiny impL}] {
            {
              \begin{array}{cc}
                \pi_1 & \pi_2 \\
                {\Psi_{{\mathrm{2}}}  \SCsym{,}  \SCnt{X}  \SCsym{,}  \Psi_{{\mathrm{3}}}  \vdash_\mathcal{C}  \SCnt{Y_{{\mathrm{1}}}}} & {\Psi_{{\mathrm{1}}}  \SCsym{,}  \SCnt{Y_{{\mathrm{2}}}}  \SCsym{,}  \Psi_{{\mathrm{4}}}  \vdash_\mathcal{C}  \SCnt{Z}}
              \end{array}
            }
          }{\Psi_{{\mathrm{1}}}  \SCsym{,}  \SCnt{Y_{{\mathrm{1}}}}  \multimap  \SCnt{Y_{{\mathrm{2}}}}  \SCsym{,}  \Psi_{{\mathrm{2}}}  \SCsym{,}  \SCnt{X}  \SCsym{,}  \Psi_{{\mathrm{3}}}  \SCsym{,}  \Psi_{{\mathrm{4}}}  \vdash_\mathcal{C}  \SCnt{Z}}
        \end{math}
      \end{center}
      By assumption, $c(\Pi_1),c(\Pi_2)\leq |X|$. By induction on $\Pi_1$ and $\pi_1$, there is
      a proof $\Pi'$ for sequent $\Psi_{{\mathrm{2}}}  \SCsym{,}  \Phi  \SCsym{,}  \Psi_{{\mathrm{3}}}  \vdash_\mathcal{C}  \SCnt{Y_{{\mathrm{1}}}}$ s.t. $c(\Pi') \leq |X|$. Therefore, the
      proof $\Pi$ can be constructed as follows with $c(\Pi) = c(\Pi') \leq |X|$.
      \begin{center}
        \scriptsize
        \begin{math}
          $$\mprset{flushleft}
          \inferrule* [right={\tiny impL}] {
            {
              \begin{array}{cc}
                \Pi' & \pi_2 \\
                {\Psi_{{\mathrm{2}}}  \SCsym{,}  \Phi  \SCsym{,}  \Psi_{{\mathrm{3}}}  \vdash_\mathcal{C}  \SCnt{Y_{{\mathrm{1}}}}} & {\Psi_{{\mathrm{1}}}  \SCsym{,}  \SCnt{Y_{{\mathrm{2}}}}  \SCsym{,}  \Psi_{{\mathrm{4}}}  \vdash_\mathcal{C}  \SCnt{Z}}
              \end{array}
            }
          }{\Psi_{{\mathrm{1}}}  \SCsym{,}  \SCnt{Y_{{\mathrm{1}}}}  \multimap  \SCnt{Y_{{\mathrm{2}}}}  \SCsym{,}  \Psi_{{\mathrm{2}}}  \SCsym{,}  \Phi  \SCsym{,}  \Psi_{{\mathrm{3}}}  \SCsym{,}  \Psi_{{\mathrm{4}}}  \vdash_\mathcal{C}  \SCnt{Z}}
        \end{math}
      \end{center}

\item Case 2:
      \begin{center}
        \scriptsize
        \begin{math}
          \begin{array}{c}
            \Pi_1 \\
            {\Phi  \vdash_\mathcal{C}  \SCnt{X}}
          \end{array}
        \end{math}
        \qquad\qquad
        $\Pi_2$:
        \begin{math}
          $$\mprset{flushleft}
          \inferrule* [right={\tiny impL}] {
            {
              \begin{array}{cc}
                \pi_1 & \pi_2 \\
                {\Psi_{{\mathrm{3}}}  \vdash_\mathcal{C}  \SCnt{Y_{{\mathrm{1}}}}} & {\Psi_{{\mathrm{1}}}  \SCsym{,}  \SCnt{X}  \SCsym{,}  \Psi_{{\mathrm{2}}}  \SCsym{,}  \SCnt{Y_{{\mathrm{2}}}}  \SCsym{,}  \Psi_{{\mathrm{4}}}  \vdash_\mathcal{C}  \SCnt{Z}}
              \end{array}
            }
          }{\Psi_{{\mathrm{1}}}  \SCsym{,}  \SCnt{X}  \SCsym{,}  \Psi_{{\mathrm{2}}}  \SCsym{,}  \SCnt{Y_{{\mathrm{1}}}}  \multimap  \SCnt{Y_{{\mathrm{2}}}}  \SCsym{,}  \Psi_{{\mathrm{3}}}  \SCsym{,}  \Psi_{{\mathrm{4}}}  \vdash_\mathcal{C}  \SCnt{Z}}
        \end{math}
      \end{center}
      By assumption, $c(\Pi_1),c(\Pi_2)\leq |X|$. By induction on $\Pi_1$ and $\pi_2$, there is
      a proof $\Pi'$ for sequent $\Psi_{{\mathrm{1}}}  \SCsym{,}  \Phi  \SCsym{,}  \Psi_{{\mathrm{2}}}  \SCsym{,}  \SCnt{Y_{{\mathrm{2}}}}  \SCsym{,}  \Psi_{{\mathrm{4}}}  \vdash_\mathcal{C}  \SCnt{Z}$ s.t. $c(\Pi') \leq |X|$.
      Therefore, the proof $\Pi$ can be constructed as follows with
      $c(\Pi) = c(\Pi') \leq |X|$.
      \begin{center}
        \scriptsize
        \begin{math}
          $$\mprset{flushleft}
          \inferrule* [right={\tiny impL}] {
            {
              \begin{array}{cc}
                \pi_1 & \Pi' \\
                {\Psi_{{\mathrm{3}}}  \vdash_\mathcal{C}  \SCnt{Y_{{\mathrm{1}}}}} & {\Psi_{{\mathrm{1}}}  \SCsym{,}  \Phi  \SCsym{,}  \Psi_{{\mathrm{2}}}  \SCsym{,}  \SCnt{Y_{{\mathrm{2}}}}  \SCsym{,}  \Psi_{{\mathrm{4}}}  \vdash_\mathcal{C}  \SCnt{Z}}
              \end{array}
            }
          }{\Psi_{{\mathrm{1}}}  \SCsym{,}  \Phi_{{\mathrm{1}}}  \SCsym{,}  \Psi_{{\mathrm{2}}}  \SCsym{,}  \SCnt{Y_{{\mathrm{1}}}}  \multimap  \SCnt{Y_{{\mathrm{2}}}}  \SCsym{,}  \Psi_{{\mathrm{3}}}  \SCsym{,}  \Psi_{{\mathrm{4}}}  \vdash_\mathcal{C}  \SCnt{Z}}
        \end{math}
      \end{center}

\item Case 3:
      \begin{center}
        \scriptsize
        \begin{math}
          \begin{array}{c}
            \Pi_1 \\
            {\Phi  \vdash_\mathcal{C}  \SCnt{X}}
          \end{array}
        \end{math}
        \qquad\qquad
        $\Pi_2$:
        \begin{math}
          $$\mprset{flushleft}
          \inferrule* [right={\tiny impL}] {
            {
              \begin{array}{cc}
                \pi_1 & \pi_2 \\
                {\Psi_{{\mathrm{2}}}  \vdash_\mathcal{C}  \SCnt{Y_{{\mathrm{1}}}}} & {\Psi_{{\mathrm{1}}}  \SCsym{,}  \SCnt{Y_{{\mathrm{2}}}}  \SCsym{,}  \Psi_{{\mathrm{3}}}  \SCsym{,}  \SCnt{X}  \SCsym{,}  \Psi_{{\mathrm{4}}}  \vdash_\mathcal{C}  \SCnt{Z}}
              \end{array}
            }
          }{\Psi_{{\mathrm{1}}}  \SCsym{,}  \SCnt{Y_{{\mathrm{1}}}}  \multimap  \SCnt{Y_{{\mathrm{2}}}}  \SCsym{,}  \Psi_{{\mathrm{2}}}  \SCsym{,}  \Psi_{{\mathrm{3}}}  \SCsym{,}  \SCnt{X}  \SCsym{,}  \Psi_{{\mathrm{4}}}  \vdash_\mathcal{C}  \SCnt{Z}}
        \end{math}
      \end{center}
      By assumption, $c(\Pi_1),c(\Pi_2)\leq |X|$. By induction on $\Pi_1$
      and $\pi_2$, there is a proof $\Pi'$ for sequent
      $\Psi_{{\mathrm{1}}}  \SCsym{,}  \Phi  \SCsym{,}  \Psi_{{\mathrm{2}}}  \SCsym{,}  \SCnt{Y_{{\mathrm{2}}}}  \SCsym{,}  \Psi_{{\mathrm{4}}}  \vdash_\mathcal{C}  \SCnt{Z}$ s.t. $c(\Pi') \leq |X|$. Therefore,
      the proof $\Pi$ can be constructed as follows with
      $c(\Pi) = c(\Pi') \leq |X|$.
      \begin{center}
        \scriptsize
        \begin{math}
          $$\mprset{flushleft}
          \inferrule* [right={\tiny impL}] {
            {
              \begin{array}{cc}
                \pi_1 & \Pi' \\
                {\Psi_{{\mathrm{2}}}  \vdash_\mathcal{C}  \SCnt{Y_{{\mathrm{1}}}}} & {\Psi_{{\mathrm{1}}}  \SCsym{,}  \SCnt{Y_{{\mathrm{2}}}}  \SCsym{,}  \Psi_{{\mathrm{3}}}  \SCsym{,}  \Phi  \SCsym{,}  \Psi_{{\mathrm{4}}}  \vdash_\mathcal{C}  \SCnt{Z}}
              \end{array}
            }
          }{\Psi_{{\mathrm{1}}}  \SCsym{,}  \SCnt{Y_{{\mathrm{1}}}}  \multimap  \SCnt{Y_{{\mathrm{2}}}}  \SCsym{,}  \Psi_{{\mathrm{2}}}  \SCsym{,}  \Psi_{{\mathrm{3}}}  \SCsym{,}  \Phi  \SCsym{,}  \Psi_{{\mathrm{4}}}  \vdash_\mathcal{C}  \SCnt{Z}}
        \end{math}
      \end{center}

\item Case 4:
      \begin{center}
        \scriptsize
        \begin{math}
          \begin{array}{c}
            \Pi_1 \\
            {\Phi  \vdash_\mathcal{C}  \SCnt{X}}
          \end{array}
        \end{math}
        \qquad\qquad
        $\Pi_2$:
        \begin{math}
          $$\mprset{flushleft}
          \inferrule* [right={\tiny impL}] {
            {
              \begin{array}{cc}
                \pi_1 & \pi_2 \\
                {\Psi_{{\mathrm{1}}}  \SCsym{,}  \SCnt{X}  \SCsym{,}  \Psi_{{\mathrm{2}}}  \vdash_\mathcal{C}  \SCnt{Y_{{\mathrm{1}}}}} & {\Gamma_{{\mathrm{1}}}  \SCsym{;}  \SCnt{Y_{{\mathrm{2}}}}  \SCsym{;}  \Gamma_{{\mathrm{2}}}  \vdash_\mathcal{L}  \SCnt{A}}
              \end{array}
            }
          }{\Gamma_{{\mathrm{1}}}  \SCsym{;}  \SCnt{Y_{{\mathrm{1}}}}  \multimap  \SCnt{Y_{{\mathrm{2}}}}  \SCsym{;}  \Psi_{{\mathrm{1}}}  \SCsym{;}  \SCnt{X}  \SCsym{;}  \Psi_{{\mathrm{2}}}  \SCsym{;}  \Gamma_{{\mathrm{2}}}  \vdash_\mathcal{L}  \SCnt{A}}
        \end{math}
      \end{center}
      By assumption, $c(\Pi_1),c(\Pi_2)\leq |X|$. By induction on $\Pi_1$
      and $\pi_1$, there is a proof $\Pi'$ for sequent
      $\Psi_{{\mathrm{1}}}  \SCsym{,}  \Phi  \SCsym{,}  \Psi_{{\mathrm{2}}}  \vdash_\mathcal{C}  \SCnt{Y_{{\mathrm{1}}}}$ s.t. $c(\Pi') \leq |X|$. Therefore, the proof
      $\Pi$ can be constructed as follows with $c(\Pi) = c(\Pi') \leq |X|$.
      \begin{center}
        \scriptsize
        \begin{math}
          $$\mprset{flushleft}
          \inferrule* [right={\tiny impL}] {
            {
              \begin{array}{cc}
                \Pi' & \pi_2 \\
                {\Psi_{{\mathrm{1}}}  \SCsym{,}  \Phi  \SCsym{,}  \Psi_{{\mathrm{2}}}  \vdash_\mathcal{C}  \SCnt{Y_{{\mathrm{1}}}}} & {\Gamma_{{\mathrm{1}}}  \SCsym{;}  \SCnt{Y_{{\mathrm{2}}}}  \SCsym{;}  \Gamma_{{\mathrm{2}}}  \vdash_\mathcal{L}  \SCnt{A}}
              \end{array}
            }
          }{\Gamma_{{\mathrm{1}}}  \SCsym{;}  \SCnt{Y_{{\mathrm{1}}}}  \multimap  \SCnt{Y_{{\mathrm{2}}}}  \SCsym{;}  \Psi_{{\mathrm{1}}}  \SCsym{;}  \Phi  \SCsym{;}  \Psi_{{\mathrm{2}}}  \SCsym{;}  \Gamma_{{\mathrm{2}}}  \vdash_\mathcal{L}  \SCnt{A}}
        \end{math}
      \end{center}

\item Case 5:
      \begin{center}
        \scriptsize
        \begin{math}
          \begin{array}{c}
            \Pi_1 \\
            {\Phi  \vdash_\mathcal{C}  \SCnt{X}}
          \end{array}
        \end{math}
        \qquad\qquad
        $\Pi_2$:
        \begin{math}
          $$\mprset{flushleft}
          \inferrule* [right={\tiny impL}] {
            {
              \begin{array}{cc}
                \pi_1 & \pi_2 \\
                {\Psi  \vdash_\mathcal{C}  \SCnt{Y_{{\mathrm{1}}}}} & {\Gamma_{{\mathrm{1}}}  \SCsym{;}  \SCnt{X}  \SCsym{;}  \Gamma_{{\mathrm{2}}}  \SCsym{;}  \SCnt{Y_{{\mathrm{2}}}}  \SCsym{;}  \Gamma_{{\mathrm{3}}}  \vdash_\mathcal{L}  \SCnt{A}}
              \end{array}
            }
          }{\Gamma_{{\mathrm{1}}}  \SCsym{;}  \SCnt{X}  \SCsym{;}  \Gamma_{{\mathrm{2}}}  \SCsym{;}  \SCnt{Y_{{\mathrm{1}}}}  \multimap  \SCnt{Y_{{\mathrm{2}}}}  \SCsym{;}  \Psi  \SCsym{;}  \Gamma_{{\mathrm{3}}}  \vdash_\mathcal{L}  \SCnt{A}}
        \end{math}
      \end{center}
      By assumption, $c(\Pi_1),c(\Pi_2)\leq |X|$. By induction on $\Pi_1$
      and $\pi_2$, there is a proof $\Pi'$ for sequent
      $\Gamma_{{\mathrm{1}}}  \SCsym{;}  \Phi  \SCsym{;}  \Gamma_{{\mathrm{2}}}  \SCsym{;}  \SCnt{Y_{{\mathrm{2}}}}  \SCsym{;}  \Gamma_{{\mathrm{3}}}  \vdash_\mathcal{L}  \SCnt{A}$ s.t. $c(\Pi') \leq |X|$. Therefore, the
      proof $\Pi$ can be constructed as follows with
      $c(\Pi) = c(\Pi') \leq |X|$.
      \begin{center}
        \scriptsize
        \begin{math}
          $$\mprset{flushleft}
          \inferrule* [right={\tiny impL}] {
            {
              \begin{array}{cc}
                \pi_1 & \Pi' \\
                {\Psi  \vdash_\mathcal{C}  \SCnt{Y_{{\mathrm{1}}}}} & {\Gamma_{{\mathrm{1}}}  \SCsym{;}  \Phi  \SCsym{;}  \Gamma_{{\mathrm{2}}}  \SCsym{;}  \SCnt{Y_{{\mathrm{2}}}}  \SCsym{;}  \Gamma_{{\mathrm{3}}}  \vdash_\mathcal{L}  \SCnt{A}}
              \end{array}
            }
          }{\Gamma_{{\mathrm{1}}}  \SCsym{;}  \Phi  \SCsym{;}  \Gamma_{{\mathrm{2}}}  \SCsym{;}  \SCnt{Y_{{\mathrm{1}}}}  \multimap  \SCnt{Y_{{\mathrm{2}}}}  \SCsym{;}  \Psi  \SCsym{;}  \Gamma_{{\mathrm{3}}}  \vdash_\mathcal{L}  \SCnt{A}}
        \end{math}
      \end{center}

\item Case 6:
      \begin{center}
        \scriptsize
        \begin{math}
          \begin{array}{c}
            \Pi_1 \\
            {\Delta  \vdash_\mathcal{L}  \SCnt{B}}
          \end{array}
        \end{math}
        \qquad\qquad
        $\Pi_2$:
        \begin{math}
          $$\mprset{flushleft}
          \inferrule* [right={\tiny impL}] {
            {
              \begin{array}{cc}
                \pi_1 & \pi_2 \\
                {\Psi  \vdash_\mathcal{C}  \SCnt{Y_{{\mathrm{1}}}}} & {\Gamma_{{\mathrm{1}}}  \SCsym{;}  \SCnt{B}  \SCsym{;}  \Gamma_{{\mathrm{2}}}  \SCsym{;}  \SCnt{Y_{{\mathrm{2}}}}  \SCsym{;}  \Gamma_{{\mathrm{3}}}  \vdash_\mathcal{L}  \SCnt{A}}
              \end{array}
            }
          }{\Gamma_{{\mathrm{1}}}  \SCsym{;}  \SCnt{B}  \SCsym{;}  \Gamma_{{\mathrm{2}}}  \SCsym{;}  \SCnt{Y_{{\mathrm{1}}}}  \multimap  \SCnt{Y_{{\mathrm{2}}}}  \SCsym{;}  \Psi  \SCsym{;}  \Gamma_{{\mathrm{3}}}  \vdash_\mathcal{L}  \SCnt{A}}
        \end{math}
      \end{center}
      By assumption, $c(\Pi_1),c(\Pi_2)\leq |B|$. By induction on $\Pi_1$
      and $\pi_2$, there is a proof $\Pi'$ for sequent
      $\Gamma_{{\mathrm{1}}}  \SCsym{;}  \Delta  \SCsym{;}  \Gamma_{{\mathrm{2}}}  \SCsym{;}  \SCnt{Y_{{\mathrm{2}}}}  \SCsym{;}  \Gamma_{{\mathrm{3}}}  \vdash_\mathcal{L}  \SCnt{A}$ s.t. $c(\Pi') \leq |B|$. Therefore, the
      proof $\Pi$ can be constructed as follows with
      $c(\Pi) = c(\Pi') \leq |B|$.
      \begin{center}
        \scriptsize
        \begin{math}
          $$\mprset{flushleft}
          \inferrule* [right={\tiny impL}] {
            {
              \begin{array}{cc}
                \pi_1 & \Pi' \\
                {\Psi  \vdash_\mathcal{C}  \SCnt{Y_{{\mathrm{1}}}}} & {\Gamma_{{\mathrm{1}}}  \SCsym{;}  \Delta  \SCsym{;}  \Gamma_{{\mathrm{2}}}  \SCsym{;}  \SCnt{Y_{{\mathrm{2}}}}  \SCsym{;}  \Gamma_{{\mathrm{3}}}  \vdash_\mathcal{L}  \SCnt{A}}
              \end{array}
            }
          }{\Gamma_{{\mathrm{1}}}  \SCsym{;}  \Delta  \SCsym{;}  \Gamma_{{\mathrm{2}}}  \SCsym{;}  \SCnt{Y_{{\mathrm{1}}}}  \multimap  \SCnt{Y_{{\mathrm{2}}}}  \SCsym{;}  \Psi  \SCsym{;}  \Gamma_{{\mathrm{3}}}  \vdash_\mathcal{L}  \SCnt{A}}
        \end{math}
      \end{center}

\item Case 7:
      \begin{center}
        \scriptsize
        \begin{math}
          \begin{array}{c}
            \Pi_1 \\
            {\Phi  \vdash_\mathcal{C}  \SCnt{X}}
          \end{array}
        \end{math}
        \qquad\qquad
        $\Pi_2$:
        \begin{math}
          $$\mprset{flushleft}
          \inferrule* [right={\tiny impL}] {
            {
              \begin{array}{cc}
                \pi_1 & \pi_2 \\
                {\Psi  \vdash_\mathcal{C}  \SCnt{Y_{{\mathrm{1}}}}} & {\Gamma_{{\mathrm{1}}}  \SCsym{;}  \SCnt{Y_{{\mathrm{2}}}}  \SCsym{;}  \Gamma_{{\mathrm{2}}}  \SCsym{;}  \SCnt{X}  \SCsym{;}  \Gamma_{{\mathrm{3}}}  \vdash_\mathcal{L}  \SCnt{A}}
              \end{array}
            }
          }{\Gamma_{{\mathrm{1}}}  \SCsym{;}  \SCnt{Y_{{\mathrm{1}}}}  \multimap  \SCnt{Y_{{\mathrm{2}}}}  \SCsym{;}  \Psi  \SCsym{;}  \Gamma_{{\mathrm{2}}}  \SCsym{;}  \SCnt{X}  \SCsym{;}  \Gamma_{{\mathrm{3}}}  \vdash_\mathcal{L}  \SCnt{A}}
        \end{math}
      \end{center}
      By assumption, $c(\Pi_1),c(\Pi_2)\leq |X|$. By induction on $\Pi_1$
      and $\pi_2$, there is a proof $\Pi'$ for sequent
      $\Gamma_{{\mathrm{1}}}  \SCsym{;}  \SCnt{Y_{{\mathrm{2}}}}  \SCsym{;}  \Gamma_{{\mathrm{2}}}  \SCsym{;}  \Phi  \SCsym{;}  \Gamma_{{\mathrm{3}}}  \vdash_\mathcal{L}  \SCnt{A}$ s.t. $c(\Pi') \leq |X|$. Therefore, the
      proof $\Pi$ can be constructed as follows with
      $c(\Pi) = c(\Pi') \leq |X|$.
      \begin{center}
        \scriptsize
        \begin{math}
          $$\mprset{flushleft}
          \inferrule* [right={\tiny impL}] {
            {
              \begin{array}{cc}
                \pi_1 & \Pi' \\
                {\Psi  \vdash_\mathcal{C}  \SCnt{Y_{{\mathrm{1}}}}} & {\Gamma_{{\mathrm{1}}}  \SCsym{;}  \SCnt{Y_{{\mathrm{2}}}}  \SCsym{;}  \Gamma_{{\mathrm{2}}}  \SCsym{;}  \Phi  \SCsym{;}  \Gamma_{{\mathrm{3}}}  \vdash_\mathcal{L}  \SCnt{A}}
              \end{array}
            }
          }{\Gamma_{{\mathrm{1}}}  \SCsym{;}  \SCnt{Y_{{\mathrm{1}}}}  \multimap  \SCnt{Y_{{\mathrm{2}}}}  \SCsym{;}  \Psi  \SCsym{;}  \Gamma_{{\mathrm{2}}}  \SCsym{;}  \Phi  \SCsym{;}  \Gamma_{{\mathrm{3}}}  \vdash_\mathcal{L}  \SCnt{A}}
        \end{math}
      \end{center}

\item Case 8:
      \begin{center}
        \scriptsize
        \begin{math}
          \begin{array}{c}
            \Pi_1 \\
            {\Delta  \vdash_\mathcal{L}  \SCnt{B}}
          \end{array}
        \end{math}
        \qquad\qquad
        $\Pi_2$:
        \begin{math}
          $$\mprset{flushleft}
          \inferrule* [right={\tiny impL}] {
            {
              \begin{array}{cc}
                \pi_1 & \pi_2 \\
                {\Psi  \vdash_\mathcal{C}  \SCnt{Y_{{\mathrm{1}}}}} & {\Gamma_{{\mathrm{1}}}  \SCsym{;}  \SCnt{Y_{{\mathrm{2}}}}  \SCsym{;}  \Gamma_{{\mathrm{2}}}  \SCsym{;}  \SCnt{B}  \SCsym{;}  \Gamma_{{\mathrm{3}}}  \vdash_\mathcal{L}  \SCnt{A}}
              \end{array}
            }
          }{\Gamma_{{\mathrm{1}}}  \SCsym{;}  \SCnt{Y_{{\mathrm{1}}}}  \multimap  \SCnt{Y_{{\mathrm{2}}}}  \SCsym{;}  \Psi  \SCsym{;}  \Gamma_{{\mathrm{2}}}  \SCsym{;}  \SCnt{B}  \SCsym{;}  \Gamma_{{\mathrm{3}}}  \vdash_\mathcal{L}  \SCnt{A}}
        \end{math}
      \end{center}
      By assumption, $c(\Pi_1),c(\Pi_2)\leq |B|$. By induction on $\Pi_1$
      and $\pi_2$, there is a proof $\Pi'$ for sequent
      $\Gamma_{{\mathrm{1}}}  \SCsym{;}  \SCnt{Y_{{\mathrm{2}}}}  \SCsym{;}  \Gamma_{{\mathrm{2}}}  \SCsym{;}  \Delta  \SCsym{;}  \Gamma_{{\mathrm{3}}}  \vdash_\mathcal{L}  \SCnt{A}$ s.t. $c(\Pi') \leq |B|$. Therefore,
      the proof $\Pi$ can be constructed as follows with
      $c(\Pi) = c(\Pi') \leq |B|$.
      \begin{center}
        \scriptsize
        \begin{math}
          $$\mprset{flushleft}
          \inferrule* [right={\tiny impL}] {
            {
              \begin{array}{cc}
                \pi_1 & \Pi' \\
                {\Psi  \vdash_\mathcal{C}  \SCnt{Y_{{\mathrm{1}}}}} & {\Gamma_{{\mathrm{1}}}  \SCsym{;}  \SCnt{Y_{{\mathrm{2}}}}  \SCsym{;}  \Gamma_{{\mathrm{2}}}  \SCsym{;}  \Delta  \SCsym{;}  \Gamma_{{\mathrm{3}}}  \vdash_\mathcal{L}  \SCnt{A}}
              \end{array}
            }
          }{\Gamma_{{\mathrm{1}}}  \SCsym{;}  \SCnt{Y_{{\mathrm{1}}}}  \multimap  \SCnt{Y_{{\mathrm{2}}}}  \SCsym{;}  \Psi  \SCsym{;}  \Gamma_{{\mathrm{2}}}  \SCsym{;}  \Delta  \SCsym{;}  \Gamma_{{\mathrm{3}}}  \vdash_\mathcal{L}  \SCnt{A}}
        \end{math}
      \end{center}
\end{itemize}

\subsubsection{Left introduction of the non-commutative left implication $\lto$}
\begin{itemize}
\item Case 1:
      \begin{center}
        \scriptsize
        \begin{math}
          \begin{array}{c}
            \Pi_1 \\
            {\Phi  \vdash_\mathcal{C}  \SCnt{X}}
          \end{array}
        \end{math}
        \qquad\qquad
        $\Pi_2$:
        \begin{math}
          $$\mprset{flushleft}
          \inferrule* [right={\tiny imprL}] {
            {
              \begin{array}{cc}
                \pi_1 & \pi_2 \\
                {\Delta_{{\mathrm{1}}}  \SCsym{;}  \SCnt{X}  \SCsym{;}  \Delta_{{\mathrm{2}}}  \vdash_\mathcal{L}  \SCnt{A_{{\mathrm{1}}}}} & {\Gamma_{{\mathrm{1}}}  \SCsym{;}  \SCnt{A_{{\mathrm{2}}}}  \SCsym{;}  \Gamma_{{\mathrm{2}}}  \vdash_\mathcal{L}  \SCnt{B}}
              \end{array}
            }
          }{\Gamma_{{\mathrm{1}}}  \SCsym{;}  \SCnt{A_{{\mathrm{1}}}}  \rightharpoonup  \SCnt{A_{{\mathrm{2}}}}  \SCsym{;}  \Delta_{{\mathrm{1}}}  \SCsym{;}  \SCnt{X}  \SCsym{;}  \Delta_{{\mathrm{2}}}  \SCsym{;}  \Gamma_{{\mathrm{2}}}  \vdash_\mathcal{L}  \SCnt{B}}
        \end{math}
      \end{center}
      By assumption, $c(\Pi_1),c(\Pi_2)\leq |X|$. By induction on $\Pi_1$
      and $\pi_1$, there is a proof $\Pi'$ for sequent
      $\Delta_{{\mathrm{1}}}  \SCsym{;}  \Phi  \SCsym{;}  \Delta_{{\mathrm{2}}}  \vdash_\mathcal{L}  \SCnt{A_{{\mathrm{1}}}}$ s.t. $c(\Pi') \leq |X|$. Therefore, the proof
      $\Pi$ can be constructed as follows with $c(\Pi) = c(\Pi') \leq |X|$.
      \begin{center}
        \scriptsize
        \begin{math}
          $$\mprset{flushleft}
          \inferrule* [right={\tiny impL}] {
            {
              \begin{array}{cc}
                \Pi' & \pi_2 \\
                {\Delta_{{\mathrm{1}}}  \SCsym{;}  \Phi  \SCsym{;}  \Delta_{{\mathrm{2}}}  \vdash_\mathcal{L}  \SCnt{A_{{\mathrm{1}}}}} & {\Gamma_{{\mathrm{1}}}  \SCsym{;}  \SCnt{A_{{\mathrm{2}}}}  \SCsym{;}  \Gamma_{{\mathrm{2}}}  \vdash_\mathcal{L}  \SCnt{B}}
              \end{array}
            }
          }{\Gamma_{{\mathrm{1}}}  \SCsym{;}  \SCnt{A_{{\mathrm{1}}}}  \rightharpoonup  \SCnt{A_{{\mathrm{2}}}}  \SCsym{;}  \Delta_{{\mathrm{1}}}  \SCsym{;}  \Phi  \SCsym{;}  \Delta_{{\mathrm{2}}}  \SCsym{;}  \Gamma_{{\mathrm{2}}}  \vdash_\mathcal{L}  \SCnt{B}}
        \end{math}
      \end{center}

\item Case 2:
      \begin{center}
        \scriptsize
        \begin{math}
          \begin{array}{c}
            \Pi_1 \\
            {\Gamma  \vdash_\mathcal{L}  \SCnt{C}}
          \end{array}
        \end{math}
        \qquad\qquad
        $\Pi_2$:
        \begin{math}
          $$\mprset{flushleft}
          \inferrule* [right={\tiny imprL}] {
            {
              \begin{array}{cc}
                \pi_1 & \pi_2 \\
                {\Delta_{{\mathrm{1}}}  \SCsym{;}  \SCnt{C}  \SCsym{;}  \Delta_{{\mathrm{2}}}  \vdash_\mathcal{L}  \SCnt{A_{{\mathrm{1}}}}} & {\Gamma_{{\mathrm{1}}}  \SCsym{;}  \SCnt{A_{{\mathrm{2}}}}  \SCsym{;}  \Gamma_{{\mathrm{2}}}  \vdash_\mathcal{L}  \SCnt{B}}
              \end{array}
            }
          }{\Gamma_{{\mathrm{1}}}  \SCsym{;}  \SCnt{A_{{\mathrm{1}}}}  \rightharpoonup  \SCnt{A_{{\mathrm{2}}}}  \SCsym{;}  \Delta_{{\mathrm{1}}}  \SCsym{;}  \SCnt{C}  \SCsym{;}  \Delta_{{\mathrm{2}}}  \SCsym{;}  \Gamma_{{\mathrm{2}}}  \vdash_\mathcal{L}  \SCnt{B}}
        \end{math}
      \end{center}
      By assumption, $c(\Pi_1),c(\Pi_2)\leq |C|$. By induction on $\Pi_1$
      and $\pi_1$, there is a proof $\Pi'$ for sequent
      $\Delta_{{\mathrm{1}}}  \SCsym{;}  \Gamma  \SCsym{;}  \Delta_{{\mathrm{2}}}  \vdash_\mathcal{L}  \SCnt{A_{{\mathrm{1}}}}$ s.t. $c(\Pi') \leq |C|$. Therefore, the proof
      $\Pi$ can be constructed as follows with $c(\Pi) = c(\Pi') \leq |C|$.
      \begin{center}
        \scriptsize
        \begin{math}
          $$\mprset{flushleft}
          \inferrule* [right={\tiny imprL}] {
            {
              \begin{array}{cc}
                \Pi' & \pi_2 \\
                {\Delta_{{\mathrm{1}}}  \SCsym{;}  \Gamma  \SCsym{;}  \Delta_{{\mathrm{2}}}  \vdash_\mathcal{L}  \SCnt{A_{{\mathrm{1}}}}} & {\Gamma_{{\mathrm{1}}}  \SCsym{;}  \SCnt{A_{{\mathrm{2}}}}  \SCsym{;}  \Gamma_{{\mathrm{2}}}  \vdash_\mathcal{L}  \SCnt{B}}
              \end{array}
            }
          }{\Gamma_{{\mathrm{1}}}  \SCsym{;}  \SCnt{A_{{\mathrm{1}}}}  \rightharpoonup  \SCnt{A_{{\mathrm{2}}}}  \SCsym{;}  \Delta_{{\mathrm{1}}}  \SCsym{;}  \Gamma  \SCsym{;}  \Delta_{{\mathrm{2}}}  \SCsym{;}  \Gamma_{{\mathrm{2}}}  \vdash_\mathcal{L}  \SCnt{B}}
        \end{math}
      \end{center}

\item Case 3:
      \begin{center}
        \scriptsize
        \begin{math}
          \begin{array}{c}
            \Pi_1 \\
            {\Phi  \vdash_\mathcal{C}  \SCnt{X}}
          \end{array}
        \end{math}
        \qquad\qquad
        $\Pi_2$:
        \begin{math}
          $$\mprset{flushleft}
          \inferrule* [right={\tiny imprL}] {
            {
              \begin{array}{cc}
                \pi_1 & \pi_2 \\
                {\Delta  \vdash_\mathcal{L}  \SCnt{A_{{\mathrm{1}}}}} & {\Gamma_{{\mathrm{1}}}  \SCsym{;}  \SCnt{X}  \SCsym{;}  \Gamma_{{\mathrm{2}}}  \SCsym{;}  \SCnt{A_{{\mathrm{2}}}}  \SCsym{;}  \Gamma_{{\mathrm{3}}}  \vdash_\mathcal{L}  \SCnt{B}}
              \end{array}
            }
          }{\Gamma_{{\mathrm{1}}}  \SCsym{;}  \SCnt{X}  \SCsym{;}  \Gamma_{{\mathrm{2}}}  \SCsym{;}  \SCnt{A_{{\mathrm{1}}}}  \rightharpoonup  \SCnt{A_{{\mathrm{2}}}}  \SCsym{;}  \Delta  \SCsym{;}  \Gamma_{{\mathrm{3}}}  \vdash_\mathcal{L}  \SCnt{B}}
        \end{math}
      \end{center}
      By assumption, $c(\Pi_1),c(\Pi_2)\leq |X|$. By induction on $\Pi_1$
      and $\pi_2$, there is a proof $\Pi'$ for sequent
      $\Gamma_{{\mathrm{1}}}  \SCsym{;}  \Phi  \SCsym{;}  \Gamma_{{\mathrm{2}}}  \SCsym{;}  \SCnt{A_{{\mathrm{2}}}}  \SCsym{;}  \Gamma_{{\mathrm{3}}}  \vdash_\mathcal{L}  \SCnt{B}$ s.t. $c(\Pi') \leq |X|$. Therefore,
      the proof $\Pi$ can be constructed as follows with
      $c(\Pi) = c(\Pi') \leq |X|$.
      \begin{center}
        \scriptsize
        \begin{math}
          $$\mprset{flushleft}
          \inferrule* [right={\tiny imprL}] {
            {
              \begin{array}{cc}
                \pi_1 & \Pi' \\
                {\Delta  \vdash_\mathcal{L}  \SCnt{A_{{\mathrm{1}}}}} & {\Gamma_{{\mathrm{1}}}  \SCsym{;}  \Phi  \SCsym{;}  \Gamma_{{\mathrm{2}}}  \SCsym{;}  \SCnt{A_{{\mathrm{2}}}}  \SCsym{;}  \Gamma_{{\mathrm{3}}}  \vdash_\mathcal{L}  \SCnt{B}}
              \end{array}
            }
          }{\Gamma_{{\mathrm{1}}}  \SCsym{;}  \Phi  \SCsym{;}  \Gamma_{{\mathrm{2}}}  \SCsym{;}  \SCnt{A_{{\mathrm{1}}}}  \rightharpoonup  \SCnt{A_{{\mathrm{2}}}}  \SCsym{;}  \Delta  \SCsym{;}  \Gamma_{{\mathrm{3}}}  \vdash_\mathcal{L}  \SCnt{B}}
        \end{math}
      \end{center}

\item Case 4:
      \begin{center}
        \scriptsize
        \begin{math}
          \begin{array}{c}
            \Pi_1 \\
            {\Delta_{{\mathrm{1}}}  \vdash_\mathcal{L}  \SCnt{B}}
          \end{array}
        \end{math}
        \qquad\qquad
        $\Pi_2$:
        \begin{math}
          $$\mprset{flushleft}
          \inferrule* [right={\tiny imprL}] {
            {
              \begin{array}{cc}
                \pi_1 & \pi_2 \\
                {\Delta_{{\mathrm{2}}}  \vdash_\mathcal{L}  \SCnt{A_{{\mathrm{1}}}}} & {\Gamma_{{\mathrm{1}}}  \SCsym{;}  \SCnt{B}  \SCsym{;}  \Gamma_{{\mathrm{2}}}  \SCsym{;}  \SCnt{A_{{\mathrm{2}}}}  \SCsym{;}  \Gamma_{{\mathrm{3}}}  \vdash_\mathcal{L}  \SCnt{C}}
              \end{array}
            }
          }{\Gamma_{{\mathrm{1}}}  \SCsym{;}  \SCnt{B}  \SCsym{;}  \Gamma_{{\mathrm{2}}}  \SCsym{;}  \SCnt{A_{{\mathrm{1}}}}  \rightharpoonup  \SCnt{A_{{\mathrm{2}}}}  \SCsym{;}  \Delta_{{\mathrm{2}}}  \SCsym{;}  \Gamma_{{\mathrm{3}}}  \vdash_\mathcal{L}  \SCnt{C}}
        \end{math}
      \end{center}
      By assumption, $c(\Pi_1),c(\Pi_2)\leq |B|$. By induction on $\Pi_1$
      and $\pi_2$, there is a proof $\Pi'$ for sequent
      $\Gamma_{{\mathrm{1}}}  \SCsym{;}  \Delta_{{\mathrm{1}}}  \SCsym{;}  \Gamma_{{\mathrm{2}}}  \SCsym{;}  \SCnt{A_{{\mathrm{2}}}}  \SCsym{;}  \Gamma_{{\mathrm{3}}}  \vdash_\mathcal{L}  \SCnt{C}$ s.t. $c(\Pi') \leq |B|$. Therefore,
      the proof $\Pi$ can be constructed as follows with
      $c(\Pi) = c(\Pi') \leq |B|$.
      \begin{center}
        \scriptsize
        \begin{math}
          $$\mprset{flushleft}
          \inferrule* [right={\tiny imprL}] {
            {
              \begin{array}{cc}
                \pi_1 & \Pi' \\
                {\Delta_{{\mathrm{2}}}  \vdash_\mathcal{L}  \SCnt{A_{{\mathrm{1}}}}} & {\Gamma_{{\mathrm{1}}}  \SCsym{;}  \Delta_{{\mathrm{1}}}  \SCsym{;}  \Gamma_{{\mathrm{2}}}  \SCsym{;}  \SCnt{A_{{\mathrm{2}}}}  \SCsym{;}  \Gamma_{{\mathrm{3}}}  \vdash_\mathcal{L}  \SCnt{C}}
              \end{array}
            }
          }{\Gamma_{{\mathrm{1}}}  \SCsym{;}  \Delta_{{\mathrm{1}}}  \SCsym{;}  \Gamma_{{\mathrm{2}}}  \SCsym{;}  \SCnt{A_{{\mathrm{1}}}}  \rightharpoonup  \SCnt{A_{{\mathrm{2}}}}  \SCsym{;}  \Delta_{{\mathrm{2}}}  \SCsym{;}  \Gamma_{{\mathrm{3}}}  \vdash_\mathcal{L}  \SCnt{C}}
        \end{math}
      \end{center}

\item Case 5:
      \begin{center}
        \scriptsize
        \begin{math}
          \begin{array}{c}
            \Pi_1 \\
            {\Phi  \vdash_\mathcal{C}  \SCnt{X}}
          \end{array}
        \end{math}
        \qquad\qquad
        $\Pi_2$:
        \begin{math}
          $$\mprset{flushleft}
          \inferrule* [right={\tiny imprL}] {
            {
              \begin{array}{cc}
                \pi_1 & \pi_2 \\
                {\Delta  \vdash_\mathcal{L}  \SCnt{A_{{\mathrm{1}}}}} & {\Gamma_{{\mathrm{1}}}  \SCsym{;}  \SCnt{A_{{\mathrm{2}}}}  \SCsym{;}  \Gamma_{{\mathrm{2}}}  \SCsym{;}  \SCnt{X}  \SCsym{;}  \Gamma_{{\mathrm{3}}}  \vdash_\mathcal{L}  \SCnt{B}}
              \end{array}
            }
          }{\Gamma_{{\mathrm{1}}}  \SCsym{;}  \SCnt{A_{{\mathrm{1}}}}  \rightharpoonup  \SCnt{A_{{\mathrm{2}}}}  \SCsym{;}  \Delta  \SCsym{;}  \Gamma_{{\mathrm{2}}}  \SCsym{;}  \SCnt{X}  \SCsym{;}  \Gamma_{{\mathrm{3}}}  \vdash_\mathcal{L}  \SCnt{B}}
        \end{math}
      \end{center}
      By assumption, $c(\Pi_1),c(\Pi_2)\leq |X|$. By induction on $\Pi_1$
      and $\pi_2$, there is a proof $\Pi'$ for sequent
      $\Gamma_{{\mathrm{1}}}  \SCsym{;}  \SCnt{A_{{\mathrm{2}}}}  \SCsym{;}  \Gamma_{{\mathrm{2}}}  \SCsym{;}  \Phi  \SCsym{;}  \Gamma_{{\mathrm{3}}}  \vdash_\mathcal{L}  \SCnt{B}$ s.t. $c(\Pi') \leq |X|$. Therefore, the
      proof $\Pi$ can be constructed as follows with
      $c(\Pi) = c(\Pi') \leq |X|$.
      \begin{center}
        \scriptsize
        \begin{math}
          $$\mprset{flushleft}
          \inferrule* [right={\tiny imprL}] {
            {
              \begin{array}{cc}
                \pi_1 & \Pi' \\
                {\Delta  \vdash_\mathcal{L}  \SCnt{A_{{\mathrm{1}}}}} & {\Gamma_{{\mathrm{1}}}  \SCsym{;}  \SCnt{A_{{\mathrm{2}}}}  \SCsym{;}  \Gamma_{{\mathrm{2}}}  \SCsym{;}  \Phi  \SCsym{;}  \Gamma_{{\mathrm{3}}}  \vdash_\mathcal{L}  \SCnt{B}}
              \end{array}
            }
          }{\Gamma_{{\mathrm{1}}}  \SCsym{;}  \SCnt{A_{{\mathrm{1}}}}  \rightharpoonup  \SCnt{A_{{\mathrm{2}}}}  \SCsym{;}  \Delta  \SCsym{;}  \Gamma_{{\mathrm{2}}}  \SCsym{;}  \Phi  \SCsym{;}  \Gamma_{{\mathrm{3}}}  \vdash_\mathcal{L}  \SCnt{B}}
        \end{math}
      \end{center}

\item Case 6:
      \begin{center}
        \scriptsize
        \begin{math}
          \begin{array}{c}
            \Pi_1 \\
            {\Delta_{{\mathrm{1}}}  \vdash_\mathcal{L}  \SCnt{B}}
          \end{array}
        \end{math}
        \qquad\qquad
        $\Pi_2$:
        \begin{math}
          $$\mprset{flushleft}
          \inferrule* [right={\tiny imprL}] {
            {
              \begin{array}{cc}
                \pi_1 & \pi_2 \\
                {\Delta_{{\mathrm{2}}}  \vdash_\mathcal{L}  \SCnt{A_{{\mathrm{1}}}}} & {\Gamma_{{\mathrm{1}}}  \SCsym{;}  \SCnt{A_{{\mathrm{2}}}}  \SCsym{;}  \Gamma_{{\mathrm{2}}}  \SCsym{;}  \SCnt{B}  \SCsym{;}  \Gamma_{{\mathrm{3}}}  \vdash_\mathcal{L}  \SCnt{C}}
              \end{array}
            }
          }{\Gamma_{{\mathrm{1}}}  \SCsym{;}  \SCnt{A_{{\mathrm{1}}}}  \rightharpoonup  \SCnt{A_{{\mathrm{2}}}}  \SCsym{;}  \Delta_{{\mathrm{2}}}  \SCsym{;}  \Gamma_{{\mathrm{2}}}  \SCsym{;}  \SCnt{B}  \SCsym{;}  \Gamma_{{\mathrm{3}}}  \vdash_\mathcal{L}  \SCnt{C}}
        \end{math}
      \end{center}
      By assumption, $c(\Pi_1),c(\Pi_2)\leq |B|$. By induction on $\Pi_1$
      and $\pi_2$, there is a proof $\Pi'$ for sequent
      $\Gamma_{{\mathrm{1}}}  \SCsym{;}  \SCnt{A_{{\mathrm{2}}}}  \SCsym{;}  \Gamma_{{\mathrm{2}}}  \SCsym{;}  \Delta_{{\mathrm{1}}}  \SCsym{;}  \Gamma_{{\mathrm{3}}}  \vdash_\mathcal{L}  \SCnt{C}$ s.t. $c(\Pi') \leq |B|$. Therefore,
      the proof $\Pi$ can be constructed as follows with
      $c(\Pi) = c(\Pi') \leq |B|$.
      \begin{center}
        \scriptsize
        \begin{math}
          $$\mprset{flushleft}
          \inferrule* [right={\tiny imprL}] {
            {
              \begin{array}{cc}
                \pi_1 & \Pi' \\
                {\Delta_{{\mathrm{2}}}  \vdash_\mathcal{L}  \SCnt{A_{{\mathrm{1}}}}} & {\Gamma_{{\mathrm{1}}}  \SCsym{;}  \SCnt{A_{{\mathrm{2}}}}  \SCsym{;}  \Gamma_{{\mathrm{2}}}  \SCsym{;}  \Delta_{{\mathrm{1}}}  \SCsym{;}  \Gamma_{{\mathrm{3}}}  \vdash_\mathcal{L}  \SCnt{C}}
              \end{array}
            }
          }{\Gamma_{{\mathrm{1}}}  \SCsym{;}  \SCnt{A_{{\mathrm{1}}}}  \rightharpoonup  \SCnt{A_{{\mathrm{2}}}}  \SCsym{;}  \Delta_{{\mathrm{2}}}  \SCsym{;}  \Gamma_{{\mathrm{2}}}  \SCsym{;}  \Delta_{{\mathrm{1}}}  \SCsym{;}  \Gamma_{{\mathrm{3}}}  \vdash_\mathcal{L}  \SCnt{C}}
        \end{math}
      \end{center}
\end{itemize}

\subsubsection{Left introduction of the non-commutative right implication $\rto$}
\begin{itemize}
\item Case 1:
      \begin{center}
        \scriptsize
        \begin{math}
          \begin{array}{c}
            \Pi_1 \\
            {\Phi  \vdash_\mathcal{C}  \SCnt{X}}
          \end{array}
        \end{math}
        \qquad\qquad
        $\Pi_2$:
        \begin{math}
          $$\mprset{flushleft}
          \inferrule* [right={\tiny implL}] {
            {
              \begin{array}{cc}
                \pi_1 & \pi_2 \\
                {\Delta_{{\mathrm{1}}}  \SCsym{;}  \SCnt{X}  \SCsym{;}  \Delta_{{\mathrm{2}}}  \vdash_\mathcal{L}  \SCnt{A_{{\mathrm{1}}}}} & {\Gamma_{{\mathrm{1}}}  \SCsym{;}  \SCnt{A_{{\mathrm{2}}}}  \SCsym{;}  \Gamma_{{\mathrm{2}}}  \vdash_\mathcal{L}  \SCnt{B}}
              \end{array}
            }
          }{\Gamma_{{\mathrm{1}}}  \SCsym{;}  \Delta_{{\mathrm{1}}}  \SCsym{;}  \SCnt{A_{{\mathrm{2}}}}  \leftharpoonup  \SCnt{A_{{\mathrm{1}}}}  \SCsym{;}  \SCnt{X}  \SCsym{;}  \Delta_{{\mathrm{2}}}  \SCsym{;}  \Gamma_{{\mathrm{2}}}  \vdash_\mathcal{L}  \SCnt{B}}
        \end{math}
      \end{center}
      By assumption, $c(\Pi_1),c(\Pi_2)\leq |X|$. By induction on $\Pi_1$
      and $\pi_1$, there is a proof $\Pi'$ for sequent
      $\Delta_{{\mathrm{1}}}  \SCsym{;}  \Phi  \SCsym{;}  \Delta_{{\mathrm{2}}}  \vdash_\mathcal{L}  \SCnt{A_{{\mathrm{1}}}}$ s.t. $c(\Pi') \leq |X|$. Therefore, the proof
      $\Pi$ can be constructed as follows with $c(\Pi) = c(\Pi') \leq |X|$.
      \begin{center}
        \scriptsize
        \begin{math}
          $$\mprset{flushleft}
          \inferrule* [right={\tiny implL}] {
            {
              \begin{array}{cc}
                \Pi' & \pi_2 \\
                {\Delta_{{\mathrm{1}}}  \SCsym{;}  \Phi  \SCsym{;}  \Delta_{{\mathrm{2}}}  \vdash_\mathcal{L}  \SCnt{A_{{\mathrm{1}}}}} & {\Gamma_{{\mathrm{1}}}  \SCsym{;}  \SCnt{A_{{\mathrm{2}}}}  \SCsym{;}  \Gamma_{{\mathrm{2}}}  \vdash_\mathcal{L}  \SCnt{B}}
              \end{array}
            }
          }{\Gamma_{{\mathrm{1}}}  \SCsym{;}  \Delta_{{\mathrm{1}}}  \SCsym{;}  \SCnt{A_{{\mathrm{2}}}}  \leftharpoonup  \SCnt{A_{{\mathrm{1}}}}  \SCsym{;}  \Phi  \SCsym{;}  \Delta_{{\mathrm{2}}}  \SCsym{;}  \Gamma_{{\mathrm{2}}}  \vdash_\mathcal{L}  \SCnt{B}}
        \end{math}
      \end{center}

\item Case 2:
      \begin{center}
        \scriptsize
        \begin{math}
          \begin{array}{c}
            \Pi_1 \\
            {\Gamma  \vdash_\mathcal{L}  \SCnt{C}}
          \end{array}
        \end{math}
        \qquad\qquad
        $\Pi_2$:
        \begin{math}
          $$\mprset{flushleft}
          \inferrule* [right={\tiny implL}] {
            {
              \begin{array}{cc}
                \pi_1 & \pi_2 \\
                {\Delta_{{\mathrm{1}}}  \SCsym{;}  \SCnt{C}  \SCsym{;}  \Delta_{{\mathrm{2}}}  \vdash_\mathcal{L}  \SCnt{A_{{\mathrm{1}}}}} & {\Gamma_{{\mathrm{1}}}  \SCsym{;}  \SCnt{A_{{\mathrm{2}}}}  \SCsym{;}  \Gamma_{{\mathrm{2}}}  \vdash_\mathcal{L}  \SCnt{B}}
              \end{array}
            }
          }{\Gamma_{{\mathrm{1}}}  \SCsym{;}  \Delta_{{\mathrm{1}}}  \SCsym{;}  \SCnt{C}  \SCsym{;}  \Delta_{{\mathrm{2}}}  \SCsym{;}  \SCnt{A_{{\mathrm{2}}}}  \leftharpoonup  \SCnt{A_{{\mathrm{1}}}}  \SCsym{;}  \Gamma_{{\mathrm{2}}}  \vdash_\mathcal{L}  \SCnt{B}}
        \end{math}
      \end{center}
      By assumption, $c(\Pi_1),c(\Pi_2)\leq |C|$. By induction on $\Pi_1$
      and $\pi_1$, there is a proof $\Pi'$ for sequent
      $\Delta_{{\mathrm{1}}}  \SCsym{;}  \Gamma  \SCsym{;}  \Delta_{{\mathrm{2}}}  \vdash_\mathcal{L}  \SCnt{A_{{\mathrm{1}}}}$ s.t. $c(\Pi') \leq |C|$. Therefore, the proof
      $\Pi$ can be constructed as follows with $c(\Pi) = c(\Pi') \leq |C|$.
      \begin{center}
        \scriptsize
        \begin{math}
          $$\mprset{flushleft}
          \inferrule* [right={\tiny implL}] {
            {
              \begin{array}{cc}
                \Pi' & \pi_2 \\
                {\Delta_{{\mathrm{1}}}  \SCsym{;}  \Gamma  \SCsym{;}  \Delta_{{\mathrm{2}}}  \vdash_\mathcal{L}  \SCnt{A_{{\mathrm{1}}}}} & {\Gamma_{{\mathrm{1}}}  \SCsym{;}  \SCnt{A_{{\mathrm{2}}}}  \SCsym{;}  \Gamma_{{\mathrm{2}}}  \vdash_\mathcal{L}  \SCnt{B}}
              \end{array}
            }
          }{\Gamma_{{\mathrm{1}}}  \SCsym{;}  \Delta_{{\mathrm{1}}}  \SCsym{;}  \Gamma  \SCsym{;}  \Delta_{{\mathrm{2}}}  \SCsym{;}  \SCnt{A_{{\mathrm{2}}}}  \leftharpoonup  \SCnt{A_{{\mathrm{1}}}}  \SCsym{;}  \Gamma_{{\mathrm{2}}}  \vdash_\mathcal{L}  \SCnt{B}}
        \end{math}
      \end{center}

\item Case 3:
      \begin{center}
        \scriptsize
        \begin{math}
          \begin{array}{c}
            \Pi_1 \\
            {\Phi  \vdash_\mathcal{C}  \SCnt{X}}
          \end{array}
        \end{math}
        \qquad\qquad
        $\Pi_2$:
        \begin{math}
          $$\mprset{flushleft}
          \inferrule* [right={\tiny implL}] {
            {
              \begin{array}{cc}
                \pi_1 & \pi_2 \\
                {\Delta  \vdash_\mathcal{L}  \SCnt{A_{{\mathrm{1}}}}} & {\Gamma_{{\mathrm{1}}}  \SCsym{;}  \SCnt{X}  \SCsym{;}  \Gamma_{{\mathrm{2}}}  \SCsym{;}  \SCnt{A_{{\mathrm{2}}}}  \SCsym{;}  \Gamma_{{\mathrm{3}}}  \vdash_\mathcal{L}  \SCnt{B}}
              \end{array}
            }
          }{\Gamma_{{\mathrm{1}}}  \SCsym{;}  \SCnt{X}  \SCsym{;}  \Gamma_{{\mathrm{2}}}  \SCsym{;}  \Delta  \SCsym{;}  \SCnt{A_{{\mathrm{2}}}}  \leftharpoonup  \SCnt{A_{{\mathrm{1}}}}  \SCsym{;}  \Gamma_{{\mathrm{3}}}  \vdash_\mathcal{L}  \SCnt{B}}
        \end{math}
      \end{center}
      By assumption, $c(\Pi_1),c(\Pi_2)\leq |X|$. By induction on $\Pi_1$
      and $\pi_2$, there is a proof $\Pi'$ for sequent
      $\Gamma_{{\mathrm{1}}}  \SCsym{;}  \Phi  \SCsym{;}  \Gamma_{{\mathrm{2}}}  \SCsym{;}  \SCnt{A_{{\mathrm{2}}}}  \SCsym{;}  \Gamma_{{\mathrm{3}}}  \vdash_\mathcal{L}  \SCnt{B}$ s.t. $c(\Pi') \leq |X|$. Therefore, the
      proof $\Pi$ can be constructed as follows with
      $c(\Pi) = c(\Pi') \leq |X|$.
      \begin{center}
        \scriptsize
        \begin{math}
          $$\mprset{flushleft}
          \inferrule* [right={\tiny implL}] {
            {
              \begin{array}{cc}
                \pi_1 & \Pi' \\
                {\Delta  \vdash_\mathcal{L}  \SCnt{A_{{\mathrm{1}}}}} & {\Gamma_{{\mathrm{1}}}  \SCsym{;}  \Phi  \SCsym{;}  \Gamma_{{\mathrm{2}}}  \SCsym{;}  \SCnt{A_{{\mathrm{2}}}}  \SCsym{;}  \Gamma_{{\mathrm{3}}}  \vdash_\mathcal{L}  \SCnt{B}}
              \end{array}
            }
          }{\Gamma_{{\mathrm{1}}}  \SCsym{;}  \Phi  \SCsym{;}  \Gamma_{{\mathrm{2}}}  \SCsym{;}  \Delta  \SCsym{;}  \SCnt{A_{{\mathrm{2}}}}  \leftharpoonup  \SCnt{A_{{\mathrm{1}}}}  \SCsym{;}  \Gamma_{{\mathrm{3}}}  \vdash_\mathcal{L}  \SCnt{B}}
        \end{math}
      \end{center}

\item Case 4:
      \begin{center}
        \scriptsize
        \begin{math}
          \begin{array}{c}
            \Pi_1 \\
            {\Delta_{{\mathrm{1}}}  \vdash_\mathcal{L}  \SCnt{B}}
          \end{array}
        \end{math}
        \qquad\qquad
        $\Pi_2$:
        \begin{math}
          $$\mprset{flushleft}
          \inferrule* [right={\tiny implL}] {
            {
              \begin{array}{cc}
                \pi_1 & \pi_2 \\
                {\Delta_{{\mathrm{2}}}  \vdash_\mathcal{L}  \SCnt{A_{{\mathrm{1}}}}} & {\Gamma_{{\mathrm{1}}}  \SCsym{;}  \SCnt{B}  \SCsym{;}  \Gamma_{{\mathrm{2}}}  \SCsym{;}  \SCnt{A_{{\mathrm{2}}}}  \SCsym{;}  \Gamma_{{\mathrm{3}}}  \vdash_\mathcal{L}  \SCnt{C}}
              \end{array}
            }
          }{\Gamma_{{\mathrm{1}}}  \SCsym{;}  \SCnt{B}  \SCsym{;}  \Gamma_{{\mathrm{2}}}  \SCsym{;}  \Delta_{{\mathrm{2}}}  \SCsym{;}  \SCnt{A_{{\mathrm{2}}}}  \leftharpoonup  \SCnt{A_{{\mathrm{1}}}}  \SCsym{;}  \Gamma_{{\mathrm{3}}}  \vdash_\mathcal{L}  \SCnt{C}}
        \end{math}
      \end{center}
      By assumption, $c(\Pi_1),c(\Pi_2)\leq |B|$. By induction on $\Pi_1$
      and $\pi_2$, there is a proof $\Pi'$ for sequent
      $\Gamma_{{\mathrm{1}}}  \SCsym{;}  \Delta_{{\mathrm{1}}}  \SCsym{;}  \Gamma_{{\mathrm{2}}}  \SCsym{;}  \SCnt{A_{{\mathrm{2}}}}  \SCsym{;}  \Gamma_{{\mathrm{3}}}  \vdash_\mathcal{L}  \SCnt{C}$ s.t. $c(\Pi') \leq |B|$. Therefore,
      the proof $\Pi$ can be constructed as follows with
      $c(\Pi) = c(\Pi') \leq |B|$.
      \begin{center}
        \scriptsize
        \begin{math}
          $$\mprset{flushleft}
          \inferrule* [right={\tiny implL}] {
            {
              \begin{array}{cc}
                \pi_1 & \Pi' \\
                {\Delta_{{\mathrm{2}}}  \vdash_\mathcal{L}  \SCnt{A_{{\mathrm{1}}}}} & {\Gamma_{{\mathrm{1}}}  \SCsym{;}  \Delta_{{\mathrm{1}}}  \SCsym{;}  \Gamma_{{\mathrm{2}}}  \SCsym{;}  \SCnt{A_{{\mathrm{2}}}}  \SCsym{;}  \Gamma_{{\mathrm{3}}}  \vdash_\mathcal{L}  \SCnt{C}}
              \end{array}
            }
          }{\Gamma_{{\mathrm{1}}}  \SCsym{;}  \Delta_{{\mathrm{1}}}  \SCsym{;}  \Gamma_{{\mathrm{2}}}  \SCsym{;}  \Delta_{{\mathrm{2}}}  \SCsym{;}  \SCnt{A_{{\mathrm{2}}}}  \leftharpoonup  \SCnt{A_{{\mathrm{1}}}}  \SCsym{;}  \Gamma_{{\mathrm{3}}}  \vdash_\mathcal{L}  \SCnt{C}}
        \end{math}
      \end{center}

\item Case 5:
      \begin{center}
        \scriptsize
        \begin{math}
          \begin{array}{c}
            \Pi_1 \\
            {\Phi  \vdash_\mathcal{C}  \SCnt{X}}
          \end{array}
        \end{math}
        \qquad\qquad
        $\Pi_2$:
        \begin{math}
          $$\mprset{flushleft}
          \inferrule* [right={\tiny implL}] {
            {
              \begin{array}{cc}
                \pi_1 & \pi_2 \\
                {\Delta  \vdash_\mathcal{L}  \SCnt{A_{{\mathrm{1}}}}} & {\Gamma_{{\mathrm{1}}}  \SCsym{;}  \SCnt{A_{{\mathrm{2}}}}  \SCsym{;}  \Gamma_{{\mathrm{2}}}  \SCsym{;}  \SCnt{X}  \SCsym{;}  \Gamma_{{\mathrm{3}}}  \vdash_\mathcal{L}  \SCnt{B}}
              \end{array}
            }
          }{\Gamma_{{\mathrm{1}}}  \SCsym{;}  \Delta  \SCsym{;}  \SCnt{A_{{\mathrm{2}}}}  \leftharpoonup  \SCnt{A_{{\mathrm{1}}}}  \SCsym{;}  \Delta  \SCsym{;}  \Gamma_{{\mathrm{2}}}  \SCsym{;}  \SCnt{X}  \SCsym{;}  \Gamma_{{\mathrm{3}}}  \vdash_\mathcal{L}  \SCnt{B}}
        \end{math}
      \end{center}
      By assumption, $c(\Pi_1),c(\Pi_2)\leq |X|$. By induction on $\Pi_1$
      and $\pi_2$, there is a proof $\Pi'$ for sequent
      $\Gamma_{{\mathrm{1}}}  \SCsym{;}  \SCnt{A_{{\mathrm{2}}}}  \SCsym{;}  \Gamma_{{\mathrm{2}}}  \SCsym{;}  \Phi  \SCsym{;}  \Gamma_{{\mathrm{3}}}  \vdash_\mathcal{L}  \SCnt{B}$ s.t. $c(\Pi') \leq |X|$. Therefore, the
      proof $\Pi$ can be constructed as follows with
      $c(\Pi) = c(\Pi') \leq |X|$.
      \begin{center}
        \scriptsize
        \begin{math}
          $$\mprset{flushleft}
          \inferrule* [right={\tiny implL}] {
            {
              \begin{array}{cc}
                \pi_1 & \Pi' \\
                {\Delta  \vdash_\mathcal{L}  \SCnt{A_{{\mathrm{1}}}}} & {\Gamma_{{\mathrm{1}}}  \SCsym{;}  \SCnt{A_{{\mathrm{2}}}}  \SCsym{;}  \Gamma_{{\mathrm{2}}}  \SCsym{;}  \Phi  \SCsym{;}  \Gamma_{{\mathrm{3}}}  \vdash_\mathcal{L}  \SCnt{B}}
              \end{array}
            }
          }{\Gamma_{{\mathrm{1}}}  \SCsym{;}  \Delta  \SCsym{;}  \SCnt{A_{{\mathrm{2}}}}  \leftharpoonup  \SCnt{A_{{\mathrm{1}}}}  \SCsym{;}  \Gamma_{{\mathrm{2}}}  \SCsym{;}  \Phi  \SCsym{;}  \Gamma_{{\mathrm{3}}}  \vdash_\mathcal{L}  \SCnt{B}}
        \end{math}
      \end{center}

\item Case 6:
    \begin{center}
      \scriptsize
      \begin{math}
        \begin{array}{c}
          \Pi_1 \\
          {\Delta_{{\mathrm{1}}}  \vdash_\mathcal{L}  \SCnt{B}}
        \end{array}
      \end{math}
      \qquad\qquad
      $\Pi_2$:
      \begin{math}
        $$\mprset{flushleft}
        \inferrule* [right={\tiny implL}] {
          {
            \begin{array}{cc}
              \pi_1 & \pi_2 \\
              {\Delta_{{\mathrm{2}}}  \vdash_\mathcal{L}  \SCnt{A_{{\mathrm{1}}}}} & {\Gamma_{{\mathrm{1}}}  \SCsym{;}  \SCnt{A_{{\mathrm{2}}}}  \SCsym{;}  \Gamma_{{\mathrm{2}}}  \SCsym{;}  \SCnt{B}  \SCsym{;}  \Gamma_{{\mathrm{3}}}  \vdash_\mathcal{L}  \SCnt{C}}
            \end{array}
          }
        }{\Gamma_{{\mathrm{1}}}  \SCsym{;}  \Delta_{{\mathrm{2}}}  \SCsym{;}  \SCnt{A_{{\mathrm{2}}}}  \leftharpoonup  \SCnt{A_{{\mathrm{1}}}}  \SCsym{;}  \Gamma_{{\mathrm{2}}}  \SCsym{;}  \SCnt{B}  \SCsym{;}  \Gamma_{{\mathrm{3}}}  \vdash_\mathcal{L}  \SCnt{C}}
      \end{math}
    \end{center}
    By assumption, $c(\Pi_1),c(\Pi_2)\leq |B|$. By induction on $\Pi_1$ and
    $\pi_2$, there is a proof $\Pi'$ for sequent
    $\Gamma_{{\mathrm{1}}}  \SCsym{;}  \SCnt{A_{{\mathrm{2}}}}  \SCsym{;}  \Gamma_{{\mathrm{2}}}  \SCsym{;}  \Delta_{{\mathrm{1}}}  \SCsym{;}  \Gamma_{{\mathrm{3}}}  \vdash_\mathcal{L}  \SCnt{C}$ s.t. $c(\Pi') \leq |B|$. Therefore, the
    proof $\Pi$ can be constructed as follows with
    $c(\Pi) = c(\Pi') \leq |B|$.
    \begin{center}
      \scriptsize
      \begin{math}
        $$\mprset{flushleft}
        \inferrule* [right={\tiny implL}] {
          {
            \begin{array}{cc}
              \pi_1 & \Pi' \\
              {\Delta_{{\mathrm{2}}}  \vdash_\mathcal{L}  \SCnt{A_{{\mathrm{1}}}}} & {\Gamma_{{\mathrm{1}}}  \SCsym{;}  \SCnt{A_{{\mathrm{2}}}}  \SCsym{;}  \Gamma_{{\mathrm{2}}}  \SCsym{;}  \Delta_{{\mathrm{1}}}  \SCsym{;}  \Gamma_{{\mathrm{3}}}  \vdash_\mathcal{L}  \SCnt{C}}
            \end{array}
          }
        }{\Gamma_{{\mathrm{1}}}  \SCsym{;}  \Delta_{{\mathrm{2}}}  \SCsym{;}  \SCnt{A_{{\mathrm{2}}}}  \leftharpoonup  \SCnt{A_{{\mathrm{1}}}}  \SCsym{;}  \Gamma_{{\mathrm{2}}}  \SCsym{;}  \Delta_{{\mathrm{1}}}  \SCsym{;}  \Gamma_{{\mathrm{3}}}  \vdash_\mathcal{L}  \SCnt{C}}
      \end{math}
    \end{center}
\end{itemize}

\subsubsection{Left introduction of the commutative tensor $\otimes$ (with low priority)}
\begin{itemize}
\item Case 1:
      \begin{center}
        \scriptsize
        \begin{math}
          \begin{array}{c}
            \Pi_1 \\
            {\Phi  \vdash_\mathcal{C}  \SCnt{X}}
          \end{array}
        \end{math}
        \qquad\qquad
        $\Pi_2$:
        \begin{math}
          $$\mprset{flushleft}
          \inferrule* [right={\tiny tenL}] {
            {
              \begin{array}{c}
                \pi \\
                {\Psi_{{\mathrm{1}}}  \SCsym{,}  \SCnt{X}  \SCsym{,}  \Psi_{{\mathrm{2}}}  \SCsym{,}  \SCnt{Y_{{\mathrm{1}}}}  \SCsym{,}  \SCnt{Y_{{\mathrm{2}}}}  \SCsym{,}  \Psi_{{\mathrm{3}}}  \vdash_\mathcal{C}  \SCnt{Z}}
              \end{array}
            }
          }{\Psi_{{\mathrm{1}}}  \SCsym{,}  \SCnt{X}  \SCsym{,}  \Psi_{{\mathrm{2}}}  \SCsym{,}  \SCnt{Y_{{\mathrm{1}}}}  \otimes  \SCnt{Y_{{\mathrm{2}}}}  \SCsym{,}  \Psi_{{\mathrm{3}}}  \vdash_\mathcal{C}  \SCnt{Z}}
        \end{math}
      \end{center}
      By assumption, $c(\Pi_1),c(\Pi_2)\leq |X|$. By induction on $\Pi_1$
      and $\pi$, there is a proof $\Pi'$ for sequent
      $\Psi_{{\mathrm{1}}}  \SCsym{,}  \Phi  \SCsym{,}  \Psi_{{\mathrm{2}}}  \SCsym{,}  \SCnt{Y_{{\mathrm{1}}}}  \SCsym{,}  \SCnt{Y_{{\mathrm{2}}}}  \SCsym{,}  \Psi_{{\mathrm{3}}}  \vdash_\mathcal{C}  \SCnt{Z}$ s.t. $c(\Pi') \leq |X|$. Therefore,
      the proof $\Pi$ can be constructed as follows with
      $c(\Pi) = c(\Pi') \leq |X|$.
      \begin{center}
        \scriptsize
        \begin{math}
          $$\mprset{flushleft}
          \inferrule* [right={\tiny tenL}] {
            {
              \begin{array}{c}
                \Pi' \\
                {\Psi_{{\mathrm{1}}}  \SCsym{,}  \Phi  \SCsym{,}  \Psi_{{\mathrm{2}}}  \SCsym{,}  \SCnt{Y_{{\mathrm{1}}}}  \SCsym{,}  \SCnt{Y_{{\mathrm{2}}}}  \SCsym{,}  \Psi_{{\mathrm{3}}}  \vdash_\mathcal{C}  \SCnt{Z}}
              \end{array}
            }
          }{\Psi_{{\mathrm{1}}}  \SCsym{,}  \Phi  \SCsym{,}  \Psi_{{\mathrm{2}}}  \SCsym{,}  \SCnt{Y_{{\mathrm{1}}}}  \otimes  \SCnt{Y_{{\mathrm{2}}}}  \SCsym{,}  \Psi_{{\mathrm{3}}}  \vdash_\mathcal{C}  \SCnt{Z}}
        \end{math}
      \end{center}

\item Case 2:
      \begin{center}
        \scriptsize
        \begin{math}
          \begin{array}{c}
            \Pi_1 \\
            {\Phi  \vdash_\mathcal{C}  \SCnt{X}}
          \end{array}
        \end{math}
        \qquad\qquad
        $\Pi_2$:
        \begin{math}
          $$\mprset{flushleft}
          \inferrule* [right={\tiny tenL}] {
            {
              \begin{array}{c}
                \pi \\
                {\Psi_{{\mathrm{1}}}  \SCsym{,}  \SCnt{Y_{{\mathrm{1}}}}  \SCsym{,}  \SCnt{Y_{{\mathrm{2}}}}  \SCsym{,}  \Psi_{{\mathrm{2}}}  \SCsym{,}  \SCnt{X}  \SCsym{,}  \Psi_{{\mathrm{3}}}  \vdash_\mathcal{C}  \SCnt{Z}}
              \end{array}
            }
          }{\Psi_{{\mathrm{1}}}  \SCsym{,}  \SCnt{Y_{{\mathrm{1}}}}  \otimes  \SCnt{Y_{{\mathrm{2}}}}  \SCsym{,}  \Psi_{{\mathrm{2}}}  \SCsym{,}  \SCnt{X}  \SCsym{,}  \Psi_{{\mathrm{3}}}  \vdash_\mathcal{C}  \SCnt{Z}}
        \end{math}
      \end{center}
      By assumption, $c(\Pi_1),c(\Pi_2)\leq |X|$. By induction on $\Pi_1$
      and $\pi$, there is a proof $\Pi'$ for sequent
      $\Psi_{{\mathrm{1}}}  \SCsym{,}  \SCnt{Y_{{\mathrm{1}}}}  \SCsym{,}  \SCnt{Y_{{\mathrm{2}}}}  \SCsym{,}  \Psi_{{\mathrm{2}}}  \SCsym{,}  \Phi  \SCsym{,}  \Psi_{{\mathrm{3}}}  \vdash_\mathcal{C}  \SCnt{Z}$ s.t. $c(\Pi') \leq |X|$. Therefore,
      the proof $\Pi$ can be constructed as follows with
      $c(\Pi) = c(\Pi') \leq |X|$.
      \begin{center}
        \scriptsize
        \begin{math}
          $$\mprset{flushleft}
          \inferrule* [right={\tiny tenL}] {
            {
              \begin{array}{c}
                \Pi' \\
                {\Psi_{{\mathrm{1}}}  \SCsym{,}  \SCnt{Y_{{\mathrm{1}}}}  \SCsym{,}  \SCnt{Y_{{\mathrm{2}}}}  \SCsym{,}  \Psi_{{\mathrm{2}}}  \SCsym{,}  \Phi  \SCsym{,}  \Psi_{{\mathrm{3}}}  \vdash_\mathcal{C}  \SCnt{Z}}
              \end{array}
            }
          }{\Psi_{{\mathrm{1}}}  \SCsym{,}  \SCnt{Y_{{\mathrm{1}}}}  \otimes  \SCnt{Y_{{\mathrm{2}}}}  \SCsym{,}  \Psi_{{\mathrm{2}}}  \SCsym{,}  \Phi  \SCsym{,}  \Psi_{{\mathrm{3}}}  \vdash_\mathcal{C}  \SCnt{Z}}
        \end{math}
      \end{center}

\item Case 3:
      \begin{center}
        \scriptsize
        \begin{math}
          \begin{array}{c}
            \Pi_1 \\
            {\Phi  \vdash_\mathcal{C}  \SCnt{X}}
          \end{array}
        \end{math}
        \qquad\qquad
        $\Pi_2$:
        \begin{math}
          $$\mprset{flushleft}
          \inferrule* [right={\tiny tenL}] {
            {
              \begin{array}{c}
                \pi \\
                {\Gamma_{{\mathrm{1}}}  \SCsym{;}  \SCnt{X}  \SCsym{;}  \Gamma_{{\mathrm{2}}}  \SCsym{;}  \SCnt{Y_{{\mathrm{1}}}}  \SCsym{;}  \SCnt{Y_{{\mathrm{2}}}}  \SCsym{;}  \Gamma_{{\mathrm{3}}}  \vdash_\mathcal{L}  \SCnt{A}}
              \end{array}
            }
          }{\Gamma_{{\mathrm{1}}}  \SCsym{;}  \SCnt{X}  \SCsym{;}  \Gamma_{{\mathrm{2}}}  \SCsym{;}  \SCnt{Y_{{\mathrm{1}}}}  \otimes  \SCnt{Y_{{\mathrm{2}}}}  \SCsym{;}  \Gamma_{{\mathrm{3}}}  \vdash_\mathcal{L}  \SCnt{A}}
        \end{math}
      \end{center}
      By assumption, $c(\Pi_1),c(\Pi_2)\leq |X|$. By induction on $\Pi_1$
      and $\pi$, there is a proof $\Pi'$ for sequent
      $\Gamma_{{\mathrm{1}}}  \SCsym{;}  \Phi  \SCsym{;}  \Gamma_{{\mathrm{2}}}  \SCsym{;}  \SCnt{Y_{{\mathrm{1}}}}  \SCsym{;}  \SCnt{Y_{{\mathrm{2}}}}  \SCsym{;}  \Gamma_{{\mathrm{3}}}  \vdash_\mathcal{L}  \SCnt{A}$ s.t. $c(\Pi') \leq |X|$. Therefore,
      the proof $\Pi$ can be constructed as follows with
      $c(\Pi) = c(\Pi') \leq |X|$.
      \begin{center}
        \scriptsize
        \begin{math}
          $$\mprset{flushleft}
          \inferrule* [right={\tiny tenL}] {
            {
              \begin{array}{c}
                \Pi' \\
                {\Gamma_{{\mathrm{1}}}  \SCsym{;}  \Phi  \SCsym{;}  \Gamma_{{\mathrm{2}}}  \SCsym{;}  \SCnt{Y_{{\mathrm{1}}}}  \SCsym{;}  \SCnt{Y_{{\mathrm{2}}}}  \SCsym{;}  \Gamma_{{\mathrm{3}}}  \vdash_\mathcal{L}  \SCnt{A}}
              \end{array}
            }
          }{\Gamma_{{\mathrm{1}}}  \SCsym{;}  \Phi  \SCsym{;}  \Gamma_{{\mathrm{2}}}  \SCsym{;}  \SCnt{Y_{{\mathrm{1}}}}  \otimes  \SCnt{Y_{{\mathrm{2}}}}  \SCsym{;}  \Gamma_{{\mathrm{3}}}  \vdash_\mathcal{L}  \SCnt{A}}
        \end{math}
      \end{center}

\item Case 4:
      \begin{center}
        \scriptsize
        \begin{math}
          \begin{array}{c}
            \Pi_1 \\
            {\Delta  \vdash_\mathcal{L}  \SCnt{B}}
          \end{array}
        \end{math}
        \qquad\qquad
        $\Pi_2$:
        \begin{math}
          $$\mprset{flushleft}
          \inferrule* [right={\tiny tenL}] {
            {
              \begin{array}{c}
                \pi \\
                {\Gamma_{{\mathrm{1}}}  \SCsym{;}  \SCnt{B}  \SCsym{;}  \Gamma_{{\mathrm{2}}}  \SCsym{;}  \SCnt{Y_{{\mathrm{1}}}}  \SCsym{;}  \SCnt{Y_{{\mathrm{2}}}}  \SCsym{;}  \Gamma_{{\mathrm{3}}}  \vdash_\mathcal{L}  \SCnt{A}}
              \end{array}
            }
          }{\Gamma_{{\mathrm{1}}}  \SCsym{;}  \SCnt{B}  \SCsym{;}  \Gamma_{{\mathrm{2}}}  \SCsym{;}  \SCnt{Y_{{\mathrm{1}}}}  \otimes  \SCnt{Y_{{\mathrm{2}}}}  \SCsym{;}  \Gamma_{{\mathrm{3}}}  \vdash_\mathcal{L}  \SCnt{A}}
        \end{math}
      \end{center}
      By assumption, $c(\Pi_1),c(\Pi_2)\leq |B|$. By induction on $\Pi_1$
      and $\pi$, there is a proof $\Pi'$ for sequent
      $\Gamma_{{\mathrm{1}}}  \SCsym{;}  \SCnt{B}  \SCsym{;}  \Gamma_{{\mathrm{2}}}  \SCsym{;}  \SCnt{Y_{{\mathrm{1}}}}  \SCsym{;}  \SCnt{Y_{{\mathrm{2}}}}  \SCsym{;}  \Gamma_{{\mathrm{3}}}  \vdash_\mathcal{L}  \SCnt{A}$ s.t. $c(\Pi') \leq |B|$. Therefore,
      the proof $\Pi$ can be constructed as follows with
      $c(\Pi) = c(\Pi') \leq |B|$.
      \begin{center}
        \scriptsize
        \begin{math}
          $$\mprset{flushleft}
          \inferrule* [right={\tiny tenL}] {
            {
              \begin{array}{c}
                \Pi' \\
                {\Gamma_{{\mathrm{1}}}  \SCsym{;}  \Delta  \SCsym{;}  \Gamma_{{\mathrm{2}}}  \SCsym{;}  \SCnt{Y_{{\mathrm{1}}}}  \SCsym{;}  \SCnt{Y_{{\mathrm{2}}}}  \SCsym{;}  \Gamma_{{\mathrm{3}}}  \vdash_\mathcal{L}  \SCnt{A}}
              \end{array}
            }
          }{\Gamma_{{\mathrm{1}}}  \SCsym{;}  \Delta  \SCsym{;}  \Gamma_{{\mathrm{2}}}  \SCsym{;}  \SCnt{Y_{{\mathrm{1}}}}  \otimes  \SCnt{Y_{{\mathrm{2}}}}  \SCsym{;}  \Gamma_{{\mathrm{3}}}  \vdash_\mathcal{L}  \SCnt{A}}
        \end{math}
      \end{center}

\item Case 5:
      \begin{center}
        \scriptsize
        \begin{math}
          \begin{array}{c}
            \Pi_1 \\
            {\Phi  \vdash_\mathcal{C}  \SCnt{X}}
          \end{array}
        \end{math}
        \qquad\qquad
        $\Pi_2$:
        \begin{math}
          $$\mprset{flushleft}
          \inferrule* [right={\tiny tenL}] {
            {
              \begin{array}{c}
                \pi \\
                {\Gamma_{{\mathrm{1}}}  \SCsym{;}  \SCnt{Y_{{\mathrm{1}}}}  \SCsym{;}  \SCnt{Y_{{\mathrm{2}}}}  \SCsym{;}  \Gamma_{{\mathrm{2}}}  \SCsym{;}  \SCnt{X}  \SCsym{;}  \Gamma_{{\mathrm{3}}}  \vdash_\mathcal{L}  \SCnt{A}}
              \end{array}
            }
          }{\Gamma_{{\mathrm{1}}}  \SCsym{;}  \SCnt{Y_{{\mathrm{1}}}}  \otimes  \SCnt{Y_{{\mathrm{2}}}}  \SCsym{;}  \Gamma_{{\mathrm{2}}}  \SCsym{;}  \SCnt{X}  \SCsym{;}  \Gamma_{{\mathrm{3}}}  \vdash_\mathcal{L}  \SCnt{A}}
        \end{math}
      \end{center}
      By assumption, $c(\Pi_1),c(\Pi_2)\leq |X|$. By induction on $\Pi_1$
      and $\pi$, there is a proof $\Pi'$ for sequent
      $\Gamma_{{\mathrm{1}}}  \SCsym{;}  \SCnt{Y_{{\mathrm{1}}}}  \SCsym{;}  \SCnt{Y_{{\mathrm{2}}}}  \SCsym{;}  \Gamma_{{\mathrm{2}}}  \SCsym{;}  \Phi  \SCsym{;}  \Gamma_{{\mathrm{3}}}  \vdash_\mathcal{L}  \SCnt{A}$ s.t. $c(\Pi') \leq |X|$. Therefore,
      the proof $\Pi$ can be constructed as follows with
      $c(\Pi) = c(\Pi') \leq |X|$.
      \begin{center}
        \scriptsize
        \begin{math}
          $$\mprset{flushleft}
          \inferrule* [right={\tiny tenL}] {
            {
              \begin{array}{c}
                \Pi' \\
                {\Gamma_{{\mathrm{1}}}  \SCsym{;}  \SCnt{Y_{{\mathrm{1}}}}  \SCsym{;}  \SCnt{Y_{{\mathrm{2}}}}  \SCsym{;}  \Gamma_{{\mathrm{2}}}  \SCsym{;}  \Phi  \SCsym{;}  \Gamma_{{\mathrm{3}}}  \vdash_\mathcal{L}  \SCnt{A}}
              \end{array}
            }
          }{\Gamma_{{\mathrm{1}}}  \SCsym{;}  \SCnt{Y_{{\mathrm{1}}}}  \otimes  \SCnt{Y_{{\mathrm{2}}}}  \SCsym{;}  \Gamma_{{\mathrm{2}}}  \SCsym{;}  \Phi  \SCsym{;}  \Gamma_{{\mathrm{3}}}  \vdash_\mathcal{L}  \SCnt{A}}
        \end{math}
      \end{center}

\item Case 6:
      \begin{center}
        \scriptsize
        \begin{math}
          \begin{array}{c}
            \Pi_1 \\
            {\Delta  \vdash_\mathcal{L}  \SCnt{B}}
          \end{array}
        \end{math}
        \qquad\qquad
        $\Pi_2$:
        \begin{math}
          $$\mprset{flushleft}
          \inferrule* [right={\tiny tenL}] {
            {
              \begin{array}{c}
                \pi \\
                {\Gamma_{{\mathrm{1}}}  \SCsym{;}  \SCnt{Y_{{\mathrm{1}}}}  \SCsym{;}  \SCnt{Y_{{\mathrm{2}}}}  \SCsym{;}  \Gamma_{{\mathrm{2}}}  \SCsym{;}  \SCnt{B}  \SCsym{;}  \Gamma_{{\mathrm{3}}}  \vdash_\mathcal{L}  \SCnt{A}}
              \end{array}
            }
          }{\Gamma_{{\mathrm{1}}}  \SCsym{;}  \SCnt{Y_{{\mathrm{1}}}}  \otimes  \SCnt{Y_{{\mathrm{2}}}}  \SCsym{;}  \Gamma_{{\mathrm{2}}}  \SCsym{;}  \SCnt{B}  \SCsym{;}  \Gamma_{{\mathrm{3}}}  \vdash_\mathcal{L}  \SCnt{A}}
        \end{math}
      \end{center}
      By assumption, $c(\Pi_1),c(\Pi_2)\leq |B|$. By induction on $\Pi_1$
      and $\pi$, there is a proof $\Pi'$ for sequent
      $\Gamma_{{\mathrm{1}}}  \SCsym{;}  \SCnt{Y_{{\mathrm{1}}}}  \SCsym{;}  \SCnt{Y_{{\mathrm{2}}}}  \SCsym{;}  \Gamma_{{\mathrm{2}}}  \SCsym{;}  \Delta  \SCsym{;}  \Gamma_{{\mathrm{3}}}  \vdash_\mathcal{L}  \SCnt{A}$ s.t. $c(\Pi') \leq |B|$. Therefore,
      the proof $\Pi$ can be constructed as follows with
      $c(\Pi) = c(\Pi') \leq |B|$.
      \begin{center}
        \scriptsize
        \begin{math}
          $$\mprset{flushleft}
          \inferrule* [right={\tiny tenL}] {
            {
              \begin{array}{c}
                \Pi' \\
                {\Gamma_{{\mathrm{1}}}  \SCsym{;}  \SCnt{Y_{{\mathrm{1}}}}  \SCsym{;}  \SCnt{Y_{{\mathrm{2}}}}  \SCsym{;}  \Gamma_{{\mathrm{2}}}  \SCsym{;}  \Delta  \SCsym{;}  \Gamma_{{\mathrm{3}}}  \vdash_\mathcal{L}  \SCnt{A}}
              \end{array}
            }
          }{\Gamma_{{\mathrm{1}}}  \SCsym{;}  \SCnt{Y_{{\mathrm{1}}}}  \otimes  \SCnt{Y_{{\mathrm{2}}}}  \SCsym{;}  \Gamma_{{\mathrm{2}}}  \SCsym{;}  \Delta  \SCsym{;}  \Gamma_{{\mathrm{3}}}  \vdash_\mathcal{L}  \SCnt{A}}
        \end{math}
      \end{center}
\end{itemize}

\subsubsection{Left introduction of the non-commutative tensor $\tri$ (with low priority)}:
\begin{itemize}
\item Case 1:
      \begin{center}
        \scriptsize
        \begin{math}
          \begin{array}{c}
            \Pi_1 \\
            {\Phi  \vdash_\mathcal{C}  \SCnt{X}}
          \end{array}
        \end{math}
        \qquad\qquad
        $\Pi_2$:
        \begin{math}
          $$\mprset{flushleft}
          \inferrule* [right={\tiny tenL}] {
            {
              \begin{array}{c}
                \pi \\
                {\Gamma_{{\mathrm{1}}}  \SCsym{;}  \SCnt{X}  \SCsym{;}  \Gamma_{{\mathrm{2}}}  \SCsym{;}  \SCnt{A_{{\mathrm{1}}}}  \SCsym{;}  \SCnt{A_{{\mathrm{2}}}}  \SCsym{;}  \Gamma_{{\mathrm{3}}}  \vdash_\mathcal{L}  \SCnt{B}}
              \end{array}
            }
          }{\Gamma_{{\mathrm{1}}}  \SCsym{;}  \SCnt{X}  \SCsym{;}  \Gamma_{{\mathrm{2}}}  \SCsym{;}  \SCnt{A_{{\mathrm{1}}}}  \triangleright  \SCnt{A_{{\mathrm{2}}}}  \SCsym{;}  \Gamma_{{\mathrm{3}}}  \vdash_\mathcal{L}  \SCnt{B}}
        \end{math}
      \end{center}
      By assumption, $c(\Pi_1),c(\Pi_2)\leq |X|$. By induction on $\Pi_1$
      and $\pi$, there is a proof $\Pi'$ for sequent
      $\Gamma_{{\mathrm{1}}}  \SCsym{;}  \Phi  \SCsym{;}  \Gamma_{{\mathrm{2}}}  \SCsym{;}  \SCnt{A_{{\mathrm{1}}}}  \SCsym{;}  \SCnt{A_{{\mathrm{2}}}}  \SCsym{;}  \Gamma_{{\mathrm{3}}}  \vdash_\mathcal{L}  \SCnt{B}$ s.t. $c(\Pi') \leq |X|$. Therefore,
      the proof $\Pi$ can be constructed as follows with
      $c(\Pi) = c(\Pi') \leq |X|$.
      \begin{center}
        \scriptsize
        \begin{math}
          $$\mprset{flushleft}
          \inferrule* [right={\tiny tenL}] {
            {
              \begin{array}{c}
                \Pi' \\
                {\Gamma_{{\mathrm{1}}}  \SCsym{;}  \Phi  \SCsym{;}  \Gamma_{{\mathrm{2}}}  \SCsym{;}  \SCnt{A_{{\mathrm{1}}}}  \SCsym{;}  \SCnt{A_{{\mathrm{2}}}}  \SCsym{;}  \Gamma_{{\mathrm{3}}}  \vdash_\mathcal{L}  \SCnt{B}}
              \end{array}
            }
          }{\Gamma_{{\mathrm{1}}}  \SCsym{;}  \Phi  \SCsym{;}  \Gamma_{{\mathrm{2}}}  \SCsym{;}  \SCnt{A_{{\mathrm{1}}}}  \triangleright  \SCnt{A_{{\mathrm{2}}}}  \SCsym{;}  \Gamma_{{\mathrm{3}}}  \vdash_\mathcal{L}  \SCnt{B}}
        \end{math}
      \end{center}

\item Case 2:
      \begin{center}
        \scriptsize
        \begin{math}
          \begin{array}{c}
            \Pi_1 \\
            {\Delta  \vdash_\mathcal{L}  \SCnt{B}}
          \end{array}
        \end{math}
        \qquad\qquad
        $\Pi_2$:
        \begin{math}
          $$\mprset{flushleft}
          \inferrule* [right={\tiny tenL}] {
            {
              \begin{array}{c}
                \pi \\
                {\Gamma_{{\mathrm{1}}}  \SCsym{;}  \SCnt{B}  \SCsym{;}  \Gamma_{{\mathrm{2}}}  \SCsym{;}  \SCnt{A_{{\mathrm{1}}}}  \SCsym{;}  \SCnt{A_{{\mathrm{2}}}}  \SCsym{;}  \Gamma_{{\mathrm{3}}}  \vdash_\mathcal{L}  \SCnt{C}}
              \end{array}
            }
          }{\Gamma_{{\mathrm{1}}}  \SCsym{;}  \SCnt{B}  \SCsym{;}  \Gamma_{{\mathrm{2}}}  \SCsym{;}  \SCnt{A_{{\mathrm{1}}}}  \triangleright  \SCnt{A_{{\mathrm{2}}}}  \SCsym{;}  \Gamma_{{\mathrm{3}}}  \vdash_\mathcal{L}  \SCnt{C}}
        \end{math}
      \end{center}
      By assumption, $c(\Pi_1),c(\Pi_2)\leq |B|$. By induction on $\Pi_1$
      and $\pi$, there is a proof $\Pi'$ for sequent
      $\Gamma_{{\mathrm{1}}}  \SCsym{;}  \Delta  \SCsym{;}  \Gamma_{{\mathrm{2}}}  \SCsym{;}  \SCnt{A_{{\mathrm{1}}}}  \SCsym{;}  \SCnt{A_{{\mathrm{2}}}}  \SCsym{;}  \Gamma_{{\mathrm{3}}}  \vdash_\mathcal{L}  \SCnt{C}$ s.t. $c(\Pi') \leq |B|$. Therefore,
      the proof $\Pi$ can be constructed as follows with
      $c(\Pi) = c(\Pi') \leq |B|$.
      \begin{center}
        \scriptsize
        \begin{math}
          $$\mprset{flushleft}
          \inferrule* [right={\tiny tenL}] {
            {
              \begin{array}{c}
                \Pi' \\
                {\Gamma_{{\mathrm{1}}}  \SCsym{;}  \Delta  \SCsym{;}  \Gamma_{{\mathrm{2}}}  \SCsym{;}  \SCnt{A_{{\mathrm{1}}}}  \SCsym{;}  \SCnt{A_{{\mathrm{2}}}}  \SCsym{;}  \Gamma_{{\mathrm{3}}}  \vdash_\mathcal{L}  \SCnt{C}}
              \end{array}
            }
          }{\Gamma_{{\mathrm{1}}}  \SCsym{;}  \Delta  \SCsym{;}  \Gamma_{{\mathrm{2}}}  \SCsym{;}  \SCnt{A_{{\mathrm{1}}}}  \triangleright  \SCnt{A_{{\mathrm{2}}}}  \SCsym{;}  \Gamma_{{\mathrm{3}}}  \vdash_\mathcal{L}  \SCnt{C}}
        \end{math}
      \end{center}

\item Case 3:
      \begin{center}
        \scriptsize
        \begin{math}
          \begin{array}{c}
            \Pi_1 \\
            {\Phi  \vdash_\mathcal{C}  \SCnt{X}}
          \end{array}
        \end{math}
        \qquad\qquad
        $\Pi_2$:
        \begin{math}
          $$\mprset{flushleft}
          \inferrule* [right={\tiny tenL}] {
            {
              \begin{array}{c}
                \pi \\
                {\Gamma_{{\mathrm{1}}}  \SCsym{;}  \SCnt{A_{{\mathrm{1}}}}  \SCsym{;}  \SCnt{A_{{\mathrm{2}}}}  \SCsym{;}  \Gamma_{{\mathrm{2}}}  \SCsym{;}  \SCnt{X}  \SCsym{;}  \Gamma_{{\mathrm{3}}}  \vdash_\mathcal{L}  \SCnt{B}}
              \end{array}
            }
          }{\Gamma_{{\mathrm{1}}}  \SCsym{;}  \SCnt{A_{{\mathrm{1}}}}  \triangleright  \SCnt{A_{{\mathrm{2}}}}  \SCsym{;}  \Gamma_{{\mathrm{2}}}  \SCsym{;}  \SCnt{X}  \SCsym{;}  \Gamma_{{\mathrm{3}}}  \vdash_\mathcal{L}  \SCnt{B}}
        \end{math}
      \end{center}
      By assumption, $c(\Pi_1),c(\Pi_2)\leq |X|$. By induction on $\Pi_1$
      and $\pi$, there is a proof $\Pi'$ for sequent
      $\Gamma_{{\mathrm{1}}}  \SCsym{;}  \SCnt{A_{{\mathrm{1}}}}  \SCsym{;}  \SCnt{A_{{\mathrm{2}}}}  \SCsym{;}  \Gamma_{{\mathrm{2}}}  \SCsym{;}  \Phi  \SCsym{;}  \Gamma_{{\mathrm{3}}}  \vdash_\mathcal{L}  \SCnt{A}$ s.t. $c(\Pi') \leq |X|$. Therefore,
      the proof $\Pi$ can be constructed as follows with
      $c(\Pi) = c(\Pi') \leq |X|$.
      \begin{center}
        \scriptsize
        \begin{math}
          $$\mprset{flushleft}
          \inferrule* [right={\tiny tenL}] {
            {
              \begin{array}{c}
                \Pi' \\
                {\Gamma_{{\mathrm{1}}}  \SCsym{;}  \SCnt{A_{{\mathrm{1}}}}  \SCsym{;}  \SCnt{A_{{\mathrm{2}}}}  \SCsym{;}  \Gamma_{{\mathrm{2}}}  \SCsym{;}  \Phi  \SCsym{;}  \Gamma_{{\mathrm{3}}}  \vdash_\mathcal{L}  \SCnt{B}}
              \end{array}
            }
          }{\Gamma_{{\mathrm{1}}}  \SCsym{;}  \SCnt{A_{{\mathrm{1}}}}  \triangleright  \SCnt{A_{{\mathrm{2}}}}  \SCsym{;}  \Gamma_{{\mathrm{2}}}  \SCsym{;}  \Phi  \SCsym{;}  \Gamma_{{\mathrm{3}}}  \vdash_\mathcal{L}  \SCnt{B}}
        \end{math}
      \end{center}

\item Case 4:
      \begin{center}
        \scriptsize
        \begin{math}
          \begin{array}{c}
            \Pi_1 \\
            {\Delta  \vdash_\mathcal{L}  \SCnt{B}}
          \end{array}
        \end{math}
        \qquad\qquad
        $\Pi_2$:
        \begin{math}
          $$\mprset{flushleft}
          \inferrule* [right={\tiny tenL}] {
            {
              \begin{array}{c}
                \pi \\
                {\Gamma_{{\mathrm{1}}}  \SCsym{;}  \SCnt{A_{{\mathrm{1}}}}  \SCsym{;}  \SCnt{A_{{\mathrm{2}}}}  \SCsym{;}  \Gamma_{{\mathrm{2}}}  \SCsym{;}  \SCnt{B}  \SCsym{;}  \Gamma_{{\mathrm{3}}}  \vdash_\mathcal{L}  \SCnt{C}}
              \end{array}
            }
          }{\Gamma_{{\mathrm{1}}}  \SCsym{;}  \SCnt{A_{{\mathrm{1}}}}  \triangleright  \SCnt{A_{{\mathrm{2}}}}  \SCsym{;}  \Gamma_{{\mathrm{2}}}  \SCsym{;}  \SCnt{B}  \SCsym{;}  \Gamma_{{\mathrm{3}}}  \vdash_\mathcal{L}  \SCnt{C}}
        \end{math}
      \end{center}
      By assumption, $c(\Pi_1),c(\Pi_2)\leq |B|$. By induction on $\Pi_1$
      and $\pi$, there is a proof $\Pi'$ for sequent
      $\Gamma_{{\mathrm{1}}}  \SCsym{;}  \SCnt{A_{{\mathrm{1}}}}  \SCsym{;}  \SCnt{A_{{\mathrm{2}}}}  \SCsym{;}  \Gamma_{{\mathrm{2}}}  \SCsym{;}  \Delta  \SCsym{;}  \Gamma_{{\mathrm{3}}}  \vdash_\mathcal{L}  \SCnt{C}$ s.t. $c(\Pi') \leq |B|$. Therefore,
      the proof $\Pi$ can be constructed as follows with
      $c(\Pi) = c(\Pi') \leq |B|$.
      \begin{center}
        \scriptsize
        \begin{math}
          $$\mprset{flushleft}
          \inferrule* [right={\tiny tenL}] {
            {
              \begin{array}{c}
                \Pi' \\
                {\Gamma_{{\mathrm{1}}}  \SCsym{;}  \SCnt{A_{{\mathrm{1}}}}  \SCsym{;}  \SCnt{A_{{\mathrm{2}}}}  \SCsym{;}  \Gamma_{{\mathrm{2}}}  \SCsym{;}  \Delta  \SCsym{;}  \Gamma_{{\mathrm{3}}}  \vdash_\mathcal{L}  \SCnt{C}}
              \end{array}
            }
          }{\Gamma_{{\mathrm{1}}}  \SCsym{;}  \SCnt{A_{{\mathrm{1}}}}  \triangleright  \SCnt{A_{{\mathrm{2}}}}  \SCsym{;}  \Gamma_{{\mathrm{2}}}  \SCsym{;}  \Delta  \SCsym{;}  \Gamma_{{\mathrm{3}}}  \vdash_\mathcal{L}  \SCnt{C}}
        \end{math}
      \end{center}
\end{itemize}

\subsubsection{$\SCdruleTXXexName$}
\begin{itemize}
\item Case 1:
      \begin{center}
        \scriptsize
        \begin{math}
          \begin{array}{c}
            \Pi_1 \\
            {\Phi  \vdash_\mathcal{C}  \SCnt{X}}
          \end{array}
        \end{math}
        \qquad\qquad
        $\Pi_2$:
        \begin{math}
          $$\mprset{flushleft}
          \inferrule* [right={\tiny beta}] {
            {
              \begin{array}{c}
                \pi \\
                {\Psi_{{\mathrm{1}}}  \SCsym{,}  \SCnt{X}  \SCsym{,}  \Psi_{{\mathrm{2}}}  \SCsym{,}  \SCnt{Y_{{\mathrm{1}}}}  \SCsym{,}  \SCnt{Y_{{\mathrm{2}}}}  \SCsym{,}  \Psi_{{\mathrm{3}}}  \vdash_\mathcal{C}  \SCnt{Z}}
              \end{array}
            }
          }{\Psi_{{\mathrm{1}}}  \SCsym{,}  \SCnt{X}  \SCsym{,}  \Psi_{{\mathrm{2}}}  \SCsym{,}  \SCnt{Y_{{\mathrm{2}}}}  \SCsym{,}  \SCnt{Y_{{\mathrm{1}}}}  \SCsym{,}  \Psi_{{\mathrm{3}}}  \vdash_\mathcal{C}  \SCnt{Z}}
        \end{math}
      \end{center}
      By assumption, $c(\Pi_1),c(\Pi_2)\leq |X|$. By induction on $\Pi_1$
      and $\pi$, there is a proof $\Pi'$ for sequent
      $\Psi_{{\mathrm{1}}}  \SCsym{,}  \Phi  \SCsym{,}  \Psi_{{\mathrm{2}}}  \SCsym{,}  \SCnt{Y_{{\mathrm{1}}}}  \SCsym{,}  \SCnt{Y_{{\mathrm{2}}}}  \SCsym{,}  \Psi_{{\mathrm{3}}}  \vdash_\mathcal{C}  \SCnt{Z}$ s.t. $c(\Pi') \leq |X|$. Therefore,
      the proof $\Pi$ can be constructed as follows with
      $c(\Pi) = c(\Pi') \leq |X|$.
      \begin{center}
        \scriptsize
        \begin{math}
          $$\mprset{flushleft}
          \inferrule* [right={\tiny cut}] {
            {
              \begin{array}{cc}
                \Pi' \\
                {\Psi_{{\mathrm{1}}}  \SCsym{,}  \Phi  \SCsym{,}  \Psi_{{\mathrm{2}}}  \SCsym{,}  \SCnt{Y_{{\mathrm{1}}}}  \SCsym{,}  \SCnt{Y_{{\mathrm{2}}}}  \SCsym{,}  \Psi_{{\mathrm{3}}}  \vdash_\mathcal{C}  \SCnt{Z}}
              \end{array}
            }
          }{\Psi_{{\mathrm{1}}}  \SCsym{,}  \Phi  \SCsym{,}  \Psi_{{\mathrm{2}}}  \SCsym{,}  \SCnt{Y_{{\mathrm{2}}}}  \SCsym{,}  \SCnt{Y_{{\mathrm{1}}}}  \SCsym{,}  \Psi_{{\mathrm{3}}}  \vdash_\mathcal{C}  \SCnt{Z}}
        \end{math}
      \end{center}

\item Case 2:
      \begin{center}
        \scriptsize
        \begin{math}
          \begin{array}{c}
            \Pi_1 \\
            {\Phi  \vdash_\mathcal{C}  \SCnt{X}}
          \end{array}
        \end{math}
        \qquad\qquad
        $\Pi_2$:
        \begin{math}
          $$\mprset{flushleft}
          \inferrule* [right={\tiny beta}] {
            {
              \begin{array}{c}
                \pi \\
                {\Psi_{{\mathrm{1}}}  \SCsym{,}  \SCnt{Y_{{\mathrm{1}}}}  \SCsym{,}  \SCnt{Y_{{\mathrm{2}}}}  \SCsym{,}  \Psi_{{\mathrm{2}}}  \SCsym{,}  \SCnt{X}  \SCsym{,}  \Psi_{{\mathrm{3}}}  \vdash_\mathcal{C}  \SCnt{Z}}
              \end{array}
            }
          }{\Psi_{{\mathrm{1}}}  \SCsym{,}  \SCnt{X}  \SCsym{,}  \Psi_{{\mathrm{2}}}  \SCsym{,}  \SCnt{Y_{{\mathrm{2}}}}  \SCsym{,}  \SCnt{Y_{{\mathrm{1}}}}  \SCsym{,}  \Psi_{{\mathrm{3}}}  \vdash_\mathcal{C}  \SCnt{Z}}
        \end{math}
      \end{center}
      By assumption, $c(\Pi_1),c(\Pi_2)\leq |X|$. By induction on $\Pi_1$
      and $\pi$, there is a proof $\Pi'$ for sequent
      $\Psi_{{\mathrm{1}}}  \SCsym{,}  \SCnt{Y_{{\mathrm{1}}}}  \SCsym{,}  \SCnt{Y_{{\mathrm{2}}}}  \SCsym{,}  \Psi_{{\mathrm{2}}}  \SCsym{,}  \Phi  \SCsym{,}  \Psi_{{\mathrm{3}}}  \vdash_\mathcal{C}  \SCnt{Z}$ s.t. $c(\Pi') \leq |X|$. Therefore,
      the proof $\Pi$ can be constructed as follows with
      $c(\Pi) = c(\Pi') \leq |X|$.
      \begin{center}
        \scriptsize
        \begin{math}
          $$\mprset{flushleft}
          \inferrule* [right={\tiny cut}] {
            {
              \begin{array}{cc}
                \Pi' \\
                {\Psi_{{\mathrm{1}}}  \SCsym{,}  \SCnt{Y_{{\mathrm{1}}}}  \SCsym{,}  \SCnt{Y_{{\mathrm{2}}}}  \SCsym{,}  \Psi_{{\mathrm{2}}}  \SCsym{,}  \Phi  \SCsym{,}  \Psi_{{\mathrm{3}}}  \vdash_\mathcal{C}  \SCnt{Z}}
              \end{array}
            }
          }{\Psi_{{\mathrm{1}}}  \SCsym{,}  \SCnt{Y_{{\mathrm{2}}}}  \SCsym{,}  \SCnt{Y_{{\mathrm{1}}}}  \SCsym{,}  \Psi_{{\mathrm{2}}}  \SCsym{,}  \Phi  \SCsym{,}  \Psi_{{\mathrm{3}}}  \vdash_\mathcal{C}  \SCnt{Z}}
        \end{math}
      \end{center}
\end{itemize}

\subsubsection{$\SCdruleSXXexName$}
\begin{itemize}
\item Case 1:
      \begin{center}
        \scriptsize
        \begin{math}
          \begin{array}{c}
            \Pi_1 \\
            {\Phi  \vdash_\mathcal{C}  \SCnt{X}}
          \end{array}
        \end{math}
        \qquad\qquad
        $\Pi_2$:
        \begin{math}
          $$\mprset{flushleft}
          \inferrule* [right={\tiny beta}] {
            {
              \begin{array}{c}
                \pi \\
                {\Gamma_{{\mathrm{1}}}  \SCsym{;}  \SCnt{X}  \SCsym{;}  \Gamma_{{\mathrm{2}}}  \SCsym{;}  \SCnt{Y_{{\mathrm{1}}}}  \SCsym{;}  \SCnt{Y_{{\mathrm{2}}}}  \SCsym{;}  \Gamma_{{\mathrm{3}}}  \vdash_\mathcal{L}  \SCnt{A}}
              \end{array}
            }
          }{\Gamma_{{\mathrm{1}}}  \SCsym{;}  \SCnt{X}  \SCsym{;}  \Gamma_{{\mathrm{2}}}  \SCsym{;}  \SCnt{Y_{{\mathrm{2}}}}  \SCsym{;}  \SCnt{Y_{{\mathrm{1}}}}  \SCsym{;}  \Gamma_{{\mathrm{3}}}  \vdash_\mathcal{L}  \SCnt{A}}
        \end{math}
      \end{center}
      By assumption, $c(\Pi_1),c(\Pi_2)\leq |X|$. By induction on $\Pi_1$
      and $\pi$, there is a proof $\Pi'$ for sequent
      $\Gamma_{{\mathrm{1}}}  \SCsym{;}  \Phi  \SCsym{;}  \Gamma_{{\mathrm{2}}}  \SCsym{;}  \SCnt{Y_{{\mathrm{1}}}}  \SCsym{;}  \SCnt{Y_{{\mathrm{2}}}}  \SCsym{;}  \Gamma_{{\mathrm{3}}}  \vdash_\mathcal{L}  \SCnt{A}$ s.t. $c(\Pi') \leq |X|$. Therefore,
      the proof $\Pi$ can be constructed as follows with
      $c(\Pi) = c(\Pi') \leq |X|$.
      \begin{center}
        \scriptsize
        \begin{math}
          $$\mprset{flushleft}
          \inferrule* [right={\tiny cut}] {
            {
              \begin{array}{cc}
                \Pi' \\
                {\Gamma_{{\mathrm{1}}}  \SCsym{;}  \Phi  \SCsym{;}  \Gamma_{{\mathrm{2}}}  \SCsym{;}  \SCnt{Y_{{\mathrm{1}}}}  \SCsym{;}  \SCnt{Y_{{\mathrm{2}}}}  \SCsym{;}  \Gamma_{{\mathrm{3}}}  \vdash_\mathcal{L}  \SCnt{A}}
              \end{array}
            }
          }{\Gamma_{{\mathrm{1}}}  \SCsym{;}  \Phi  \SCsym{;}  \Gamma_{{\mathrm{2}}}  \SCsym{;}  \SCnt{Y_{{\mathrm{2}}}}  \SCsym{;}  \SCnt{Y_{{\mathrm{1}}}}  \SCsym{;}  \Gamma_{{\mathrm{3}}}  \vdash_\mathcal{L}  \SCnt{A}}
        \end{math}
      \end{center}

\item Case 2:
      \begin{center}
        \scriptsize
        \begin{math}
          \begin{array}{c}
            \Pi_1 \\
            {\Delta  \vdash_\mathcal{L}  \SCnt{B}}
          \end{array}
        \end{math}
        \qquad\qquad
        $\Pi_2$:
        \begin{math}
          $$\mprset{flushleft}
          \inferrule* [right={\tiny beta}] {
            {
              \begin{array}{c}
                \pi \\
                {\Gamma_{{\mathrm{1}}}  \SCsym{;}  \SCnt{B}  \SCsym{;}  \Gamma_{{\mathrm{2}}}  \SCsym{;}  \SCnt{Y_{{\mathrm{1}}}}  \SCsym{;}  \SCnt{Y_{{\mathrm{2}}}}  \SCsym{;}  \Gamma_{{\mathrm{3}}}  \vdash_\mathcal{L}  \SCnt{A}}
              \end{array}
            }
          }{\Gamma_{{\mathrm{1}}}  \SCsym{;}  \SCnt{B}  \SCsym{;}  \Gamma_{{\mathrm{2}}}  \SCsym{;}  \SCnt{Y_{{\mathrm{2}}}}  \SCsym{;}  \SCnt{Y_{{\mathrm{1}}}}  \SCsym{;}  \Gamma_{{\mathrm{3}}}  \vdash_\mathcal{L}  \SCnt{A}}
        \end{math}
      \end{center}
      By assumption, $c(\Pi_1),c(\Pi_2)\leq |X|$. By induction on $\Pi_1$
      and $\pi$, there is a proof $\Pi'$ for sequent
      $\Gamma_{{\mathrm{1}}}  \SCsym{;}  \Delta  \SCsym{;}  \Gamma_{{\mathrm{2}}}  \SCsym{;}  \SCnt{Y_{{\mathrm{1}}}}  \SCsym{;}  \SCnt{Y_{{\mathrm{2}}}}  \SCsym{;}  \Gamma_{{\mathrm{3}}}  \vdash_\mathcal{L}  \SCnt{A}$ s.t. $c(\Pi') \leq |X|$. Therefore,
      the proof $\Pi$ can be constructed as follows with
      $c(\Pi) = c(\Pi') \leq |X|$.
      \begin{center}
        \scriptsize
        \begin{math}
          $$\mprset{flushleft}
          \inferrule* [right={\tiny cut}] {
            {
              \begin{array}{cc}
                \Pi' \\
                {\Gamma_{{\mathrm{1}}}  \SCsym{;}  \Delta  \SCsym{;}  \Gamma_{{\mathrm{2}}}  \SCsym{;}  \SCnt{Y_{{\mathrm{1}}}}  \SCsym{;}  \SCnt{Y_{{\mathrm{2}}}}  \SCsym{;}  \Gamma_{{\mathrm{3}}}  \vdash_\mathcal{L}  \SCnt{A}}
              \end{array}
            }
          }{\Gamma_{{\mathrm{1}}}  \SCsym{;}  \Delta  \SCsym{;}  \Gamma_{{\mathrm{2}}}  \SCsym{;}  \SCnt{Y_{{\mathrm{2}}}}  \SCsym{;}  \SCnt{Y_{{\mathrm{1}}}}  \SCsym{;}  \Gamma_{{\mathrm{3}}}  \vdash_\mathcal{L}  \SCnt{A}}
        \end{math}
      \end{center}

\item Case 3:
      \begin{center}
        \scriptsize
        \begin{math}
          \begin{array}{c}
            \Pi_1 \\
            {\Phi  \vdash_\mathcal{C}  \SCnt{X}}
          \end{array}
        \end{math}
        \qquad\qquad
        $\Pi_2$:
        \begin{math}
          $$\mprset{flushleft}
          \inferrule* [right={\tiny beta}] {
            {
              \begin{array}{c}
                \pi \\
                {\Gamma_{{\mathrm{1}}}  \SCsym{;}  \SCnt{Y_{{\mathrm{1}}}}  \SCsym{;}  \SCnt{Y_{{\mathrm{2}}}}  \SCsym{;}  \Gamma_{{\mathrm{2}}}  \SCsym{;}  \SCnt{X}  \SCsym{;}  \Gamma_{{\mathrm{3}}}  \vdash_\mathcal{L}  \SCnt{A}}
              \end{array}
            }
          }{\Gamma_{{\mathrm{1}}}  \SCsym{;}  \SCnt{X}  \SCsym{;}  \Gamma_{{\mathrm{2}}}  \SCsym{;}  \SCnt{Y_{{\mathrm{2}}}}  \SCsym{;}  \SCnt{Y_{{\mathrm{1}}}}  \SCsym{;}  \Gamma_{{\mathrm{3}}}  \vdash_\mathcal{L}  \SCnt{A}}
        \end{math}
      \end{center}
      By assumption, $c(\Pi_1),c(\Pi_2)\leq |X|$. By induction on $\Pi_1$
      and $\pi$, there is a proof $\Pi'$ for sequent
      $\Gamma_{{\mathrm{1}}}  \SCsym{;}  \SCnt{Y_{{\mathrm{1}}}}  \SCsym{;}  \SCnt{Y_{{\mathrm{2}}}}  \SCsym{;}  \Gamma_{{\mathrm{2}}}  \SCsym{;}  \Phi  \SCsym{;}  \Gamma_{{\mathrm{3}}}  \vdash_\mathcal{L}  \SCnt{A}$ s.t. $c(\Pi') \leq |X|$. Therefore,
      the proof $\Pi$ can be constructed as follows with
      $c(\Pi) = c(\Pi') \leq |X|$.
      \begin{center}
        \scriptsize
        \begin{math}
          $$\mprset{flushleft}
          \inferrule* [right={\tiny cut}] {
            {
              \begin{array}{cc}
                \Pi' \\
                {\Gamma_{{\mathrm{1}}}  \SCsym{;}  \SCnt{Y_{{\mathrm{1}}}}  \SCsym{;}  \SCnt{Y_{{\mathrm{2}}}}  \SCsym{;}  \Gamma_{{\mathrm{2}}}  \SCsym{;}  \Phi  \SCsym{;}  \Gamma_{{\mathrm{3}}}  \vdash_\mathcal{L}  \SCnt{A}}
              \end{array}
            }
          }{\Gamma_{{\mathrm{1}}}  \SCsym{;}  \SCnt{Y_{{\mathrm{2}}}}  \SCsym{;}  \SCnt{Y_{{\mathrm{1}}}}  \SCsym{;}  \Gamma_{{\mathrm{2}}}  \SCsym{;}  \Phi  \SCsym{;}  \Gamma_{{\mathrm{3}}}  \vdash_\mathcal{L}  \SCnt{A}}
        \end{math}
      \end{center}

\item Case 4:
      \begin{center}
        \scriptsize
        \begin{math}
          \begin{array}{c}
            \Pi_1 \\
            {\Delta  \vdash_\mathcal{L}  \SCnt{B}}
          \end{array}
        \end{math}
        \qquad\qquad
        $\Pi_2$:
        \begin{math}
          $$\mprset{flushleft}
          \inferrule* [right={\tiny beta}] {
            {
              \begin{array}{c}
                \pi \\
                {\Gamma_{{\mathrm{1}}}  \SCsym{;}  \SCnt{Y_{{\mathrm{1}}}}  \SCsym{;}  \SCnt{Y_{{\mathrm{2}}}}  \SCsym{;}  \Gamma_{{\mathrm{2}}}  \SCsym{;}  \SCnt{B}  \SCsym{;}  \Gamma_{{\mathrm{3}}}  \vdash_\mathcal{L}  \SCnt{A}}
              \end{array}
            }
          }{\Gamma_{{\mathrm{1}}}  \SCsym{;}  \SCnt{Y_{{\mathrm{2}}}}  \SCsym{;}  \SCnt{Y_{{\mathrm{1}}}}  \SCsym{;}  \Gamma_{{\mathrm{2}}}  \SCsym{;}  \SCnt{B}  \SCsym{;}  \Gamma_{{\mathrm{3}}}  \vdash_\mathcal{L}  \SCnt{A}}
        \end{math}
      \end{center}
      By assumption, $c(\Pi_1),c(\Pi_2)\leq |X|$. By induction on $\Pi_1$
      and $\pi$, there is a proof $\Pi'$ for sequent
      $\Gamma_{{\mathrm{1}}}  \SCsym{;}  \SCnt{Y_{{\mathrm{1}}}}  \SCsym{;}  \SCnt{Y_{{\mathrm{2}}}}  \SCsym{;}  \Gamma_{{\mathrm{2}}}  \SCsym{;}  \Delta  \SCsym{;}  \Gamma_{{\mathrm{3}}}  \vdash_\mathcal{L}  \SCnt{A}$ s.t. $c(\Pi') \leq |X|$. Therefore,
      the proof $\Pi$ can be constructed as follows with
      $c(\Pi) = c(\Pi') \leq |X|$.
      \begin{center}
        \scriptsize
        \begin{math}
          $$\mprset{flushleft}
          \inferrule* [right={\tiny cut}] {
            {
              \begin{array}{cc}
                \Pi' \\
                {\Gamma_{{\mathrm{1}}}  \SCsym{;}  \SCnt{Y_{{\mathrm{1}}}}  \SCsym{;}  \SCnt{Y_{{\mathrm{2}}}}  \SCsym{;}  \Gamma_{{\mathrm{2}}}  \SCsym{;}  \Delta  \SCsym{;}  \Gamma_{{\mathrm{3}}}  \vdash_\mathcal{L}  \SCnt{A}}
              \end{array}
            }
          }{\Gamma_{{\mathrm{1}}}  \SCsym{;}  \SCnt{Y_{{\mathrm{2}}}}  \SCsym{;}  \SCnt{Y_{{\mathrm{1}}}}  \SCsym{;}  \Gamma_{{\mathrm{2}}}  \SCsym{;}  \Delta  \SCsym{;}  \Gamma_{{\mathrm{3}}}  \vdash_\mathcal{L}  \SCnt{A}}
        \end{math}
      \end{center}
\end{itemize}

\subsubsection{Left introduction of the commutative unit $ \mathsf{Unit} $ (with low priority)}
\begin{itemize}
\item Case 1:
      \begin{center}
        \scriptsize
        \begin{math}
          \begin{array}{c}
            \Pi_1 \\
            {\Psi  \vdash_\mathcal{C}  \SCnt{X}}
          \end{array}
        \end{math}
        \qquad\qquad
        $\Pi_2$:
        \begin{math}
          $$\mprset{flushleft}
          \inferrule* [right={\tiny unitL}] {
            {
              \begin{array}{c}
                \pi \\
                {\Phi_{{\mathrm{1}}}  \SCsym{,}  \Phi_{{\mathrm{2}}}  \SCsym{,}  \SCnt{X}  \SCsym{,}  \Phi_{{\mathrm{3}}}  \vdash_\mathcal{C}  \SCnt{Y}}
              \end{array}
            }
          }{\Phi_{{\mathrm{1}}}  \SCsym{,}   \mathsf{Unit}   \SCsym{,}  \Phi_{{\mathrm{2}}}  \SCsym{,}  \SCnt{X}  \SCsym{,}  \Phi_{{\mathrm{3}}}  \vdash_\mathcal{C}  \SCnt{Y}}
        \end{math}
      \end{center}
      By assumption, $c(\Pi_1),c(\Pi_2)\leq |X|$. By induction on $\Pi_1$
      and $\pi$, there is a proof $\Pi'$ for sequent
      $\Phi_{{\mathrm{1}}}  \SCsym{,}  \Phi_{{\mathrm{2}}}  \SCsym{,}  \Psi  \SCsym{,}  \Phi_{{\mathrm{3}}}  \vdash_\mathcal{C}  \SCnt{Y}$
      s.t. $c(\Pi') \leq |X|$. Therefore, the proof $\Pi$ can be
      constructed as follows with $c(\Pi) = c(\Pi') \leq |X|$.
      \begin{center}
        \scriptsize
        \begin{math}
          $$\mprset{flushleft}
          \inferrule* [right={\tiny unitL}] {
            {
              \begin{array}{c}
                \Pi' \\
                {\Phi_{{\mathrm{1}}}  \SCsym{,}  \Phi_{{\mathrm{2}}}  \SCsym{,}  \Psi  \SCsym{,}  \Phi_{{\mathrm{3}}}  \vdash_\mathcal{C}  \SCnt{Y}}
              \end{array}
            }
          }{\Phi_{{\mathrm{1}}}  \SCsym{,}   \mathsf{Unit}   \SCsym{,}  \Phi_{{\mathrm{2}}}  \SCsym{,}  \Psi  \SCsym{,}  \Phi_{{\mathrm{3}}}  \vdash_\mathcal{C}  \SCnt{Y}}
        \end{math}
      \end{center}

\item Case 2:
      \begin{center}
        \scriptsize
        \begin{math}
          \begin{array}{c}
            \Pi_1 \\
            {\Phi  \vdash_\mathcal{C}  \SCnt{X}}
          \end{array}
        \end{math}
        \qquad\qquad
        $\Pi_2$:
        \begin{math}
          $$\mprset{flushleft}
          \inferrule* [right={\tiny unitL1}] {
            {
              \begin{array}{c}
                \pi \\
                {\Gamma_{{\mathrm{1}}}  \SCsym{;}  \Gamma_{{\mathrm{2}}}  \SCsym{;}  \SCnt{X}  \SCsym{;}  \Gamma_{{\mathrm{3}}}  \vdash_\mathcal{L}  \SCnt{A}}
              \end{array}
            }
          }{\Gamma_{{\mathrm{1}}}  \SCsym{;}   \mathsf{Unit}   \SCsym{;}  \Gamma_{{\mathrm{2}}}  \SCsym{;}  \SCnt{X}  \SCsym{;}  \Gamma_{{\mathrm{3}}}  \vdash_\mathcal{L}  \SCnt{A}}
        \end{math}
      \end{center}
      By assumption, $c(\Pi_1),c(\Pi_2)\leq |X|$. By induction on $\Pi_1$
      and $\pi$, there is a proof $\Pi'$ for sequent
      $\Gamma_{{\mathrm{1}}}  \SCsym{;}  \Gamma_{{\mathrm{2}}}  \SCsym{;}  \Phi  \SCsym{;}  \Gamma_{{\mathrm{2}}}  \vdash_\mathcal{L}  \SCnt{A}$
      s.t. $c(\Pi') \leq |X|$. Therefore, the proof $\Pi$ can be
      constructed as follows with $c(\Pi) = c(\Pi') \leq |X|$.
      \begin{center}
        \scriptsize
        \begin{math}
          $$\mprset{flushleft}
          \inferrule* [right={\tiny unitL1}] {
            {
              \begin{array}{c}
                \Pi' \\
                {\Gamma_{{\mathrm{1}}}  \SCsym{;}  \Gamma_{{\mathrm{2}}}  \SCsym{;}  \Phi  \SCsym{;}  \Gamma_{{\mathrm{3}}}  \vdash_\mathcal{L}  \SCnt{A}}
              \end{array}
            }
          }{\Gamma_{{\mathrm{1}}}  \SCsym{;}   \mathsf{Unit}   \SCsym{;}  \Gamma_{{\mathrm{2}}}  \SCsym{;}  \Phi  \SCsym{;}  \Gamma_{{\mathrm{3}}}  \vdash_\mathcal{L}  \SCnt{A}}
        \end{math}
      \end{center}

\item Case 3:
      \begin{center}
        \scriptsize
        \begin{math}
          \begin{array}{c}
            \Pi_1 \\
            {\Delta  \vdash_\mathcal{L}  \SCnt{B}}
          \end{array}
        \end{math}
        \qquad\qquad
        $\Pi_2$:
        \begin{math}
          $$\mprset{flushleft}
          \inferrule* [right={\tiny unitL1}] {
            {
              \begin{array}{c}
                \pi \\
                {\Gamma_{{\mathrm{1}}}  \SCsym{;}  \Gamma_{{\mathrm{2}}}  \SCsym{;}  \SCnt{B}  \SCsym{;}  \Gamma_{{\mathrm{3}}}  \vdash_\mathcal{L}  \SCnt{A}}
              \end{array}
            }
          }{\Gamma_{{\mathrm{1}}}  \SCsym{;}   \mathsf{Unit}   \SCsym{;}  \Gamma_{{\mathrm{2}}}  \SCsym{;}  \SCnt{B}  \SCsym{;}  \Gamma_{{\mathrm{3}}}  \vdash_\mathcal{L}  \SCnt{A}}
        \end{math}
      \end{center}
      By assumption, $c(\Pi_1),c(\Pi_2)\leq |B|$. By induction on $\Pi_1$
      and $\pi$, there is a proof $\Pi'$ for sequent
      $\Gamma_{{\mathrm{1}}}  \SCsym{;}  \Gamma_{{\mathrm{2}}}  \SCsym{;}  \Delta  \SCsym{;}  \Gamma_{{\mathrm{3}}}  \vdash_\mathcal{L}  \SCnt{A}$
      s.t. $c(\Pi') \leq |B|$. Therefore, the proof $\Pi$ can be
      constructed as follows with $c(\Pi) = c(\Pi') \leq |B|$.
      \begin{center}
        \scriptsize
        \begin{math}
          $$\mprset{flushleft}
          \inferrule* [right={\tiny unitL1}] {
            {
              \begin{array}{c}
                \Pi' \\
                {\Gamma_{{\mathrm{1}}}  \SCsym{;}  \Gamma_{{\mathrm{2}}}  \SCsym{;}  \Delta  \SCsym{;}  \Gamma_{{\mathrm{3}}}  \vdash_\mathcal{L}  \SCnt{A}}
              \end{array}
            }
          }{\Gamma_{{\mathrm{1}}}  \SCsym{;}   \mathsf{Unit}   \SCsym{;}  \Gamma_{{\mathrm{2}}}  \SCsym{;}  \Delta  \SCsym{;}  \Gamma_{{\mathrm{3}}}  \vdash_\mathcal{L}  \SCnt{A}}
        \end{math}
      \end{center}
\end{itemize}

\subsubsection{Left introduction of the non-commutative unit $ \mathsf{Unit} $ (with low priority)}
\begin{itemize}
\item Case 1:
      \begin{center}
        \scriptsize
        \begin{math}
          \begin{array}{c}
            \Pi_1 \\
            {\Phi  \vdash_\mathcal{C}  \SCnt{X}}
          \end{array}
        \end{math}
        \qquad\qquad
        $\Pi_2$:
        \begin{math}
          $$\mprset{flushleft}
          \inferrule* [right={\tiny unitL2}] {
            {
              \begin{array}{c}
                \pi \\
                {\Gamma_{{\mathrm{1}}}  \SCsym{;}  \Gamma_{{\mathrm{2}}}  \SCsym{;}  \SCnt{X}  \SCsym{;}  \Gamma_{{\mathrm{3}}}  \vdash_\mathcal{L}  \SCnt{A}}
              \end{array}
            }
          }{\Gamma_{{\mathrm{1}}}  \SCsym{;}   \mathsf{Unit}   \SCsym{;}  \Gamma_{{\mathrm{2}}}  \SCsym{;}  \SCnt{X}  \SCsym{;}  \Gamma_{{\mathrm{3}}}  \vdash_\mathcal{L}  \SCnt{A}}
        \end{math}
      \end{center}
      By assumption, $c(\Pi_1),c(\Pi_2)\leq |X|$. By induction on $\Pi_1$
      and $\pi$, there is a proof $\Pi'$ for sequent
      $\Gamma_{{\mathrm{1}}}  \SCsym{;}  \Gamma_{{\mathrm{2}}}  \SCsym{;}  \Phi  \SCsym{;}  \Gamma_{{\mathrm{3}}}  \vdash_\mathcal{L}  \SCnt{A}$
      s.t. $c(\Pi') \leq |X|$. Therefore, the proof $\Pi$ can be
      constructed as follows with $c(\Pi) = c(\Pi') \leq |X|$.
      \begin{center}
        \scriptsize
        \begin{math}
          $$\mprset{flushleft}
          \inferrule* [right={\tiny unitL2}] {
            {
              \begin{array}{c}
                \Pi' \\
                {\Gamma_{{\mathrm{1}}}  \SCsym{;}  \Gamma_{{\mathrm{2}}}  \SCsym{;}  \Phi  \SCsym{;}  \Gamma_{{\mathrm{3}}}  \vdash_\mathcal{L}  \SCnt{A}}
              \end{array}
            }
          }{\Gamma_{{\mathrm{1}}}  \SCsym{;}   \mathsf{Unit}   \SCsym{;}  \Gamma_{{\mathrm{2}}}  \SCsym{;}  \Phi  \SCsym{;}  \Gamma_{{\mathrm{3}}}  \vdash_\mathcal{L}  \SCnt{A}}
        \end{math}
      \end{center}

\item Case 2:
      \begin{center}
        \scriptsize
        \begin{math}
          \begin{array}{c}
            \Pi_1 \\
            {\Delta  \vdash_\mathcal{L}  \SCnt{B}}
          \end{array}
        \end{math}
        \qquad\qquad
        $\Pi_2$:
        \begin{math}
          $$\mprset{flushleft}
          \inferrule* [right={\tiny unitL2}] {
            {
              \begin{array}{c}
                \pi \\
                {\Gamma_{{\mathrm{1}}}  \SCsym{;}  \Gamma_{{\mathrm{2}}}  \SCsym{;}  \SCnt{B}  \SCsym{;}  \Gamma_{{\mathrm{3}}}  \vdash_\mathcal{L}  \SCnt{A}}
              \end{array}
            }
          }{\Gamma_{{\mathrm{1}}}  \SCsym{;}   \mathsf{Unit}   \SCsym{;}  \Gamma_{{\mathrm{2}}}  \SCsym{;}  \SCnt{B}  \SCsym{;}  \Gamma_{{\mathrm{3}}}  \vdash_\mathcal{L}  \SCnt{A}}
        \end{math}
      \end{center}
      By assumption, $c(\Pi_1),c(\Pi_2)\leq |B|$. By induction on $\Pi_1$
      and $\pi$, there is a proof $\Pi'$ for sequent
      $\Gamma_{{\mathrm{1}}}  \SCsym{;}  \Gamma_{{\mathrm{2}}}  \SCsym{;}  \Delta  \SCsym{;}  \Gamma_{{\mathrm{3}}}  \vdash_\mathcal{L}  \SCnt{A}$
      s.t. $c(\Pi') \leq |B|$. Therefore, the proof $\Pi$ can be
      constructed as follows with $c(\Pi) = c(\Pi') \leq |B|$.
      \begin{center}
        \scriptsize
        \begin{math}
          $$\mprset{flushleft}
          \inferrule* [right={\tiny unitL2}] {
            {
              \begin{array}{c}
                \Pi' \\
                {\Gamma_{{\mathrm{1}}}  \SCsym{;}  \Gamma_{{\mathrm{2}}}  \SCsym{;}  \Delta  \SCsym{;}  \Gamma_{{\mathrm{3}}}  \vdash_\mathcal{L}  \SCnt{A}}
              \end{array}
            }
          }{\Gamma_{{\mathrm{1}}}  \SCsym{;}   \mathsf{Unit}   \SCsym{;}  \Gamma_{{\mathrm{2}}}  \SCsym{;}  \Delta  \SCsym{;}  \Gamma_{{\mathrm{3}}}  \vdash_\mathcal{L}  \SCnt{A}}
        \end{math}
      \end{center}
\end{itemize}

\subsubsection{Right introduction of the commutative implication $\multimap$ (with low priority)}
\begin{center}
  \scriptsize
  \begin{math}
    \begin{array}{c}
      \Pi_1 \\
      {\Phi  \vdash_\mathcal{C}  \SCnt{X}}
    \end{array}
  \end{math}
  \qquad\qquad
  $\Pi_2$:
  \begin{math}
    $$\mprset{flushleft}
    \inferrule* [right={\tiny impR}] {
      {
        \begin{array}{c}
          \pi \\
          {\Psi_{{\mathrm{1}}}  \SCsym{,}  \SCnt{X}  \SCsym{,}  \Psi_{{\mathrm{2}}}  \SCsym{,}  \SCnt{Y_{{\mathrm{1}}}}  \vdash_\mathcal{C}  \SCnt{Y_{{\mathrm{2}}}}}
        \end{array}
      }
    }{\Psi_{{\mathrm{1}}}  \SCsym{,}  \SCnt{X}  \SCsym{,}  \Psi_{{\mathrm{2}}}  \vdash_\mathcal{C}  \SCnt{Y_{{\mathrm{1}}}}  \multimap  \SCnt{Y_{{\mathrm{2}}}}}
  \end{math}
\end{center}
By assumption, $c(\Pi_1),c(\Pi_2)\leq |X|$. By induction on $\Pi_1$
and $\pi$, there is a proof $\Pi'$ for sequent
\begin{center}
  \begin{math}
    \Psi_{{\mathrm{1}}}  \SCsym{,}  \Phi  \SCsym{,}  \Psi_{{\mathrm{2}}}  \SCsym{,}  \SCnt{Y_{{\mathrm{1}}}}  \vdash_\mathcal{C}  \SCnt{Y_{{\mathrm{2}}}}$ s.t. $c(\Pi') \leq |X|
  \end{math}
\end{center}
Therefore, the proof $\Pi$ can be constructed as follows with $c(\Pi)
= c(\Pi') \leq |X|$.
\begin{center}
  \scriptsize
  \begin{math}
    $$\mprset{flushleft}
    \inferrule* [right={\tiny impR}] {
      {
        \begin{array}{c}
          \Pi' \\
          {\Psi_{{\mathrm{1}}}  \SCsym{,}  \Phi  \SCsym{,}  \Psi_{{\mathrm{2}}}  \SCsym{,}  \SCnt{Y_{{\mathrm{1}}}}  \vdash_\mathcal{C}  \SCnt{Y_{{\mathrm{2}}}}}
        \end{array}
      }
    }{\Psi_{{\mathrm{1}}}  \SCsym{,}  \Phi  \SCsym{,}  \Psi_{{\mathrm{2}}}  \vdash_\mathcal{C}  \SCnt{Y_{{\mathrm{1}}}}  \multimap  \SCnt{Y_{{\mathrm{2}}}}}
  \end{math}
\end{center}

\subsubsection{Right introduction of the non-commutative left implication $\lto$ (with low priority)}
\begin{itemize}
\item Case 1:
      \begin{center}
        \scriptsize
        \begin{math}
          \begin{array}{c}
            \Pi_1 \\
            {\Phi  \vdash_\mathcal{C}  \SCnt{X}}
          \end{array}
        \end{math}
        \qquad\qquad
        $\Pi_2$:
        \begin{math}
          $$\mprset{flushleft}
          \inferrule* [right={\tiny impR}] {
            {
              \begin{array}{c}
                \pi \\
                {\Gamma_{{\mathrm{1}}}  \SCsym{;}  \SCnt{X}  \SCsym{;}  \Gamma_{{\mathrm{2}}}  \SCsym{;}  \SCnt{A}  \vdash_\mathcal{L}  \SCnt{B}}
              \end{array}
            }
          }{\Gamma_{{\mathrm{1}}}  \SCsym{;}  \SCnt{X}  \SCsym{;}  \Gamma_{{\mathrm{2}}}  \vdash_\mathcal{L}  \SCnt{A}  \rightharpoonup  \SCnt{B}}
        \end{math}
      \end{center}
      By assumption, $c(\Pi_1),c(\Pi_2)\leq |X|$. By induction on $\Pi_1$
      and $\pi$, there is a proof $\Pi'$ for sequent
      $\Gamma_{{\mathrm{1}}}  \SCsym{;}  \Phi  \SCsym{;}  \Gamma_{{\mathrm{2}}}  \SCsym{;}  \SCnt{A}  \vdash_\mathcal{L}  \SCnt{B}$ s.t. $c(\Pi') \leq |X|$. Therefore, the
      proof $\Pi$ can be constructed as follows with
      $c(\Pi) = c(\Pi') \leq |X|$.
      \begin{center}
        \scriptsize
        \begin{math}
          $$\mprset{flushleft}
          \inferrule* [right={\tiny implR}] {
            {
              \begin{array}{c}
                \Pi' \\
                {\Gamma_{{\mathrm{1}}}  \SCsym{;}  \Phi  \SCsym{;}  \Gamma_{{\mathrm{2}}}  \SCsym{;}  \SCnt{A}  \vdash_\mathcal{L}  \SCnt{B}}
              \end{array}
            }
          }{\Gamma_{{\mathrm{1}}}  \SCsym{;}  \Phi  \SCsym{;}  \Gamma_{{\mathrm{2}}}  \vdash_\mathcal{L}  \SCnt{A}  \rightharpoonup  \SCnt{B}}
        \end{math}
      \end{center}

\item Case 2:
      \begin{center}
        \scriptsize
        \begin{math}
          \begin{array}{c}
            \Pi_1 \\
            {\Delta  \vdash_\mathcal{L}  \SCnt{C}}
          \end{array}
        \end{math}
        \qquad\qquad
        $\Pi_2$:
        \begin{math}
          $$\mprset{flushleft}
          \inferrule* [right={\tiny impR}] {
            {
              \begin{array}{c}
                \pi \\
                {\Gamma_{{\mathrm{1}}}  \SCsym{;}  \SCnt{C}  \SCsym{;}  \Gamma_{{\mathrm{2}}}  \SCsym{;}  \SCnt{A}  \vdash_\mathcal{L}  \SCnt{B}}
              \end{array}
            }
          }{\Gamma_{{\mathrm{1}}}  \SCsym{;}  \SCnt{C}  \SCsym{;}  \Gamma_{{\mathrm{2}}}  \vdash_\mathcal{L}  \SCnt{A}  \rightharpoonup  \SCnt{B}}
        \end{math}
      \end{center}
      By assumption, $c(\Pi_1),c(\Pi_2)\leq |C|$. By induction on $\Pi_1$
      and $\pi$, there is a proof $\Pi'$ for sequent
      $\Gamma_{{\mathrm{1}}}  \SCsym{;}  \Delta  \SCsym{;}  \Gamma_{{\mathrm{2}}}  \SCsym{;}  \SCnt{A}  \vdash_\mathcal{L}  \SCnt{B}$ s.t. $c(\Pi') \leq |C|$. Therefore, the
      proof $\Pi$ can be constructed as follows with
      $c(\Pi) = c(\Pi') \leq |C|$.
      \begin{center}
        \scriptsize
        \begin{math}
          $$\mprset{flushleft}
          \inferrule* [right={\tiny implR}] {
            {
              \begin{array}{c}
                \Pi' \\
                {\Gamma_{{\mathrm{1}}}  \SCsym{;}  \Delta  \SCsym{;}  \Gamma_{{\mathrm{2}}}  \SCsym{;}  \SCnt{A}  \vdash_\mathcal{L}  \SCnt{B}}
              \end{array}
            }
          }{\Gamma_{{\mathrm{1}}}  \SCsym{;}  \Delta  \SCsym{;}  \Gamma_{{\mathrm{2}}}  \vdash_\mathcal{L}  \SCnt{A}  \rightharpoonup  \SCnt{B}}
        \end{math}
      \end{center}
\end{itemize}

\subsubsection{Right introduction of the non-commutative right implication $\rto$ (with low priority)}
\begin{itemize}
\item Case 1:
      \begin{center}
        \scriptsize
        \begin{math}
          \begin{array}{c}
            \Pi_1 \\
            {\Phi  \vdash_\mathcal{C}  \SCnt{X}}
          \end{array}
        \end{math}
        \qquad\qquad
        $\Pi_2$:
        \begin{math}
          $$\mprset{flushleft}
          \inferrule* [right={\tiny impL}] {
            {
              \begin{array}{c}
                \pi \\
                {\SCnt{A}  \SCsym{;}  \Gamma_{{\mathrm{1}}}  \SCsym{;}  \SCnt{X}  \SCsym{;}  \Gamma_{{\mathrm{2}}}  \vdash_\mathcal{L}  \SCnt{B}}
              \end{array}
            }
          }{\Gamma_{{\mathrm{1}}}  \SCsym{;}  \SCnt{X}  \SCsym{;}  \Gamma_{{\mathrm{2}}}  \vdash_\mathcal{L}  \SCnt{B}  \leftharpoonup  \SCnt{A}}
        \end{math}
      \end{center}
      By assumption, $c(\Pi_1),c(\Pi_2)\leq |X|$. By induction on $\Pi_1$
      and $\pi$, there is a proof $\Pi'$ for sequent
      $\SCnt{A}  \SCsym{;}  \Gamma_{{\mathrm{1}}}  \SCsym{;}  \Phi  \SCsym{;}  \Gamma_{{\mathrm{2}}}  \vdash_\mathcal{L}  \SCnt{B}$ s.t. $c(\Pi') \leq |X|$. Therefore, the
      proof $\Pi$ can be constructed as follows with
      $c(\Pi) = c(\Pi') \leq |X|$.
      \begin{center}
        \scriptsize
        \begin{math}
          $$\mprset{flushleft}
          \inferrule* [right={\tiny impR}] {
            {
              \begin{array}{c}
                \Pi' \\
                {\SCnt{A}  \SCsym{;}  \Gamma_{{\mathrm{1}}}  \SCsym{;}  \Phi  \SCsym{;}  \Gamma_{{\mathrm{2}}}  \vdash_\mathcal{L}  \SCnt{B}}
              \end{array}
            }
          }{\Gamma_{{\mathrm{1}}}  \SCsym{;}  \Phi  \SCsym{;}  \Gamma_{{\mathrm{2}}}  \vdash_\mathcal{L}  \SCnt{B}  \leftharpoonup  \SCnt{A}}
        \end{math}
      \end{center}

\item Case 2:
      \begin{center}
        \scriptsize
        \begin{math}
          \begin{array}{c}
            \Pi_1 \\
            {\Delta  \vdash_\mathcal{L}  \SCnt{C}}
          \end{array}
        \end{math}
        \qquad\qquad
        $\Pi_2$:
        \begin{math}
          $$\mprset{flushleft}
          \inferrule* [right={\tiny impR}] {
            {
              \begin{array}{c}
                \pi \\
                {\SCnt{A}  \SCsym{;}  \Gamma_{{\mathrm{1}}}  \SCsym{;}  \SCnt{C}  \SCsym{;}  \Gamma_{{\mathrm{2}}}  \vdash_\mathcal{L}  \SCnt{B}}
              \end{array}
            }
          }{\Gamma_{{\mathrm{1}}}  \SCsym{;}  \SCnt{C}  \SCsym{;}  \Gamma_{{\mathrm{2}}}  \vdash_\mathcal{L}  \SCnt{B}  \leftharpoonup  \SCnt{A}}
        \end{math}
      \end{center}
      By assumption, $c(\Pi_1),c(\Pi_2)\leq |C|$. By induction on $\Pi_1$
      and $\pi$, there is a proof $\Pi'$ for sequent
      $\Gamma_{{\mathrm{1}}}  \SCsym{;}  \Delta  \SCsym{;}  \Gamma_{{\mathrm{2}}}  \SCsym{;}  \SCnt{A}  \vdash_\mathcal{L}  \SCnt{B}$ s.t. $c(\Pi') \leq |C|$. Therefore, the
      proof $\Pi$ can be constructed as follows with
      $c(\Pi) = c(\Pi') \leq |C|$.
      \begin{center}
        \scriptsize
        \begin{math}
          $$\mprset{flushleft}
          \inferrule* [right={\tiny impR}] {
            {
              \begin{array}{c}
                \Pi' \\
                {\SCnt{A}  \SCsym{;}  \Gamma_{{\mathrm{1}}}  \SCsym{;}  \Delta  \SCsym{;}  \Gamma_{{\mathrm{2}}}  \vdash_\mathcal{L}  \SCnt{B}}
              \end{array}
            }
          }{\Gamma_{{\mathrm{1}}}  \SCsym{;}  \Delta  \SCsym{;}  \Gamma_{{\mathrm{2}}}  \vdash_\mathcal{L}  \SCnt{B}  \leftharpoonup  \SCnt{A}}
        \end{math}
      \end{center}
\end{itemize}

\subsubsection{Right introduction of the functor $F$}
\begin{center}
  \scriptsize
  \begin{math}
    \begin{array}{c}
      \Pi_1 \\
      {\Phi  \vdash_\mathcal{C}  \SCnt{X}}
    \end{array}
  \end{math}
  \qquad\qquad
  $\Pi_2$:
  \begin{math}
    $$\mprset{flushleft}
    \inferrule* [right={\tiny Fr}] {
      {
        \begin{array}{c}
          \pi \\
          {\Psi_{{\mathrm{1}}}  \SCsym{,}  \SCnt{X}  \SCsym{,}  \Psi_{{\mathrm{2}}}  \vdash_\mathcal{C}  \SCnt{Y}}
        \end{array}
      }
    }{\Psi_{{\mathrm{1}}}  \SCsym{,}  \SCnt{X}  \SCsym{,}  \Psi_{{\mathrm{2}}}  \vdash_\mathcal{L}   \mathsf{F} \SCnt{Y} }
  \end{math}
\end{center}
By assumption, $c(\Pi_1),c(\Pi_2)\leq |X|$. By induction on $\Pi_1$
and $\pi$, there is a proof $\Pi'$ for sequent $\Psi_{{\mathrm{1}}}  \SCsym{,}  \Phi  \SCsym{,}  \Psi_{{\mathrm{2}}}  \vdash_\mathcal{C}  \SCnt{Y}$
s.t. $c(\Pi') \leq |X|$. Therefore, the proof $\Pi$ can be
constructed as follows with $c(\Pi) = c(\Pi') \leq |X|$.
\begin{center}
  \scriptsize
  \begin{math}
    $$\mprset{flushleft}
    \inferrule* [right={\tiny Fr}] {
      {
        \begin{array}{c}
          \Pi' \\
          {\Psi_{{\mathrm{1}}}  \SCsym{,}  \Phi  \SCsym{,}  \Psi_{{\mathrm{2}}}  \vdash_\mathcal{C}  \SCnt{Y}}
        \end{array}
      }
    }{\Psi_{{\mathrm{1}}}  \SCsym{,}  \Phi  \SCsym{,}  \Psi_{{\mathrm{2}}}  \vdash_\mathcal{L}   \mathsf{F} \SCnt{Y} }
  \end{math}
\end{center}

\subsubsection{Left introduction of the functor $F$ (with low priority)}
\begin{itemize}
\item Case 1:
      \begin{center}
        \scriptsize
        \begin{math}
          \begin{array}{c}
            \Pi_1 \\
            {\Phi  \vdash_\mathcal{C}  \SCnt{X}}
          \end{array}
        \end{math}
        \qquad\qquad
        $\Pi_2$:
        \begin{math}
          $$\mprset{flushleft}
          \inferrule* [right={\tiny Fl}] {
            {
              \begin{array}{c}
                \pi \\
                {\Gamma_{{\mathrm{1}}}  \SCsym{;}  \SCnt{X}  \SCsym{;}  \Gamma_{{\mathrm{2}}}  \SCsym{;}  \SCnt{Y}  \SCsym{;}  \Gamma_{{\mathrm{3}}}  \vdash_\mathcal{L}  \SCnt{A}}
              \end{array}
            }
          }{\Gamma_{{\mathrm{1}}}  \SCsym{;}  \SCnt{X}  \SCsym{;}  \Gamma_{{\mathrm{2}}}  \SCsym{;}   \mathsf{F} \SCnt{Y}   \SCsym{;}  \Gamma_{{\mathrm{3}}}  \vdash_\mathcal{L}  \SCnt{A}}
        \end{math}
      \end{center}
      By assumption, $c(\Pi_1),c(\Pi_2)\leq |X|$. By induction on $\Pi_1$
      and $\pi$, there is a proof $\Pi'$ for sequent
      $\Gamma_{{\mathrm{1}}}  \SCsym{;}  \Phi  \SCsym{;}  \Gamma_{{\mathrm{2}}}  \SCsym{;}  \SCnt{Y}  \SCsym{;}  \Gamma_{{\mathrm{3}}}  \vdash_\mathcal{L}  \SCnt{A}$ s.t. $c(\Pi') \leq |X|$. Therefore, the
      proof $\Pi$ can be constructed as follows with
      $c(\Pi) = c(\Pi') \leq |X|$.
      \begin{center}
        \scriptsize
        \begin{math}
          $$\mprset{flushleft}
          \inferrule* [right={\tiny Fl}] {
            {
              \begin{array}{c}
                \Pi' \\
                {\Gamma_{{\mathrm{1}}}  \SCsym{;}  \Phi  \SCsym{;}  \Gamma_{{\mathrm{2}}}  \SCsym{;}  \SCnt{Y}  \SCsym{;}  \Gamma_{{\mathrm{3}}}  \vdash_\mathcal{L}  \SCnt{A}}
              \end{array}
            }
          }{\Gamma_{{\mathrm{1}}}  \SCsym{;}  \Phi  \SCsym{;}  \Gamma_{{\mathrm{2}}}  \SCsym{;}   \mathsf{F} \SCnt{Y}   \SCsym{;}  \Gamma_{{\mathrm{3}}}  \vdash_\mathcal{L}  \SCnt{A}}
        \end{math}
      \end{center}

\item Case 2:
      \begin{center}
        \scriptsize
        \begin{math}
          \begin{array}{c}
            \Pi_1 \\
            {\Delta  \vdash_\mathcal{L}  \SCnt{B}}
          \end{array}
        \end{math}
        \qquad\qquad
        $\Pi_2$:
        \begin{math}
          $$\mprset{flushleft}
          \inferrule* [right={\tiny Fl}] {
            {
              \begin{array}{c}
                \pi \\
                {\Gamma_{{\mathrm{1}}}  \SCsym{;}  \SCnt{B}  \SCsym{;}  \Gamma_{{\mathrm{2}}}  \SCsym{;}  \SCnt{Y}  \SCsym{;}  \Gamma_{{\mathrm{3}}}  \vdash_\mathcal{L}  \SCnt{A}}
              \end{array}
            }
          }{\Gamma_{{\mathrm{1}}}  \SCsym{;}  \SCnt{B}  \SCsym{;}  \Gamma_{{\mathrm{2}}}  \SCsym{;}   \mathsf{F} \SCnt{Y}   \SCsym{;}  \Gamma_{{\mathrm{3}}}  \vdash_\mathcal{L}  \SCnt{A}}
        \end{math}
      \end{center}
      By assumption, $c(\Pi_1),c(\Pi_2)\leq |B|$. By induction on $\Pi_1$
      and $\pi$, there is a proof $\Pi'$ for sequent
      $\Gamma_{{\mathrm{1}}}  \SCsym{;}  \Delta  \SCsym{;}  \Gamma_{{\mathrm{2}}}  \SCsym{;}  \SCnt{Y}  \SCsym{;}  \Gamma_{{\mathrm{3}}}  \vdash_\mathcal{L}  \SCnt{A}$ s.t. $c(\Pi') \leq |B|$. Therefore, the
      proof $\Pi$ can be constructed as follows with
      $c(\Pi) = c(\Pi') \leq |B|$.
      \begin{center}
        \scriptsize
        \begin{math}
          $$\mprset{flushleft}
          \inferrule* [right={\tiny Fl}] {
            {
              \begin{array}{c}
                \Pi' \\
                {\Gamma_{{\mathrm{1}}}  \SCsym{;}  \Delta  \SCsym{;}  \Gamma_{{\mathrm{2}}}  \SCsym{;}  \SCnt{Y}  \SCsym{;}  \Gamma_{{\mathrm{3}}}  \vdash_\mathcal{L}  \SCnt{A}}
              \end{array}
            }
          }{\Gamma_{{\mathrm{1}}}  \SCsym{;}  \Delta  \SCsym{;}  \Gamma_{{\mathrm{2}}}  \SCsym{;}   \mathsf{F} \SCnt{Y}   \SCsym{;}  \Gamma_{{\mathrm{3}}}  \vdash_\mathcal{L}  \SCnt{A}}
        \end{math}
      \end{center}

\item Case 3:
      \begin{center}
        \scriptsize
        \begin{math}
          \begin{array}{c}
            \Pi_1 \\
            {\Phi  \vdash_\mathcal{C}  \SCnt{X}}
          \end{array}
        \end{math}
        \qquad\qquad
        $\Pi_2$:
        \begin{math}
          $$\mprset{flushleft}
          \inferrule* [right={\tiny Fl}] {
            {
              \begin{array}{c}
                \pi \\
                {\Gamma_{{\mathrm{1}}}  \SCsym{;}  \SCnt{Y}  \SCsym{;}  \Gamma_{{\mathrm{2}}}  \SCsym{;}  \SCnt{X}  \SCsym{;}  \Gamma_{{\mathrm{3}}}  \vdash_\mathcal{L}  \SCnt{A}}
              \end{array}
            }
          }{\Gamma_{{\mathrm{1}}}  \SCsym{;}   \mathsf{F} \SCnt{Y}   \SCsym{;}  \Gamma_{{\mathrm{2}}}  \SCsym{;}  \SCnt{X}  \SCsym{;}  \Gamma_{{\mathrm{3}}}  \vdash_\mathcal{L}  \SCnt{A}}
        \end{math}
      \end{center}
      By assumption, $c(\Pi_1),c(\Pi_2)\leq |X|$. By induction on $\Pi_1$
      and $\pi$, there is a proof $\Pi'$ for sequent
      $\Gamma_{{\mathrm{1}}}  \SCsym{;}  \SCnt{Y}  \SCsym{;}  \Gamma_{{\mathrm{2}}}  \SCsym{;}  \Phi  \SCsym{;}  \Gamma_{{\mathrm{3}}}  \vdash_\mathcal{L}  \SCnt{A}$ s.t. $c(\Pi') \leq |X|$. Therefore, the
      proof $\Pi$ can be constructed as follows with
      $c(\Pi) = c(\Pi') \leq |X|$.
      \begin{center}
        \scriptsize
        \begin{math}
          $$\mprset{flushleft}
          \inferrule* [right={\tiny Fl}] {
            {
              \begin{array}{c}
                \Pi' \\
                {\Gamma_{{\mathrm{1}}}  \SCsym{;}  \SCnt{Y}  \SCsym{;}  \Gamma_{{\mathrm{2}}}  \SCsym{;}  \Phi  \SCsym{;}  \Gamma_{{\mathrm{3}}}  \vdash_\mathcal{L}  \SCnt{A}}
              \end{array}
            }
          }{\Gamma_{{\mathrm{1}}}  \SCsym{;}   \mathsf{F} \SCnt{Y}   \SCsym{;}  \Gamma_{{\mathrm{2}}}  \SCsym{;}  \Phi  \SCsym{;}  \Gamma_{{\mathrm{3}}}  \vdash_\mathcal{L}  \SCnt{A}}
        \end{math}
      \end{center}

\item Case 4:
      \begin{center}
        \scriptsize
        \begin{math}
          \begin{array}{c}
            \Pi_1 \\
            {\Delta  \vdash_\mathcal{L}  \SCnt{B}}
          \end{array}
        \end{math}
        \qquad\qquad
        $\Pi_2$:
        \begin{math}
          $$\mprset{flushleft}
          \inferrule* [right={\tiny Fl}] {
            {
              \begin{array}{c}
                \pi \\
                {\Gamma_{{\mathrm{1}}}  \SCsym{;}  \SCnt{Y}  \SCsym{;}  \Gamma_{{\mathrm{2}}}  \SCsym{;}  \SCnt{B}  \SCsym{;}  \Gamma_{{\mathrm{3}}}  \vdash_\mathcal{L}  \SCnt{A}}
              \end{array}
            }
          }{\Gamma_{{\mathrm{1}}}  \SCsym{;}   \mathsf{F} \SCnt{Y}   \SCsym{;}  \Gamma_{{\mathrm{2}}}  \SCsym{;}  \Delta  \SCsym{;}  \Gamma_{{\mathrm{3}}}  \vdash_\mathcal{L}  \SCnt{A}}
        \end{math}
      \end{center}
      By assumption, $c(\Pi_1),c(\Pi_2)\leq |B|$. By induction on $\Pi_1$
      and $\pi$, there is a proof $\Pi'$ for sequent
      $\Gamma_{{\mathrm{1}}}  \SCsym{;}  \SCnt{Y}  \SCsym{;}  \Gamma_{{\mathrm{2}}}  \SCsym{;}  \Delta  \SCsym{;}  \Gamma_{{\mathrm{3}}}  \vdash_\mathcal{L}  \SCnt{A}$ s.t. $c(\Pi') \leq |B|$. Therefore, the
      proof $\Pi$ can be constructed as follows with
      $c(\Pi) = c(\Pi') \leq |B|$.
      \begin{center}
        \scriptsize
        \begin{math}
          $$\mprset{flushleft}
          \inferrule* [right={\tiny Fl}] {
            {
              \begin{array}{c}
                \Pi' \\
                {\Gamma_{{\mathrm{1}}}  \SCsym{;}  \SCnt{Y}  \SCsym{;}  \Gamma_{{\mathrm{2}}}  \SCsym{;}  \Delta  \SCsym{;}  \Gamma_{{\mathrm{3}}}  \vdash_\mathcal{L}  \SCnt{A}}
              \end{array}
            }
          }{\Gamma_{{\mathrm{1}}}  \SCsym{;}   \mathsf{F} \SCnt{Y}   \SCsym{;}  \Gamma_{{\mathrm{2}}}  \SCsym{;}  \Delta  \SCsym{;}  \Gamma_{{\mathrm{3}}}  \vdash_\mathcal{L}  \SCnt{A}}
        \end{math}
      \end{center}
\end{itemize}

\subsubsection{Right introduction of the functor $G$ (with low priority)}
\begin{center}
  \scriptsize
  \begin{math}
    \begin{array}{c}
      \Pi_1 \\
      {\Phi  \vdash_\mathcal{C}  \SCnt{X}}
    \end{array}
  \end{math}
  \qquad\qquad
  $\Pi_2$:
  \begin{math}
    $$\mprset{flushleft}
    \inferrule* [right={\tiny Gr}] {
      {
        \begin{array}{c}
          \pi \\
          {\Psi_{{\mathrm{1}}}  \SCsym{;}  \SCnt{X}  \SCsym{;}  \Psi_{{\mathrm{2}}}  \vdash_\mathcal{L}  \SCnt{A}}
        \end{array}
      }
    }{\Psi_{{\mathrm{1}}}  \SCsym{,}  \SCnt{X}  \SCsym{,}  \Psi_{{\mathrm{2}}}  \vdash_\mathcal{C}   \mathsf{G} \SCnt{A} }
  \end{math}
\end{center}
By assumption, $c(\Pi_1),c(\Pi_2)\leq |X|$. By induction on $\Pi_1$
and $\pi$, there is a proof $\Pi'$ for sequent $\Psi_{{\mathrm{1}}}  \SCsym{,}  \Phi  \SCsym{,}  \Psi_{{\mathrm{2}}}  \vdash_\mathcal{L}  \SCnt{A}$
s.t. $c(\Pi') \leq |X|$. Therefore, the proof $\Pi$ can be
constructed as follows with $c(\Pi) = c(\Pi') \leq |X|$.
\begin{center}
  \scriptsize
  \begin{math}
    $$\mprset{flushleft}
    \inferrule* [right={\tiny Gr}] {
      {
        \begin{array}{c}
          \Pi' \\
          {\Psi_{{\mathrm{1}}}  \SCsym{;}  \Phi  \SCsym{;}  \Psi_{{\mathrm{2}}}  \vdash_\mathcal{L}  \SCnt{A}}
        \end{array}
      }
    }{\Psi_{{\mathrm{1}}}  \SCsym{,}  \Phi  \SCsym{,}  \Psi_{{\mathrm{2}}}  \vdash_\mathcal{C}   \mathsf{G} \SCnt{A} }
  \end{math}
\end{center}

\subsubsection{Left introduction of the functor $G$ (with low priority)}
\begin{itemize}
\item Case 1:
      \begin{center}
        \scriptsize
        \begin{math}
          \begin{array}{c}
            \Pi_1 \\
            {\Phi  \vdash_\mathcal{C}  \SCnt{X}}
          \end{array}
        \end{math}
        \qquad\qquad
        $\Pi_2$:
        \begin{math}
          $$\mprset{flushleft}
          \inferrule* [right={\tiny Gl}] {
            {
              \begin{array}{c}
                \pi \\
                {\Gamma_{{\mathrm{1}}}  \SCsym{;}  \SCnt{X}  \SCsym{;}  \Gamma_{{\mathrm{2}}}  \SCsym{;}  \SCnt{B}  \SCsym{;}  \Gamma_{{\mathrm{3}}}  \vdash_\mathcal{L}  \SCnt{A}}
              \end{array}
            }
          }{\Gamma_{{\mathrm{1}}}  \SCsym{;}  \SCnt{X}  \SCsym{;}  \Gamma_{{\mathrm{2}}}  \SCsym{;}   \mathsf{G} \SCnt{B}   \SCsym{;}  \Gamma_{{\mathrm{3}}}  \vdash_\mathcal{L}  \SCnt{A}}
        \end{math}
      \end{center}
      By assumption, $c(\Pi_1),c(\Pi_2)\leq |X|$. By induction on $\Pi_1$
      and $\pi$, there is a proof $\Pi'$ for sequent
      $\Gamma_{{\mathrm{1}}}  \SCsym{;}  \Phi  \SCsym{;}  \Gamma_{{\mathrm{2}}}  \SCsym{;}  \SCnt{B}  \SCsym{;}  \Gamma_{{\mathrm{3}}}  \vdash_\mathcal{L}  \SCnt{A}$ s.t. $c(\Pi') \leq |X|$. Therefore, the
      proof $\Pi$ can be constructed as follows with
      $c(\Pi) = c(\Pi') \leq |X|$.
      \begin{center}
        \scriptsize
        \begin{math}
          $$\mprset{flushleft}
          \inferrule* [right={\tiny Gl}] {
            {
              \begin{array}{c}
                \Pi' \\
                {\Gamma_{{\mathrm{1}}}  \SCsym{;}  \Phi  \SCsym{;}  \Gamma_{{\mathrm{2}}}  \SCsym{;}  \SCnt{B}  \SCsym{;}  \Gamma_{{\mathrm{3}}}  \vdash_\mathcal{L}  \SCnt{A}}
              \end{array}
            }
          }{\Gamma_{{\mathrm{1}}}  \SCsym{;}  \Phi  \SCsym{;}  \Gamma_{{\mathrm{2}}}  \SCsym{;}   \mathsf{G} \SCnt{B}   \SCsym{;}  \Gamma_{{\mathrm{3}}}  \vdash_\mathcal{L}  \SCnt{A}}
        \end{math}
      \end{center}

\item Case 2:
      \begin{center}
        \scriptsize
        \begin{math}
          \begin{array}{c}
            \Pi_1 \\
            {\Delta  \vdash_\mathcal{L}  \SCnt{B}}
          \end{array}
        \end{math}
        \qquad\qquad
        $\Pi_2$:
        \begin{math}
          $$\mprset{flushleft}
          \inferrule* [right={\tiny Gl}] {
            {
              \begin{array}{c}
                \pi \\
                {\Gamma_{{\mathrm{1}}}  \SCsym{;}  \SCnt{B}  \SCsym{;}  \Gamma_{{\mathrm{2}}}  \SCsym{;}  \SCnt{C}  \SCsym{;}  \Gamma_{{\mathrm{3}}}  \vdash_\mathcal{L}  \SCnt{A}}
              \end{array}
            }
          }{\Gamma_{{\mathrm{1}}}  \SCsym{;}  \SCnt{B}  \SCsym{;}  \Gamma_{{\mathrm{2}}}  \SCsym{;}   \mathsf{G} \SCnt{C}   \SCsym{;}  \Gamma_{{\mathrm{3}}}  \vdash_\mathcal{L}  \SCnt{A}}
        \end{math}
      \end{center}
      By assumption, $c(\Pi_1),c(\Pi_2)\leq |B|$. By induction on $\Pi_1$
      and $\pi$, there is a proof $\Pi'$ for sequent
      $\Gamma_{{\mathrm{1}}}  \SCsym{;}  \Delta  \SCsym{;}  \Gamma_{{\mathrm{2}}}  \SCsym{;}  \SCnt{C}  \SCsym{;}  \Gamma_{{\mathrm{3}}}  \vdash_\mathcal{L}  \SCnt{A}$ s.t. $c(\Pi') \leq |B|$. Therefore, the
      proof $\Pi$ can be constructed as follows with
      $c(\Pi) = c(\Pi') \leq |B|$.
      \begin{center}
        \scriptsize
        \begin{math}
          $$\mprset{flushleft}
          \inferrule* [right={\tiny Gl}] {
            {
              \begin{array}{c}
                \Pi' \\
                {\Gamma_{{\mathrm{1}}}  \SCsym{;}  \Delta  \SCsym{;}  \Gamma_{{\mathrm{2}}}  \SCsym{;}  \SCnt{C}  \SCsym{;}  \Gamma_{{\mathrm{3}}}  \vdash_\mathcal{L}  \SCnt{A}}
              \end{array}
            }
          }{\Gamma_{{\mathrm{1}}}  \SCsym{;}  \Delta  \SCsym{;}  \Gamma_{{\mathrm{2}}}  \SCsym{;}   \mathsf{G} \SCnt{C}   \SCsym{;}  \Gamma_{{\mathrm{3}}}  \vdash_\mathcal{L}  \SCnt{A}}
        \end{math}
      \end{center}

\item Case 3:
      \begin{center}
        \scriptsize
        \begin{math}
          \begin{array}{c}
            \Pi_1 \\
            {\Phi  \vdash_\mathcal{C}  \SCnt{X}}
          \end{array}
        \end{math}
        \qquad\qquad
        $\Pi_2$:
        \begin{math}
          $$\mprset{flushleft}
          \inferrule* [right={\tiny Gl}] {
            {
              \begin{array}{c}
                \pi \\
                {\Gamma_{{\mathrm{1}}}  \SCsym{;}  \SCnt{B}  \SCsym{;}  \Gamma_{{\mathrm{2}}}  \SCsym{;}  \SCnt{X}  \SCsym{;}  \Gamma_{{\mathrm{3}}}  \vdash_\mathcal{L}  \SCnt{A}}
              \end{array}
            }
          }{\Gamma_{{\mathrm{1}}}  \SCsym{;}   \mathsf{G} \SCnt{B}   \SCsym{;}  \Gamma_{{\mathrm{2}}}  \SCsym{;}  \SCnt{X}  \SCsym{;}  \Gamma_{{\mathrm{3}}}  \vdash_\mathcal{L}  \SCnt{A}}
        \end{math}
      \end{center}
      By assumption, $c(\Pi_1),c(\Pi_2)\leq |X|$. By induction on $\Pi_1$
      and $\pi$, there is a proof $\Pi'$ for sequent
      $\Gamma_{{\mathrm{1}}}  \SCsym{;}  \SCnt{B}  \SCsym{;}  \Gamma_{{\mathrm{2}}}  \SCsym{;}  \Phi  \SCsym{;}  \Gamma_{{\mathrm{3}}}  \vdash_\mathcal{L}  \SCnt{A}$ s.t. $c(\Pi') \leq |X|$. Therefore, the
      proof $\Pi$ can be constructed as follows with
      $c(\Pi) = c(\Pi') \leq |X|$.
      \begin{center}
        \scriptsize
        \begin{math}
          $$\mprset{flushleft}
          \inferrule* [right={\tiny Gl}] {
            {
              \begin{array}{c}
                \Pi' \\
                {\Gamma_{{\mathrm{1}}}  \SCsym{;}  \SCnt{B}  \SCsym{;}  \Gamma_{{\mathrm{2}}}  \SCsym{;}  \Phi  \SCsym{;}  \Gamma_{{\mathrm{3}}}  \vdash_\mathcal{L}  \SCnt{A}}
              \end{array}
            }
          }{\Gamma_{{\mathrm{1}}}  \SCsym{;}   \mathsf{G} \SCnt{B}   \SCsym{;}  \Gamma_{{\mathrm{2}}}  \SCsym{;}  \Phi  \SCsym{;}  \Gamma_{{\mathrm{3}}}  \vdash_\mathcal{L}  \SCnt{A}}
        \end{math}
      \end{center}

\item Case 4:
      \begin{center}
        \scriptsize
        \begin{math}
          \begin{array}{c}
            \Pi_1 \\
            {\Delta  \vdash_\mathcal{L}  \SCnt{B}}
          \end{array}
        \end{math}
        \qquad\qquad
        $\Pi_2$:
        \begin{math}
          $$\mprset{flushleft}
          \inferrule* [right={\tiny Gl}] {
            {
              \begin{array}{c}
                \pi \\
                {\Gamma_{{\mathrm{1}}}  \SCsym{;}  \SCnt{C}  \SCsym{;}  \Gamma_{{\mathrm{2}}}  \SCsym{;}  \SCnt{B}  \SCsym{;}  \Gamma_{{\mathrm{3}}}  \vdash_\mathcal{L}  \SCnt{A}}
              \end{array}
            }
          }{\Gamma_{{\mathrm{1}}}  \SCsym{;}   \mathsf{G} \SCnt{C}   \SCsym{;}  \Gamma_{{\mathrm{2}}}  \SCsym{;}  \SCnt{B}  \SCsym{;}  \Gamma_{{\mathrm{3}}}  \vdash_\mathcal{L}  \SCnt{A}}
        \end{math}
      \end{center}
      By assumption, $c(\Pi_1),c(\Pi_2)\leq |B|$. By induction on $\Pi_1$
      and $\pi$, there is a proof $\Pi'$ for sequent
      $\Gamma_{{\mathrm{1}}}  \SCsym{;}  \SCnt{C}  \SCsym{;}  \Gamma_{{\mathrm{2}}}  \SCsym{;}  \Delta  \SCsym{;}  \Gamma_{{\mathrm{3}}}  \vdash_\mathcal{L}  \SCnt{A}$ s.t. $c(\Pi') \leq |B|$. Therefore, the
      proof $\Pi$ can be constructed as follows with
      $c(\Pi) = c(\Pi') \leq |B|$.
      \begin{center}
        \scriptsize
        \begin{math}
          $$\mprset{flushleft}
          \inferrule* [right={\tiny Gl}] {
            {
              \begin{array}{c}
                \Pi' \\
                {\Gamma_{{\mathrm{1}}}  \SCsym{;}  \SCnt{C}  \SCsym{;}  \Gamma_{{\mathrm{2}}}  \SCsym{;}  \Delta  \SCsym{;}  \Gamma_{{\mathrm{3}}}  \vdash_\mathcal{L}  \SCnt{A}}
              \end{array}
            }
          }{\Gamma_{{\mathrm{1}}}  \SCsym{;}   \mathsf{G} \SCnt{C}   \SCsym{;}  \Gamma_{{\mathrm{2}}}  \SCsym{;}  \Delta  \SCsym{;}  \Gamma_{{\mathrm{3}}}  \vdash_\mathcal{L}  \SCnt{A}}
        \end{math}
      \end{center}

\end{itemize}

\section{Proof For Lemma~\ref{lem:monoidal-monad}}
\label{app:monoidal-monad}

Let $(\cat{C},\cat{L},F,G,\eta,\varepsilon)$ be a LAM. We define the monad
$(T,\eta:id_\cat{C}\rightarrow T,\mu:T^2\rightarrow T)$ on $\cat{C}$ as
$T=GF$, $\eta_X:X\rightarrow GFX$, and
$\mu_X=G\varepsilon_{FX}:GFGFX\rightarrow GFX$. Since $(F,\m{})$ and
$(G,\n{})$ are monoidal functors, we have
$$\t{X,Y}=G\m{X,Y}\circ\n{FX,FY}:TX\otimes TY\rightarrow T(X\otimes Y) \qquad\mbox{and}\qquad\t{I}=G\m{I}\circ\n{I'}:I\rightarrow TI.$$
The monad $T$ being monoidal means:
\begin{enumerate}
\item $T$ is a monoidal functor, i.e. the following diagrams commute:
      \begin{mathpar}
      \bfig
        \hSquares/->`->`->``->`->`->/<400>[
          (TX\otimes TY)\otimes TZ`TX\otimes(TY\otimes TZ)`TX\otimes T(Y\otimes Z)`
          T(X\otimes Y)\otimes TZ`T((X\otimes Y)\otimes Z)`T(X\otimes(Y\otimes Z));
          \alpha_{TX,TY,TZ}`id_{TX}\otimes\t{Y,Z}`\t{X,Y}\otimes id_{TZ}``
          \t{X,Y\otimes Z}`\t{X\otimes Y,Z}`T\alpha_{X,Y,Z}]
        \morphism(1300,200)//<0,0>[`;(1)]
      \efig
      \and
      \bfig
        \square/->`->`<-`->/<600,400>[
          I\otimes TX`TX`TI\otimes TX`T(I\otimes X);
          \lambda_{TX}`\t{I}\otimes id_{TX}`T\lambda_X`\t{I,X}]
        \morphism(350,200)//<0,0>[`;(2)]
      \efig
      \and
      \bfig
        \square/->`->`<-`->/<600,400>[
          TX\otimes I`TX`TX\otimes TI`T(X\otimes I);
          \rho_{TX}`id_{TX}\otimes\t{I}`T\rho_X`\t{X,I}]
        \morphism(350,200)//<0,0>[`;(3)]
      \efig
      \end{mathpar}
      We write $GF$ instead of $T$ in the proof for clarity. \\
      By replacing $\t{X,Y}$ with its definition, diagram (1) above
      commutes by the following commutative diagram, in which the two
      hexagons commute because $G$ and $F$ are monoidal functors, and the
      two quadrilaterals commute by the naturality of $\n{}$.
      \begin{mathpar}
      \bfig
        \iiixiii/->`->`->``->```->`<-`->``/<1400,400>[
          (GFX\otimes GFY)\otimes GFZ`GFX\otimes(GFY\otimes GFZ)`GFX\otimes G(FY\tri FZ)`
          G(FX\tri FY)\otimes GFZ`G(FX\tri(FY\tri FZ))`GFX\otimes GF(Y\otimes Z)`
          GF(X\otimes Y)\otimes GFZ`G((FX\tri FY)\tri FZ)`G(FX\tri F(Y\otimes Z));
          \alpha_{GFX,GFY,GFZ}`id_{GFX}\otimes\n{FY,FZ}`\n{FX,FY}\otimes id_{GFZ}``
          id_{GFX}\otimes G\m{Y,Z}```G\m{X,Y}\otimes id_{GFZ}`G\alpha'_{FX,FY,FZ}`
          \n{FX,F(Y\otimes Z)}``]
        \morphism(2800,800)|m|<-1400,-400>[
          GFX\otimes G(FY\tri FZ)`G(FX\tri(FY\tri FZ));\n{FX,FY\tri FZ}]
        \morphism(0,400)|m|<1400,-400>[
          G(FX\tri FY)\otimes GFZ`G((FX\tri FY)\tri FZ);\n{FX\tri FY,FZ}]
        \morphism(1400,400)|m|<1400,-400>[
          G(FX\tri(FY\tri FZ))`G(FX\tri F(Y\otimes Z));G(id_{FX}\tri\m{Y,Z})]
        \ptriangle(0,-400)|mlm|/`->`->/<1400,400>[
          GF(X\otimes Y)\otimes GFZ`G((FX\tri FY)\tri FZ)`G(F(X\otimes Y)\tri FZ);
          `\n{F(X\otimes Y),FZ}`G(\m{X,Y}\otimes id_{FZ})]
        \morphism(0,-400)|b|<1400,0>[
          G(F(X\otimes Y)\tri FZ)`GF((X\otimes Y)\otimes Z);G\m{X\otimes Y,Z}]
        \dtriangle(1400,-400)|mrb|/`->`->/<1400,400>[
          G(FX\tri F(Y\otimes Z))`GF((X\otimes Y)\otimes Z)`GF(X\otimes(Y\otimes Z));
          `G\m{X,Y\otimes Z}`GF\alpha_{X,Y,Z}]
      \efig
      \end{mathpar}
      Diagram (2) commutes by the following commutative diagrams, in which
      the top quadrilateral commutes because $G$ is monoidal, the right
      quadrilateral commutes because $F$ is monoidal, and the left square
      commutes by the naturality of $\n{}$.
      \begin{mathpar}
      \bfig
        \ptriangle/->`->`/<1600,400>[
          I\otimes GFX`GFX`GI'\otimes GFX;\lambda_{GFX}`\n{I'}\otimes id_{GFX}`]
        \square(0,-400)|lmmb|<800,400>[
          GI'\otimes GFX`G(I'\tri FX)`GFI\otimes GFX`G(FI\tri FX);
          \n{I',FX}`G\m{I}\otimes id_{GFX}`G(\m{I}\tri id_{FX})`\n{FI,FX}]
        \morphism(800,0)|m|<800,400>[G(I'\tri FX)`GFX;G\lambda'_{FX}]
        \dtriangle(800,-400)/`<-`->/<800,800>[
          GFX`G(FI\tri FX)`GF(I\otimes X);
          `GF\lambda_X`G\m{I,X}]
      \efig
      \end{mathpar}
      Similarly, diagram (3) commutes as follows:
      \begin{mathpar}
      \bfig
        \ptriangle/->`->`/<1600,400>[
          GFX\otimes I`GFX`GFX\otimes GI';\rho_{GFX}`id_{GFX}\otimes\n{I'}`]
        \square(0,-400)|lmmb|<800,400>[
          GFX\otimes GI'`G(FX\tri I')`GFX\otimes GFI`G(FX\tri FI);
          \n{FX,I'}`id_{GFX}\otimes G\m{I}`G(id_{FX}\otimes\m{I})`\n{FX,FI}]
        \morphism(800,0)|m|<800,400>[G(FX\tri I')`GFX;G\rho'_{FX}]
        \dtriangle(800,-400)/`<-`->/<800,800>[
          GFX`G(FX\tri FI)`GF(X\otimes I);
          `GF\rho_X`G\m{X,I}]
      \efig
      \end{mathpar}
\item $\eta$ is a monoidal natural transformation. In fact, since $\eta$
      is the unit of the monoidal adjunction, $\eta$ is monoidal by
      definition and thus the following two diagrams commute.
      \begin{mathpar}
      \bfig
        \square/=`->`->`->/<600,400>[
          X\otimes Y`X\otimes Y`TX\otimes TY`T(X\otimes Y);
          `\eta_X\otimes\eta_Y`\eta_{X\otimes Y}`\t{X,Y}]
      \efig
      \and
      \bfig
        \Vtriangle/->`=`<-/<400,400>[I`TI`I;\eta_I``\t{I}]
      \efig
      \end{mathpar}
\item $\mu$ is a monoidal natural transformation. It is obvious that since
      $\mu=G\varepsilon_{FA}$ and $\varepsilon$ is monoidal, so is $\mu$.
      Thus the following diagrams commute.
      \begin{mathpar}
      \bfig
        \square/`->`->`->/<1500,400>[
          T^2X\otimes T^2Y`T^2(X\otimes Y)`TX\otimes TY`T(X\otimes Y);
          `\mu_X\otimes\mu_Y`\mu_{X\otimes Y}`\t{X,Y}]
        \morphism(0,400)<800,0>[T^2X\otimes T^2Y`T(TX\otimes TY);\t{TX,TY}]
        \morphism(800,400)<700,0>[T(TX\otimes TY)`T^2(X\otimes Y);T\t{X,Y}]
      \efig
      \and
      \bfig
        \square/->`<-`<-`<-/<400,400>[T^2I`TI`TI`I;\mu_I`T\t{I}`\t{I}`\t{I}]
      \efig
      \end{mathpar}
\end{enumerate}

\section{Proof For Lemma~\ref{lem:strong-monad}}
\label{app:strong-monad}

\begin{definition}
\label{def:strong-monad}
Let $(\cat{M},\tri,I,\alpha,\lambda,\rho)$ be a monoidal category and
$(T,\eta,\mu)$ be a monad on $\cat{M}$. $T$ is a \textbf{strong monad} if
there is natural transformation $\tau$, called the \textbf{tensorial
strength}, with components $\tau_{A,B}:A\tri TB\rightarrow T(A\tri B)$
such that the following diagrams commute:
\begin{mathpar}
\bfig
  \Vtriangle<400,400>[I\tri TA`T(I\tri A)`TA;\tau_{I,A}`\lambda_{TA}`T\lambda_A]
\efig
\and
\bfig
  \Vtriangle<400,400>[
    A\tri B`A\tri TB`T(A\tri B);id_A\tri\eta_B`\eta_{A\tri B}`\tau_{A,B}]
\efig
\and
\bfig
  \square/->`->`->`/<1800,400>[
    (A\tri B)\tri TC`T((A\tri B)\tri C)`
    A\tri(B\tri TC)`T(A\tri(B\tri C));
    \tau_{A\tri B,C}`\alpha_{A,B,TC}`T\alpha_{A,B,C}`]
  \morphism<900,0>[A\tri(B\tri TC)`A\tri T(B\tri C);id_A\tri\tau_{B,C}]
  \morphism(900,0)<900,0>[A\tri T(B\tri C)`T(A\tri(B\tri C));\tau_{A,B\tri C}]  \efig
\and
\bfig
  \square/`->`->`->/<1400,400>[
    A\tri T^2B`T^2(A\tri B)`A\tri TB`T(A\tri B);
    `id_A\tri\mu_B`\mu_{A\tri B}`\tau_{A,B}]
  \morphism(0,400)<700,0>[A\tri T^2B`T(A\tri TB);\tau_{A,TB}]
  \morphism(700,400)<700,0>[T(A\tri TB)`T^2(A\tri B);T\tau_{A,B}]
\efig
\end{mathpar}
\end{definition}
\noindent
The proof for Lemma~\ref{lem:strong-monad} goes as follows.
\noindent
Let $(\cat{C},\cat{L},F,G,\eta,\varepsilon)$ be a LAM, where
$(\cat{C},\otimes,I,\alpha,\lambda,\rho)$ is symmetric monoidal closed,
and \\ $(\cat{L},\tri,I',\alpha',\lambda',\rho')$ is Lambek. In
Lemma~\ref{lem:monoidal-monad}, we have proved that the monad
$(T=GF,\eta,\mu)$ is monoidal with the natural transformation
$\t{X,Y}:TX\otimes TY\rightarrow T(X\otimes Y)$ and the morphism
$\t{I}:I\rightarrow TI$.
\noindent
We define the tensorial strength
$\tau_{X,Y}:X\otimes TY\rightarrow T(X\otimes Y)$ as
$$\tau_{X,Y}=\t{X,Y}\circ(\eta_X\otimes id_{TY}).$$
Since $\eta$ is a monoidal natural transformation, we have
$\eta_I=G\m{I}\circ\n{I'}$, and thus $\eta_I=\t{I}$. The following diagram
commutes because $T$ is monoidal, where the composition
$\t{I,X}\circ(\t{I}\otimes id_{TX})$ is the definition of $\tau_{I,X}$. So
the first triangle in Definition~\ref{def:strong-monad} commutes.
\begin{mathpar}
\bfig
  \square/->`->`->`<-/<600,400>[
    I\otimes TX`TI\otimes TX`TX`T(I\otimes X);
    \t{I}\otimes id_{TX}`\lambda_{TX}`\t{I,X}`T\lambda_X]
\efig
\end{mathpar}
Similarly, by using the definition of $\tau$, the the second triangle in the definition is
equivalent to the following diagram, which commutes because $\eta$ is a monoidal natural
transformation:
\begin{mathpar}
\bfig
  \square/->`->`->`<-/<600,400>[
    X\otimes Y`X\otimes TY`T(X\otimes Y)`TX\otimes TY;
    id_X\otimes\eta_Y`\eta_{X\otimes Y}`\eta_X\otimes id_{TY}`\t{X,Y}]
  \morphism(0,400)|m|<600,-400>[X\otimes Y`TX\otimes TY;\eta_X\otimes\eta_Y]
\efig
\end{mathpar}
The first pentagon in the definition commutes by the following commutative diagrams, because
$\eta$ and $\alpha$ are natural transformations and $T$ is monoidal:
\begin{mathpar}
\bfig
  \qtriangle|amm|/->`->`<-/<1000,400>[
    (X\otimes Y)\otimes TZ`T(X\otimes Y)\otimes TZ`(TX\otimes TY)\otimes TZ;
    \eta_{X\otimes Y}\otimes id_{TZ}`
    (\eta_X\otimes\eta_Y)\otimes id_{TZ}`
    \t{X,Y}\otimes id_{TZ}]
  \morphism(0,400)<0,-400>[(X\otimes Y)\otimes TZ`X\otimes(Y\otimes TZ);\alpha_{X,Y,TZ}]
  \morphism(1000,0)|m|<0,-400>[
    (TX\otimes TY)\otimes TZ`TX\otimes(TY\otimes TZ);\alpha_{TX,TY,TZ}]
  \Dtriangle(0,-800)|lmm|/->`->`<-/<1000,400>[
    X\otimes(Y\otimes TZ)`TX\otimes(TY\otimes TZ)`X\otimes(TY\otimes TZ);
    id_X\otimes(\eta_Y\otimes id_{TZ})`
    \eta_X\otimes(\eta_Y\otimes id_{TZ})`
    \eta_X\otimes id_{TY\otimes TZ}]
  \morphism(0,-800)|b|<1000,0>[
    X\otimes(TY\otimes TZ)`X\otimes T(Y\otimes Z);id_X\otimes\t{Y,Z}]
  \qtriangle(1000,0)|amr|/->``->/<1000,400>[
    T(X\otimes Y)\otimes TZ`T((X\otimes Y)\otimes Z)`T(X\otimes(Y\otimes Z));
    \t{X\otimes Y,Z}``T\alpha_{X,Y,Z}]
  \morphism(2000,-800)<0,800>[
    TX\otimes T(Y\otimes Z)`T(X\otimes(Y\otimes Z));\t{X,Y\otimes Z}]
  \btriangle(1000,-800)|mmb|/`->`->/<1000,400>[
    TX\otimes(TY\otimes TZ)`X\otimes T(Y\otimes Z)`TX\otimes T(Y\otimes Z);
    `id_{TX}\otimes\t{Y,Z}`\eta_X\otimes id_{T(Y\otimes Z)}]
\efig
\end{mathpar}
The last diagram in the definition commutes by the following commutative diagram, because
$T$ is a monad, $\t{}$ is a natural transformation, and $\mu$ is a monoidal natural
transformation:
\begin{mathpar}
\bfig
  \ptriangle/->`->`/<700,400>[
    X\otimes T^2Y`TX\otimes T^2Y`X\otimes TY;\eta_X\otimes id_{T^2Y}`id_X\otimes\mu_Y`]
  \btriangle(0,-400)/->``->/<700,400>[
    X\otimes TY`TX\otimes TY`T(X\otimes Y);\eta_X\otimes id_{TY}``\t{X,Y}]
  \morphism(700,400)|m|<-700,-800>[TX\otimes T^2Y`TX\otimes TY;id_{TX}\otimes\mu_Y]
  \morphism(700,0)|m|<-700,-400>[TX\otimes T^2Y`TX\otimes TY;id_{TX}\otimes\mu_Y]
  \qtriangle(700,0)/->``->/<1800,400>[
    TX\otimes T^2Y`T(X\otimes TY)`T(TX\otimes TY);\t{X,TY}``T(\eta_X\otimes id_{TY})]
  \btriangle(700,0)|mmm|/=`->`<-/<900,400>[
    TX\otimes T^2Y`TX\otimes T^2Y`T^2X\otimes T^2Y;
    `T\eta_X\otimes id_{T^2Y}`\mu_X\otimes id_{T^2Y}]
  \morphism(1600,0)|m|<900,0>[T^2X\otimes T^2Y`T(TX\otimes TY);\t{TX,TY}]
  \morphism(1600,0)|m|<-1600,-400>[T^2X\otimes T^2Y`TX\otimes TY;\mu_X\otimes\mu_Y]
  \dtriangle(700,-400)/`->`<-/<1800,400>[
    T(TX\otimes TY)`T(X\otimes Y)`T^2(X\otimes Y);`T\t{X,Y}`\mu_{X\otimes Y}]
\efig
\end{mathpar}

\section{Equivalence of Sequent Calculus and Natural Deduction Formalizations}
\label{app:sc-nd-equiv}

We prove the equivalence of the sequence calculus formalization and the
natural deduction formalization given in the paper by defining two
mappings, one from the rules in natural deduction to proofs the sequent
calculus, and the other is from the rules in sequent calculus to proofs in
natural deduction.

\subsection{Mapping from Natural Deduction to Sequent Calculus}

Function $S:\mathit{ND}\rightarrow\mathit{SC}$ maps a rule in the natural
deduction formalization to a proof of the same sequent in the sequent
calculus. The function is defined as follows:

\begin{itemize}
\item The axioms map to axioms.
\item Introduction rules map to right rules.
\item Elimination rules map to combinations of left rules with cuts:
  \begin{itemize}
  \item $\NDdruleTXXunitEName$:
    \begin{center}
      \footnotesize
      $\NDdruleTXXunitE{}$
    \end{center}
    maps to
    \begin{center}
      \footnotesize
      \begin{math}
        $$\mprset{flushleft}
        \inferrule* [right=$\ElledruleTXXcutName$] {
          {\Phi  \vdash_\mathcal{C}  \NDnt{t_{{\mathrm{1}}}}  \NDsym{:}   \mathsf{Unit} } \\
          $$\mprset{flushleft}
          \inferrule* [right=$\ElledruleTXXunitLName$] {
            {\Psi  \vdash_\mathcal{C}  \NDnt{t_{{\mathrm{2}}}}  \NDsym{:}  \NDnt{Y}}
          }{\NDmv{x}  \NDsym{:}   \mathsf{Unit}   \NDsym{,}  \Psi  \vdash_\mathcal{C}   \mathsf{let}\, \NDmv{x}  :   \mathsf{Unit}  \,\mathsf{be}\,  \mathsf{triv}  \,\mathsf{in}\, \NDnt{t_{{\mathrm{2}}}}   \NDsym{:}  \NDnt{Y}}
        }{\Phi  \NDsym{,}  \Psi  \vdash_\mathcal{C}  \NDsym{[}  \NDnt{t_{{\mathrm{1}}}}  \NDsym{/}  \NDmv{x}  \NDsym{]}  \NDsym{(}   \mathsf{let}\, \NDmv{x}  :   \mathsf{Unit}  \,\mathsf{be}\,  \mathsf{triv}  \,\mathsf{in}\, \NDnt{t_{{\mathrm{2}}}}   \NDsym{)}  \NDsym{:}  \NDnt{Y}}
      \end{math}
    \end{center}

  \item $\NDdruleTXXtenEName$:
    \begin{center}
      \footnotesize
      $\NDdruleTXXtenE{}$
    \end{center}
    maps to
    \begin{center}
      \footnotesize
      \begin{math}
        $$\mprset{flushleft}
        \inferrule* [right=$\ElledruleTXXcutName$] {
          {\Phi  \vdash_\mathcal{C}  \NDnt{t_{{\mathrm{1}}}}  \NDsym{:}  \NDnt{X}  \otimes  \NDnt{Y}} \\
          $$\mprset{flushleft}
          \inferrule* [right=$\ElledruleTXXtenLName$] {
            {\Psi_{{\mathrm{1}}}  \NDsym{,}  \NDmv{x}  \NDsym{:}  \NDnt{X}  \NDsym{,}  \NDmv{y}  \NDsym{:}  \NDnt{Y}  \NDsym{,}  \Psi_{{\mathrm{2}}}  \vdash_\mathcal{C}  \NDnt{t_{{\mathrm{2}}}}  \NDsym{:}  \NDnt{Z}}
          }{\Psi_{{\mathrm{1}}}  \NDsym{,}  \NDmv{z}  \NDsym{:}  \NDnt{X}  \otimes  \NDnt{Y}  \NDsym{,}  \Psi_{{\mathrm{2}}}  \vdash_\mathcal{C}   \mathsf{let}\, \NDmv{z}  :  \NDnt{X}  \otimes  \NDnt{Y} \,\mathsf{be}\, \NDmv{x}  \otimes  \NDmv{y} \,\mathsf{in}\, \NDnt{t_{{\mathrm{2}}}}   \NDsym{:}  \NDnt{Z}}
        }{\Psi_{{\mathrm{1}}}  \NDsym{,}  \Phi  \NDsym{,}  \Psi_{{\mathrm{2}}}  \vdash_\mathcal{C}  \NDsym{[}  \NDnt{t_{{\mathrm{1}}}}  \NDsym{/}  \NDmv{z}  \NDsym{]}  \NDsym{(}   \mathsf{let}\, \NDmv{z}  :  \NDnt{X}  \otimes  \NDnt{Y} \,\mathsf{be}\, \NDmv{x}  \otimes  \NDmv{y} \,\mathsf{in}\, \NDnt{t_{{\mathrm{2}}}}   \NDsym{)}  \NDsym{:}  \NDnt{Z}}
      \end{math}
    \end{center}

  \item $\NDdruleTXXimpEName$:
    \begin{center}
      \footnotesize
      $\NDdruleTXXimpE{}$
    \end{center}
    maps to
    \begin{center}
      \footnotesize
      \begin{math}
        $$\mprset{flushleft}
        \inferrule* [right=$\ElledruleTXXcutName$] {
          {\Phi  \vdash_\mathcal{C}  \NDnt{t_{{\mathrm{1}}}}  \NDsym{:}  \NDnt{X}  \multimap  \NDnt{Y}} \\
          $$\mprset{flushleft}
          \inferrule* [right=$\ElledruleTXXimpLName$] {
            {\Psi  \vdash_\mathcal{C}  \NDnt{t_{{\mathrm{2}}}}  \NDsym{:}  \NDnt{X}} \\
            $$\mprset{flushleft}
            \inferrule* [right=$\ElledruleTXXaxName$] {
              \\
            }{\NDmv{x}  \NDsym{:}  \NDnt{Y}  \vdash_\mathcal{C}  \NDmv{x}  \NDsym{:}  \NDnt{Y}}
          }{\NDmv{y}  \NDsym{:}  \NDnt{X}  \multimap  \NDnt{Y}  \NDsym{,}  \Psi  \vdash_\mathcal{C}  \NDsym{[}   \NDmv{y}   \NDnt{t_{{\mathrm{2}}}}   \NDsym{/}  \NDmv{x}  \NDsym{]}  \NDmv{x}  \NDsym{:}  \NDnt{Y}}
        }{\Phi  \NDsym{,}  \Psi  \vdash_\mathcal{C}  \NDsym{[}  \NDnt{t_{{\mathrm{1}}}}  \NDsym{/}  \NDmv{y}  \NDsym{]}  \NDsym{[}   \NDmv{y}   \NDnt{t_{{\mathrm{2}}}}   \NDsym{/}  \NDmv{x}  \NDsym{]}  \NDmv{x}  \NDsym{:}  \NDnt{Y}}
      \end{math}
    \end{center}

  \item $\NDdruleSXXunitEOneName$:
    \begin{center}
      \footnotesize
      $\NDdruleSXXunitEOne{}$
    \end{center}
    maps to
    \begin{center}
      \footnotesize
      \begin{math}
        $$\mprset{flushleft}
        \inferrule* [right=$\ElledruleSXXcutOneName$] {
          {\Phi  \vdash_\mathcal{C}  \NDnt{t}  \NDsym{:}   \mathsf{Unit} } \\
          $$\mprset{flushleft}
          \inferrule* [right=$\ElledruleSXXunitLOneName$] {
            {\Gamma  \vdash_\mathcal{L}  \NDnt{s}  \NDsym{:}  \NDnt{A}}
          }{\NDmv{x}  \NDsym{:}   \mathsf{Unit}   \NDsym{;}  \Gamma  \vdash_\mathcal{L}   \mathsf{let}\, \NDmv{x}  :   \mathsf{Unit}  \,\mathsf{be}\,  \mathsf{triv}  \,\mathsf{in}\, \NDnt{s}   \NDsym{:}  \NDnt{A}}
        }{\Phi  \NDsym{;}  \Psi  \vdash_\mathcal{L}  \NDsym{[}  \NDnt{t}  \NDsym{/}  \NDmv{x}  \NDsym{]}  \NDsym{(}   \mathsf{let}\, \NDmv{x}  :   \mathsf{Unit}  \,\mathsf{be}\,  \mathsf{triv}  \,\mathsf{in}\, \NDnt{s}   \NDsym{)}  \NDsym{:}  \NDnt{A}}
      \end{math}
    \end{center}

  \item $\NDdruleSXXunitETwoName$:
    \begin{center}
      \footnotesize
      $\NDdruleSXXunitETwo{}$
    \end{center}
    maps to
    \begin{center}
      \footnotesize
      \begin{math}
        $$\mprset{flushleft}
        \inferrule* [right=$\ElledruleSXXcutTwoName$] {
          {\Gamma  \vdash_\mathcal{L}  \NDnt{s_{{\mathrm{1}}}}  \NDsym{:}   \mathsf{Unit} } \\
          $$\mprset{flushleft}
          \inferrule* [right=$\ElledruleSXXunitLTwoName$] {
            {\Delta  \vdash_\mathcal{L}  \NDnt{s_{{\mathrm{2}}}}  \NDsym{:}  \NDnt{A}}
          }{\NDmv{x}  \NDsym{:}   \mathsf{Unit}   \NDsym{;}  \Delta  \vdash_\mathcal{L}   \mathsf{let}\, \NDmv{x}  :   \mathsf{Unit}  \,\mathsf{be}\,  \mathsf{triv}  \,\mathsf{in}\, \NDnt{s_{{\mathrm{2}}}}   \NDsym{:}  \NDnt{A}}
        }{\Gamma  \NDsym{;}  \Delta  \vdash_\mathcal{L}  \NDsym{[}  \NDnt{s_{{\mathrm{1}}}}  \NDsym{/}  \NDmv{x}  \NDsym{]}  \NDsym{(}   \mathsf{let}\, \NDmv{x}  :   \mathsf{Unit}  \,\mathsf{be}\,  \mathsf{triv}  \,\mathsf{in}\, \NDnt{s_{{\mathrm{2}}}}   \NDsym{)}  \NDsym{:}  \NDnt{A}}
      \end{math}
    \end{center}

  \item $\NDdruleSXXtenEOneName$:
    \begin{center}
      \footnotesize
      $\NDdruleSXXtenEOne{}$
    \end{center}
    maps to
    \begin{center}
      \footnotesize
      \begin{math}
        $$\mprset{flushleft}
        \inferrule* [right=$\ElledruleSXXcutOneName$] {
          {\Phi  \vdash_\mathcal{C}  \NDnt{t}  \NDsym{:}  \NDnt{X}  \otimes  \NDnt{Y}} \\
          $$\mprset{flushleft}
          \inferrule* [right=$\ElledruleSXXtenLOneName$] {
            {\Gamma_{{\mathrm{1}}}  \NDsym{;}  \NDmv{x}  \NDsym{:}  \NDnt{X}  \NDsym{;}  \NDmv{y}  \NDsym{:}  \NDnt{Y}  \NDsym{;}  \Gamma_{{\mathrm{2}}}  \vdash_\mathcal{L}  \NDnt{s}  \NDsym{:}  \NDnt{A}}
          }{\Gamma_{{\mathrm{1}}}  \NDsym{;}  \NDmv{z}  \NDsym{:}  \NDnt{X}  \otimes  \NDnt{Y}  \NDsym{;}  \Gamma_{{\mathrm{2}}}  \vdash_\mathcal{L}   \mathsf{let}\, \NDmv{z}  :  \NDnt{X}  \otimes  \NDnt{Y} \,\mathsf{be}\, \NDmv{x}  \otimes  \NDmv{y} \,\mathsf{in}\, \NDnt{s}   \NDsym{:}  \NDnt{A}}
        }{\Gamma_{{\mathrm{1}}}  \NDsym{;}  \Phi  \NDsym{;}  \Gamma_{{\mathrm{2}}}  \vdash_\mathcal{L}  \NDsym{[}  \NDnt{t}  \NDsym{/}  \NDmv{z}  \NDsym{]}  \NDsym{(}   \mathsf{let}\, \NDmv{z}  :  \NDnt{X}  \otimes  \NDnt{Y} \,\mathsf{be}\, \NDmv{x}  \otimes  \NDmv{y} \,\mathsf{in}\, \NDnt{s}   \NDsym{)}  \NDsym{:}  \NDnt{A}}
      \end{math}
    \end{center}

  \item $\NDdruleSXXtenETwoName$:
    \begin{center}
      \footnotesize
      $\NDdruleSXXtenETwo{}$
    \end{center}
    maps to
    \begin{center}
      \footnotesize
      \begin{math}
        $$\mprset{flushleft}
        \inferrule* [right=$\ElledruleSXXcutTwoName$] {
          {\Gamma  \vdash_\mathcal{L}  \NDnt{s_{{\mathrm{1}}}}  \NDsym{:}  \NDnt{A}  \triangleright  \NDnt{B}} \\
          $$\mprset{flushleft}
          \inferrule* [right=$\ElledruleSXXtenLTwoName$] {
            {\Delta_{{\mathrm{1}}}  \NDsym{;}  \NDmv{x}  \NDsym{:}  \NDnt{A}  \NDsym{;}  \NDmv{y}  \NDsym{:}  \NDnt{B}  \NDsym{;}  \Delta_{{\mathrm{2}}}  \vdash_\mathcal{L}  \NDnt{s_{{\mathrm{2}}}}  \NDsym{:}  \NDnt{C}}
          }{\Delta_{{\mathrm{1}}}  \NDsym{;}  \NDmv{z}  \NDsym{:}  \NDnt{A}  \triangleright  \NDnt{B}  \NDsym{;}  \Delta_{{\mathrm{2}}}  \vdash_\mathcal{L}   \mathsf{let}\, \NDmv{z}  :  \NDnt{A}  \triangleright  \NDnt{B} \,\mathsf{be}\, \NDmv{x}  \triangleright  \NDmv{y} \,\mathsf{in}\, \NDnt{s_{{\mathrm{2}}}}   \NDsym{:}  \NDnt{C}}
        }{\Delta_{{\mathrm{1}}}  \NDsym{;}  \Gamma  \NDsym{;}  \Delta_{{\mathrm{2}}}  \vdash_\mathcal{L}  \NDsym{[}  \NDnt{s_{{\mathrm{1}}}}  \NDsym{/}  \NDmv{z}  \NDsym{]}  \NDsym{(}   \mathsf{let}\, \NDmv{z}  :  \NDnt{A}  \triangleright  \NDnt{B} \,\mathsf{be}\, \NDmv{x}  \triangleright  \NDmv{y} \,\mathsf{in}\, \NDnt{s_{{\mathrm{2}}}}   \NDsym{)}  \NDsym{:}  \NDnt{C}}
      \end{math}
    \end{center}

  \item $\NDdruleSXXimprEName:$
    \begin{center}
      \footnotesize
      $\NDdruleSXXimprE{}$
    \end{center}
    maps to
    \begin{center}
      \footnotesize
      \begin{math}
        $$\mprset{flushleft}
        \inferrule* [right=$\ElledruleSXXcutTwoName$] {
          {\Gamma  \vdash_\mathcal{L}  \NDnt{s_{{\mathrm{1}}}}  \NDsym{:}  \NDnt{A}  \rightharpoonup  \NDnt{B}} \\
          $$\mprset{flushleft}
          \inferrule* [right=$\ElledruleSXXimprLName$] {
            {\Delta  \vdash_\mathcal{L}  \NDnt{s_{{\mathrm{2}}}}  \NDsym{:}  \NDnt{A}} \\
            $$\mprset{flushleft}
            \inferrule* [right={\footnotesize ax}] {
              \\
            }{\NDmv{x}  \NDsym{:}  \NDnt{B}  \vdash_\mathcal{L}  \NDmv{x}  \NDsym{:}  \NDnt{B}}
          }{\NDmv{y}  \NDsym{:}  \NDnt{A}  \rightharpoonup  \NDnt{B}  \NDsym{;}  \Delta  \vdash_\mathcal{L}  \NDsym{[}   \mathsf{app}_r\, \NDmv{y} \, \NDnt{s_{{\mathrm{2}}}}   \NDsym{/}  \NDmv{x}  \NDsym{]}  \NDmv{x}  \NDsym{:}  \NDnt{B}}
        }{\Gamma  \NDsym{;}  \Delta  \vdash_\mathcal{L}  \NDsym{[}  \NDnt{s_{{\mathrm{1}}}}  \NDsym{/}  \NDmv{y}  \NDsym{]}  \NDsym{[}   \mathsf{app}_r\, \NDmv{y} \, \NDnt{s_{{\mathrm{2}}}}   \NDsym{/}  \NDmv{x}  \NDsym{]}  \NDmv{x}  \NDsym{:}  \NDnt{B}}
      \end{math}
    \end{center}

  \item $\NDdruleSXXimplEName:$
    \begin{center}
      \footnotesize
      $\NDdruleSXXimplE{}$
    \end{center}
    maps to
    \begin{center}
      \footnotesize
      \begin{math}
        $$\mprset{flushleft}
        \inferrule* [right=$\ElledruleSXXcutTwoName$] {
          {\Gamma  \vdash_\mathcal{L}  \NDnt{s_{{\mathrm{1}}}}  \NDsym{:}  \NDnt{B}  \leftharpoonup  \NDnt{A}} \\
          $$\mprset{flushleft}
          \inferrule* [right=$\ElledruleSXXimplLName$] {
            {\Delta  \vdash_\mathcal{L}  \NDnt{s_{{\mathrm{2}}}}  \NDsym{:}  \NDnt{A}} \\
            $$\mprset{flushleft}
            \inferrule* [right=$\ElledruleSXXaxName$] {
              \\
            }{\NDmv{x}  \NDsym{:}  \NDnt{B}  \vdash_\mathcal{L}  \NDmv{x}  \NDsym{:}  \NDnt{B}}
          }{\Delta  \NDsym{;}  \NDmv{y}  \NDsym{:}  \NDnt{B}  \leftharpoonup  \NDnt{A}  \vdash_\mathcal{L}  \NDsym{[}   \mathsf{app}_l\, \NDmv{y} \, \NDnt{s_{{\mathrm{2}}}}   \NDsym{/}  \NDmv{x}  \NDsym{]}  \NDmv{x}  \NDsym{:}  \NDnt{B}}
        }{\Delta  \NDsym{;}  \Gamma  \vdash_\mathcal{L}  \NDsym{[}  \NDnt{s_{{\mathrm{1}}}}  \NDsym{/}  \NDmv{y}  \NDsym{]}  \NDsym{[}   \mathsf{app}_l\, \NDmv{y} \, \NDnt{s_{{\mathrm{2}}}}   \NDsym{/}  \NDmv{x}  \NDsym{]}  \NDmv{x}  \NDsym{:}  \NDnt{B}}
      \end{math}
    \end{center}

  \item $\NDdruleSXXFEName$:
    \begin{center}
      \footnotesize
      $\NDdruleSXXFE{}$
    \end{center}
    maps to
    \begin{center}
      \footnotesize
      \begin{math}
        $$\mprset{flushleft}
        \inferrule* [right=$\ElledruleSXXcutTwoName$] {
          {\Gamma  \vdash_\mathcal{L}  \NDmv{y}  \NDsym{:}   \mathsf{F} \NDnt{X} } \\
          $$\mprset{flushleft}
          \inferrule* [right=$\ElledruleSXXFlName$] {
            {\Delta_{{\mathrm{1}}}  \NDsym{;}  \NDmv{x}  \NDsym{:}  \NDnt{X}  \NDsym{;}  \Delta_{{\mathrm{2}}}  \vdash_\mathcal{L}  \NDnt{s}  \NDsym{:}  \NDnt{A}}
          }{\Delta_{{\mathrm{1}}}  \NDsym{;}  \NDmv{z}  \NDsym{:}   \mathsf{F} \NDnt{X}   \NDsym{;}  \Delta_{{\mathrm{2}}}  \vdash_\mathcal{L}   \mathsf{let}\, \NDmv{z}  :   \mathsf{F} \NDnt{X}  \,\mathsf{be}\,  \mathsf{F}\, \NDmv{x}  \,\mathsf{in}\, \NDnt{s}   \NDsym{:}  \NDnt{A}}
        }{\Delta_{{\mathrm{1}}}  \NDsym{;}  \Gamma  \NDsym{;}  \Delta_{{\mathrm{2}}}  \vdash_\mathcal{L}  \NDsym{[}  \NDmv{y}  \NDsym{/}  \NDmv{z}  \NDsym{]}  \NDsym{(}   \mathsf{let}\, \NDmv{y}  :   \mathsf{F} \NDnt{X}  \,\mathsf{be}\,  \mathsf{F}\, \NDmv{x}  \,\mathsf{in}\, \NDnt{s}   \NDsym{)}  \NDsym{:}  \NDnt{A}}
      \end{math}
    \end{center}

  \item $\NDdruleSXXGEName$:
    \begin{center}
      \footnotesize
      $\NDdruleSXXGE{}$
    \end{center}
    maps to
    \begin{center}
      \footnotesize
      \begin{math}
        $$\mprset{flushleft}
        \inferrule* [right=$\ElledruleSXXcutOneName$] {
          $$\mprset{flushleft}
          \inferrule* [right=$\ElledruleSXXGlName$] {
            $$\mprset{flushleft}
            \inferrule* [right=$\ElledruleSXXaxName$] {
              \\
            }{\NDmv{x}  \NDsym{:}  \NDnt{A}  \vdash_\mathcal{L}  \NDmv{x}  \NDsym{:}  \NDnt{A}}
          }{\NDmv{y}  \NDsym{:}   \mathsf{G} \NDnt{A}   \vdash_\mathcal{L}   \mathsf{let}\, \NDmv{y}  :   \mathsf{G} \NDnt{A}  \,\mathsf{be}\,  \mathsf{G}\, \NDmv{x}  \,\mathsf{in}\, \NDmv{x}   \NDsym{:}  \NDnt{A}} \\
           {\Phi  \vdash_\mathcal{C}  \NDnt{t}  \NDsym{:}   \mathsf{G} \NDnt{A} }
        }{\Phi  \vdash_\mathcal{L}  \NDsym{[}  \NDnt{t}  \NDsym{/}  \NDmv{y}  \NDsym{]}  \NDsym{(}   \mathsf{let}\, \NDmv{y}  :   \mathsf{G} \NDnt{A}  \,\mathsf{be}\,  \mathsf{G}\, \NDmv{x}  \,\mathsf{in}\, \NDmv{x}   \NDsym{)}  \NDsym{:}  \NDnt{A}}
      \end{math}
    \end{center}
  \end{itemize}
\end{itemize}

\subsection{Mapping from Sequent Calculus to Natural Deduction}
Function $N:\mathit{SC}\rightarrow\mathit{ND}$ maps a rule in the sequent
calculus to a proof of the same sequent in the natural deduction. The
function is defined as follows:

\begin{itemize}
\item Axioms map to axioms.
\item Right rules map to introductions.
\item Left rules map to eliminations modulo some structural fiddling.
  \begin{itemize}
  \item $\ElledruleTXXunitLName$:
    \begin{center}
      \footnotesize
      $\ElledruleTXXunitL{}$
    \end{center}
    maps to
    \begin{center}
      \footnotesize
      \begin{math}
        $$\mprset{flushleft}
        \inferrule* [right=$\NDdruleTXXcutName$] {
          $$\mprset{flushleft}
          \inferrule* [right=$\NDdruleTXXidName$] {
            \\
          }{\NDmv{x}  \NDsym{:}   \mathsf{Unit}   \vdash_\mathcal{C}  \NDmv{x}  \NDsym{:}   \mathsf{Unit} } \\
          {\Phi  \NDsym{,}  \Psi  \vdash_\mathcal{C}  \NDnt{t}  \NDsym{:}  \NDnt{X}}
        }{\Phi  \NDsym{,}  \Psi  \vdash_\mathcal{C}   \mathsf{let}\, \NDmv{x}  :   \mathsf{Unit}  \,\mathsf{be}\,  \mathsf{triv}  \,\mathsf{in}\, \NDnt{t}   \NDsym{:}  \NDnt{X}}
      \end{math}
    \end{center}

  \item $\ElledruleTXXtenLName$:
    \begin{center}
      \footnotesize
      $\ElledruleTXXtenL{}$
    \end{center}
    maps to
    \begin{center}
      \footnotesize
      \begin{math}
        $$\mprset{flushleft}
        \inferrule* [right=$\NDdruleTXXtenEName$] {
          $$\mprset{flushleft}
          \inferrule* [right=$\NDdruleTXXidName$] {
            \\
          }{\NDmv{z}  \NDsym{:}  \NDnt{X}  \otimes  \NDnt{Y}  \vdash_\mathcal{C}  \NDmv{z}  \NDsym{:}  \NDnt{X}  \otimes  \NDnt{Y}} \\
          {\Phi  \NDsym{,}  \NDmv{x}  \NDsym{:}  \NDnt{X}  \NDsym{,}  \NDmv{y}  \NDsym{:}  \NDnt{Y}  \NDsym{,}  \Psi  \vdash_\mathcal{C}  \NDnt{t}  \NDsym{:}  \NDnt{Z}}
        }{\Phi  \NDsym{,}  \NDmv{z}  \NDsym{:}  \NDnt{X}  \otimes  \NDnt{Y}  \NDsym{,}  \Psi  \vdash_\mathcal{C}   \mathsf{let}\, \NDmv{z}  :  \NDnt{X}  \otimes  \NDnt{Y} \,\mathsf{be}\, \NDmv{x}  \otimes  \NDmv{y} \,\mathsf{in}\, \NDnt{t}   \NDsym{:}  \NDnt{Z}}
      \end{math}
    \end{center}

  \item $\ElledruleTXXimpLName$:
    \begin{center}
      \footnotesize
      $\ElledruleTXXimpL{}$
    \end{center}
    maps to
    \begin{center}
      \footnotesize
      \begin{math}
        $$\mprset{flushleft}
        \inferrule* [right=$\NDdruleTXXcutName$] {
          $$\mprset{flushleft}
          \inferrule* [right=$\NDdruleTXXimpEName$] {
            $$\mprset{flushleft}
            \inferrule* [right=$\NDdruleTXXidName$] {
              \\
            }{\NDmv{y}  \NDsym{:}  \NDnt{X}  \multimap  \NDnt{Y}  \vdash_\mathcal{C}  \NDmv{y}  \NDsym{:}  \NDnt{X}  \multimap  \NDnt{Y}} \\
            {\Phi  \vdash_\mathcal{C}  \NDnt{t_{{\mathrm{1}}}}  \NDsym{:}  \NDnt{X}}
          }{\NDmv{y}  \NDsym{:}  \NDnt{X}  \multimap  \NDnt{Y}  \NDsym{,}  \Phi  \vdash_\mathcal{C}   \NDmv{y}   \NDnt{t_{{\mathrm{1}}}}   \NDsym{:}  \NDnt{Y}} \\
           {\Psi_{{\mathrm{1}}}  \NDsym{,}  \NDmv{x}  \NDsym{:}  \NDnt{Y}  \NDsym{,}  \Psi_{{\mathrm{2}}}  \vdash_\mathcal{C}  \NDnt{t_{{\mathrm{2}}}}  \NDsym{:}  \NDnt{Z}}
        }{\Psi_{{\mathrm{1}}}  \NDsym{,}  \NDmv{y}  \NDsym{:}  \NDnt{X}  \multimap  \NDnt{Y}  \NDsym{,}  \Phi  \NDsym{,}  \Psi_{{\mathrm{2}}}  \vdash_\mathcal{C}  \NDsym{[}   \NDmv{y}   \NDnt{t_{{\mathrm{1}}}}   \NDsym{/}  \NDmv{x}  \NDsym{]}  \NDnt{t_{{\mathrm{2}}}}  \NDsym{:}  \NDnt{Z}}
      \end{math}
    \end{center}

  \item $\ElledruleSXXunitLOneName$:
    \begin{center}
      \footnotesize
      $\ElledruleSXXunitLOne{}$
    \end{center}
    maps to
    \begin{center}
      \footnotesize
      \begin{math}
        $$\mprset{flushleft}
        \inferrule* [right=$\NDdruleSXXunitEOneName$] {
          $$\mprset{flushleft}
          \inferrule* [right=$\NDdruleTXXidName$] {
            \\
          }{\NDmv{x}  \NDsym{:}   \mathsf{Unit}   \vdash_\mathcal{C}  \NDmv{x}  \NDsym{:}   \mathsf{Unit} } \\
          {\Gamma  \NDsym{;}  \Delta  \vdash_\mathcal{L}  \NDnt{s}  \NDsym{:}  \NDnt{A}}
        }{\Gamma  \NDsym{;}  \NDmv{x}  \NDsym{:}   \mathsf{Unit}   \NDsym{;}  \Delta  \vdash_\mathcal{L}   \mathsf{let}\, \NDmv{x}  :   \mathsf{Unit}  \,\mathsf{be}\,  \mathsf{triv}  \,\mathsf{in}\, \NDnt{s}   \NDsym{:}  \NDnt{A}}
      \end{math}
    \end{center}

  \item $\ElledruleSXXunitLTwoName$:
    \begin{center}
      \footnotesize
      $\ElledruleSXXunitLTwo{}$
    \end{center}
    maps to
    \begin{center}
      \footnotesize
      \begin{math}
        $$\mprset{flushleft}
        \inferrule* [right=$\NDdruleSXXunitETwoName$] {
          $$\mprset{flushleft}
          \inferrule* [right=$\NDdruleSXXidName$] {
            \\
          }{\NDmv{x}  \NDsym{:}   \mathsf{Unit}   \vdash_\mathcal{L}  \NDmv{x}  \NDsym{:}   \mathsf{Unit} } \\
          {\Gamma  \NDsym{;}  \Delta  \vdash_\mathcal{L}  \NDnt{s}  \NDsym{:}  \NDnt{A}}
        }{\Gamma  \NDsym{;}  \NDmv{x}  \NDsym{:}   \mathsf{Unit}   \NDsym{;}  \Delta  \vdash_\mathcal{L}   \mathsf{let}\, \NDmv{x}  :   \mathsf{Unit}  \,\mathsf{be}\,  \mathsf{triv}  \,\mathsf{in}\, \NDnt{s}   \NDsym{:}  \NDnt{A}}
      \end{math}
    \end{center}

  \item $\ElledruleSXXtenLOneName$:
    \begin{center}
      \footnotesize
      $\ElledruleSXXtenLOne{}$
    \end{center}
    maps to
    \begin{center}
      \footnotesize
      \begin{math}
        $$\mprset{flushleft}
        \inferrule* [right=$\NDdruleSXXtenEOneName$] {
          $$\mprset{flushleft}
          \inferrule* [right=$\NDdruleTXXidName$] {
            \\
          }{\NDmv{z}  \NDsym{:}  \NDnt{X}  \otimes  \NDnt{Y}  \vdash_\mathcal{C}  \NDmv{z}  \NDsym{:}  \NDnt{X}  \otimes  \NDnt{Y}} \\
          {\Gamma  \NDsym{;}  \NDmv{x}  \NDsym{:}  \NDnt{X}  \NDsym{;}  \NDmv{y}  \NDsym{:}  \NDnt{Y}  \NDsym{;}  \Delta  \vdash_\mathcal{L}  \NDnt{s}  \NDsym{:}  \NDnt{A}}
        }{\Gamma  \NDsym{;}  \NDmv{z}  \NDsym{:}  \NDnt{X}  \otimes  \NDnt{Y}  \NDsym{;}  \Delta  \vdash_\mathcal{L}   \mathsf{let}\, \NDmv{z}  :  \NDnt{X}  \otimes  \NDnt{Y} \,\mathsf{be}\, \NDmv{x}  \otimes  \NDmv{y} \,\mathsf{in}\, \NDnt{s}   \NDsym{:}  \NDnt{A}}
      \end{math}
    \end{center}

  \item $\ElledruleSXXtenLTwoName$:
    \begin{center}
      \footnotesize
      $\ElledruleSXXtenLTwo{}$
    \end{center}
    maps to
    \begin{center}
      \footnotesize
      \begin{math}
        $$\mprset{flushleft}
        \inferrule* [right=$\NDdruleSXXtenETwoName$] {
          $$\mprset{flushleft}
          \inferrule* [right=$\NDdruleSXXidName$] {
            \\
          }{\NDmv{z}  \NDsym{:}  \NDnt{A}  \triangleright  \NDnt{B}  \vdash_\mathcal{L}  \NDmv{z}  \NDsym{:}  \NDnt{A}  \triangleright  \NDnt{B}} \\
          {\Gamma  \NDsym{;}  \NDmv{x}  \NDsym{:}  \NDnt{A}  \NDsym{;}  \NDmv{y}  \NDsym{:}  \NDnt{B}  \NDsym{;}  \Delta  \vdash_\mathcal{L}  \NDnt{s}  \NDsym{:}  \NDnt{C}}
        }{\Gamma  \NDsym{;}  \NDmv{z}  \NDsym{:}  \NDnt{A}  \triangleright  \NDnt{B}  \NDsym{;}  \Delta  \vdash_\mathcal{L}   \mathsf{let}\, \NDmv{z}  :  \NDnt{A}  \triangleright  \NDnt{B} \,\mathsf{be}\, \NDmv{x}  \triangleright  \NDmv{y} \,\mathsf{in}\, \NDnt{s}   \NDsym{:}  \NDnt{C}}
      \end{math}
    \end{center}

  \item $\ElledruleSXXimpLName$:
    \begin{center}
      \footnotesize
      $\ElledruleSXXimpL{}$
    \end{center}
    maps to
    \begin{center}
      \footnotesize
      \begin{math}
        $$\mprset{flushleft}
        \inferrule* [right=$\NDdruleSXXcutOneName$] {
          $$\mprset{flushleft}
          \inferrule* [right=$\NDdruleTXXimpEName$] {
            $$\mprset{flushleft}
            \inferrule* [right=$\NDdruleSXXidName$] {
              \\
            }{\NDmv{y}  \NDsym{:}  \NDnt{X}  \multimap  \NDnt{Y}  \vdash_\mathcal{C}  \NDmv{y}  \NDsym{:}  \NDnt{X}  \multimap  \NDnt{Y}} \\
            {\Phi  \vdash_\mathcal{C}  \NDnt{t}  \NDsym{:}  \NDnt{X}}
          }{\NDmv{y}  \NDsym{:}  \NDnt{X}  \multimap  \NDnt{Y}  \NDsym{,}  \Phi  \vdash_\mathcal{C}   \NDmv{y}   \NDnt{t}   \NDsym{:}  \NDnt{Y}} \\
           {\Gamma  \NDsym{;}  \NDmv{x}  \NDsym{:}  \NDnt{Y}  \NDsym{;}  \Delta  \vdash_\mathcal{L}  \NDnt{s}  \NDsym{:}  \NDnt{A}}
        }{\Gamma  \NDsym{;}  \NDmv{y}  \NDsym{:}  \NDnt{X}  \multimap  \NDnt{Y}  \NDsym{;}  \Phi  \NDsym{;}  \Delta  \vdash_\mathcal{L}  \NDsym{[}   \NDmv{y}   \NDnt{t}   \NDsym{/}  \NDmv{x}  \NDsym{]}  \NDnt{s}  \NDsym{:}  \NDnt{A}}
      \end{math}
    \end{center}

  \item $\ElledruleSXXimprLName$:
    \begin{center}
      \footnotesize
      $\ElledruleSXXimprL{}$
    \end{center}
    maps to
    \begin{center}
      \footnotesize
      \begin{math}
        $$\mprset{flushleft}
        \inferrule* [right=$\NDdruleSXXcutTwoName$] {
          $$\mprset{flushleft}
          \inferrule* [right=$\NDdruleSXXimprEName$] {
            $$\mprset{flushleft}
            \inferrule* [right=$\NDdruleSXXidName$] {
              \\
            }{\NDmv{y}  \NDsym{:}  \NDnt{A}  \rightharpoonup  \NDnt{B}  \vdash_\mathcal{L}  \NDmv{y}  \NDsym{:}  \NDnt{A}  \rightharpoonup  \NDnt{B}} \\
            {\Gamma  \vdash_\mathcal{L}  \NDnt{s_{{\mathrm{1}}}}  \NDsym{:}  \NDnt{A}}
          }{\NDmv{y}  \NDsym{:}  \NDnt{A}  \rightharpoonup  \NDnt{B}  \NDsym{;}  \Gamma  \vdash_\mathcal{L}   \mathsf{app}_r\, \NDmv{y} \, \NDnt{s_{{\mathrm{1}}}}   \NDsym{:}  \NDnt{B}} \\
           {\Delta_{{\mathrm{1}}}  \NDsym{;}  \NDmv{x}  \NDsym{:}  \NDnt{B}  \NDsym{;}  \Delta_{{\mathrm{2}}}  \vdash_\mathcal{L}  \NDnt{s_{{\mathrm{2}}}}  \NDsym{:}  \NDnt{C}}
        }{\Delta_{{\mathrm{1}}}  \NDsym{;}  \NDmv{y}  \NDsym{:}  \NDnt{A}  \rightharpoonup  \NDnt{B}  \NDsym{;}  \Gamma  \NDsym{;}  \Delta_{{\mathrm{2}}}  \vdash_\mathcal{L}  \NDsym{[}   \mathsf{app}_r\, \NDmv{y} \, \NDnt{s_{{\mathrm{1}}}}   \NDsym{/}  \NDmv{x}  \NDsym{]}  \NDnt{s_{{\mathrm{2}}}}  \NDsym{:}  \NDnt{C}}
      \end{math}
    \end{center}

  \item $\ElledruleSXXimplLName$:
    \begin{center}
      \footnotesize
      $\ElledruleSXXimplL{}$
    \end{center}
    maps to
    \begin{center}
      \footnotesize
      \begin{math}
        $$\mprset{flushleft}
        \inferrule* [right=$\NDdruleSXXcutTwoName$] {
          $$\mprset{flushleft}
          \inferrule* [right=$\NDdruleSXXimplEName$] {
            $$\mprset{flushleft}
            \inferrule* [right=$\NDdruleSXXidName$] {
              \\
            }{\NDmv{y}  \NDsym{:}  \NDnt{B}  \leftharpoonup  \NDnt{A}  \vdash_\mathcal{L}  \NDmv{y}  \NDsym{:}  \NDnt{B}  \leftharpoonup  \NDnt{A}} \\
            {\Gamma  \vdash_\mathcal{L}  \NDnt{s_{{\mathrm{1}}}}  \NDsym{:}  \NDnt{A}}
          }{\Gamma  \NDsym{;}  \NDmv{y}  \NDsym{:}  \NDnt{B}  \leftharpoonup  \NDnt{A}  \vdash_\mathcal{L}   \mathsf{app}_l\, \NDmv{y} \, \NDnt{s_{{\mathrm{1}}}}   \NDsym{:}  \NDnt{B}} \\
           {\Delta_{{\mathrm{1}}}  \NDsym{;}  \NDmv{x}  \NDsym{:}  \NDnt{B}  \NDsym{;}  \Delta_{{\mathrm{2}}}  \vdash_\mathcal{L}  \NDnt{s_{{\mathrm{2}}}}  \NDsym{:}  \NDnt{C}}
        }{\Delta_{{\mathrm{1}}}  \NDsym{;}  \Gamma  \NDsym{;}  \NDmv{y}  \NDsym{:}  \NDnt{B}  \leftharpoonup  \NDnt{A}  \NDsym{;}  \Delta_{{\mathrm{2}}}  \vdash_\mathcal{L}  \NDsym{[}   \mathsf{app}_l\, \NDmv{y} \, \NDnt{s_{{\mathrm{1}}}}   \NDsym{/}  \NDmv{x}  \NDsym{]}  \NDnt{s_{{\mathrm{2}}}}  \NDsym{:}  \NDnt{C}}
      \end{math}
    \end{center}

  \item $\ElledruleSXXFlName$:
    \begin{center}
      \footnotesize
      $\ElledruleSXXFl{}$
    \end{center}
    maps to
    \begin{center}
      \footnotesize
      \begin{math}
        $$\mprset{flushleft}
        \inferrule* [right=$\NDdruleSXXFEName$] {
          $$\mprset{flushleft}
          \inferrule* [right=$\NDdruleSXXidName$] {
            \\
          }{\NDmv{y}  \NDsym{:}   \mathsf{F} \NDnt{X}   \vdash_\mathcal{L}  \NDmv{y}  \NDsym{:}   \mathsf{F} \NDnt{X} } \\
          {\Gamma  \NDsym{;}  \NDmv{x}  \NDsym{:}  \NDnt{X}  \NDsym{;}  \Delta  \vdash_\mathcal{L}  \NDnt{s}  \NDsym{:}  \NDnt{A}}
        }{\Gamma  \NDsym{;}  \NDmv{y}  \NDsym{:}   \mathsf{F} \NDnt{X}   \NDsym{;}  \Delta  \vdash_\mathcal{L}   \mathsf{let}\,  \mathsf{F} \NDmv{x}   :   \mathsf{F} \NDnt{X}  \,\mathsf{be}\, \NDmv{y} \,\mathsf{in}\, \NDnt{s}   \NDsym{:}  \NDnt{A}}
      \end{math}
    \end{center}

  \item $\ElledruleSXXGlName$:
    \begin{center}
      \footnotesize
      $\ElledruleSXXGl{}$
    \end{center}
    maps to
    \begin{center}
      \footnotesize
      \begin{math}
        $$\mprset{flushleft}
        \inferrule* [right=$\NDdruleSXXcutTwoName$] {
          $$\mprset{flushleft}
          \inferrule* [right=$\NDdruleSXXGEName$] {
            $$\mprset{flushleft}
            \inferrule* [right=$\NDdruleSXXidName$] {
              \\
            }{\NDmv{y}  \NDsym{:}   \mathsf{G} \NDnt{A}   \vdash_\mathcal{C}  \NDmv{y}  \NDsym{:}   \mathsf{G} \NDnt{A} }
          }{\NDmv{y}  \NDsym{:}   \mathsf{G} \NDnt{A}   \vdash_\mathcal{L}   \mathsf{derelict}\, \NDmv{y}   \NDsym{:}  \NDnt{A}} \\
           {\Gamma  \NDsym{;}  \NDmv{x}  \NDsym{:}  \NDnt{A}  \NDsym{;}  \Delta  \vdash_\mathcal{L}  \NDnt{s}  \NDsym{:}  \NDnt{B}}
        }{\Gamma  \NDsym{;}  \NDmv{y}  \NDsym{:}   \mathsf{G} \NDnt{A}   \NDsym{;}  \Delta  \vdash_\mathcal{L}  \NDsym{[}   \mathsf{derelict}\, \NDmv{y}   \NDsym{/}  \NDmv{x}  \NDsym{]}  \NDnt{s}  \NDsym{:}  \NDnt{B}}
      \end{math}
    \end{center}
    
  \end{itemize}
\end{itemize}

\end{document}